\documentclass[english, 11pt]{article}
\usepackage{lmodern}

\usepackage[T1]{fontenc}
\UseRawInputEncoding
\usepackage{xcolor}
\usepackage{babel}
\usepackage{mathtools}
\usepackage{enumitem}
\usepackage{amsmath}
\usepackage{amsthm}
\usepackage{amssymb}
\usepackage{environ}
\usepackage{graphicx}
\usepackage{authblk}

\usepackage{moreenum}
\usepackage{latexsym}
\usepackage{footnotehyper}   
\makesavenoteenv{minipage}

\usepackage{hyperref}
\hypersetup{breaklinks=false, pdfborder={0 0 0}, pdfborderstyle={}, colorlinks=true, urlcolor=NavyBlue,  linkcolor=NavyBlue,  citecolor=NavyBlue}

\usepackage{geometry}
\PassOptionsToPackage{normalem}{ulem}
\usepackage{ulem}

\usepackage{tikz}
\usetikzlibrary{matrix,calc}
\usepackage{pifont}

\makeatletter

\providecommand{\tabularnewline}{\\}

\numberwithin{figure}{section}
\theoremstyle{definition}
\newtheorem{claim}{\protect\claimname}[section]
\newtheorem{prop}{\protect\propositionname}[section]
\newtheorem{lem}{\protect\lemmaname}[section]
\newtheorem{defn}{\protect\definitionname}[section]
\theoremstyle{plain}
\newtheorem{thm}{\protect\theoremname}
\theoremstyle{remark}
\newtheorem{rem}[thm]{\protect\remarkname}
\newlist{casenv}{enumerate}{4}
\setlist[casenv]{leftmargin=*,align=left,widest={iiii}}
\setlist[casenv,1]{label={{\itshape\ \casename} \arabic*.},ref=\arabic*}
\setlist[casenv,2]{label={{\itshape\ \casename} \roman*.},ref=\roman*}
\setlist[casenv,3]{label={{\itshape\ \casename\ \alph*.}},ref=\alph*}
\setlist[casenv,4]{label={{\itshape\ \casename} \arabic*.},ref=\arabic*}

\usepackage[dvipsnames]{xcolor}
\usepackage{bbm}      
\usepackage{upgreek}
\usepackage{amsthm}
\usepackage{wrapfig}
\usepackage{ragged2e}

\let\myTOC\tableofcontents
\renewcommand{\tableofcontents}{%
 { 
 \hypersetup{linkcolor = black}
 \pagenumbering{roman}
 \pdfbookmark[1]{\contentsname}{}
 \myTOC
 \cleardoublepage
 }
 \pagenumbering{arabic}}

\@ifundefined{qedsymbol}{\relax}{}

\DeclareSymbolFont{bbold}{U}{bbold}{m}{n}
\DeclareSymbolFontAlphabet{\mathbbold}{bbold}

\relax
\relax

\theoremstyle{definition}
\newtheorem{notation}{Notation}[section]

\newcommand{\minipar}[1]{%
  \begin{minipage}[m]{0.81\linewidth}%
    #1\unskip%
  \end{minipage}%
}

\newcommand{\modifiedminipar}[2]{%
  \begin{minipage}[m]{#1\linewidth}%
    #2\unskip%
  \end{minipage}%
}

\newcommand{\justifiedminipage}[1]{%
  \minipar{%
    \justifying%
    \setlength{\parfillskip}{0pt}%
    \setlength{\parindent}{0pt}
    \ignorespaces#1\unskip%
  }%
}
\newcommand{\modifiedjustifiedminipage}[2]{%
	\begin{minipage}[m]{#1\linewidth}%
    \justifying%
    \setlength{\parfillskip}{0pt}%
    \setlength{\parindent}{0pt}
    \ignorespaces#2\unskip
  \end{minipage}%
}

\AtBeginDocument{%
  \setlist[casenv,3]{label={{\itshape\ \casename} \greek*.},ref=\greek*}
  }%

\makeatother

\usepackage{listings}
\providecommand{\casename}{Case}
\providecommand{\remarkname}{Remark}
\providecommand{\theoremname}{Theorem}
\providecommand{\definitionname}{Definition}
\providecommand{\lemmaname}{Lemma}
\providecommand{\propositionname}{Proposition}
\providecommand{\claimname}{Claim}

\newcommand{\cmark}{\textcolor{green!60!black}{\ding{51}}}
\newcommand{\xmark}{\textcolor{red}{\ding{55}}}

\usepackage{cancel}
\usepackage{comment}

\newcommand{\lyxadded}[3]{#3}
\newcommand{\lyxdeleted}[3]{}

\usepackage{setspace}
\setlist{itemsep=2pt, topsep=4pt, parsep=0pt, partopsep=0pt}

\makeatletter
\newcommand{\LinkColorText}[1]{%
  {\color{\@linkcolor}#1}%
}
\makeatother

\usepackage{subfiles} 

\title{The Domain of RSD Characterization by\\ Efficiency, Symmetry, and Strategy-Proofness}
\author{Maor Ben Zaquen}
\author{Ron Holzman}
\affil{Department of Mathematics, Technion - Israel Institute of Technology\\ \href{mailto:mben-zaquen@campus.technion.ac.il}{\texttt{mben-zaquen@campus.technion.ac.il}}, \href{mailto:holzman@technion.ac.il}{\texttt{holzman@technion.ac.il}}}
\date{\today}

\begin{document}
\maketitle
\begingroup
\renewcommand\thefootnote{}
\footnotetext{
Useful discussions and correspondence with Itai Ashlagi, Christian Basteck, Lars Ehlers, Yannai Gonczarowski, Alexander Nesterov, David Ryz\'{a}k, Fedor Sandomirskiy, and David Sychrovsk\'{y} are gratefully acknowledged.}
\endgroup

\begin{abstract}
Given a set of $n$ individuals with strict preferences over $m$
indivisible objects, the Random Serial Dictatorship (RSD) mechanism
is a method for allocating objects to individuals in a way that is
efficient, fair, and incentive-compatible. A random order of individuals
is first drawn, and each individual, following this order, selects
their most preferred available object. The procedure continues until
either all objects have been assigned or all individuals have received
an object.

RSD is widely recognized for its application in fair allocation problems
involving indivisible goods, such as school placements and housing
assignments. Despite its extensive use, a comprehensive axiomatic
characterization has remained incomplete. For the balanced case $n=m=3$,
Bogomolnaia and Moulin \cite{bogomolnaia2001new} have shown that RSD is uniquely characterized by Ex-Post Efficiency,
Equal Treatment of Equals, and Strategy-Proofness. The possibility
of extending this characterization to larger markets had been a long-standing
open question, which Basteck and Ehlers \cite{basteck2025constrained} recently answered in the negative for all markets with $n,m\geq5$.

This work completes the picture by identifying exactly for which pairs
$\left(n,m\right)$ these three axioms uniquely characterize the RSD
mechanism and for which pairs they admit multiple mechanisms. In the
latter cases, we construct explicit alternatives satisfying the axioms
and examine whether augmenting the set of axioms could rule out these
alternatives.
\end{abstract}

\section{Introduction}
The problem of assigning indivisible objects to agents in a fair and
efficient way is central in mechanism design. Typical applications
include school choice, assignment of students to dorm rooms, and allocation
of public housing. In such environments, monetary transfers are often
unavailable or undesirable, and the only information agents report
is a ranking of the available objects. A mechanism must then map these
preference profiles into allocations that satisfy normative criteria
such as efficiency, fairness, and incentive compatibility. Since deterministic
rules often fail to meet fairness requirements even when considered
on their own, and in particular cannot satisfy efficiency, fairness,
and incentive compatibility simultaneously, the literature has focused
on randomized mechanisms, which assign to each agent a lottery over
the objects, with an outside option of receiving nothing. Throughout,
we assume that every agent ranks this outside option below every object.

A particularly prominent mechanism in this context is Random Serial
Dictatorship (RSD), also known in the literature as Random Priority.
Given a profile of strict preferences and a set of indivisible objects,
RSD samples an ordering of the agents uniformly at random and lets
them choose their most preferred available object in that order; if
there are more agents than objects, the last agents receive nothing,
whereas if there are more objects than agents, some objects remain
unassigned. In the random assignment literature, RSD is an extensively
studied mechanism. For example, Abdulkadiro\u{g}lu and S\"{o}nmez
\cite{abdulkadirouglu1998random} and, independently, Knuth \cite{knuth1996exact}
have shown that RSD coincides with the core from random endowments,
and Bade \cite{bade2020random} has shown that symmetrizing any deterministic
efficient mechanism satisfying strategy-proofness and non-bossiness
yields RSD.

RSD satisfies ex-post efficiency, equal treatment of equals, and strategy-proofness
(these three properties will be referred to as \emph{the axioms}).
Ex-post efficiency requires that, for every preference profile, the
mechanism only randomizes over deterministic assignments that are
Pareto efficient; equal treatment of equals requires that agents with
identical preferences receive identical lotteries; and strategy-proofness
requires that truthful reporting is a weakly dominant strategy for
every agent. These three axioms largely explain its popularity in
practice. At the same time, it is natural to consider stronger notions
of efficiency and fairness, and this led to the study of alternative
mechanisms alongside RSD. Bogomolnaia and Moulin \cite{bogomolnaia2001new}
proposed the Probabilistic Serial (PS) mechanism as an alternative
to RSD.\footnote{PS satisfies ordinal efficiency and envy-freeness, which are stronger
than ex-post efficiency and equal treatment of equals, respectively,
whereas RSD does not; on the other hand, PS does not satisfy strategy-proofness,
but only a weaker version of it.} Beyond verifying that a given mechanism satisfies desirable axioms,
an important question in mechanism design is whether a given set of
axioms characterizes that mechanism. \lyxdeleted{bzmao}{Sat Jan 10 22:51:34 2026}{For
example, Hashimoto et al. }\lyxdeleted{bzmao}{Sat Jan 10 22:51:34 2026}{\cite{hashimoto2014two}}\lyxdeleted{bzmao}{Sat Jan 10 22:51:34 2026}{,
Bogomolnaia }\lyxdeleted{bzmao}{Sat Jan 10 22:51:34 2026}{\cite{bogomolnaia2015random}}\lyxdeleted{bzmao}{Sat Jan 10 22:51:34 2026}{,
and the joint work of Bogomolnaia and Heo }\lyxdeleted{bzmao}{Sat Jan 10 22:51:34 2026}{\cite{bogomolnaia2012probabilistic}}\lyxdeleted{bzmao}{Sat Jan 10 22:51:34 2026}{
provide axiomatic characterizations of the PS mechanism.}\lyxadded{bzmao}{Sat Jan 10 22:53:18 2026}{For
example, Bogomolnaia and Heo }\lyxadded{bzmao}{Sat Jan 10 22:52:11 2026}{\cite{bogomolnaia2012probabilistic}}\lyxadded{bzmao}{Sat Jan 10 22:53:18 2026}{,
Hashimoto et al. }\lyxadded{bzmao}{Sat Jan 10 22:52:34 2026}{\cite{hashimoto2014two}}\lyxadded{bzmao}{Sat Jan 10 22:53:18 2026}{,
and Bogomolnaia }\lyxadded{bzmao}{Sat Jan 10 22:52:57 2026}{\cite{bogomolnaia2015random}}\lyxadded{bzmao}{Sat Jan 10 22:53:18 2026}{
provide axiomatic characterizations of the PS mechanism.}

In this work we focus on the question of when the axioms uniquely
characterize RSD. For the balanced case with three agents and three
objects, Bogomolnaia and Moulin \cite{bogomolnaia2001new} showed
that the axioms uniquely characterize RSD. This raised a natural question:
do the axioms single out RSD for all numbers of agents and objects?
This question remained open for a long time and attracted sustained
attention. Its difficulty also motivated work on related problems,
including axiomatic characterizations of RSD in related frameworks.
Basteck \cite{basteck2025axiomatization} has shown that, when randomized
mechanisms are viewed as assigning lotteries over deterministic assignments
to each profile, RSD can be characterized by ex-post efficiency, equal
treatment of equals, and probabilistic (Maskin) monotonicity; and
Pycia and Troyan \cite{pycia2024random} have represented randomized
mechanisms as an extensive-form game and have obtained such a characterization
by strengthening strategy-proofness to obvious strategy-proofness.
The difficulty of this question also spurred work on the limits of
the axiom system; in particular, each of the following one-axiom strengthenings
leads to an impossibility result, in the sense that no mechanism satisfies
the resulting strengthened axiom system: Bogomolnaia and Moulin \cite{bogomolnaia2001new}
strengthen ex-post efficiency to ordinal efficiency; in a different
setting with cardinal utilities, Zhou \cite{zhou1990conjecture} strengthens
ex-post efficiency to ex-ante efficiency; Nesterov \cite{nesterov2017fairness}
strengthens equal treatment of equals to envy-freeness; and Bade \cite{bade2016fairness}
strengthens strategy-proofness to group strategy-proofness.

At the center of attention was the question whether the axioms uniquely
characterize RSD in the balanced case where the number of objects
equals the number of agents. In that line of work, the case of four
agents and four objects became a touchstone: it was the first unsolved
balanced market size, and it resisted purely analytic proofs, being
resolved only via a computer-assisted analysis (Sandomirskiy \cite{fssite}
and unpublished computerized verification by others). Finally, the
long-standing question was answered in the negative by Basteck and
Ehlers \cite{basteck2025constrained}, who constructed mechanisms
different from RSD that still satisfy the axioms and showed that such
mechanisms exist for all markets with at least five agents and at
least five objects. Their construction is inspired by a result of
Erdil \cite{erdil2014strategy}, who has shown that if agents are
allowed to rank receiving nothing above some objects, then mechanisms
other than RSD exist that satisfy the axioms.

The present work completes this picture. We fix a number of agents
$n$ and a number of indivisible objects $m$ with $n,m\geq2$, and
consider the randomized mechanisms that satisfy the axioms. Our main
result is a complete classification of the pairs $\left(n,m\right)$
according to whether the axioms uniquely characterize RSD or whether
they admit additional distinct mechanisms. On the positive side, we
show that the axioms uniquely determine the assignment probabilities
for every preference profile whenever there are at most three agents
and an arbitrary number of objects, and in the balanced case with
four agents and four objects. On the negative side, we show that outside
this domain the axioms are too weak: for each remaining pair $\left(n,m\right)$
we show how to construct mechanisms that satisfy the axioms, yet differ
from RSD. In retrospect, the full classification obtained in the present
work helps explain why the balanced case $n=m=4$ proved difficult:
when $n=4$, uniqueness holds only in the balanced case $m=4$, a
phenomenon that does not arise for other values of $n$. Table \textcolor{blue}{\ref{tab:results}}
summarizes the uniqueness and non-uniqueness results.

\begin{table}[!tph]
\centering
\begin{tikzpicture}[
  x=2.2cm, y=-1.3cm, 
  line width=0.4pt,
  font=\normalsize
]

\draw (0,0) rectangle (5,5);

\draw (1,0) -- (1,5);
\draw (2,0) -- (2,1);
\draw[very thin] (2,2) -- (2,5);

\draw (3,0) -- (3,1);
\draw[very thick] (3,1) -- (3,3);
\draw[very thin] (3,3) -- (3,4);
\draw[very thick] (3,4) -- (3,5);

\draw (4,0) -- (4,1);
\draw[very thick] (4,3) -- (4,4);
\draw[very thin] (4,4) -- (4,5);

\draw (0,1) -- (5,1);
\draw (0,2) -- (1,2);
\draw[very thin] (1,2) -- (5,2);
\draw (0,4) -- (1,4);
\draw[very thick] (3,4) -- (4,4);
\draw[very thin] (4,4) -- (5,4);

\draw (0,3) -- (1,3);
\draw[very thin] (2,3) -- (3,3);
\draw[very thick] (3,3) -- (4,3);

\draw (0,0) -- (1,1);              
\node at (0.72,0.28) {$n$};        
\node at (0.28,0.72) {$m$};        

\node at (1.5,0.5) {$2$};
\node at (2.5,0.5) {$3$};
\node at (3.5,0.5) {$4$};
\node at (4.5,0.5) {$\ge 5$};

\node at (0.5,1.5) {$2$};
\node at (0.5,2.5) {$3$};
\node at (0.5,3.5) {$4$};
\node at (0.5,4.5) {$\ge 5$};

\node at (2,1.5) {\cmark{ \scriptsize(Section \ref{sec:m=2})}};
\node at (4,1.5) {\xmark{ \scriptsize(Section \ref{sec:m=2})}};
\node at (1.5,3.5) {\cmark { \scriptsize(Remark \ref{rem:nEq2_mGt2})}};
\node at (2.5,2.5) {\cmark{ \scriptsize\cite{bogomolnaia2001new}}};
\node at (2.5,4) {\cmark{ \scriptsize(Section \ref{subsec:n=3 and m>=3})}};
\node at (4.41,2.7) {\xmark{ \scriptsize(Section \ref{subsec:n>m=3,4})}};

\node[scale=0.75, transform shape, align=center] at (3.5,3.5) {%
  \cmark{}\\[0.5ex]
  \tikz[baseline]{
    \node[anchor=base] (v) {%
      {\scriptsize$\left(\begin{array}{c}
        \text{Section \LinkColorText{\ref*{subsec:n=m=4}}}\\
        \text{\& Appendix \LinkColorText{\ref*{sec:nEqmEq4_cases}}}
      \end{array}\right)$}%
    };
    \node[anchor=base, opacity=0] at ($(v.base)+(2pt,3.9pt)$) {%
      \tiny\ref{subsec:n=m=4}
    };
    \node[anchor=base, opacity=0] at ($(v.base)+(8.5pt,-3.2pt)$) {%
      \tiny\ref{sec:nEqmEq4_cases}
    };
  }
};

\node at (3.5,4.5) {\xmark{ \scriptsize(Section \ref{subsec:n=4, m>=5})}};
\node at (4.5,4.5) {\xmark{ \scriptsize\cite{basteck2025constrained}}};




\end{tikzpicture}

\caption{\label{tab:results}Do the axioms uniquely determine RSD with $n$
agents and $m$ objects?}
(\cmark{} yes, \xmark{} no)
\end{table}

For markets with at least three agents in the domain where the axioms
uniquely characterize RSD, our proofs rely on a local analysis of
the axioms, using adjacent swaps in agents' rankings to propagate
constraints across profiles and show that the assignment matrix is
uniquely determined for every preference profile. Notably, these uniqueness
arguments only require weakened versions of ex-post efficiency and
strategy-proofness. First, the weakened version of ex-post efficiency
required in our uniqueness arguments is support efficiency (in the
terminology of Brandt et al. \cite{brandt2023towards}), together
with the full assignment property (Definition \ref{def:full_assignment_property}).
Support efficiency requires that an agent can have a positive probability
of receiving a given object only if he receives it under some Pareto
efficient deterministic assignment, and the full assignment property
requires that whenever an agent (respectively, an object) is matched
in every serial dictatorship outcome, the mechanism matches that agent
(respectively, allocates that object) with probability $1$. Second,
in place of full strategy-proofness, it suffices to assume upper invariance
and lower invariance (in the terminology of Mennle and Seuken \cite{mennle2021partial}),
two conditions which, together with swap monotonicity, are equivalent
to strategy-proofness. Upper and lower invariance mean that when an
agent swaps two adjacent objects in his ranking, his probabilities
of receiving any other object, as well as the outside option, remain
unchanged.

We also provide a complete description of the case $m=2$. In this
case, we show that every mechanism satisfying the axioms can be represented
by a single function that assigns to each subset of agents the probability
with which its members receive their preferred object, when they rank
the two objects the same way, and the other agents rank them the opposite
way. This yields a simple parameterization of the entire class of
mechanisms satisfying the axioms and allows us to determine exactly
when RSD is the unique mechanism and when there is a continuum of
alternatives.

In the non-uniqueness domain we follow a different strategy, conceptually
close in spirit to the construction of Basteck and Ehlers \cite{basteck2025constrained}.
We first identify RSD as the symmetrization of a specific mechanism
that satisfies ex-post efficiency and strategy-proofness. We then
perform carefully designed adjustments to this underlying mechanism
on a restricted family of profiles, chosen so that these two axioms
remain satisfied. Symmetrizing the adjusted mechanism yields a mechanism
that satisfies all three axioms yet is distinct from RSD.

Building on the work of Basteck and Ehlers \cite{basteck2025constrained},
whose construction yields a mechanism that satisfies the axioms and,
in addition, anonymity (a strengthening of equal treatment of equals)
and neutrality, we then investigate whether further strengthening
the axiom system can restore uniqueness. In particular, we consider
adding the bounded invariance axiom and answer their open question
on whether this addition can restore uniqueness of RSD in the negative.
We then go further by additionally imposing non-bossiness and cross
monotonicity, and show that even with these extra axioms, the resulting
system still does not uniquely characterize RSD in sufficiently large
markets.

The work is organized as follows. Section \ref{sec:preliminaries}
introduces the formal model, recalls the RSD mechanism together with
the axioms of ex-post efficiency, equal treatment of equals, and strategy-proofness,
and establishes several preliminary lemmas, including characterizations
of ex-post efficiency and strategy-proofness and further properties
implied by the axioms that are used later in the proofs. In Section
\ref{sec:m=2} we analyze the case of two objects and obtain a complete
parameterization of all mechanisms satisfying the axioms in this setting.
Section \ref{sec:uniqueness} contains the positive results identifying
all market sizes for which the axioms uniquely determine RSD. Section
\ref{sec:non-uniqueness} provides the non-uniqueness results by constructing
explicit alternative mechanisms for all remaining values of $n$ and
$m$, and shows that adding the bounded invariance axiom does not
restore uniqueness of RSD.
Finally, the appendix consists of two parts. Appendix \ref{sec:appendix-adding_axioms}
introduces the non-bossiness and cross monotonicity axioms, and shows
that even adding these axioms on top of the previous ones does not
restore uniqueness of RSD in sufficiently large markets. Appendix
\ref{sec:nEqmEq4_cases} presents the exhaustive case analysis for
the market with four agents and four objects.

\section{Preliminaries} \label{sec:preliminaries}
\subsection{The model}
We begin with some preliminary definitions. Let $N$ denote the set of individuals, referred to as \textit{agents}, and $H$ the set of indivisible objects, referred to as \textit{houses}. We define their sizes as $\left|N\right|\coloneqq n$ and $\left|H\right|\coloneqq m$, and identify the set of agents with $\left[n\right]=\left\{ 1,\dots,n\right\}$.

We consider an environment in which each house can be assigned to at most one agent, and each agent can receive at most one house. Since $n$ and $m$ may differ, we define an \textit{assignment} as a matching in the complete bipartite graph with vertex set $N\cup H$, and let $\mathcal{S}$ denote the set of all such assignments. An assignment $s\in\mathcal{S}$ \textit{assigns} agent $i\in N$ to house $h\in H$ if $\left\{ i,h\right\} \in s$, and we write $s\left(i\right)\coloneqq h$. If $i$ is not assigned any house under $s$, we say that $i$ is \textit{unassigned} and write $s\left(i\right)\coloneqq\varnothing$, where $\varnothing\notin H$ denotes the \textit{null object}. We define the set of all possible objects as $O\coloneqq H\cup\left\{ \varnothing\right\}$ . From this point on, we use \textit{object} to refer to an element of $O$ (including the null object), and \textit{house} to refer specifically to an element of $H$.

We consider settings where each agent's preference depends only on the object they are assigned, rather than on the full assignment. A \textit{preference order} is a total order over the set $O$, in which every house $h\in H$ is strictly preferred to the null object $\varnothing$. Let $\mathcal{R}$ denote the set of all preference orders. For a preference order $R\in\mathcal{R}$ and two objects $o,o^{\prime}\in O$ the notation $o^{\prime}Ro$ denotes that $o^{\prime}$ is weakly preferred to $o$ under $R$, and $o^{\prime}R^{+}o$ denotes that $o^{\prime}$ is strictly preferred to $o$ under $R$. The \textit{upper contour set} of $o$ with respect to $R$ is defined as
\[C_{R}\left(o\right)\coloneqq\left\{ o^{\prime}\in O\mid o^{\prime}Ro\right\}.\]In other words, $C_{R}\left(o\right)$ is the set of objects that are at least as good as $o$ according to the preference order $R$.

Since we will consider randomized mechanisms, agents are assigned probability distributions over objects rather than the objects themselves. For a preference order $R\in\mathcal{R}$, we extend the preference relation to a partial order over the set $\Delta\left(O\right)$ of probability distributions over $O$, using \textit{first-order stochastic dominance}. Specifically, for $p,p^{\prime}\in\Delta\left(O\right)$, we say that an agent with preference order $R$ \textit{weakly prefers} $p$ to $p^{\prime}$, denoted $p\succeq_{R}p^{\prime}$, if for every object $o\in O$, \[\sum_{o^{\prime}\in C_{R}\left(o\right)}p\left(o^{\prime}\right)\geq\sum_{o^{\prime}\in C_{R}\left(o\right)}p^{\prime}\left(o^{\prime}\right).\]We say that the agent \textit{strictly prefers} $p$ to $p^{\prime}$, denoted $p\succ_{R}p^{\prime}$, if $p\succeq_{R}p^{\prime}$ and the inequality is strict for at least one $o\in O$.

A \textit{preference profile} is an element $\mathbf{P}=\left(P_{1},\dots,P_{n}\right)\in\mathcal{R}^{N}$ that assigns to each agent $i\in N$ a preference order $P_{i}\in\mathcal{R}$. For $o,o^{\prime}\in O$, if $oP_{i}o^{\prime}$, we say that agent $i$ \textit{weakly prefers} $o$ to $o^{\prime}$ under the profile $\mathbf{P}$. 

A \textit{deterministic mechanism} is a function $\mathcal{R}^{N}\rightarrow\mathcal{S}$, assigning to each preference profile a specific assignment. Similarly, an \textit{extensive-form randomized mechanism} is a function $M:\mathcal{R}^{N}\rightarrow\Delta\left(\mathcal{S}\right)$, which maps each profile to a lottery over assignments. However, in most cases, our primary concern is the outcome itself rather than the process by which it is generated. Therefore, we introduce the following definition.

\begin{defn}[Normal-form randomized mechanism]
A \textit{normal-form randomized mechanism} is a function $f:\nolinebreak\mathcal{R}^{N}\rightarrow\left[0,1\right]^{H\times N}$ that directly maps each preference profile $\mathbf{P}$ to a probability matrix. For every house-agent pair $\left(h,i\right)\in H\times N$, the entry $f\left(\mathbf{P}\right)_{h,i}$ denotes the probability that agent $i$ receives house $h$.
\end{defn}

For each $h\in H$, denote by $f\left(\mathbf{P}\right)_{h}$ the row vector of $f\left(\mathbf{P}\right)$ corresponding to $h$. Similarly, for each $i\in N$, denote by $f\left(\mathbf{P}\right)_{i}$ the column vector of $f\left(\mathbf{P}\right)$ corresponding to $i$. With a slight abuse of notation, we identify $f\left(\mathbf{P}\right)_{i}$ with the probability distribution that assigns to agent $i$ each house $h$ with probability $f\left(\mathbf{P}\right)_{h,i}$ and the null object with probability $1-\sum_{h\in H}f\left(\mathbf{P}\right)_{h,i}$.

We also define the normal form of an extensive-form randomized mechanism $M:\mathcal{R}^{N}\rightarrow\Delta\left(\mathcal{S}\right)$ to be the normal-form randomized mechanism $f_{M}:\mathcal{R}^{N}\rightarrow\left[0,1\right]^{H\times N}$ that, for every preference profile $\mathbf{P}$, yields the same assignment probabilities as those induced by $M$. Specifically, for every $\mathbf{P}\in\mathcal{R}^{N}$, $i\in N$ and $h\in H$, 

\[f_{M}\left(\mathbf{P}\right)_{h,i}\coloneqq\sum_{s\in\mathcal{S}:s\left(i\right)=h}M\left(\mathbf{P}\right)\left(s\right).\]We say that two randomized mechanisms are \textit{welfare equivalent} if they have the same normal form. Unless stated otherwise, we will use the term \textit{mechanism} to refer to a randomized mechanism in its normal form.

Using the definitions above, we now introduce the \textit{Random Serial Dictatorship (RSD)} mechanism and the main axioms used throughout the paper. To define RSD in its extensive form, we begin with the deterministic mechanisms that generate the assignments in the support of ${\rm RSD}\left(\mathbf{P}\right)$ for each $\mathbf{P}\in\mathcal{R}^{N}$.

Given a permutation $\sigma\in S_{n}$, the \textit{Serial Dictatorship} mechanism ${\rm SD}_{\sigma}$ is the deterministic mechanism in which agents are ordered according to $\sigma$  and each, in turn, selects their most preferred available house (i.e., one not chosen by any earlier agent in the sequence), and if $n>m$, the last $n-m$ agents get the null object.

\begin{defn}[RSD]
The \textit{Random Serial Dictatorship} mechanism, denoted RSD, is the randomized mechanism that samples a permutation $\sigma\in S_{n}$ uniformly at random and applies the corresponding serial dictatorship mechanism ${\rm SD}_{\sigma}$. Thus, in its extensive form, \[{\rm RSD}\left(\mathbf{P}\right)\coloneqq\frac{1}{n!}\sum_{\sigma\in S_{n}}{\rm SD}_{\sigma}\left(\mathbf{P}\right).\]
\end{defn}

An assignment $s\in\mathcal{S}$ is  \textit{(Pareto) efficient} with respect to a profile $\mathbf{P}\in\mathcal{R}^{N}$ if there is no other assignment $s^{\prime}\in\mathcal{S}$ such that some agent $i$ strictly prefers $s^{\prime}\left(i\right)$ to $s\left(i\right)$ without another agent $j$ strictly preferring $s\left(j\right)$ to $s^{\prime}\left(j\right)$. In other words, whenever there exists an agent $i$ who strictly prefers $s^{\prime}\left(i\right)$ over $s\left(i\right)$, there must also be at least one agent $j$ who strictly prefers $s\left(j\right)$ over $s^{\prime}\left(j\right)$.

A deterministic mechanism is \textit{efficient} if it produces an efficient assignment for every profile. Similarly, an extensive form mechanism is \textit{ex-post efficient} if, for every profile, its output distribution is supported entirely on efficient assignments with respect to that profile.

Given our primary focus on mechanisms in normal form, we define efficiency for them as follows:

\begin{defn}[ExPE]
A normal form mechanism $f$ satisfies \textit{Ex-Post Pareto Efficiency (ExPE)} if there exists an ex-post efficient extensive-form mechanism $M$ such that $f_{M}=f$.
\end{defn}

The following fairness axiom reflects the idea that agents who are identical in all relevant aspects should be treated identically.

\begin{defn}[ETE]
A mechanism $f:\mathcal{R}^{N}\rightarrow\left[0,1\right]^{H\times N}$ satisfies \textit{Equal Treatment of Equals (ETE)} if, for every preference profile and for every pair of agents with identical preferences in that profile, the mechanism assigns them the same distribution over objects. Formally, for every $\mathbf{P}\in\mathcal{R}^{N}$ and every pair of agents $i,i^{\prime}\in N$, if $P_{i}=P_{i^{\prime}}$, then $f\left(\mathbf{P}\right)_{i}=f\left(\mathbf{P}\right)_{i^{\prime}}$.
\end{defn}

Given a profile $\mathbf{P}\in\mathcal{R}^{N}$ and a preference order $R_{i}\in\mathcal{R}$, we denote by $\left(\mathbf{P}_{-i},R_{i}\right)\in\mathcal{R}^{N}$ the profile obtained by replacing agent $i$'s preference in $\mathbf{P}$ with $R_{i}$, while keeping all other agents' preferences unchanged. 

\begin{defn}[SP]
A mechanism $f:\mathcal{R}^{N}\rightarrow\left[0,1\right]^{H\times N}$ is \textit{strategy-proof (SP)} if for every profile $\mathbf{P}\in\mathcal{R}^{N}$, every agent $i\in N$, and every $R_{i}\in\mathcal{R}$, the following holds: \[f\left(\mathbf{P}\right)_{i}\succeq_{P_{i}}f\left(\mathbf{P}_{-i},R_{i}\right)_{i}.\]In other words, each agent weakly prefers the distribution over the objects they receive when reporting their true preferences to any distribution they could obtain by misreporting unilaterally.
\end{defn}

Having defined the main axioms, we next lay out notational conventions for agent and house renamings, then formalize associated invariance properties. We denote by $\Pi$ the set of all permutations of $N$, that is,
all bijections $N\rightarrow N$. Elements of $\Pi$ will always be
interpreted as \emph{renamings of the agents}. By contrast, $S_{n}$ refers to orderings of the agents.
Similarly, we denote by $\Gamma$ the set of all permutations of $H$,
which we interpret as \emph{renamings of the houses}.

\begin{notation}
Let $\mathbf{P}\in\mathcal{R}^{N}$, $\pi\in\Pi$ and $\tau\in\Gamma$.
We write $\left(\pi,\tau\right)\left(\mathbf{P}\right)$ for the
preference profile obtained from $\mathbf{P}$ by renaming the agents
according to $\pi$ and the houses according to $\tau$. Formally,
if $\mathbf{Q}\coloneqq\left(\pi,\tau\right)\left(\mathbf{P}\right)$,
then for each $i\in N$ and $h,h^{\prime}\in H$, 
\[
hQ_{i}h^{\prime}\iff\tau^{-1}\left(h\right)P_{\pi^{-1}\left(i\right)}\tau^{-1}\left(h^{\prime}\right).
\]
Moreover, we set $\pi\left(\mathbf{P}\right)\coloneqq\left(\pi,{\rm id}\right)\left(\mathbf{P}\right)$
and $\tau\left(\mathbf{P}\right)\coloneqq\left({\rm id},\tau\right)\left(\mathbf{P}\right)$.
\end{notation}
\begin{notation}
Let $f$ be a mechanism, $\pi\in\Pi$, and $\tau\in\Gamma$. We write
$\left(\pi,\tau\right)\left(f\right)$ for the mechanism obtained
from $f$ by renaming the agents according to $\pi$ and the houses
according to $\tau$. Formally, for every $\mathbf{P}\in\mathcal{R}^{N}$,
$h\in H$, and $i\in N$,
\[
\left(\pi,\tau\right)\left(f\right)\left(\mathbf{P}\right)_{h,i}\coloneqq f\left(\left(\pi,\tau\right)\left(\mathbf{P}\right)\right)_{\tau\left(h\right),\pi\left(i\right)}.
\]
Moreover, we set $\pi\left(f\right)\coloneqq\left(\pi,{\rm id}\right)\left(f\right)$
and $\tau\left(f\right)\coloneqq\left({\rm id},\tau\right)\left(f\right)$.
\end{notation}
We now introduce two central invariance properties of mechanisms.
\begin{defn}[Anonymity]
A mechanism $f$ is \emph{anonymous} if it is invariant under renamings of agents; that is, $\pi\left(f\right)=f$ for every $\pi\in\Pi$.
\end{defn}
\begin{defn}[Neutrality]
A mechanism $f$ is \emph{neutral} if it is invariant under renamings of houses; that is, $\tau\left(f\right)=f$ for every $\tau\in\Gamma$.
\end{defn}
\begin{rem}[Anonymity $\Rightarrow$ ETE]
If two agents submit identical preferences, then any anonymous mechanism assigns them identical probabilistic assignments. Indeed, let $f$ be anonymous and let $\mathbf{P}$ satisfy $P_i=P_j$. Let $\pi$ be the transposition of $i$ and $j$. Since $\pi\left(\mathbf{P}\right)=\mathbf{P}$ and $\pi\left(f\right)=f$, we obtain 
\[
f\left(\mathbf{P}\right)_i=\pi\left(f\right)\left(\mathbf{P}\right)_i=f\left(\pi\left(\mathbf{P}\right)\right)_{\pi\left(i\right)}=f\left(\mathbf{P}\right)_j,
\]
so agents $i$ and $j$ receive the same probabilistic assignment.
\end{rem}

\subsection{The role of efficiency in characterizing outcomes}
The following lemma, which is known in the literature (see, e.g., \cite{bogomolnaia2001new} for the balanced case where $n=m$), characterizes the set of efficient assignments
for a given preference profile. It states that every efficient assignment
arises from a serial dictatorship mechanism. Although this result
is known, we include a proof for the convenience of the reader.

\begin{lem}
Let $\mathbf{P}\in\mathcal{R}^{N}$ be a preference profile. Then
the set of all efficient assignments with respect to $\mathbf{P}$
is 
\[
\left\{ {\rm SD}_{\sigma}\left(\mathbf{P}\right)\mid\sigma\in S_{n}\right\} .
\]
\end{lem}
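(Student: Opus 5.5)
We prove two inclusions. For "$\supseteq$", fix $\sigma\in S_n$ and let $s={\rm SD}_{\sigma}(\mathbf{P})$. Suppose, for contradiction, that some $s'\in\mathcal{S}$ weakly improves every agent and strictly improves some agent. Let $k$ be the first position in the order $\sigma$ at which an agent $i=\sigma(k)$ has $s'(i)\neq s(i)$; such a position exists because $s'\neq s$ (otherwise nobody would be strictly better off).

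Every agent earlier in the order keeps the same house under $s'$. So the house $s'(i)$ is not among the houses taken before $i$'s turn in the serial dictatorship. Two cases arise:
\begin{itemize}
\item if $s'(i)\in H$, it was available to $i$, and $i$ chose $s(i)$ as his favourite available house, so $s(i)P_i^{+}s'(i)$;
\item if $s'(i)=\varnothing$, then $s(i)\neq\varnothing$, and since every house beats the null object, again $s(i)P_i^{+}s'(i)$.
\end{itemize}
In both cases $i$ strictly prefers $s$, contradicting the assumption that $s'$ weakly improves everyone. Hence $s$ is efficient. The only care needed is the null-object bookkeeping when $n>m$, which the second case handles.

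For "$\subseteq$", fix an efficient $s$ and build $\sigma$ greedily, in the spirit of the standard top-trading argument. Maintain the set $A$ of agents not yet ordered and the set $F$ of houses not yet taken (initially $N$ and $H$). The key step is the following claim: at each stage there is an agent $i\in A$ such that $s(i)$ is $i$'s most preferred element of $F\cup\{\varnothing\}$. We then put $i$ next in $\sigma$ and remove $i$ from $A$ and $s(i)$ from $F$ (if $s(i)\in H$).

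This invariant keeps the process consistent: at every stage the houses of $s$ still held by agents in $A$ lie in $F$. Once the claim holds throughout, each agent picks exactly $s(i)$ at his turn. Agents choosing $\varnothing$ do so only when $F$ is empty, because they pick their top element of $F\cup\{\varnothing\}$ and every house beats $\varnothing$. Hence ${\rm SD}_{\sigma}(\mathbf{P})=s$.

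The claim is the main obstacle, and I would prove it by contradiction. Suppose every $i\in A$ strictly prefers some element of $F$ to $s(i)$. Let $g(i)\in F$ be $i$'s favourite house in $F$; note $F\neq\emptyset$, since otherwise each $i\in A$ would get $\varnothing$, which is his top remaining choice. Now follow the map $i\mapsto g(i)$ to the holder of $g(i)$, if one exists. Each $g(i)$ is either unassigned under $s$ or held by some agent of $A$, because the houses held by already-ordered agents were removed from $F$.
\begin{itemize}
\item If some $g(i)$ is unassigned under $s$, reassigning it to $i$ is a Pareto improvement, which is a contradiction.
\item Otherwise the map $i\mapsto s^{-1}(g(i))$ sends $A$ to $A$, so it contains a cycle. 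Giving each agent on the cycle his $g$-house strictly improves all of them and leaves everyone else unchanged, again contradicting efficiency.
\end{itemize}
Either way the claim holds, which completes the proof.
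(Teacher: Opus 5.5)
Your proof is correct and follows essentially the same route as the paper. For $\supseteq$ you take the first agent in $\sigma$ whose object changes and show he must be strictly worse off; for $\subseteq$ you use the cycle-or-unassigned-house argument to find an agent holding his top remaining house, then build $\sigma$ greedily. The only difference is bookkeeping: you track the remaining houses $F$ inside the original market instead of passing to a restricted subproblem, so you get the Pareto improvement directly rather than appealing to efficiency of the restricted assignment.
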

\begin{proof}
We first show that ${\rm SD}_{\sigma}\left(\mathbf{P}\right)$ is
an efficient assignment with respect to $\mathbf{P}$ for every $\sigma\in S_{n}$.
Suppose, for contradiction, that this is not the case. Then there
exists an assignment $s\in\mathcal{S}$ such that each agent weakly
prefers their object in $s$ to their object in ${\rm SD}_{\sigma}\left(\mathbf{P}\right)$,
and at least one agent strictly prefers their object in $s$. Denote
$A\coloneqq\left\{ i\in N\mid s\left(i\right)\neq{\rm SD}_{\sigma}\left(\mathbf{P}\right)\left(i\right)\right\} $.
Then $A$ is nonempty, and for every $i\in A$, we have $s\left(i\right)P_{i}^{+}{\rm SD}_{\sigma}\left(\mathbf{P}\right)\left(i\right)$.

Let $j\in A$ be the agent who appears first in the ordering $\sigma$
among all agents in $A$. When $j$'s turn arrives in the execution
of ${\rm SD}_{\sigma}\left(\mathbf{P}\right)$, he selects ${\rm SD}_{\sigma}\left(\mathbf{P}\right)\left(j\right)$.
Since $s\left(j\right)P_{j}^{+}{\rm SD}_{\sigma}\left(\mathbf{P}\right)\left(j\right)$,
it must be that $s\left(j\right)$ was already chosen by some earlier
agent. By the minimality of $j$ in $A$, this agent must lie in $N\setminus A$;
that is, there exists $i\in N\setminus A$ such that ${\rm SD}_{\sigma}\left(\mathbf{P}\right)\left(i\right)=s\left(j\right)$.
But then $s\left(i\right)=s\left(j\right)$, which contradicts the
feasibility of $s$, unless $s\left(j\right)=\varnothing$. However,
this contradicts the assumption that $j$ strictly prefers $s\left(j\right)$
over ${\rm SD}_{\sigma}\left(\mathbf{P}\right)\left(j\right)$. This
contradiction shows that ${\rm SD}_{\sigma}\left(\mathbf{P}\right)$
must be efficient.

To show that these are the only efficient assignments, let $s\in\mathcal{S}$
be an assignment that is efficient with respect to $\mathbf{P}$.
We will construct an ordering $\sigma\in S_{n}$ such that $s={\rm SD}_{\sigma}\left(\mathbf{P}\right)$.

We begin by showing that the efficiency of $s$ implies that at least
one agent receives his top choice under $s$. Assume, for contradiction,
that no agent receives his top choice. For each $i\in N$, let $h_{i}\in H$
denote his top choice under $\mathbf{P}$. Define a directed graph
whose vertices are the agents, and draw a directed edge from agent
$i$ to agent $j$ if and only if $s\left(j\right)=h_{i}$; that is,
if agent $j$ holds the house that agent $i$ most prefers.

Since the number of agents is finite, this directed graph is finite
and must contain either a directed cycle or a sink. In the case of
a directed cycle, the agents involved in the cycle can reassign the
houses among themselves along the cycle, so that each receives a strictly
preferred house compared to $s$, contradicting the efficiency of
$s$. In the case of a sink, there exists an agent $i\in N$ such
that $h_{i}$ is not assigned to anyone. In that case, agent $i$
could be reassigned to $h_{i}$, which he strictly prefers over $s\left(i\right)$,
without affecting the assignment of other agents, again contradicting
the efficiency of $s$. Therefore, at least one agent must receive
his top choice under $s$. 

Let $\sigma\left(1\right)$ be such an agent. Remove agent $\sigma\left(1\right)$
and the house assigned to him from the problem, and consider the induced
assignment and profile on the remaining agents and houses. Since $s$
is efficient, the restricted assignment remains efficient with respect
to the restricted profile. By the same reasoning as above, there exists
an agent among the remaining ones who receives his top choice in the
restricted problem. Let this agent be $\sigma\left(2\right)$, and
repeat the process inductively. If no houses remain, the remaining
agents can be ordered arbitrarily. In this way, we construct an ordering
$\sigma\in S_{n}$ such that $s={\rm SD}_{\sigma}\left(\mathbf{P}\right)$,
as desired.
\end{proof}
We conclude from the above lemma that any mechanism $f$ satisfying
ExPE must, for each profile $\mathbf{P}\in\mathcal{R}^{N}$, produce
an assignment matrix that lies in the convex hull of the assignment
matrices corresponding to the assignments $\left\{ {\rm SD}_{\sigma}\left(\mathbf{P}\right)\mid\sigma\in S_{n}\right\} $.
In particular, whenever a given entry is zero in all such assignment
matrices, it must be zero also in $f\left(\mathbf{P}\right)$. This
leads to the following weaker version of ExPE (which has appeared
in the literature, particularly in the context of attempts to characterize
RSD; see, e.g., \cite{brandt2023towards, fssite})
\begin{defn}
[Support efficiency]\label{def:support_efficiency}A mechanism $f$
satisfies \emph{support efficiency} if for every preference profile
$\mathbf{P}\in\mathcal{R}^{N}$, every house $h\in H$, and every
agent $i\in N$, the following holds: \\
 If ${\rm SD}_{\sigma}\left(\mathbf{P}\right)\left(i\right)\neq h$
for every ordering $\sigma\in S_{n}$, then 
\[
f\left(\mathbf{P}\right)_{h,i}=0.
\]
That is, if agent $i$ is never assigned house $h$ under the SD mechanism
with any ordering of the agents at profile $\mathbf{P}$, then the
mechanism $f$ must assign probability zero to the pair $\left(h,i\right)$
at profile $\mathbf{P}$.
\end{defn}
Alongside support efficiency, we will also use the following weak
implication of ExPE.
\begin{defn}
[Full assignment property]\textcolor{blue}{\label{def:full_assignment_property}}
A mechanism $f$ satisfies the \emph{full assignment property} if,
for every preference profile $\mathbf{P}$, the following hold:
\begin{enumerate}
\item If an agent $i$ is assigned a house in every SD mechanism at $\mathbf{P}$,
then $i$ is assigned a house with probability $1$ under $f\left(\mathbf{P}\right)$.
\item If a house $h$ is assigned to an agent in every SD mechanism at $\mathbf{P}$,
then $h$ is assigned to an agent with probability $1$ under $f\left(\mathbf{P}\right)$.
\end{enumerate}
\end{defn}
\lyxadded{bzmao}{Mon Jan  5 14:32:29 2026}{In the balanced case $m=n$,
every SD outcome assigns a house to each agent and allocates each
house. Hence, in this case, Definition \textcolor{blue}{\ref{def:full_assignment_property}}
is equivalent to requiring that, for every profile $\mathbf{P}$,
the assignment matrix $f\left(\mathbf{P}\right)$ is bi-stochastic,
that is, all row and column sums equal $1$.}
\begin{rem}
All the uniqueness proofs in Section \ref{sec:uniqueness} use only
support efficiency and the full assignment property instead of the
full ExPE axiom.
\end{rem}

\subsection{A reformulation of Strategy-Proofness}
The following lemma provides a standard local characterization of
strategy-proofness: it shows that SP is equivalent to requiring that
an agent cannot benefit from swapping two adjacent houses in their
preference. This form is well known in the literature (see, e.g.,
\cite{mennle2021partial} for the case where $n \le m$), but for the convenience of the reader,
we include a proof.
\begin{lem}
\label{lem:sp characterization}A mechanism $f$ satisfies SP if and
only if, for every preference profile $\mathbf{P}=\left(P_{1},\dots,P_{n}\right)\in\mathcal{R}^{N}$,
every agent $i\in N$, and every pair of adjacent houses $h^{\prime},h^{\prime\prime}$
in $P_{i}$, where $h^{\prime\prime}P_{i}^{+}h^{\prime}$, the following
conditions hold: 
\begin{align*}
f\left(\mathbf{P}_{-i},P_{i}^{h^{\prime},h^{\prime\prime}}\right)_{h^{\prime},i} & \geq f\left(\mathbf{P}\right)_{h^{\prime},i},\\
\forall h\in H\setminus\left\{ h^{\prime},h^{\prime\prime}\right\} :f\left(\mathbf{P}_{-i},P_{i}^{h^{\prime},h^{\prime\prime}}\right)_{h,i} & =f\left(\mathbf{P}\right)_{h,i},\\
f\left(\mathbf{P}_{-i},P_{i}^{h^{\prime},h^{\prime\prime}}\right)_{h^{\prime},i}-f\left(\mathbf{P}\right)_{h^{\prime},i} & =f\left(\mathbf{P}\right)_{h^{\prime\prime},i}-f\left(\mathbf{P}_{-i},P_{i}^{h^{\prime},h^{\prime\prime}}\right)_{h^{\prime\prime},i}.
\end{align*}
Here, $P_{i}^{h^{\prime},h^{\prime\prime}}$ denotes the preference
obtained from $P_{i}$ by swapping the adjacent houses $h^{\prime}$
and $h^{\prime\prime}$.
\end{lem}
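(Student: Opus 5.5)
Both directions reduce to comparing upper-contour sums, and the only subtlety is that the null object sits below every house, so the swaps are always between two houses.

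\emph{Necessity.} Assume $f$ is SP. Fix $\mathbf{P}$, an agent $i$, and adjacent houses $h''P_i^{+}h'$. Write $Q\coloneqq P_i^{h',h''}$, and let $p=f(\mathbf{P})_i$ and $q=f(\mathbf{P}_{-i},Q)_i$. Apply SP twice: with true preference $P_i$ and misreport $Q$, which gives $p\succeq_{P_i}q$; and with true preference $Q$ (profile $(\mathbf{P}_{-i},Q)$) and misreport $P_i$, which gives $q\succeq_{Q}p$.

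Since $P_i$ and $Q$ differ only by swapping $h'$ and $h''$, they have the same upper contour sets $C(o)$ for every object $o\notin\{h',h''\}$. Applying both dominance inequalities to these sets yields equality of the cumulative sums $\sum_{o'\in C(o)}p(o')=\sum_{o'\in C(o)}q(o')$. Order the objects by $P_i$ and take consecutive differences of these equal cumulative sums; this gives $p(h)=q(h)$ for every house $h\notin\{h',h''\}$, which is the second condition. The same holds for $\varnothing$, which is why the null object imposes no extra constraint.

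For the pair itself, use the equal sums at the object immediately above the pair (or the empty set, if the pair is on top) and immediately below it. This gives $p(h')+p(h'')=q(h')+q(h'')$, which is the third condition. Finally, the contour set $C_{P_i}(h'')$ equals the common upper part plus $h''$. Dominance $p\succeq_{P_i}q$ applied there gives $p(h'')\ge q(h'')$, and hence $q(h')\ge p(h')$, the first condition.

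\emph{Sufficiency.} This is the main step. Fix $i$, the true preference $P_i$, and an arbitrary report $R$. Choose a chain of adjacent swaps $R=R^0,R^1,\dots,R^k=P_i$ in which each step swaps an adjacent pair that is ordered in $R^{t}$ contrary to $P_i$; a bubble sort toward $P_i$ does this. The step from $R^t$ to $R^{t+1}$ is then the reverse of a swap of an adjacent pair $h''R^{t+1,+}h'$, where $h''P_i^+h'$.

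Apply the three conditions at the profile $(\mathbf{P}_{-i},R^{t+1})$ with this pair. They say that going from report $R^{t+1}$ to $R^t$ leaves all other houses (and $\varnothing$) unchanged and moves some mass $\delta\ge0$ from $h''$ to $h'$.

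\emph{Key observation.} Such a transfer is weakly worse under $P_i$, because $h''P_i^+h'$. Concretely, each $P_i$-cumulative sum either stays the same or drops by $\delta$ (it drops exactly for contour sets containing $h''$ but not $h'$). Hence $f(\mathbf{P}_{-i},R^{t+1})_i\succeq_{P_i}f(\mathbf{P}_{-i},R^t)_i$ for each $t$.

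Since $\succeq_{P_i}$ is transitive (it is a conjunction of linear inequalities), chaining these comparisons gives $f(\mathbf{P})_i\succeq_{P_i}f(\mathbf{P}_{-i},R)_i$, which is SP. The one point needing care is that every step in the chain is measured against the fixed true $P_i$, not against the intermediate reports. This is exactly why the swaps must always be of pairs that are inverted relative to $P_i$.
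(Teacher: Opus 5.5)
Your overall route is the paper's. For necessity, you combine the two SP inequalities on the upper contour sets common to $P_i$ and $Q=P_i^{h',h''}$. For sufficiency, your chain of adjacent swaps, each undoing an inversion relative to $P_i$, is the paper's induction written out explicitly. The sufficiency half is fine, and so is your derivation of $p(h'')\ge q(h'')$.

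The necessity half has a small hole. You record only that $P_i$ and $Q$ share $C(o)$ for objects $o\notin\{h',h''\}$. Let $U$ be the set of houses ranked above the pair, and let $b$ be the object immediately below $h'$ in $P_i$ (possibly $b=\varnothing$).

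Among the sets you list, the one that follows $U$ in the chain is $C_{P_i}(b)=U\cup\{h'',h',b\}$. So at that step, taking consecutive differences only gives $p(h'')+p(h')+p(b)=q(h'')+q(h')+q(b)$. This does not give $p(b)=q(b)$, so your second condition is unproved at the house immediately below the pair. Likewise, your ``sum at the object immediately below'' the pair contains $b$, so it does not give the third condition either. Each of these two claims follows from the other, so the argument as written is circular at this point. The gap propagates to the first condition, since you derive $q(h')\ge p(h')$ from the third.

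The missing observation is that $U\cup\{h',h''\}$ is also a common upper contour set: $C_{P_i}(h')=C_{Q}(h'')$. Apply $p\succeq_{P_i}q$ at the object $h'$ and $q\succeq_{Q}p$ at the object $h''$. Since the two sets coincide, the cumulative sums over this set are equal. Subtracting the sum over $U$ then gives $p(h')+p(h'')=q(h')+q(h'')$, which is the third condition. Subtracting it from the sum over $C_{P_i}(b)$ gives $p(b)=q(b)$.

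This is exactly how the paper proceeds. It treats the house immediately below $h'$ separately, only after establishing equality at $C_{P_i}(h')=C_{Q_i}(h'')$. With this one step added, your proof is complete.
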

\begin{proof}
Assume first that $f$ satisfies SP. Let $\mathbf{P}\in\mathcal{R}^{N}$,
let $i\in N$, and let $h^{\prime},h^{\prime\prime}$ be adjacent
houses in $P_{i}$, with $h^{\prime\prime}P_{i}^{+}h^{\prime}$. Denote
$\mathbf{Q}\coloneqq\left(\mathbf{P}_{-i},P_{i}^{h^{\prime},h^{\prime\prime}}\right)$,
so that $\mathbf{P}=\left(\mathbf{Q}_{-i},Q_{i}^{h^{\prime\prime},h^{\prime}}\right)$.
Then, by the definition of SP, we have: 
\[
\sum_{h\in C_{Q_{i}}\left(h^{\prime}\right)}f\left(\mathbf{Q}\right)_{h,i}\geq\sum_{h\in C_{Q_{i}}\left(h^{\prime}\right)}f\left(\mathbf{P}\right)_{h,i}.
\]
If $h^{\prime}$ is the top-ranked house in $Q_{i}$, then $C_{Q_{i}}\left(h^{\prime}\right)=\left\{ h^{\prime}\right\} $,
and the desired inequality follows immediately. Otherwise, let $h_{0}$
denote the house immediately above $h^{\prime}$ in $Q_{i}$. Then,
applying SP in the reverse direction, we obtain: 
\[
\sum_{h\in C_{P_{i}}\left(h_{0}\right)}f\left(\mathbf{P}\right)_{h,i}\geq\sum_{h\in C_{P_{i}}\left(h_{0}\right)}f\left(\mathbf{Q}\right)_{h,i}.
\]
Note that $C_{P_{i}}\left(h_{0}\right)=C_{Q_{i}}\left(h^{\prime}\right)\setminus\left\{ h^{\prime}\right\} $.
Subtracting the two inequalities above, we obtain $f\left(\mathbf{Q}\right)_{h^{\prime},i}\geq f\left(\mathbf{P}\right)_{h^{\prime},i}$
which establishes the first condition.

Now let $h\in H\setminus\left\{ h^{\prime},h^{\prime\prime}\right\} $
. By SP, we have: 
\[
\sum_{a\in C_{P_{i}}\left(h\right)}f\left(\mathbf{P}\right)_{a,i}\geq\sum_{a\in C_{P_{i}}\left(h\right)}f\left(\mathbf{Q}\right)_{a,i}\text{ and }\sum_{a\in C_{Q_{i}}\left(h\right)}f\left(\mathbf{Q}\right)_{a,i}\geq\sum_{a\in C_{Q_{i}}\left(h\right)}f\left(\mathbf{P}\right)_{a,i}.
\]
Since $C_{P_{i}}\left(h\right)=C_{Q_{i}}\left(h\right)$, all these
sums must be equal. In particular, 
\[
\sum_{a\in C_{P_{i}}\left(h\right)}f\left(\mathbf{P}\right)_{a,i}=\sum_{a\in C_{Q_{i}}\left(h\right)}f\left(\mathbf{Q}\right)_{a,i}.
\]
If $h$ is the top-ranked house, this directly implies $f\left(\mathbf{Q}\right)_{h,i}=f\left(\mathbf{P}\right)_{h,i}$.
Otherwise, assume that $h$ is not the house immediately below $h^{\prime}$
in $P_{i}$ (if such a house exists). Let $a_{h}$ denote the house
ranked directly above $h$ in $P_{i}$. Then we have: 
\[
f\left(\mathbf{Q}\right)_{h,i}=\sum_{a\in C_{Q_{i}}\left(h\right)}f\left(\mathbf{Q}\right)_{a,i}-\sum_{a\in C_{Q_{i}}\left(a_{h}\right)}f\left(\mathbf{Q}\right)_{a,i}=\sum_{a\in C_{P_{i}}\left(h\right)}f\left(\mathbf{P}\right)_{a,i}-\sum_{a\in C_{P_{i}}\left(a_{h}\right)}f\left(\mathbf{P}\right)_{a,i}=f\left(\mathbf{P}\right)_{h,i},
\]
where the equalities follow from the fact that cumulative assignment
probabilities are equal for all $h\notin\left\{ h^{\prime},h^{\prime\prime}\right\} $.
We have thus established the second condition for all relevant houses,
except the house immediately below $h^{\prime}$ in $P_{i}$, if such
a house exists.

We now move on to verify the third condition of the lemma. By SP,
we have the following two inequalities: 
\[
\sum_{a\in C_{P_{i}}\left(h^{\prime}\right)}f\left(\mathbf{P}\right)_{a,i}\geq\sum_{a\in C_{P_{i}}\left(h^{\prime}\right)}f\left(\mathbf{Q}\right)_{a,i}\text{ and }\sum_{a\in C_{Q_{i}}\left(h^{\prime\prime}\right)}f\left(\mathbf{Q}\right)_{a,i}\geq\sum_{a\in C_{Q_{i}}\left(h^{\prime\prime}\right)}f\left(\mathbf{P}\right)_{a,i}.
\]
Since $C_{P_{i}}\left(h^{\prime}\right)=C_{Q_{i}}\left(h^{\prime\prime}\right)$,
both inequalities must in fact be equalities. Moreover, 
\[
\sum_{a\in C_{P_{i}}\left(h^{\prime}\right)}f\left(\mathbf{P}\right)_{a,i}=\sum_{a\in C_{Q_{i}}\left(h^{\prime\prime}\right)}f\left(\mathbf{Q}\right)_{a,i}.
\]
If $h^{\prime\prime}$ is the top-ranked house in $P_{i}$, then rearranging
proves the third condition. Otherwise, let $a^{\prime}$ be the house
ranked immediately above $h^{\prime\prime}$ in $P_{i}$, and note
that: 
\begin{align*}
f\left(\mathbf{P}\right)_{h^{\prime\prime},i}+f\left(\mathbf{P}\right)_{h^{\prime},i} & =\sum_{a\in C_{P_{i}}\left(h^{\prime}\right)}f\left(\mathbf{P}\right)_{a,i}-\sum_{a\in C_{P_{i}}\left(a^{\prime}\right)}f\left(\mathbf{P}\right)_{a,i}\\
 & =\sum_{a\in C_{Q_{i}}\left(h^{\prime\prime}\right)}f\left(\mathbf{Q}\right)_{a,i}-\sum_{a\in C_{Q_{i}}\left(a^{\prime}\right)}f\left(\mathbf{Q}\right)_{a,i}\\
 & =f\left(\mathbf{Q}\right)_{h^{\prime},i}+f\left(\mathbf{Q}\right)_{h^{\prime\prime},i},
\end{align*}
which establishes the third condition.

Let $b$ be the house ranked immediately below $h^{\prime}$ in $P_{i}$,
if such a house exists. Then: 
\[
f\left(\mathbf{P}\right)_{b,i}=\sum_{a\in C_{P_{i}}\left(b\right)}f\left(\mathbf{P}\right)_{a,i}-\sum_{a\in C_{P_{i}}\left(h^{\prime}\right)}f\left(\mathbf{P}\right)_{a,i}=\sum_{a\in C_{Q_{i}}\left(b\right)}f\left(\mathbf{Q}\right)_{a,i}-\sum_{a\in C_{Q_{i}}\left(h^{\prime\prime}\right)}f\left(\mathbf{Q}\right)_{a,i}=f\left(\mathbf{Q}\right)_{b,i},
\]
completing the verification of the second condition for this remaining
case.

Now suppose the three conditions stated in the lemma hold for every
preference profile and every pair of adjacent houses in an agent's
ranking. We will show that $f$ satisfies SP. Let $\mathbf{P}\in\mathcal{R}^{N}$,
$i\in N$ and let $R_{i}\in\mathcal{R}$ be an alternative preference
order for agent $i$. Denote by $k$ the minimal number of adjacent
swaps required to transform $P_{i}$ into $R_{i}$. We prove by induction
on $k$ that $f\left(\mathbf{P}\right)_{i}\succeq_{P_{i}}f\left(\mathbf{P}_{-i},R_{i}\right)$.
The base case $k=0$ is immediate since $R_{i}=P_{i}$. Assume the
statement holds for all preference orders reachable from $P_{i}$
by fewer than $k$ adjacent swaps. Let $h^{\prime},h^{\prime\prime}$
be adjacent houses in $R_{i}$ such that $h^{\prime\prime}P_{i}^{+}h^{\prime}$
but $h^{\prime}R_{i}^{+}h^{\prime\prime}$, and let $R_{i}^{\prime}$
be obtained from $R_{i}$ by swapping $h^{\prime}$ and $h^{\prime\prime}$.
Since the number of swaps needed to go from $P_{i}$ to $R_{i}^{\prime}$
is less than $k$, the induction hypothesis gives $f\left(\mathbf{P}\right)_{i}\succeq_{P_{i}}f\left(\mathbf{P}_{-i},R_{i}^{\prime}\right)$.
Assume, for contradiction, that $f\left(\mathbf{P}\right)_{i}\cancel{\succeq_{P_{i}}}f\left(\mathbf{P}_{-i},R_{i}\right)$.
Then there exists some house $h_{0}\in H$ such that: 
\[
\sum_{h\in C_{P_{i}}\left(h_{0}\right)}f\left(\mathbf{P}\right)_{h,i}<\sum_{h\in C_{P_{i}}\left(h_{0}\right)}f\left(\mathbf{P}_{-i},R_{i}\right)_{h,i}.
\]
However, by the induction hypothesis, we have 
\[
\sum_{h\in C_{P_{i}}\left(h_{0}\right)}f\left(\mathbf{P}\right)_{h,i}\geq\sum_{h\in C_{P_{i}}\left(h_{0}\right)}f\left(\mathbf{P}_{-i},R_{i}^{\prime}\right)_{h,i},
\]
which implies that 
\[
\sum_{h\in C_{P_{i}}\left(h_{0}\right)}f\left(\mathbf{P}_{-i},R_{i}^{\prime}\right)_{h,i}<\sum_{h\in C_{P_{i}}\left(h_{0}\right)}f\left(\mathbf{P}_{-i},R_{i}\right)_{h,i}.
\]
By the three conditions of the lemma, this strict inequality can hold
only if $h^{\prime}\in C_{P_{i}}\left(h_{0}\right)$ and $h^{\prime\prime}\notin C_{P_{i}}\left(h_{0}\right)$,
which contradicts the fact that $h^{\prime\prime}P_{i}^{+}h^{\prime}$.
Therefore, no such $h_{0}$ exists, and $f$ satisfies SP.
\end{proof}
The next definition isolates the two invariance requirements obtained
from Lemma \ref{lem:sp characterization} by omitting the inequality;
their conjunction is therefore weaker than SP.
\begin{defn}
[Upper and lower invariance]\label{def:upper_and_lower_invariance}A
mechanism $f$ satisfies \emph{upper invariance} if, for every preference
profile $\mathbf{P}\in\mathcal{R}^{N}$, every agent $i\in N$, and
every pair of adjacent houses $h^{\prime},h^{\prime\prime}$ in $P_{i}$,
where $h^{\prime\prime}P_{i}^{+}h^{\prime}$, for every $h\in H$
such that $hP_{i}^{+}h^{\prime\prime}$ we have 
\[
f\left(\mathbf{P}_{-i},P_{i}^{h^{\prime},h^{\prime\prime}}\right)_{h,i}=f\left(\mathbf{P}\right)_{h,i}.
\]
Here, $P_{i}^{h^{\prime},h^{\prime\prime}}$ is obtained from $P_{i}$
by swapping the adjacent houses $h^{\prime}$ and $h^{\prime\prime}$,
as in Lemma \ref{lem:sp characterization}\textcolor{blue}{.}

Similarly, $f$ satisfies \emph{lower invariance} if the above equality
holds for every $h\in H$ such that $h^{\prime}P_{i}^{+}h$, and,
in addition, the probability that agent $i$ receives nothing does
not change under this swap.
\end{defn}
\begin{rem}
All the uniqueness proofs in Section \ref{sec:uniqueness} use only
upper invariance and lower invariance rather than the full SP axiom.
Moreover, in Section \ref{sec:m=2} (the case $m=2$), Proposition
\ref{claim:mEq2_characterization} remains valid when SP is replaced
by lower invariance.
\end{rem}
\begin{rem}
\label{efficiency swaps}If $f$ satisfies upper and lower invariance
and $f\left(\mathbf{P}_{-i},P_{i}^{h^{\prime},h^{\prime\prime}}\right)_{h^{\prime\prime},i}=f\left(\mathbf{P}\right)_{h^{\prime\prime},i}$,
then $f\left(\mathbf{P}_{-i},P_{i}^{h^{\prime},h^{\prime\prime}}\right)_{h^{\prime},i}=f\left(\mathbf{P}\right)_{h^{\prime},i}$
as well. Indeed, upper and lower invariance imply that agent $i$'s
probability of receiving any house other than $h^{\prime}$ or $h^{\prime\prime}$
is unchanged by the swap, and lower invariance also ensures that the
probability that agent $i$ receives nothing is unchanged. By complementing
probabilities, it follows that the total probability that agent $i$
receives either $h^{\prime}$ or $h^{\prime\prime}$ is the same under
$\mathbf{P}$ and under $\left(\mathbf{P}_{-i},P_{i}^{h^{\prime},h^{\prime\prime}}\right)$.
Combining this with the given equality for $h^{\prime\prime}$ yields
the desired equality for $h^{\prime}$. Note that when $f$ satisfies
ExPE (or even just support efficiency), we can use this fact when
$i$ cannot receive $h^{\prime\prime}$ under any efficient assignment
at $\mathbf{P}$, because in that case we would have $f\left(\mathbf{P}_{-i},P_{i}^{h^{\prime},h^{\prime\prime}}\right)_{h^{\prime\prime},i}=f\left(\mathbf{P}\right)_{h^{\prime\prime},i}=0$.
\end{rem}

\subsection{Equal Treatment for All}

A useful property that follows from the ExPE, ETE, and SP axioms concerns how a mechanism must distribute a given house among agents with similar incentives. Intuitively, if a house $h$ is assigned with positive probability to several agents, and these agents have identical binary preferences between $h$ and other houses (that is, for each other house $h^{\prime}\neq h$, they either all prefer $h^{\prime}$ to $h$ or all prefer $h$ to $h^{\prime}$), then the mechanism must treat them symmetrically with respect to $h$. That is, each of them must receive $h$ with equal probability. We refer to this implication as \emph{Equal Treatment for All (ETA)}.

\begin{prop}[ETA]
Let $f$ be a mechanism that satisfies ExPE, ETE and SP. Let $h\in H$ and $\mathbf{P}=\left(P_{1},\dots,P_{n}\right)\in\nolinebreak\mathcal{R}^{N}$. Define $I_{\mathbf{P}}\coloneqq\left\{ i\in N\mid\exists\sigma\in S_{n}:{\rm SD}_{\sigma}\left(\mathbf{P}\right)\left(i\right)=h\right\}$ . Suppose that $\mathbf{P}$ is a profile such that, under every SD mechanism, the house $h$ is assigned to some agent, and that $C_{P_{i}}\left(h\right)=C_{P_{i^{\prime}}}\left(h\right)$ for all $i,i^{\prime}\in I_{\mathbf{P}}$. Then $f\left(\mathbf{P}\right)_{h,i}=\frac{1}{\left|I_{\mathbf{P}}\right|}$ for every $i\in I_{\mathbf{P}}$.
\end{prop}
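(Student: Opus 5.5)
We aim to show that all agents in $I_{\mathbf{P}}$ receive $h$ with the same probability. Once that is done, the conclusion follows at once. Since $h$ is assigned in every SD outcome, the full assignment property (implied by ExPE) gives $\sum_{i\in N}f(\mathbf{P})_{h,i}=1$. Support efficiency gives $f(\mathbf{P})_{h,i}=0$ for $i\notin I_{\mathbf{P}}$. Hence each $i\in I_{\mathbf{P}}$ receives $h$ with probability $1/|I_{\mathbf{P}}|$.

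To prove equality, fix $i,i'\in I_{\mathbf{P}}$. Let $U\coloneqq C_{P_i}(h)\setminus\{h\}=C_{P_{i'}}(h)\setminus\{h\}$ be their common set of houses ranked above $h$. The plan is to move $i'$ step by step, by adjacent swaps, until $P_{i'}=P_i$, and then invoke ETE. Swaps that permute houses inside $U$ do not alter the relative position of $h$. By upper invariance, and since $h$ lies below both swapped houses, lower invariance keeps agent $i'$'s probability of $h$ unchanged. The same holds for swaps among houses below $h$. So $i'$ can reach $P_i$ using only such swaps, without changing $U$.

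Doing this naively is not enough. Agent $i'$'s own probability of $h$ is preserved, but agent $i$'s probability of $h$ might change along the path, and it is $i$'s value we need to compare. I would therefore make a cleaner argument via an auxiliary profile. Let $\mathbf{Q}$ be obtained from $\mathbf{P}$ by replacing $P_{i'}$ with $P_i$.

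\begin{itemize}
\item By the swap argument, $f(\mathbf{Q})_{h,i'}=f(\mathbf{P})_{h,i'}$.
\item By ETE, $f(\mathbf{Q})_{h,i}=f(\mathbf{Q})_{h,i'}$.
\item It remains to show $f(\mathbf{Q})_{h,i}=f(\mathbf{P})_{h,i}$, i.e.\ that $i'$'s reordering within $U$ and within the houses below $h$ does not change $i$'s probability of $h$.
\end{itemize}

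The third point is the main obstacle, since SP constrains only the deviating agent's own lottery. My plan is to use the hypotheses on $I_{\mathbf{P}}$ together with the identities $\sum_j f_{h,j}=1$ and support efficiency at every intermediate profile. The hypotheses also hold at every intermediate profile, because the set of SD outcomes for $h$ depends only on the upper contour sets at $h$, which are unchanged.

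The argument I would attempt is an induction on $|I_{\mathbf{P}}|$ together with the number of agents not in $I_{\mathbf{P}}$. First apply the same reordering to all agents in $I_{\mathbf{P}}$ simultaneously, preserving each one's own probability of $h$, until all of them have identical preferences. Then ETE forces equal probabilities at the final profile, so each equals $1/|I_{\mathbf{P}}|$ there. Tracing back, each agent's own probability of $h$ was preserved at every step, so at $\mathbf{P}$ each agent in $I_{\mathbf{P}}$ also has probability $1/|I_{\mathbf{P}}|$.

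This refinement removes the obstacle. Each agent in $I_{\mathbf{P}}$ only ever changes its own preference, via the swaps above that preserve its own $h$-probability, and it never changes again after its turn. Its $h$-probability could still be affected by other agents' later swaps, though, so I would order the changes carefully. Alternatively, I would note that any agent $j\in I_{\mathbf{P}}$ whose $h$-probability is altered by another agent's swap still ends at $1/|I_{\mathbf{P}}|$ at the final profile. The argument would then run backward from the final profile: reverse the swaps one agent at a time and use the sum identity to show that the values stay pinned at $1/|I_{\mathbf{P}}|$. This bookkeeping is where I expect the real work lies.
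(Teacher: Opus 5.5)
Your reduction to showing that all agents in $I_{\mathbf{P}}$ receive $h$ with equal probability is correct. So is your observation that swaps not involving $h$ leave the \emph{swapping} agent's probability of $h$ unchanged. But the step you flag as ``bookkeeping'' is a genuine gap, and neither suggested route closes it.

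The claim that ``each agent's own probability of $h$ was preserved at every step'' is false: SP protects agent $j$'s value only at the step where $j$ himself moves. Running backward along a single chain with the sum identity also fails in general. Once two non-conforming agents $j_{k-1},j_k$ have been reverted, you know $j_{k-1}$'s value. However, $j_k$'s value and the common value of the agents holding the common ranking are two unknowns tied by a single equation. Induction on $|I_{\mathbf{P}}|$ and $|N\setminus I_{\mathbf{P}}|$ does not help either, since neither quantity changes when agents in $I_{\mathbf{P}}$ reorder.

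The missing idea is a strong induction over profiles:
\begin{enumerate}
\item Fix a largest group $E\subseteq I_{\mathbf{P}}$ of agents with a common ranking $P$, and induct on $|I_{\mathbf{P}}\setminus E|$.
\item For each $i'\in I_{\mathbf{P}}\setminus E$, \emph{separately and starting from $\mathbf{P}$ each time}, pass to $(\mathbf{P}_{-i'},P)$. That profile has one fewer non-conforming agent, so by the induction hypothesis $i'$ receives $h$ there with probability $1/|I_{\mathbf{P}}|$.
\item SP then transfers this value back to $i'$ at $\mathbf{P}$.
\item The agents in $E$ are pinned by ETE together with $\sum_{j}f(\mathbf{P})_{h,j}=1$.
\end{enumerate}
Only the mover's own probability is ever used. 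So your third bullet, $f(\mathbf{Q})_{h,i}=f(\mathbf{P})_{h,i}$, never needs to be proved.

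There is a second gap. To apply ETE, the sum identity and the induction hypothesis at a modified profile $\mathbf{P}'$, you need $I_{\mathbf{P}'}=I_{\mathbf{P}}$ and that $h$ is still assigned under every SD. You justify this by saying SD outcomes for $h$ depend only on the upper contour sets at $h$, which is false in general. Suppose agent $1$ ranks $x,y,h$ and agent $2$ ranks $x,h,y$. Then agent $2$ receives $h$ under the order $(1,2)$. After agent $1$ reorders to $y,x,h$, which keeps the same upper contour set at $h$, nobody ever receives $h$.

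Under the proposition's hypotheses the invariance does hold, but it needs a real argument. The paper proves it in a separate lemma. First, $t\coloneqq|T|<|I_{\mathbf{P}}|$; otherwise ordering $I_{\mathbf{P}}$ first would leave $h$ to an agent outside $I_{\mathbf{P}}$. From this, $I_{\mathbf{P}}\subseteq I_{\mathbf{P}'}$ and the fact that $h$ is always assigned follow easily. The reverse inclusion $I_{\mathbf{P}'}\subseteq I_{\mathbf{P}}$ requires an exchange argument. It compares an arbitrary order $\tau$ with the order that places $t$ agents of $I_{\mathbf{P}}$ first, followed by all agents of $N\setminus I_{\mathbf{P}}$ in their $\tau$-relative order.
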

\begin{proof}
By assumption, the set $I_{\mathbf{P}}$ is non-empty. Since $f$ satisfies ExPE, it follows that $\sum_{i\in N}f\left(\mathbf{P}\right)_{h,i}=1$ and $f\left(\mathbf{P}\right)_{h,j}=0$ for all $j\in N\setminus I_{\mathbf{P}}$. Let $E_{\mathbf{P}}\subseteq I_{\mathbf{P}}$ be a largest subset of agents with identical preferences. The result is established via induction on $\left|I_{\mathbf{P}}\setminus E_{\mathbf{P}}\right|$.

The base case $E_{\mathbf{P}}=I_{\mathbf{P}}$ follows directly from the ETE property of $f$. For the inductive step, let $P\in\mathcal{R}$ denote the common preference of agents in $E_{\mathbf{P}}$. For each $i^{\prime}\in I_{\mathbf{P}}\setminus E_{\mathbf{P}}$, consider the modified profile $\mathbf{P}^{i^{\prime}}\coloneqq\left(\mathbf{P}_{-i^{\prime}},P\right)$. Assuming $I_{\mathbf{P}^{i^{\prime}}}=I_{\mathbf{P}}$, note that $E_{\mathbf{P}^{i^{\prime}}}=E_{\mathbf{P}}\cup\left\{ i^{\prime}\right\}$, so $\left|I_{\mathbf{P}^{i^{\prime}}}\setminus E_{\mathbf{P}^{i^{\prime}}}\right|<\left|I_{\mathbf{P}}\setminus E_{\mathbf{P}}\right|$. By the induction hypothesis, $f\left(\mathbf{P}^{i^{\prime}}\right)_{h,i}=\frac{1}{\left|I_{\mathbf{P}^{i^{\prime}}}\right|}=\frac{1}{\left|I_{\mathbf{P}}\right|}$ for every $i\in I_{\mathbf{P}^{i^{\prime}}}=I_{\mathbf{P}}$, and in particular for $i=i^{\prime}$. Since $P_{i^{\prime}}$ can be transformed to $P$ via adjacent swaps not involving $h$, SP implies $f\left(\mathbf{P}\right)_{h,i^{\prime}}=\frac{1}{\left|I_{\mathbf{P}}\right|}$ for each $i^{\prime}\in I_{\mathbf{P}}\setminus E_{\mathbf{P}}$. To extend the equality to all agents $i\in E_{\mathbf{P}}$, we use the two facts derived from ExPE mentioned earlier, together with the already established equality for agents in $I_{\mathbf{P}}\setminus E_{\mathbf{P}}$, and finally ETE, which together imply $f\left(\mathbf{P}\right)_{h,i}=\frac{1}{\left|I_{\mathbf{P}}\right|}$ for all $i\in E_{\mathbf{P}}$. It is left to show that $I_{\mathbf{P}^{i^{\prime}}}=I_{\mathbf{P}}$ holds for every $i^{\prime}\in I_{\mathbf{P}}\setminus E_{\mathbf{P}}$. To establish this, we prove the following lemma, which concludes the proof of the proposition.

\begin{lem}
Let $h\in H$ and let $\mathbf{P}\in\mathcal{R}^{N}$ be a profile satisfying the requirements of the proposition. Let $\mathbf{P}^{\prime}\in\mathcal{R}^{N}$ be a profile that differs from $\mathbf{P}$ only in the preferences of agents in $I_{\mathbf{P}}$ (defined as before). If $C_{P_{i}^{\prime}}\left(h\right)=C_{P_{i}}\left(h\right)$ for every agent $i\in I_{\mathbf{P}}$, then $I_{\mathbf{P}^{\prime}}=I_{\mathbf{P}}$.
\end{lem}
\begin{proof}
Let $I\coloneqq I_{\mathbf{P}}$, $J\coloneqq N\setminus I$, and $T\coloneqq C_{P_{i}}\left(h\right)\setminus\left\{ h\right\}$  for any $i\in I$ (by hypothesis, $T$ is identical for all $i\in I$). Denote $t\coloneqq\left|T\right|$. We first show that $t<\left|I\right|$.

Suppose, for contradiction, that $t\geq\left|I\right|$. Consider the assignment ${\rm SD}_{\sigma}\left(\mathbf{P}\right)$ induced by a permutation $\sigma$  that places all agents in $I$ before those in $J$. Since each agent in $I$ strictly prefers every house in $T$ to $h$ and $t\geq\left|I\right|$, each agent in $I$ selects a house from $T$. Consequently, house $h$ remains unassigned after processing $I$. By the assumption on $\mathbf{P}$, the house $h$ must be assigned to some agent in $J$. This contradicts the definition of $I$, as no agent in $J$ can receive $h$ under any SD mechanism. Thus, $t<\left|I\right|$.

We establish $I\subseteq I_{\mathbf{P}^{\prime}}$. Let $i\in I$. Consider a permutation $\sigma$  that orders all agents in $I$ in the first $\left|I\right|$ positions, with $i$ at position $t+1$. Since $C_{P_{i^{\prime}}^{\prime}}\left(h\right)=T\cup\left\{ h\right\}$  for all $i^{\prime}\in I$, under ${\rm SD}_{\sigma}\left(\mathbf{P}^{\prime}\right)$, the first $t$ agents (each in $I$) select distinct houses from $T$, exhausting $T$. Consequently, agent $i$ selects $h$. Thus, $i\in I_{\mathbf{P}^{\prime}}$.

For the converse inclusion, define $\Sigma$  as the set of permutations that assign $t$ agents from $I$ to the first $t$ positions, then all agents in $J$ to the next $\left|J\right|$ positions, and the remaining agents from $I$ to the subsequent positions. Formally,\[\Sigma\coloneqq\left\{ \sigma\in S_{n}\mid\sigma\left(J\right)=\left[t+\left|J\right|\right]\setminus\left[t\right]\right\} .\]

Note that for every $\sigma\in\Sigma$  and every $j\in J$, $\left({\rm SD}_{\sigma}\left(\mathbf{P}\right)\left(j\right)\right)P_{j}^{+}h$. This is because, under such a permutation, the first $t$ positions are occupied by agents from $I$, who select all houses in $T$, and the agents in $J$ that come before $j$ do not take $h$. Thus, $h$ is available when it is $j$'s turn, but since $j$ does not receive $h$, it must be that $j$ chooses a house that he strictly prefers to $h$.

We demonstrate that the same holds for $\mathbf{P}^{\prime}$ (i.e., $\left({\rm SD}_{\sigma}\left(\mathbf{P}^{\prime}\right)\left(j\right)\right)P_{j}^{\prime+}h$). Given that $P_{j}^{\prime}=P_{j}$, it suffices to establish ${\rm SD}_{\sigma}\left(\mathbf{P}^{\prime}\right)\left(j\right)={\rm SD}_{\sigma}\left(\mathbf{P}\right)\left(j\right)$. This equality follows because, in both profiles, the first $t$ agents from $I$ select all houses in $T$. By induction on the ordering of agents in $J$, we conclude that each agent in $J$ chooses the same house under $\mathbf{P^{\prime}}$ as under $\mathbf{P}$.

Let $\tau\in S_{n}$. We aim to demonstrate that ${\rm SD}_{\tau}\left(\mathbf{P}^{\prime}\right)\left(j\right)\neq h$ for all $j\in J$. Let $\sigma_{\tau}$ be the permutation in $\Sigma$  preserving the relative order of agents within $I$ and within $J$ from $\tau$. Specifically, for any $i,i^{\prime}\in I$, agent $i$ precedes agent $i^{\prime}$ in $\sigma_{\tau}$ if and only if the same holds in $\tau$, and analogously for agents in $J$.

For every agent $j\in J$, define $h_{j}\coloneqq{\rm SD}_{\sigma_{\tau}}\left(\mathbf{P}^{\prime}\right)\left(j\right)$. As established previously, $h_{j}P_{j}^{\prime+}h$. Moreover, since $\sigma_{\tau}\in\Sigma$, it follows that $h_{j}\notin T$ because the first $t$ agents from $I$ exhaust $T$.

Assume by way of contradiction that there exists an agent $j_{0}\in J$ such that ${\rm SD}_{\tau}\left(\mathbf{P}^{\prime}\right)\left(j_{0}\right)=h$. Let $j^{\prime}\in J$ be the first agent in the ordering $\tau$ for which, during the execution of ${\rm SD}_{\tau}\left(\mathbf{P^{\prime}}\right)$, the house $h_{j^{\prime}}$ was unavailable for selection by $j^{\prime}$, but the house $h$ was available. Such an agent exists because the condition holds for $j_{0}$.

By the definition of $j^{\prime}$, for every agent $i\in I$ preceding $j^{\prime}$ with respect to $\tau$, we have ${\rm SD}_{\tau}\left(\mathbf{P^{\prime}}\right)\left(i\right)\in T$ (because otherwise one such agent would have selected $h$, contradicting $h$'s availability for $j^{\prime}$) and thus ${\rm SD}_{\tau}\left(\mathbf{P}^{\prime}\right)\left(i\right)\neq h_{j^{\prime}}$ for every such $i$. Since $h_{j^{\prime}}$ is unavailable for selection by $j^{\prime}$ during the execution of ${\rm SD}_{\tau}\left(\mathbf{P^{\prime}}\right)$, there exists an agent $j^{\prime\prime}\in J$ preceding $j^{\prime}$ such that ${\rm SD}_{\tau}\left(\mathbf{P^{\prime}}\right)\left(j^{\prime\prime}\right)=h_{j^{\prime}}$. Note that $h$ is available for $j^{\prime\prime}$ during ${\rm SD}_{\tau}\left(\mathbf{P^{\prime}}\right)$ (as $j^{\prime\prime}$ precedes $j^{\prime}$). Because $\sigma_{\tau}$ preserves the relative order of agents within $J$ from $\tau$, $j^{\prime\prime}$ precedes $j^{\prime}$ in $\sigma_{\tau}$, implying $h_{j^{\prime\prime}}P_{j^{\prime\prime}}^{\prime+}h_{j^{\prime}}$ (since both houses were available to $j^{\prime\prime}$ during ${\rm SD}_{\sigma_{\tau}}\left(\mathbf{P^{\prime}}\right)$ and he selected $h_{j^{\prime\prime}}$). However, during ${\rm SD}_{\tau}\left(\mathbf{P^{\prime}}\right)$, $j^{\prime\prime}$ selected $h_{j^{\prime}}$, indicating $h_{j^{\prime\prime}}$ was unavailable to him - contradicting the minimality of $j^{\prime}$ in $\tau$ . Consequently, ${\rm SD}_{\tau}\left(\mathbf{P^{\prime}}\right)\left(j\right)\neq h$ for all $j\in J$ and $\tau\in S_{n}$, which implies $I_{\mathbf{P^{\prime}}}\subseteq I$, concluding the proof.
\end{proof}
\end{proof}

\subsection{Near-unanimity forces uniqueness}
To better understand the implications of the axioms ExPE, ETE, and
SP, we analyze a structured class of preference profiles in which
agents exhibit near-unanimous agreement on the relative ranking of
every pair of houses. In such highly structured settings, the combined
force of the axioms leaves essentially no room for flexibility: the
outcome is uniquely pinned down. The following lemma formalizes this
observation by proving uniqueness of the assignment matrix under these
conditions.
\begin{lem}\label{the lemma}
Assume $n\geq\max\left\{ m,4\right\} $. Let $\mathbf{P}\in\mathcal{R}^{N}$
be a preference profile such that for every pair of distinct houses
$\left\{ h_{1},h_{2}\right\} \subseteq H$, all but possibly one agent
prefer the same house between $h_{1}$ and $h_{2}$. Then, the assignment
matrix of $\mathbf{P}$ is uniquely determined across all mechanisms
satisfying ExPE, ETE, and SP.
\end{lem}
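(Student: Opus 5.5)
The plan is to reduce the near-unanimous profile to a chain of profiles that differ by single adjacent swaps, and to determine each matrix along the chain using ETA and Remark \ref{efficiency swaps}. First, the structure of $\mathbf{P}$. Let $R$ be the majority order on houses: for each pair, the preference shared by at least $n-1\geq 3$ agents. Since $n\geq4$, any three houses have at most three exceptions, so some agent (in fact at least one) agrees with the majority on all three pairs. Hence $R$ is transitive, and it is a linear order $h_1 R^{+} h_2 R^{+}\cdots R^{+} h_m$.

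Each agent's order is then $R$ with some pairs inverted, and each pair is inverted by at most one agent. Call a pair \emph{contested} if some agent inverts it. The proof will go by induction on the total number of inversions relative to $R$, summed over agents. The base case is the unanimous profile. There, every SD outcome gives $h_k$ to the $k$-th agent in the order, and $n\geq m$ means every house is always assigned. Agents with identical preferences all have the same upper contour sets, so ETA (or directly ETE together with the full assignment property) gives $f_{h,i}=1/n$ for every house and agent.

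For the inductive step, pick an agent $i$ who deviates from $R$. Choose an adjacent pair $h'',h'$ in $P_i$ with $h''P_i^{+}h'$ but $h'R^{+}h''$, and let $\mathbf{Q}=(\mathbf{P}_{-i},P_i^{h',h''})$. The profile $\mathbf{Q}$ has one fewer inversion and is still near-unanimous, so $f(\mathbf{Q})$ is determined. By upper and lower invariance, column $i$ of $f(\mathbf{P})$ agrees with column $i$ of $f(\mathbf{Q})$ except at $h',h''$, and the sum over these two entries is fixed. So it remains to determine one entry, say $f(\mathbf{P})_{h'',i}$, together with all the other agents' columns.

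For the other columns, the idea is that every house $h$ whose contour structure is unanimous among the agents in $I_{\mathbf{P}}(h)$ is handled by ETA. Every house is always allocated because $n\geq m$ and every agent ranks every house above $\varnothing$, so ETA applies whenever the agents who can receive $h$ share the same upper contour set of $h$. The only houses where this can fail are those involved in a contested pair. Since each contested pair has a single dissenting agent, one can compute $I_{\mathbf{P}}(h)$ explicitly. In many cases the dissenter either cannot get $h$ at all, or gets $h$ while the others share contour sets. The entries of that row other than the dissenter's then follow from ETE among the majority agents, and the dissenter's entry follows from the row sum.

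I would organize this house by house, going down $R$. For each house, the rows of the houses above it are already known and the column sums are $\leq 1$, and ETE forces equality within each group of identical agents.

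I expect the main obstacle to be houses where two or more distinct dissenters each contest pairs involving the same house, or contest overlapping pairs. There ETA does not apply directly and the row is not forced by ETE plus row sums. My fallback is to move the dissenting agent by swaps instead: apply Remark \ref{efficiency swaps} in situations where, after moving $h''$ below $h'$, agent $i$ cannot receive $h''$ by support efficiency. Alternatively, I would choose the swap for agent $i$ so that it involves the lowest possible houses, which forces $i$'s entry for the upper house to be unchanged. Combining a few such swap chains from both directions, and using $n\geq 4$ to guarantee enough majority agents for the ETE arguments, should pin down all the remaining entries. If needed, I would finish with a small case split on how the contested pairs are distributed among agents.
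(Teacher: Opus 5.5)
Your setup matches the paper's induction. Each contested pair has exactly one inverter, so your inversion count relative to $R$ equals the paper's number of non-unanimous pairs. Finding an adjacent pair of $P_i$ inverted relative to $R$ via a descent is also cleaner than the paper's pigeonhole argument. The gap is in closing the inductive step. After swapping to $\mathbf{Q}$ you are left with one unknown entry of column $i$ plus all other columns, and you propose a house-by-house ETA analysis for the latter. You note yourself that this breaks when several dissenters contest pairs through the same house. The fallbacks you list (choosing the swap among the lowest houses, a final case split) are not arguments. Your use of Remark~\ref{efficiency swaps} also asks the wrong question, namely whether $i$ can receive $h''$ after the swap. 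As written, the inductive step is not proved.

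The missing fact is that at $\mathbf{P}$ itself agent $i$ can never receive $h'$. The houses $h''$ and $h'$ are adjacent in $P_i$, and $i$ is the only agent ranking $h''$ above $h'$. In any serial dictatorship, $i$ obtains $h'$ only if $h''$ was taken earlier by some $k\neq i$. But $k$ ranks $h'$ above $h''$, so $h'$ was already gone at $k$'s turn. Hence $f(\mathbf{P})_{h',i}=0$ by support efficiency. The invariant sum then gives $f(\mathbf{P})_{h'',i}=f(\mathbf{Q})_{h'',i}+f(\mathbf{Q})_{h',i}$, so column $i$ is fully determined.

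Nothing restricts you to a single deviator. Every agent whose ranking differs from $R$ has such an adjacent inverted pair of which he is the unique inverter. The same argument, with a different one-inversion-smaller profile for each deviator, therefore determines every deviator's column. The remaining agents $N'$ (if any) all rank exactly according to $R$. Since $n\geq m$, each house is allocated with probability $1$, so ETE gives each of them $\bigl(1-\sum_{a}f(\mathbf{P})_{h,a}\bigr)/|N'|$ for every house $h$, with the sum running over the deviators. This is the paper's proof, and it makes the ETA-based row analysis unnecessary.
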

\begin{proof}
We prove the lemma by induction on the number of unordered pairs $\left\{ h_{1},h_{2}\right\} \subseteq H$
for which the agents do not unanimously prefer one house over the
other. In the base case, all agents agree on the binary preference
between every pair of houses. In other words, they have identical
rankings. Since $n\geq m$, ExPE ensures that each house is fully
allocated, and ETE requires that each house is shared equally among
the agents. Hence, the assignment matrix is uniquely determined.

For the induction step, let $\mathbf{P}\in\mathcal{R}^{N}$ be a preference
profile with $d>0$ unordered pairs of houses for which the agents
do not unanimously prefer one house over the other. Let these $d$
house pairs be $\left\{ h_{1},h_{1}^{\prime}\right\} ,\dots,\left\{ h_{d},h_{d}^{\prime}\right\} $.
For each $\ell\in\left[d\right]$, let $a_{\ell}\in N$ be the agent
whose binary preference over the pair $\left\{ h_{\ell},h_{\ell}^{\prime}\right\} $
differs from that of all other agents. Note that even if $\ell_{1}\neq\ell_{2}$,
it is possible that $a_{\ell_{1}}$ and $a_{\ell_{2}}$ refer to the
same agent.

We first claim that for every $\ell\in\left[d\right]$, there exists
a pair of adjacent houses in the ranking of agent $a_{\ell}$ such
that this agent prefers one house over the other, while all other
agents prefer the other house in that pair. To see this, consider
a pair $\left\{ h,h^{\prime}\right\} $ for which $a_{\ell}$ is the
agent whose binary preference differs from the rest. If $h$ and $h^{\prime}$
are not adjacent in the ranking of $a_{\ell}$, then there exists
a house $h^{\prime\prime}$ that appears between them in that ranking.
In that case, $a_{\ell}$ must disagree with every other agent on
the binary preference of at least one of the pairs $\left\{ h,h^{\prime\prime}\right\} $
or $\left\{ h^{\prime},h^{\prime\prime}\right\} $. Since $n\geq4$,
for at least one of these two pairs there must be
two agents whose binary preference differs from that of $a_{\ell}$.
Without loss of generality, suppose this pair is $\left\{ h,h^{\prime\prime}\right\} $.
Since at most one agent can disagree with the rest on any given pair,
it follows that all other agents share the same preference over $\left\{ h,h^{\prime\prime}\right\} $,
and that this preference differs from that of $a_{\ell}$. Repeating
this process eventually leads to a pair of adjacent houses in the
ranking of $a_{\ell}$, for which his binary preference over that
pair differs from that of all the other agents.

Let $\left\{ h,h^{\prime}\right\} $ be such a pair, and let $\mathbf{P}_{\ell}$
be the profile obtained by swapping the positions of $h$ and $h^{\prime}$
in the ranking of agent $a_{\ell}$. In this modified profile, there
are only $d-1$ house pairs for which there is no consensus among
the agents regarding their binary ranking. By the induction hypothesis,
the assignment matrix for $\mathbf{P}_{\ell}$ is uniquely determined
by the axioms.

By SP, the probabilities that $a_{\ell}$ receives any house $h^{\prime\prime}\notin\left\{ h,h^{\prime}\right\} $
must remain the same in both $\mathbf{P}$ and $\mathbf{P}_{\ell}$.
Since $a_{\ell}$ is the only agent whose preference over the pair
$\left\{ h,h^{\prime}\right\} $ differs from the others, efficiency
implies that under the profile $\mathbf{P}$, he cannot receive the
house he ranks lower between the two. It then follows from SP that
his probability of receiving the house he ranks higher is also determined.
Therefore, the probabilities assigned to each agent in the set $\left\{ a_{1},\dots,a_{d}\right\} $
are uniquely determined by the axioms in the profile $\mathbf{P}$.

Now, let $N^{\prime}\coloneqq N\setminus\left\{ a_{1},\dots,a_{d}\right\} $.
By definition of the agents $a_{1},\dots,a_{d}$, all agents in $N^{\prime}$
share the same ranking in $\mathbf{P}$. This set may be empty, in
which case the proof is complete. Otherwise, since $n\geq m$, the
axiom ExPE guarantees that each house is fully allocated, and by ETE,
all agents in $N^{\prime}$ must receive the same probability for
each house. Therefore, the assignment probabilities for all agents
in $N^{\prime}$ are uniquely determined in the profile $\mathbf{P}$,
which completes the proof.
\end{proof}
\section{Characterization for \texorpdfstring{$m=2$}{m=2}} \label{sec:m=2}
In the following section, we characterize the mechanisms satisfying
ExPE, ETE, and SP for $n\geq2$ and $m=2$.

Denote the set of houses by $H\coloneqq\left\{ a,b\right\} $. For
each preference profile $\mathbf{Q}\in\mathcal{R}^{N}$ and house
$h\in H$, let $N_{\mathbf{Q},h}\subseteq N$ be the subset of agents
who rank $h$ first. Define $n_{\mathbf{Q},h}\coloneqq\left|N_{\mathbf{Q},h}\right|$
to be the number of such agents.

We begin by establishing key facts that will allow us to interpret
a mechanism satisfying the axioms as a function. By ExPE, each house
is allocated with probability $1$.

Furthermore, using induction on $n_{\mathbf{Q},a}$, we establish
that each agent is assigned a house with probability $\frac{2}{n}$.

First, as each of the two houses is fully allocated, the total probability
of being assigned a house across all agents is $2$. Second, when
all agents have identical preferences, ETE ensures that they must
all receive the same probability of being assigned, which must therefore
be $\frac{2}{n}$. This establishes the base case $n_{\mathbf{Q},a}=0$,
and the argument applies equally to the case $n_{\mathbf{Q},a}=n$.
Now consider the case where $0<n_{\mathbf{Q},a}<n$. Let $i\in N_{\mathbf{Q},a}$,
and let $\mathbf{Q}^{\prime}$ be the profile in which agent $i$
ranks $b$ above $a$, while the preferences of all other agents remain
as in $\mathbf{Q}$. Note that $n_{\mathbf{Q}^{\prime},a}<n_{\mathbf{Q},a}$,
so by the induction hypothesis, every agent in $\mathbf{Q}^{\prime}$
is assigned with probability $\frac{2}{n}$. Then, by SP, agent $i$
must also be assigned with probability $\frac{2}{n}$ in the original
profile $\mathbf{Q}$. Since there are only two types of preferences
in $\mathbf{Q}$, and the total sum of the agents' assignment probabilities
is $2$, ETE and the complementarity of probabilities together imply
that all agents in $\mathbf{Q}$ are assigned with probability $\frac{2}{n}$.

Consequently, by ETE and the complementarity of probabilities, it
suffices to determine the probability that an agent in $N_{\mathbf{Q},a}$
receives house $a$. In other words, any mechanism satisfying the
axioms is uniquely determined once this probability is specified.

Hence, we consider functions of the form $\varphi:2^{N}\setminus\left\{ \varnothing\right\} \rightarrow\left[0,\frac{2}{n}\right]$,
and define the corresponding mechanism $f\left(\varphi\right)$ to
be the mechanism that satisfies ETE and, for every $\mathbf{Q}\in\mathcal{R}^{N}$,
assigns each agent in $N_{\mathbf{Q},a}$ to house $a$ with probability
$\varphi\left(N_{\mathbf{Q},a}\right)$, allocates each house with
probability $1$, and assigns each agent with probability $\frac{2}{n}$.
The empty set is excluded from the domain of $\varphi$, since assigning
agents in $\varnothing$ to $a$ with any probability is meaningless.
Moreover, if $\varphi_{1}\neq\varphi_{2}$, then $f\left(\varphi_{1}\right)\neq f\left(\varphi_{2}\right)$,
ensuring that each such function induces a different mechanism.
\begin{prop}
\label{claim:mEq2_characterization}Let $n\geq2$ and $m=2$. The
correspondence described above defines a bijection between the set
of mechanisms satisfying ExPE, ETE and SP, and the set of functions:
\[
\mathcal{F}\coloneqq\left\{ \varphi:2^{N}\setminus\left\{ \varnothing\right\} \rightarrow\left[\frac{1}{n},\frac{2}{n}\right]\mid\begin{array}{c}
\forall R\in2^{N}\setminus\left\{ \varnothing\right\} \mathrel{:}\varphi\left(R\right)\leq\frac{1}{\left|R\right|}\\
\forall i\in N:\varphi\left(\left\{ i\right\} \right)=\frac{2}{n},\,\varphi\left(N\setminus\left\{ i\right\} \right)=\frac{1}{n-1}
\end{array}\right\} 
\]
\end{prop}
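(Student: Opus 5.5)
The plan has two directions. First, every mechanism satisfying the axioms yields a $\varphi\in\mathcal{F}$. Second, every $\varphi\in\mathcal{F}$ yields a mechanism $f(\varphi)$ satisfying ExPE, ETE and SP. Injectivity and the reduction of a mechanism to $\varphi$ were already established in the discussion preceding the statement.

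Fix a profile $\mathbf{Q}$, and write $A\coloneqq N_{\mathbf{Q},a}$ and $B\coloneqq N\setminus A=N_{\mathbf{Q},b}$. The key observation is a classification of efficient assignments when $A,B\neq\varnothing$. An assignment in which some agent of $A$ holds $b$ while some agent of $B$ holds $a$ is not efficient, since they can swap. Hence every SD outcome with both houses allocated is of one of three types:
\begin{itemize}
\item type (1): both houses go to $A$, which requires $|A|\ge 2$;
\item type (2): $a$ goes to $A$ and $b$ goes to $B$;
\item type (3): both houses go to $B$, which requires $|B|\ge2$.
\end{itemize}
By ExPE (in fact by the convex-hull remark after the efficiency lemma), $f(\mathbf{Q})$ is a mixture of these types with weights $p_1,p_2,p_3$. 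Using ETE and the fact, shown above, that each agent is assigned with probability $\frac{2}{n}$, I will solve for the weights in terms of $\varphi\coloneqq\varphi(A)$. The mass of $a$ held by $A$ gives $p_1+p_2=|A|\varphi$, and the total mass held by $A$ gives $2p_1+p_2=\frac{2|A|}{n}$. Hence
\[
p_1=|A|\Bigl(\tfrac{2}{n}-\varphi\Bigr),\qquad p_2=2|A|\Bigl(\varphi-\tfrac1n\Bigr),\qquad p_3=1-|A|\varphi .
\]
Nonnegativity of these weights yields $\frac1n\le\varphi(A)\le\min\{\frac2n,\frac1{|A|}\}$. The boundary cases follow from the same bookkeeping. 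If $|A|=1$, type (1) is impossible, so $p_1=0$ and $\varphi(\{i\})=\frac2n$. If $|B|=1$, type (3) is impossible, so $p_3=0$ and $\varphi(N\setminus\{i\})=\frac1{n-1}$. If $A=N$, the bounds already force $\varphi(N)=\frac1n$. This proves that every mechanism satisfying the axioms gives a $\varphi\in\mathcal{F}$; note that this direction uses only ExPE and ETE plus the assignment probability $\frac2n$, which itself came from lower invariance.

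For the converse, take $\varphi\in\mathcal{F}$. For ExPE, I will define the extensive form explicitly, using the weights $p_1,p_2,p_3$ above, which are nonnegative by the constraints in $\mathcal{F}$. Within each type, the recipients are drawn uniformly from the relevant group, for example an ordered pair of distinct agents of $A$ for type (1). Each realized assignment is an ${\rm SD}_\sigma$ outcome: for type (1), put the two chosen $A$-agents first; for type (2), put one $A$-agent and then one $B$-agent first; type (3) is symmetric. The constraints $\varphi(\{i\})=\frac2n$ and $\varphi(N\setminus\{i\})=\frac1{n-1}$ are exactly what make the impossible types receive weight zero. One then checks that the resulting normal form gives each $A$-agent $a$ with probability $\varphi(A)$ and each agent total probability $\frac2n$. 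ETE is immediate from the uniform choice within groups.

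For SP, I will use Lemma \ref{lem:sp characterization}; with $m=2$ the only adjacent swap is $a\leftrightarrow b$. Lower invariance holds because every agent is assigned with probability $\frac2n$ in every profile, so the probability of receiving nothing is unchanged. It remains to check the inequality when $i\in A$ switches to ranking $b$ first, which moves the profile to $A'\coloneqq A\setminus\{i\}$. In that profile, $i$ belongs to the $b$-first group and receives $a$ with probability $\frac{1-|A'|\varphi(A')}{n-|A'|}$, interpreted as $\frac1n$ when $A'=\varnothing$. SP requires this to be at most $\varphi(A)$. Using $\varphi(A')\ge\frac1n$, the left side is at most $\frac{1-|A'|/n}{n-|A'|}=\frac1n\le\varphi(A)$. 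The reverse direction, an agent of $B$ switching to $a$ first, is the same inequality read from the other profile.

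I expect the main obstacle to be the ExPE realization: making sure the three types exhaust the efficient, fully allocating outcomes, and handling the degenerate cases $|A|\in\{0,1,n-1,n\}$ where some type is unavailable. These are precisely the cases encoded by the special constraints in $\mathcal{F}$.
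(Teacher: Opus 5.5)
Your proposal is correct and follows the paper's proof. The SP check via $g_{\varphi}(R\setminus\{i\})\le\frac1n\le\varphi(R)$ is identical to the paper's, and your type weights $p_1,p_2,p_3$ with uniform selection within groups are exactly the paper's weights on $\Sigma^{\mathbf{Q}}_{a,a}$, $\Sigma^{\mathbf{Q}}_{a,b}$, $\Sigma^{\mathbf{Q}}_{b,b}$ summed over each class. The only cosmetic difference is in the necessity direction. You obtain all constraints at once as nonnegativity of $p_1,p_2,p_3$, whereas the paper derives them separately: boundary values from ExPE and ETE, $\varphi(R)\le 1/|R|$ from feasibility, and $\varphi(R)\ge 1/n$ from the event-containment argument, which amounts to $p_2\ge 0$.
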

\begin{proof}
We first show that any mechanism $f$ satisfying the axioms must correspond
to a function in $\mathcal{F}$. First, observe that when $n_{\mathbf{Q},a}\in\left\{ 0,1,n-1,n\right\} $,
the probabilities in the profile are fully determined by ExPE and
ETE. Specifically, this corresponds to 
\[
\varphi\left(\left\{ i\right\} \right)=\frac{2}{n},\ \varphi\left(N\setminus\left\{ i\right\} \right)=\frac{1}{n-1},\ {\rm and}\ \varphi\left(N\right)=\frac{1}{n}.
\]
The last condition is implicitly satisfied in the definition of $\mathcal{F}$
since the range of the functions is constrained to $\left[\frac{1}{n},\frac{2}{n}\right]$,
so since $\varphi\left(R\right)\leq\frac{1}{\left|R\right|}$ we obtain
\[
\frac{1}{n}\leq\varphi\left(N\right)\leq\frac{1}{\left|N\right|}=\frac{1}{n}.
\]
Thus, the constraint holds automatically.

Now, since $\varphi\left(R\right)$ represents the probability that
an agent $i\in R=N_{\mathbf{Q},a}$ receives house $a$, we obtain
the constraint $\left|R\right|\varphi\left(R\right)\leq1$. Rearranging,
this yields $\varphi\left(R\right)\leq\frac{1}{\left|R\right|}$.

It remains to show that $\varphi\left(R\right)\geq\frac{1}{n}$ for
any $R\in2^{N}\setminus\left\{ \varnothing\right\} $. Let $R\in2^{N}\setminus\left\{ \varnothing\right\} $
and denote by $\mathbf{Q}\in\mathcal{R}^{N}$ the corresponding profile
such that $R=N_{\mathbf{Q},a}$. When the profile is $\mathbf{Q}$,
an agent in $R$ receives $a$ with probability $\varphi\left(R\right)$
and $b$ with probability $\frac{2}{n}-\varphi\left(R\right)$. Thus,
the probability that some agent in $R$ receives $a$ is $\left|R\right|\varphi\left(R\right)$
while the probability that some agent in $R$ receives $b$ is $\left|R\right|\left(\frac{2}{n}-\varphi\left(R\right)\right)$.
Since the mechanism satisfies ExPE, it can be expressed as a convex
combination of deterministic serial dictatorship mechanisms. In these
mechanisms, whenever an agent in $R=N_{\mathbf{Q},a}$ receives $b$,
another agent in $R$ must receive $a$. This implies that the event
of some agent in $R$ receiving $b$ (in the mechanism $f$ when the
profile is $\mathbf{Q}$) is contained in the event of some agent
in $R$ receiving $a$, which yields 
\begin{align*}
\left|R\right|\left(\frac{2}{n}-\varphi\left(R\right)\right) & \leq\left|R\right|\varphi\left(R\right)\\
\implies\varphi\left(R\right) & \geq\frac{1}{n}.
\end{align*}
Thus, we conclude that any mechanism satisfying the axioms corresponds
to a function in $\mathcal{F}$.

Conversely, let $\varphi\in\mathcal{F}$. We will show that the corresponding
mechanism $f\left(\varphi\right)$ satisfies the axioms. Since the
correspondence was defined to preserve equal treatment among agents
with identical preferences, it follows that $f\left(\varphi\right)$
satisfies ETE.

Now, we will show that $f\left(\varphi\right)$ satisfies SP in two
steps: first, we will show that if $\varphi\in\mathcal{F}$ satisfies
a certain constraint, then $f\left(\varphi\right)$ satisfies SP.
Next, we will show that every $\varphi\in\mathcal{F}$ satisfies this
constraint.

As a first step, for every $R\subsetneq N$, define 
\[
g_{\varphi}\left(R\right)\coloneqq\frac{1-\left|R\right|\varphi\left(R\right)}{n-\left|R\right|},
\]
with the convention that $\left|R\right|\varphi\left(R\right)\coloneqq0$
when $R=\varnothing$. Note that since $\varphi\left(R\right)$ represents
the probability that an agent who prefers $a$ gets $a$, $g_{\varphi}\left(R\right)$
represents the probability that an agent who prefers $b$ gets $a$.
We will show that if $\varphi$ satisfies the constraint $\varphi\left(R\right)\geq g_{\varphi}\left(R\setminus\left\{ i\right\} \right)$
for every nonempty $R\subseteq N$ and for every $i\in R$, then $f\left(\varphi\right)$
satisfies SP.

Indeed, let $i\in N$. We will show that $i$ cannot gain from manipulation.
Since there are only two houses, and every agent receives some house
with a fixed probability, it suffices to show that $i$ gets $a$
with (weakly) higher probability when he ranks $a$ first. Let $\mathbf{Q}\in\mathcal{R}^{N}$.
If $i\in N_{\mathbf{Q},a}$, then we show that the probability of
$i$ getting $a$ weakly decreases if he changes his preference. In
the profile $\mathbf{Q}$, the probability that $i$ gets $a$ is
$\varphi\left(N_{\mathbf{Q},a}\right)$, and if he changes his preference,
his probability of getting $a$ becomes $g_{\varphi}\left(N_{\mathbf{Q},a}\setminus\left\{ i\right\} \right)$,
which is weakly lower by the hypothesis, as required. Similarly, if
$i\notin N_{\mathbf{Q},a}$ then the probability of $i$ getting $a$
is $g_{\varphi}\left(N_{\mathbf{Q},a}\right)$, and if he changes
his preference, it becomes $\varphi\left(N_{\mathbf{Q},a}\cup\left\{ i\right\} \right)$,
which is weakly higher, as required. Thus, under the aforementioned
constraint on $\varphi$, $f\left(\varphi\right)$ satisfies SP, as
desired.

For the second step, we will show that every $\varphi\in\mathcal{F}$
satisfies this constraint. Indeed, let $R\subseteq N$ be nonempty
and let $i\in R$. Since $\varphi\left(R\right)\geq\frac{1}{n}$ for
any nonempty $R\subseteq N$, it suffices to show that $g_{\varphi}\left(R\setminus\left\{ i\right\} \right)\leq\frac{1}{n}$.
If $R=\left\{ i\right\} $ then $g_{\varphi}\left(R\setminus\left\{ i\right\} \right)=g_{\varphi}\left(\varnothing\right)=\frac{1}{n}$.
Otherwise we have: 
\begin{align*}
g_{\varphi}\left(R\setminus\left\{ i\right\} \right) & =\frac{1-\left|R\setminus\left\{ i\right\} \right|\varphi\left(R\setminus\left\{ i\right\} \right)}{n-\left|R\setminus\left\{ i\right\} \right|}\leq\frac{1-\left|R\setminus\left\{ i\right\} \right|\frac{1}{n}}{n-\left|R\setminus\left\{ i\right\} \right|}=\frac{1}{n}
\end{align*}
Thus, the constraint is satisfied. In conclusion, for every $\varphi\in\mathcal{F}$,
$f\left(\varphi\right)$ satisfies SP.

Regarding ExPE, we will show that for every $\varphi\in\mathcal{F}$,
$f\left(\varphi\right)$ coincides with $f_{M}$ for some extensive-form
mechanism $M$ that can be expressed as a convex combination of the
efficient mechanisms $\left\{ {\rm SD}_{\sigma}\right\} _{\sigma\in S_{n}}$.
First, note that since there are only two houses, the only aspect
of the order of the agents that matters is the identity of the first
two agents to arrive. Considering this, for $h_{1},h_{2}\in H$, we
define: 
\[
\Sigma_{h_{1},h_{2}}^{\mathbf{Q}}\coloneqq\left\{ \sigma\in S_{n}\mid\forall i\in\left[2\right]:\sigma\left(i\right)\in N_{\mathbf{Q},h_{i}}\right\} .
\]
Using this notation, we define $M$ as follows: 
\[
M\left(\mathbf{Q}\right)\coloneqq\sum_{\sigma\in S_{n}}w_{\sigma,\varphi}^{\mathbf{Q}}{\rm SD}_{\sigma}\left(\mathbf{Q}\right),
\]
where: 
\begin{align*}
w_{\sigma,\varphi}^{\mathbf{Q}}\coloneqq\begin{cases}
\frac{\frac{2}{n}-\varphi\left(N_{\mathbf{Q},a}\right)}{\left(n_{\mathbf{Q},a}-1\right)\left(n-2\right)!} & \sigma\in\Sigma_{a,a}^{\mathbf{Q}}\\
\frac{g_{\varphi}\left(N_{\mathbf{Q},a}\right)}{\left(n_{\mathbf{Q},b}-1\right)\left(n-2\right)!} & \sigma\in\Sigma_{b,b}^{\mathbf{Q}}\\
\frac{2}{n_{\mathbf{Q},b}\left(n-2\right)!}\left(\varphi\left(N_{\mathbf{Q},a}\right)-\frac{1}{n}\right) & \sigma\in\Sigma_{a,b}^{\mathbf{Q}}\\
0 & {\rm otherwise}
\end{cases}
\end{align*}

First, we will show that the weights are well-defined and non-negative.
If $\Sigma_{a,a}^{\mathbf{Q}}\neq\varnothing$, then $n_{\mathbf{Q},a}\geq2$,
so we have $\frac{2}{n}-\varphi\left(N_{\mathbf{Q},a}\right)\geq0$
and $n_{\mathbf{Q},a}-1\geq1$. Similarly, if $\Sigma_{b,b}^{\mathbf{Q}}\neq\varnothing$
we have that $n_{\mathbf{Q},b}-1\geq1$ and $g_{\varphi}\left(N_{\mathbf{Q},a}\right)=\frac{1-n_{\mathbf{Q},a}\varphi\left(N_{\mathbf{Q},a}\right)}{n_{\mathbf{Q},b}}$,
which is non-negative. If $\Sigma_{a,b}^{\mathbf{Q}}\neq\varnothing$
then $N_{\mathbf{Q},a}\notin\left\{ \varnothing,N\right\} $ , so
$n_{\mathbf{Q},b}\geq1$ and $\varphi\left(N_{\mathbf{Q},a}\right)\geq\frac{1}{n}$.
Next, we will show that the sum of the weights equals $1$. We note
that $\left|\Sigma_{a,a}^{\mathbf{Q}}\right|=n_{\mathbf{Q},a}\left(n_{\mathbf{Q},a}-1\right)\left(n-2\right)!$,
$\left|\Sigma_{b,b}^{\mathbf{Q}}\right|=n_{\mathbf{Q},b}\left(n_{\mathbf{Q},b}-1\right)\left(n-2\right)!$
and $\left|\Sigma_{a,b}^{\mathbf{Q}}\right|=n_{\mathbf{Q},a}n_{\mathbf{Q},b}\left(n-2\right)!$.
Therefore, for every $\mathbf{Q}\in\mathcal{R}^{N}$, the sum of the
weights is given by: 
\begin{align*}
\sum_{\sigma\in S_{n}}w_{\sigma,\varphi}^{\mathbf{Q}} & =\sum_{\sigma\in\Sigma_{a,a}^{\mathbf{Q}}}w_{\sigma,\varphi}^{\mathbf{Q}}+\sum_{\sigma\in\Sigma_{b,b}^{\mathbf{Q}}}w_{\sigma,\varphi}^{\mathbf{Q}}+\sum_{\sigma\in\Sigma_{a,b}^{\mathbf{Q}}}w_{\sigma,\varphi}^{\mathbf{Q}}\\
 & =n_{Q,a}\left(n_{Q,a}-1\right)\left(n-2\right)!\frac{\frac{2}{n}-\varphi\left(N_{Q,a}\right)}{\left(n_{Q,a}-1\right)\left(n-2\right)!}\\
 & \qquad+n_{Q,b}\left(n_{Q,b}-1\right)\left(n-2\right)!\frac{g_{\varphi}\left(N_{Q,a}\right)}{\left(n_{Q,b}-1\right)\left(n-2\right)!}\\
 & \qquad+n_{Q,a}n_{Q,b}\left(n-2\right)!\cdot\frac{2}{n_{Q,b}\left(n-2\right)!}\left(\varphi\left(N_{Q,a}\right)-\frac{1}{n}\right)\\
 & =\frac{2n_{\mathbf{Q},a}}{n}-n_{\mathbf{Q},a}\varphi\left(N_{\mathbf{Q},a}\right)+n_{\mathbf{Q},b}\cdot\frac{1-n_{\mathbf{Q},a}\varphi\left(N_{\mathbf{Q},a}\right)}{n-n_{\mathbf{Q},a}}+2n_{\mathbf{Q},a}\varphi\left(N_{\mathbf{Q},a}\right)-\frac{2n_{\mathbf{Q},a}}{n}\\
 & =1
\end{align*}

Now, since $M$ is a convex combination of $\left\{ {\rm SD}_{\sigma}\right\} _{\sigma\in S_{n}}$,
$f_{M}$ satisfies ExPE and SP (as a convex combination of mechanisms
satisfying ExPE and SP).

Furthermore, ETE holds because of the symmetry in how the weights
are defined. Therefore, by using the fact that each agent is assigned
a house with probability $\frac{2}{n}$, along with ETE and complementing
to $1$, in order to prove that $f\left(\varphi\right)=f_{M}$, it
suffices to prove that $f_{M}$ assigns the same probability as $f\left(\varphi\right)$
for an agent in $N_{\mathbf{Q},a}$ to receive house $a$, i.e., the
probability $\varphi\left(N_{\mathbf{Q},a}\right)$.

Indeed, for any $\mathbf{Q}\in\mathcal{R}^{N}$, an agent $i\in N_{\mathbf{Q},a}$
gets $a$ in the following cases: when the order is in $\Sigma_{a,a}^{\mathbf{Q}}$
and agent $i$ is first, with $n_{\mathbf{Q},a}-1$ options for the
second agent from $N_{\mathbf{Q},a}\setminus\left\{ i\right\} $ and
$\left(n-2\right)!$ options for the arrangement of the other agents.
Additionally, agent $i$ also gets the house $a$ when the order is
in $\Sigma_{a,b}^{\mathbf{Q}}$ and agent $i$ is selected from $N_{\mathbf{Q},a}$,
with $n_{\mathbf{Q},b}$ possible choices for the second agent from
$N_{\mathbf{Q},b}$, and $\left(n-2\right)!$ options for the arrangement
of the other agents. Thus, the probability that the agent gets $a$
is: 
\begin{align*}
\left(n_{\mathbf{Q},a}-1\right)\left(n-2\right)!\cdot\frac{\frac{2}{n}-\varphi\left(N_{\mathbf{Q},a}\right)}{\left(n_{\mathbf{Q},a}-1\right)\left(n-2\right)!}+n_{\mathbf{Q},b}\left(n-2\right)!\cdot\frac{2}{n_{\mathbf{Q},b}\left(n-2\right)!}\left(\varphi\left(N_{\mathbf{Q},a}\right)-\frac{1}{n}\right)=\varphi\left(N_{\mathbf{Q},a}\right)
\end{align*}
Therefore, the equality $f\left(\varphi\right)=f_{M}$ holds, and
$f\left(\varphi\right)$ satisfies ExPE as well.
\end{proof}
\begin{rem}
\label{rem:m=2} For $n=2,3$ there is a unique function $\varphi\in\mathcal{F}$,
hence a unique mechanism satisfying the axioms. However, when $n\geq4$,
for every subset $R\subseteq N$ of agents of size $2\leq\left|R\right|\leq n-2$,
we obtain an interval of positive length for the options for $\varphi\left(R\right)$.
Therefore, the dimension of $\mathcal{F}$ is given by $\sum_{i=2}^{n-2}\binom{n}{i}=2^{n}-2n-2$.
Moreover, if we were to strengthen the axiom of ETE to anonymity,
the dimension would decrease to $\sum_{i=2}^{n-2}1=n-3$, as there
would be only one degree of freedom for each subset size. If we also
required neutrality, the dimension would drop further to $\left\lceil \frac{n-3}{2}\right\rceil $,
since the value of $\varphi$ on sets larger than $\frac{n}{2}$ would
be determined by its value on their complements.
\end{rem}
\begin{rem}
\lyxadded{bzmao}{Mon Jan  5 18:01:36 2026}{In the proof of Proposition
\textcolor{blue}{\ref{claim:mEq2_characterization}}, the first part
relies only on lower invariance, whereas the second part establishes
SP. Consequently, Proposition \textcolor{blue}{\ref{claim:mEq2_characterization}}
remains valid if SP is replaced by lower invariance.}
\end{rem}
\begin{rem}
\label{rem:mEq2_nGe4} The function corresponding to the RSD mechanism
is 
\[
\varphi_{{\rm RSD}}\left(R\right)=\frac{1}{n}+\frac{n-\left|R\right|}{\left(n-1\right)n}.
\]
The first term, $\frac{1}{n}$, accounts for the permutations in which
the agent appears first and thus selects their top choice. The second
term accounts for the permutations where the agent appears second,
following an agent who prefers the house $b$, thereby still allowing
the agent to choose house $a$.

Moreover, we observe that when $n\geq4$, RSD is dominated by the
mechanism corresponding to the function 
\[
\varphi\left(R\right)=\min\left\{ \frac{2}{n},\frac{1}{\left|R\right|}\right\} ,
\]
which assigns to every agent their preferred house with the maximum
probability allowed under the axioms.
\end{rem}
\begin{rem}
\label{rem:mEq2_domination_generalization}\lyxadded{bzmao}{Sun Jan  4 16:47:01 2026}{The
domination phenomenon observed above for the case $m=2$ with $n\geq4$
extends to markets with $m\geq2$ and $n\geq m+2$ via an extension-and-symmetrization
argument. Fix such a pair $\left(n,m\right)$, and set $\alpha\coloneqq m-2$.
Starting from a mechanism $f$ on the $\left(n-\alpha,2\right)$ market
that dominates $f_{{\rm RSD}}$ (for example, the mechanism corresponding
to the function $\varphi$ established in the remark above), we extend
it to the $\left(n,m\right)$ market by adding $\alpha$ auxiliary
agents and $\alpha$ new houses. The auxiliary agents are ordered
first, and each selects their most preferred remaining house. After
these $\alpha$ steps, if the set of remaining houses coincides exactly
with the original set of $2$ houses (equivalently, the auxiliary
agents have selected exactly the $\alpha$ added houses), we apply
$f$ to the $n-\alpha$ original agents with their preferences restricted
to these $2$ houses; otherwise, we apply $f_{{\rm RSD}}$ to the
$n-\alpha$ original agents with preferences restricted to the remaining
houses. Finally, we symmetrize the resulting rule by uniformly averaging
over all renamings of the agents, which restores anonymity (and hence
ETE) while preserving the other axioms. This construction yields a
mechanism satisfying the axioms that dominates $f_{{\rm RSD}}$ in
the $\left(n,m\right)$ market. We omit the details.}
\end{rem}

\section{Cases where there is uniqueness} \label{sec:uniqueness}
\lyxdeleted{bzmao}{Sat Jan 10 23:03:12 2026}{To better understand
the implications of the axioms ExPE, ETE, and SP, this section identifies
the values of \mbox{$m$} and \mbox{$n$} for which these axioms jointly}\lyxadded{bzmao}{Sat Jan 10 23:03:48 2026}{This
section identifies the values of $m$ and $n$ for which the axioms
ExPE, ETE, and SP jointly} determine a unique normal-form mechanism.
That is, for such values of $m$ and $n$, the axioms are sufficiently
strong to uniquely determine the assignment matrix for every possible
preference profile.
\begin{rem}
Throughout this section, ExPE is used only via the weaker conditions
of support efficiency (Definition \ref{def:support_efficiency}) and
the full assignment property (Definition \ref{def:full_assignment_property}),
and SP is used only via upper and lower invariance (Definition \ref{def:upper_and_lower_invariance}).
The reader may verify that the arguments rely only on these weaker
conditions.
\end{rem}
\begin{rem}
\label{rem:nEq2_mGt2}When $n=2$ and $m>2$, the axioms uniquely
determine the mechanism. Indeed, ExPE implies that each agent receives
one of his top two choices with probability $1$, so it suffices to
determine the probability that each agent receives his top choice.
If the two agents share the same top choice, then ETA implies that
each receives their common top choice with probability $\frac{1}{2}$.
Otherwise, the unique efficient deterministic assignment assigns each
agent his top choice, and hence both agents receive their respective
top choices with probability $1$.
\end{rem}

\subsection{\texorpdfstring{$n=3$ and $m>3$}{n=3 and m>3}} \label{subsec:n=3 and m>=3}
We focus here on the case
$n=3$ and $m>3$. In the balanced case $n=m=3$, the result was already
established by Bogomolnaia and Moulin \lyxadded{bzmao}{Fri Jan  9 22:44:37 2026}{\cite{bogomolnaia2001new}}\lyxadded{bzmao}{Fri Jan  9 22:53:29 2026}{;
nevertheless, the proof below applies verbatim to $m=3$ and thus
also yields an alternative proof of their result.}
\begin{prop}
For $n=3$ and $m\geq3$, the axioms ExPE, ETE, and SP uniquely characterize
$f_{{\rm RSD}}$.
\end{prop}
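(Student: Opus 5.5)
We would prove that for $n=3$ and any $m\geq3$ the assignment matrix at every profile is forced, using only support efficiency, the full assignment property, ETE, and upper/lower invariance. Since $f_{\rm RSD}$ satisfies the axioms, uniqueness then gives equality with $f_{\rm RSD}$.

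\textbf{Reduction to the top three houses.} Under any ${\rm SD}_\sigma$ with three agents, each agent receives one of his top three houses. So support efficiency confines each agent to his top three houses, and the full assignment property (every agent is matched in every SD outcome since $m\geq3$) makes each column sum to $1$. Hence only the probabilities of an agent's first and second choices need to be determined.

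\textbf{First-choice probabilities.} Consider the configuration of top choices.
\begin{itemize}
\item If an agent's top house is his alone, he gets it with probability $1$ (it is his in every SD outcome).
\item If exactly two agents share a top house $h$, or all three do, then $I_{\mathbf{P}}$ for $h$ is exactly the set of agents ranking $h$ first, and $C_{P_i}(h)=\{h\}$ for all of them. Proposition (ETA) gives each of them $h$ with probability $1/|I_{\mathbf{P}}|$.
\end{itemize}
The ETA proof uses full SP, but its induction only uses invariance of $h$'s probability under swaps not involving $h$ (here $h$ is on top, so this is upper invariance) together with the two ExPE facts. So we will check that it goes through under the weaker axioms. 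This fixes all first-choice entries.

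\textbf{Second-choice probabilities: the main obstacle.} For an agent $i$ with first choice $x$ and second choice $y$, we determine $f(\mathbf{P})_{y,i}$ by a local swap argument. Swap $y$ with the house $z$ directly below it, or push $y$ down past houses no other agent contests. Lower invariance keeps his probability of $x$ and of the outside option unchanged. Then, following Remark~\ref{efficiency swaps}, we try to reach a profile where $y$ cannot be obtained. If $y$ is not the top choice of any other agent and not the second choice of anyone competing for $x$, then $i$ gets $y$ exactly when he loses $x$. So $f_{y,i}=1-f_{x,i}$ minus the probability of his third choice, and the third-choice probability is fixed by support efficiency plus column sums.

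The genuinely hard configurations are those in which the three agents' top-two sets overlap cyclically, the analogue of Bogomolnaia--Moulin's $3\times3$ analysis. For these we plan an induction on a potential, such as the number of agents whose second choice is some other agent's first or second choice. At each step we apply an adjacent swap in one agent's ranking between his second and third choices, chosen so that:
\begin{itemize}
\item the new profile has smaller potential and is already determined;
\item upper invariance pins that agent's first-choice entry;
\item the full assignment property together with row constraints pins the remaining entries.
\end{itemize}
The key step is showing such a swap always exists and that the house being demoted is one the agent cannot obtain at one of the two profiles, so that Remark~\ref{efficiency swaps} applies. We expect to handle this by a short case split on the pattern of top-two sets (all distinct tops; two sharing a top; all three sharing a top). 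Once every row is fixed in each case, we conclude that the matrix coincides with $f_{\rm RSD}(\mathbf{P})$.
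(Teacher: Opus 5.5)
Your reduction to the top three houses and your use of ETA for a shared top house are fine. However, the proposal contains a false step and leaves the actual difficulty unresolved.

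\textbf{The false step.} An agent whose top house nobody else ranks first need not receive it with probability $1$. Suppose agents $1$ and $2$ both rank $a$ first and $b$ second, and agent $3$ ranks $b$ first. Under the ordering $1,2,3$, agent $1$ takes $a$, agent $2$ takes $b$, and agent $3$ gets his second choice. Indeed $f_{\rm RSD}$ gives agent $3$ house $b$ with probability only $2/3$. So your first-choice entries are not all fixed, and this agent is exactly where the work lies.

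\textbf{The unresolved part.} The hard profiles are those in which some agent's third choice is obtainable, as in this example. You defer them to an unspecified potential and an ``expected'' case split, and the tool you name does not work there.
\begin{itemize}
\item In the example, swapping agent $1$'s second and third choices $b$ and $c$ only yields $f_{b,1}+f_{c,1}=1/2$ at both profiles, which you already know.
\item Remark~\ref{efficiency swaps} is unavailable, because agent $1$ does obtain $b$ at the profile where $b$ is ranked higher (ordering $2,1,3$).
\item Swapping his top two houses instead just produces another profile of the same hard type.
\end{itemize}

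\textbf{The missing ideas (the paper's route).}
\begin{enumerate}
\item[(i)] For agent $3$, raise the contested house $a$ step by step to the top of his ranking. His probability of $a$ is known at every step: $0$ until $a$ reaches the top, and $1/3$ afterwards by ETA. So SP pins down how each swap changes his other entries, and his row is transferred to a profile where all three agents rank $a$ first.
\item[(ii)] At such profiles, an agent's third choice is obtainable only if another agent shares his second choice. ETA applied to that second-choice house then fixes the rows.
\item[(iii)] Agents $1$ and $2$ share their top two houses. Changing agent $2$'s ranking below $b$ to match agent $1$'s preserves his $b$-probability. After that, ETE and the column sum of $b$, with agent $3$'s entry already known from (i), determine both rows.
\end{enumerate}
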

\begin{proof}
Let $f$ be a mechanism satisfying the axioms. We will prove that
$f$ necessarily coincides with $f_{{\rm RSD}}$. To this end, we
focus on profiles where $f$ differs from $f_{{\rm RSD}}$.

First, we claim that if there exists a profile where $f$ differs
from $f_{{\rm RSD}}$, then there also exists such a profile in which
all agents rank the same house as their first choice. Second, we show
that $f$ cannot differ from $f_{{\rm RSD}}$ at such a profile.

For the first part, let $\mathbf{P}$ be a profile such that $f\left(\mathbf{P}\right)\neq f_{{\rm RSD}}\left(\mathbf{P}\right)$.
If all three agents rank the same house first, we may proceed directly
to the second part. Otherwise, note that by efficiency, if the three
agents rank three different houses first, then the probabilities are
uniquely determined at this profile. Thus, it must be the case that
two agents rank the same house first, and the third agent ranks a
different house first. Without loss of generality, suppose that agents
$1$ and $2$ rank house $a$ first, while agent $3$ ranks house
$b$ first. Since $f\left(\mathbf{P}\right)\neq f_{{\rm RSD}}\left(\mathbf{P}\right)$,
there must exist some agent for whom the two mechanisms assign different
distributions.

Suppose that $f\left(\mathbf{P}\right)_{3}\neq f_{{\rm RSD}}\left(\mathbf{P}\right)_{3}$.
Let $x$ be a house for which the two mechanisms differ, that is,
$f\left(\mathbf{P}\right)_{x,3}\neq f_{{\rm RSD}}\left(\mathbf{P}\right)_{x,3}$.
Note that $x\neq a$, because by efficiency, agent $3$ is not assigned
house $a$. Now, consider a sequence of profiles starting from $\mathbf{P}$,
where at each step we perform a swap in agent $3$'s preferences between
house $a$ and the house ranked immediately above it, until we reach
a profile $\mathbf{Q}$ in which agent $3$ ranks $a$ first. We claim
that $f\left(\mathbf{Q}\right)_{x,3}\neq f_{{\rm RSD}}\left(\mathbf{Q}\right)_{x,3}$
as well.

To see this, observe that the axioms determine the probability that
agent $3$ is assigned house $a$ at each profile along the sequence:
in the final profile $\mathbf{Q}$, agent $3$ is assigned $a$ with
probability $\frac{1}{3}$ by ETA and efficiency (since efficiency
ensures that $a$ is assigned with probability $1$). In all earlier
profiles, efficiency guarantees that agent $3$ is not assigned $a$.
Thus, the additional probability that agent $3$ gains for house $a$
after each swap is fully determined by the axioms, and by SP it corresponds
exactly to the reduction in the probability of the house with which
$a$ was swapped. In particular, for the swap involving $a$ and $x$
(if such a swap occurred), the change in $x$'s probability is determined,
while by SP, all other swaps do not affect the probability of $x$.
Therefore, we conclude that $f\left(\mathbf{Q}\right)_{x,3}-f_{{\rm RSD}}\left(\mathbf{Q}\right)_{x,3}=f\left(\mathbf{P}\right)_{x,3}-f_{{\rm RSD}}\left(\mathbf{P}\right)_{x,3}$
and hence $f\left(\mathbf{Q}\right)_{x,3}\neq f_{{\rm RSD}}\left(\mathbf{Q}\right)_{x,3}$,
as claimed.

Otherwise, $f\left(\mathbf{P}\right)_{3}=f_{{\rm RSD}}\left(\mathbf{P}\right)_{3}$
and the difference between the mechanisms lies in the distribution
of another agent. Without loss of generality, suppose it is agent
$1$. By ETA, the probability that agent $1$ is assigned house $a$
is $\frac{1}{2}$. Furthermore, by efficiency, agent $1$ cannot be
assigned houses that are not among their top 3 preferences. Since
agent $1$ is assigned some house with probability $1$ (because there
are more houses than agents), there must be at least two houses with
different probabilities, and these houses must be the second and third
preferences in agent $1$'s ranking.

Note that the probability of the third-ranked house is not determined
by the axioms, so it must be efficient for agent $1$ to receive this
house. In other words, there must be an ordering of the agents where
agent $1$ takes their third choice. However, agent $1$ can only
take their third choice if they are last in the ordering. In this
case, agents $2$ and $3$ must be assigned houses $a$ and $b$,
respectively. For agent $1$ to receive their third choice, agent
$1$'s top two preferences must be $a$ and $b$. Hence, agent $1$
must rank $b$ second.

At this point, by the earlier arguments, we have $f\left(\mathbf{P}\right)_{b,1}\neq f_{{\rm RSD}}\left(\mathbf{P}\right)_{b,1}$,
and since agent $3$ ranks $b$ first, house $b$ must be assigned
to some agent with probability $1$. Thus, there must be some other
agent where the allocation probability of $b$ differs between the
two mechanisms. Since $f\left(\mathbf{P}\right)_{3}=f_{{\rm RSD}}\left(\mathbf{P}\right)_{3}$,
this agent must be agent $2$, and the difference in probabilities
must cancel the difference observed for agent $1$.

By applying the same analysis to agent $2$, we conclude that agent
$2$ must rank $b$ second. Now, since agents $1$ and $2$ rank $a$
first and $b$ second, and agent $3$ ranks $b$ first, we have $f_{{\rm RSD}}\left(\mathbf{P}\right)_{b,i}=\frac{1}{6}$
for $i=1,2$. Moreover, since we have shown that the differences for
the two agents must cancel, it follows that $f\left(\mathbf{P}\right)_{b,2}\neq f_{{\rm RSD}}\left(\mathbf{P}\right)_{b,2}$
and $f\left(\mathbf{P}\right)_{b,2}\neq f\left(\mathbf{P}\right)_{b,1}$.
Hence, by ETE, their preferences must differ.

However, since both agents rank $a$ first and $b$ second, we can
make their preferences similar by applying a sequence of swaps to
agent $2$'s preference order that do not involve $b$. Let $\mathbf{P}^{\prime}$
denote the profile obtained after these swaps. Since these swaps do
not involve $b$, by SP, the difference in probabilities remains unchanged,
i.e., $f\left(\mathbf{P}\right)_{b,2}-f_{{\rm RSD}}\left(\mathbf{P}\right)_{b,2}=f\left(\mathbf{P}^{\prime}\right)_{b,2}-f_{{\rm RSD}}\left(\mathbf{P}^{\prime}\right)_{b,2}$.
Moreover, since in $\mathbf{P}^{\prime}$, agents $1$ and $2$ have
identical preferences, their differences must also be the same, i.e.,
$f\left(\mathbf{P}^{\prime}\right)_{b,2}-f_{{\rm RSD}}\left(\mathbf{P}^{\prime}\right)_{b,2}=f\left(\mathbf{P}^{\prime}\right)_{b,1}-f_{{\rm RSD}}\left(\mathbf{P}^{\prime}\right)_{b,1}$.
Since the total probability of $b$ is $1$ in $\mathbf{P}^{\prime}$,
in order to cancel the differences, we must have $f\left(\mathbf{P}^{\prime}\right)_{b,3}\neq f_{{\rm RSD}}\left(\mathbf{P}^{\prime}\right)_{b,3}$.
Thus, $f\left(\mathbf{P}^{\prime}\right)_{3}\neq f_{{\rm RSD}}\left(\mathbf{P}^{\prime}\right)_{3}$,
and we can apply the same arguments of the first case (where $f\left(\mathbf{P}\right)_{3}\neq f_{{\rm RSD}}\left(\mathbf{P}\right)_{3}$)
to obtain a profile $\mathbf{Q}^{\prime}$ where all agents rank the
same house first and $f\left(\mathbf{Q}^{\prime}\right)\neq f_{{\rm RSD}}\left(\mathbf{Q}^{\prime}\right)$,
as claimed.

For the second part, let $\mathbf{R}$ be a profile in which all agents
rank the same house first, and suppose $f\left(\mathbf{R}\right)\neq f_{{\rm RSD}}\left(\mathbf{R}\right)$.
Without loss of generality, assume that this top-ranked house is $a$,
and that the allocation differs at agent $1$. As in the previous
part, the difference must occur in the probabilities assigned to $1$'s
second and third choices, and thus, their third-ranked house must
be efficient for them. Let $b$ denote agent $1$'s second choice.
Then, for agent $1$'s third choice to be efficient, $b$ must be
taken before agent $1$ chooses, which requires at least one other
agent to rank $b$ second. Without loss of generality, suppose this
is agent $2$.

Now, regardless of whether agent $3$ ranks $b$ second or not, the
allocation probabilities for house $b$ must be fully determined by
ETA (if agent $3$ does not rank $b$ second, then by efficiency,
he cannot receive it). But this contradicts the assumption that agent
$1$'s allocation differs from $f_{{\rm RSD}}$ at house $b$. Therefore,
the assumption that such a profile $\mathbf{R}$ exists must be false.
It follows, by the arguments above, that $f=f_{{\rm RSD}}$, as desired.
\end{proof}

\subsection{\texorpdfstring{$n=m=4$}{n=m=4}} \label{subsec:n=m=4}
\lyxadded{bzmao}{Fri Jan  9 23:33:22 2026}{We focus here on the balanced
case $n=m=4$. The result is not new: it has been confirmed via computer-aided
proofs (see, e.g., Sandomirskiy }\lyxadded{bzmao}{Fri Jan  9 23:26:25 2026}{\cite{fssite}}\lyxadded{bzmao}{Fri Jan  9 23:33:22 2026}{).
The proof below is purely analytic, but it requires a detailed case
analysis.}
\begin{prop}
For $n=m=4$, the axioms ExPE, ETE, and SP uniquely characterize $f_{{\rm RSD}}$.
\end{prop}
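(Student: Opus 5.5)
The plan is to follow the same "propagate a discrepancy to a canonical profile, then kill it there" strategy as in the case $n=3$. We use only support efficiency, the full assignment property (here: bi-stochasticity of $f(\mathbf{P})$), ETE, ETA, and upper/lower invariance. Fix a mechanism $f$ satisfying the axioms and set $D(\mathbf{P})\coloneqq f(\mathbf{P})-f_{{\rm RSD}}(\mathbf{P})$. Since both matrices are bi-stochastic, every row and column of $D(\mathbf{P})$ sums to zero. By support efficiency, $D(\mathbf{P})_{h,i}\neq 0$ only for pairs $(h,i)$ that occur in some serial dictatorship outcome. Suppose, for contradiction, that $D\neq 0$ somewhere, and choose a profile $\mathbf{P}$ with $D(\mathbf{P})\neq 0$ that minimizes a potential: first lexicographically maximize the multiplicity of the most common top choice, then maximize the number of agreeing binary comparisons.

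The first step is a transport lemma in the spirit of the first part of the $n=3$ proof. Suppose $D(\mathbf{P})_{x,i}\neq 0$ and agent $i$ can move a house $a$ upward past $x$-free adjacent swaps. Suppose further that at every intermediate profile $i$'s probability of $a$ is pinned down, either because it is $0$ by support efficiency or because ETA applies once $a$ reaches the top. Then Remark \ref{efficiency swaps} together with upper/lower invariance shows that $D_{x,i}$ is unchanged. Using this lemma, I would show that a discrepancy can always be pushed to a profile in which at least three agents share the same top house $a$. If all four agents share a top house, ETA gives $a$ the value $\frac14$ for each. If the tops are distinct, the unique efficient assignment determines everything. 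The pattern $2+2$ reduces to $2+1+1$ or $3+1$ by moving the column carrying the discrepancy. The pattern $3+1$ is handled as in the $n=3$ argument: either the lone agent carries the discrepancy, in which case we transport $a$ to his top, or the cancellation in a row forces two agents with equal top to differ only by swaps avoiding the relevant house, in which case we equalize them and use ETE to shift the discrepancy onto another agent.

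The second step analyzes profiles where all agents rank $a$ first. Then $f_{h,i}$ for $h\neq a$ lives on agent $i$'s second through fourth choices, and the residual $3\times 4$ submatrix behaves like a market with three remaining houses. On this submatrix I would apply ETA to each house whose set of possible recipients share the same upper contour set. I would also apply Lemma \ref{the lemma}'s induction idea, since near-unanimity is available because $n=m=4$, to determine all profiles in which every pair of houses is disputed by at most one agent. What remains is a finite list of profiles up to renaming of agents and houses (anonymity/neutrality of the argument, not of $f$, is what lets us reduce). These are the profiles where second choices split $2+2$ or $2+1+1$ with crossed third choices. For each of them, I would set up the linear system consisting of the zero row/column sums of $D$, the support zeros, and the invariance equalities linking it to neighbouring profiles already known to have $D=0$, and then check that the system forces $D=0$.

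The main obstacle is this last finite case analysis. Configurations like "agents $1,2$ rank $a,b,c,d$ and agents $3,4$ rank $a,c,b,d$" have enough efficient SD outcomes that support efficiency leaves a $2$-dimensional kernel. ETE alone cuts this only to one dimension, and one must find the right chain of adjacent swaps, usually moving $d$ upward in one agent's ranking to a position where it becomes inefficient for him, in order to inherit a zero from an already-solved profile. I would first try to order all residual cases by the potential above so that each case has a neighbour reachable by one swap in which the swapped lower house has support zero (Remark \ref{efficiency swaps}). I would then verify case by case, deferring the exhaustive enumeration to Appendix \ref{sec:nEqmEq4_cases}.
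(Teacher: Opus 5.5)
There is a genuine gap, and it is in Step 2, not just in the unfinished enumeration.

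Your potential first maximizes the multiplicity of the common top. So at a minimal counterexample where every agent ranks $a$ first, the only neighbours you may treat as already solved are profiles in the same class (everyone ranks $a$ first, more agreement). Any profile in which some agent moves $a$ down is ``later'' and carries no information.

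But this class, with its internal constraints, is an exact copy of the market with $n=4$ and $m=3$: delete $a$ from every ranking.
\begin{itemize}
\item Serial dictatorship outcomes correspond: the agent who takes $a$ is the agent left unassigned.
\item $f(\mathbf{P})_{a,i}=\frac14$ plays the role of the null-object probability.
\item Swaps among positions $2$--$4$ are exactly the swaps of the smaller market.
\item Support efficiency, bi-stochasticity, ETE, ETA, upper/lower invariance (indeed full SP) and Lemma \ref{the lemma} all transfer verbatim.
\end{itemize}
Section \ref{subsec:n>m=3,4} constructs a mechanism for $n=4>m=3$ that satisfies all the axioms and differs from $f_{{\rm RSD}}$. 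Prepending $a$ to it gives, on your class, a nonzero $D$ that satisfies every constraint your linear systems may use. To see this, take the most-agreeing profile where it deviates: all more-agreeing neighbours have $D=0$, so it is a valid minimal counterexample in your sense, and nothing is violated. Hence no case analysis organized this way can force $D=0$. Your own remark that the residual $3\times 4$ block ``behaves like a market with three remaining houses'' is exactly why.

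The missing idea is that the decisive constraints come through SP links to profiles in which some agent demotes the common top, often via chains of partially determined profiles with different top structures. The paper reaches these by inducting on the total pairwise disagreement $d_{\mathbf{P}}$ rather than on top structure. A single swap that lowers $d_{\mathbf{P}}$ determines the column of every agent that is not supported, and only supported profiles are left for Appendix \ref{sec:nEqmEq4_cases}. There, for example, the common-top profile $(abcd,abcd,abdc,abdc)$ is settled through profiles in which agents $1$ and $2$ rank $cbad$. That appendix is organized by this other induction, so you cannot simply defer to it inside your framework.

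Step 1 is also unjustified as written.
\begin{itemize}
\item In the $3+1$ pattern, equalizing two of the $a$-top agents and invoking ETE does not push the discrepancy onto the lone agent. The row cancellation can be absorbed by the third $a$-top agent.
\item In the $2+2$ pattern, suppose agent $3$ has $a$ directly below the top $b$ he shares with agent $4$. His probability of $a$ is then neither zero (agent $4$ first, agent $3$ second) nor fixed by ETA. So a discrepancy at $b$ is not carried across the final swap.
\end{itemize}
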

\begin{proof}
We prove the proposition by showing that for each profile $\mathbf{P}\in\mathcal{R}^{N}$,
the assignment matrix is uniquely determined by the axioms ExPE, ETE,
and SP. The proof proceeds by induction on 
\[
d=d_{\mathbf{P}}\coloneqq\left|\left\{ \left(\left\{ i,j\right\} ,\left\{ h_{1},h_{2}\right\} \right)\in\binom{N}{2}\times\binom{H}{2}\mid\begin{array}{c}
\text{\ensuremath{i} and \ensuremath{j} disagree on the binary}\\
\text{preference between \ensuremath{h_{1}} and \ensuremath{h_{2}}}
\end{array}\right\} \right|.
\]
Intuitively, $d_{\mathbf{P}}$ quantifies the extent of disagreement
present in the profile $\mathbf{P}$.

In the base case $d=0$, all agents have identical rankings over the
houses. In this case, ExPE and ETE directly determine the assignment
matrix.

Now suppose $\mathbf{P}\in\mathcal{R}^{N}$ is a profile with $d=d_{\mathbf{P}}>0$.
We first observe that the assignment of agent $i$ in $\mathbf{P}$
is determined under the induction hypothesis if the following condition
$\left(*\right)_{\mathbf{P},i}$ holds: there exists at least one
pair $\left\{ h_{1},h_{2}\right\} $ of adjacent houses in $P_{i}$
such that less than two other agents agree with agent $i$ on the
binary preference between those houses; and furthermore, if there
is exactly one such pair, then at least one of the houses in that
pair cannot be assigned to agent $i$ in any efficient assignment
with respect to $\mathbf{P}$.

To show this, suppose we find such a pair $\left\{ h_{1},h_{2}\right\} $
of adjacent houses in agent $i$'s ranking. Consider the profile $\mathbf{P}^{\prime}$
obtained by swapping the positions of $h_{1}$ and $h_{2}$ in $i$'s
ranking. This operation reduces the number of disagreements in the
profile, so $d_{\mathbf{P}^{\prime}}<d$. By the induction hypothesis,
the assignment matrix for $\mathbf{P}^{\prime}$ is uniquely determined
by the axioms. By SP, the probability that agent $i$ receives any
house other than $h_{1}$ or $h_{2}$ must remain unchanged between
$\mathbf{P}$ and $\mathbf{P}^{\prime}$, and is therefore determined
in $\mathbf{P}$ as well. Since $n=m$, agent $i$ must be assigned
to some house. Therefore, once the probabilities that agent $i$ receives
the two houses outside the pair $\left\{ h_{1},h_{2}\right\} $ are
known, it suffices to determine either the probability that $i$ receives
$h_{1}$ or the probability that he receives $h_{2}$.

If there exists another such pair of adjacent houses in agent $i$'s
ranking, then without loss of generality, we may assume that this
second pair does not involve $h_{1}$. Applying the same reasoning
as before, we can determine the probability that agent $i$ receives
$h_{1}$, which completes the determination of his assignment. If
no such additional pair exists, then by assumption, one of the houses
in the original pair cannot be assigned to agent $i$ under any efficient
assignment. In that case, the probability that agent $i$ receives
that house is zero, and thus his assignment is fully determined. We
have therefore shown that under condition $\left(*\right)_{\mathbf{P},i}$,
the assignment of agent $i$ in profile $\mathbf{P}$ is uniquely
determined by the axioms.

Since $n=m$, every house must be assigned with probability $1$.
Therefore, to determine the entire assignment matrix of $\mathbf{P}$,
it suffices to determine the assignments of $n-1=3$ agents. More
generally, it suffices to determine the assignments of all agents
except a set of agents that have identical rankings, by ETE. By the
argument above, the assignment of any agent satisfying the condition
is fully determined.

Hence, it remains to consider only those profiles $\mathbf{P}$ in
which there exist at least two agents $i\in N$ with different rankings
who do not satisfy $\left(*\right)_{\mathbf{P},i}$. For clarity,
we introduce the following definitions.
\begin{defn}
[Supported agent]Fix a profile $\mathbf{P}$. An agent $i\in N$
is \emph{supported} (with respect to $\mathbf{P}$) if for every pair
$\left\{ h_{1},h_{2}\right\} $ of adjacent houses in $P_{i}$, there
are at least two agents other than $i$ who agree with agent $i$
on the binary preference between $\left\{ h_{1},h_{2}\right\} $,
possibly with a single exception: there may be one pair of adjacent
houses $\left\{ h_{1},h_{2}\right\} $ for which fewer than two other
agents agree with $i$, provided that both $h_{1}$ and $h_{2}$ are
assigned to agent $i$ in some efficient assignment with respect to
$\mathbf{P}$.
\end{defn}
\begin{rem}
Note that, if such an exceptional pair exists, then there must be
exactly one other agent who agrees with agent $i$ on the binary preference
between $h_{1}$ and $h_{2}$. Otherwise, by efficiency, agent $i$
could not be assigned the house he ranks lower in that pair, contradicting
the assumption that both houses must be assigned to him in some efficient
assignment.
\end{rem}
\begin{defn}
[Supported profile]A profile $\mathbf{P}$ is \emph{supported} if
there exist at least two supported agents with respect to $\mathbf{P}$
whose rankings are different.
\end{defn}
Thus, in the terminology above, the class of profiles that remain
under consideration are precisely the supported profiles. We complete
the proof by exhaustively verifying that in all such profiles, the
assignment matrix is uniquely determined by the axioms. The exhaustive
case analysis appears in Appendix \ref{sec:nEqmEq4_cases}.
\end{proof}
%

\section{Cases where there are other mechanisms} \label{sec:non-uniqueness}

To better understand the limitations of the axioms ExPE, ETE, and
SP, this section identifies the values of $m$ and $n$ for which
these axioms do \emph{not} suffice to characterize $f_{{\rm RSD}}$.
That is, for such values of $m$ and $n$, the axioms are too weak
to uniquely determine the assignment matrix for every possible preference
profile. The case $m=2$ was already resolved earlier, and the case
$n,m\geq5$ has been shown by Basteck and Ehlers \cite{basteck2025constrained}. The mechanisms
we construct in this section rely on ideas similar to those used in
their work, and they cover the remaining settings, namely $n>m\in\left\{3,4\right\}$
and $n=4,\,m\geq5$. 

\subsection{\texorpdfstring{$n>m\in\left\{3,4\right\}$}{n>m=3,4}} \label{subsec:n>m=3,4}
\begin{prop}
For $n>m\geq3$, the axioms ExPE, ETE, and SP do not suffice to characterize
the mechanism $f_{{\rm RSD}}$.
\end{prop}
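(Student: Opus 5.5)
The plan is to follow the strategy announced at the start of Section~\ref{sec:non-uniqueness}. I will write $f_{{\rm RSD}}$ as the symmetrization of one fixed deterministic serial dictatorship, perturb that underlying mechanism on a small family of profiles while keeping ExPE and SP, and then symmetrize again. Concretely, let $g\coloneqq{\rm SD}_{{\rm id}}$, and for a mechanism $h$ let $\bar h\coloneqq\frac{1}{n!}\sum_{\pi\in\Pi}\pi(h)$. Averaging over all renamings of the agents gives $\bar g=f_{{\rm RSD}}$. Moreover, for any $h$ the average $\bar h$ is anonymous, hence satisfies ETE by the remark following the definition of anonymity. It also inherits ExPE and SP from $h$, since both axioms are preserved under renaming agents and under convex combinations. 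So it suffices to build a mechanism $h$ that satisfies ExPE and SP (but need not be symmetric) and has $\bar h\neq\bar g$.

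The modification uses the fact that $n>m$, so under ${\rm SD}_{{\rm id}}$ at least one agent is always left unassigned. A naive perturbation does not work. If I only change the identity of the next dictator as a function of the first dictator's report, the symmetrization still averages this identity uniformly over the remaining agents, and I just recover $f_{{\rm RSD}}$. The perturbation must therefore correlate the order with the preferences of the agents being reordered, and must do so in a way no such agent can exploit. I will use a ``sequential dictatorship with a conditional swap''. Agents $1,\dots,m-2$ choose first, in that order. Two houses $x,y$ then remain, to be allocated among agents $m-1,\dots,n$; there are $n-m+2\geq3$ of them. On profiles in a restricted family $\mathcal{D}$, the remaining order is replaced by one where agent $n$ (normally never served) is moved ahead. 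The family $\mathcal{D}$ is defined by a condition on the earlier dictators' reports and on the relative ranking of $x,y$ by the affected agents. The condition is chosen so that an agent inside the family already receives his preferred remaining house, so no misreport can move him into or out of $\mathcal{D}$ profitably. In effect this is the $m=2$ construction of Remark~\ref{rem:mEq2_domination_generalization}, but made to work with only three remaining agents. There symmetrization would collapse everything back to RSD, because Proposition~\ref{claim:mEq2_characterization} gives uniqueness for $n=3$, $m=2$. The extra room here comes from letting the rule depend on the preferences of the first $m-2$ dictators, which a symmetric $(3,2)$ mechanism cannot see.

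The verification then has three steps. First, ExPE: every outcome of $h$ is some ${\rm SD}_{\sigma}(\mathbf{P})$, because $h$ always runs a serial dictatorship whose order is fixed before it reaches the reordered agents. Second, SP: I will check the local conditions of Lemma~\ref{lem:sp characterization} for an adjacent swap by any agent. The only nontrivial cases are swaps that toggle membership in $\mathcal{D}$, and for these I will show that the agent's own lottery is unchanged or moves in the required direction. Third, $\bar h\neq\bar g$: I will exhibit one explicit profile, e.g.\ $n=m+1$ with several agents sharing a top choice and two agents disagreeing on $x,y$. There I count the orderings $\sigma$ for which $\sigma(h)$ and $\sigma(g)$ give some agent a different house, and show that the resulting probabilities differ.

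I expect the main obstacle to be choosing $\mathcal{D}$ so that two requirements hold at once. SP must hold for agents at the boundary of $\mathcal{D}$, and the change must survive averaging over all $n!$ renamings instead of cancelling out. If the first candidate family fails SP, I will fall back on the Basteck--Ehlers device: put a small weight $\varepsilon$ on the modified order and $1-\varepsilon$ on ${\rm SD}_{{\rm id}}$. Then SP only requires the inequality of Lemma~\ref{lem:sp characterization} to hold with slack, and the swap-invariance equalities are ensured by making $\mathcal{D}$ closed under swaps that do not involve the two contested houses.
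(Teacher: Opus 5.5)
There is a genuine gap, and it goes beyond the fact that $\mathcal{D}$ is never specified. The template you describe provably cannot produce a mechanism different from $f_{{\rm RSD}}$ when $n=m+1$ (for instance $(n,m)=(4,3)$). This holds for the deterministic version and for the $\varepsilon$-mixing fallback alike.

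Take any $h$ of your form:
\begin{itemize}
\item agents $1,\dots,m-2$ always choose first as serial dictators;
\item the two leftover houses $x,y$ are then allocated among the remaining three agents by a (possibly randomized) ex-post efficient, strategy-proof rule;
\item that rule reads only the earlier dictators' reports and the remaining agents' rankings of $x$ versus $y$.
\end{itemize}
In $\bar h$, condition on which agents fill the first $m-2$ roles. Their reports are then fixed, and so are $x$, $y$ and everything $\mathcal{D}$ reads from those reports. The other three agents are assigned to the last three roles uniformly at random. The conditional outcome is therefore an anonymous mechanism satisfying ExPE and SP for three agents and two houses. By Proposition~\ref{claim:mEq2_characterization} and Remark~\ref{rem:m=2}, it must be RSD on that subproblem. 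Averaging over the uniformly random first $m-2$ dictators then returns exactly $f_{{\rm RSD}}$.

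So your sentence ``the extra room comes from letting the rule depend on the preferences of the first $m-2$ dictators'' is precisely the step that fails. That dependence is washed out by conditioning, exactly as in the naive perturbation you rejected. In the deterministic version, SP already forbids $\mathcal{D}$ from depending on agent $n$'s report, since outside $\mathcal{D}$ he gets nothing and inside he gets a house. For $n\ge m+2$ the template could be rescued by inserting a non-RSD mechanism for $n-m+2\ge 4$ agents and two houses, taken from Proposition~\ref{claim:mEq2_characterization}. The statement, however, also covers $n=m+1$.

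The paper avoids this collapse with two ingredients your plan lacks. It starts from $f^{1}$, the uniform lottery over orders that put agent $1$ first. It then uses the family $\mathsf{P}$, where agent $i$ ranks $h_i$ first for each $i\in[m]$ and agent $n$ ranks $a=h_1$ above agent $1$'s second choice $x$. On $\mathsf{P}$ it moves $\varepsilon$ of agent $1$'s $a$ to $x$, and $\varepsilon$ of agent $n$'s $x$ to $a$. This is a shift between efficient assignments, toward one in which $n$ precedes $1$.
\begin{itemize}
\item The trigger involves a later agent's ranking of a house already taken by the first dictator, so it is not a function of the $x$-versus-$y$ rankings.
\item The first dictator's own lottery is perturbed.
\end{itemize}
SP for agent $1$ is then easy. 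Under $f^{1}$ he gets his top house with probability $1$, so leaving $\mathsf{P}$ by demoting $a$ costs him far more than $\varepsilon$, and swapping $x$ with his third choice only worsens the adjustment.

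Finally, the paper certifies $f\neq f_{{\rm RSD}}$ at a profile $\mathbf{Q}\in\mathsf{P}$ with $x_{\mathbf{Q}}=h_2$ in which only agent $1$ ranks $a$ first. This forces every renaming to move $f(\mathbf{Q})_{h_2,1}$ weakly upward, which rules out cancellation. Your third step would need a comparable sign-control argument, not a bare count of orderings.
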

\begin{proof}
To establish the \lyxdeleted{bzmao}{Mon Jan  5 19:12:42 2026}{lemma}\lyxadded{bzmao}{Mon Jan  5 19:12:44 2026}{proposition},
we construct a mechanism that satisfies the axioms but differs from
$f_{{\rm RSD}}$. Let $H\coloneqq\left\{ h_{1},\dots,h_{m}\right\} $.
We begin with some preliminary notations before specifying the mechanism.

We denote by $f^{1}$ the mechanism corresponding to choosing uniformly
an ordering of the agents in which agent $1$ comes first. Formally,
$f^{1}$ is the normal form of the extensive-form mechanism 
\[
M^{1}\coloneqq\frac{1}{\left(n-1\right)!}\sum_{\substack{\sigma\in S_{n}\\
\sigma\left(1\right)=1
}
}{\rm SD}_{\sigma}.
\]
Note that RSD can be obtained by symmetrizing $M^{1}$ over all renamings
of the agents, and therefore 
\[
f_{{\rm RSD}}=\frac{1}{n!}\sum_{\pi\in\Pi}\pi\left(f^{1}\right).
\]

For a preference profile $\mathbf{P}\in\mathcal{R}^{N}$ and an agent
$i\in N$, let ${\rm top}\left(P_{i}\right)$ denote the house most
preferred by agent $i$ in $\mathbf{P}$. Let $x=x_{\mathbf{P}}$
denote the second-best house of agent $1$ in $\mathbf{P}$, and set
$a\coloneqq h_{1}$. We then define 
\[
\mathsf{P}\coloneqq\left\{ \mathbf{P}\in\mathcal{R}^{N}\mid\begin{array}{c}
\forall i\in\left[m\right]:{\rm top}\left(P_{i}\right)=h_{i},\\
aP_{n}x
\end{array}\right\} .
\]
Note that $x\neq a$ for every $\mathbf{P}\in\mathcal{\mathsf{P}}$.

Let $\varepsilon>0$ be sufficiently small (e.g., $\varepsilon<\frac{1}{\left(n-1\right)!}$
suffices), and define the vector $v=\left(v_{h}\right)_{h\in H}$
by 
\[
v_{h}\coloneqq\begin{cases}
\varepsilon & \text{if }h=x,\\
-\varepsilon & \text{if }h=a,\\
0 & \text{otherwise}.
\end{cases}
\]
Now define the mechanism $f^{1,n}$ by 
\[
f^{1,n}\left(\mathbf{P}\right)_{i}\coloneqq\begin{cases}
f^{1}\left(\mathbf{P}\right)_{i}+v & \text{if }i=1\text{ and }\mathbf{P}\in\mathsf{P},\\
f^{1}\left(\mathbf{P}\right)_{i}-v & \text{if }i=n\text{ and }\mathbf{P}\in\mathsf{P},\\
f^{1}\left(\mathbf{P}\right)_{i} & \text{otherwise}.
\end{cases}
\]
The mechanism is well defined: the row sums and column sums remain
unchanged by construction, and the entries remain nonnegative. To
see the latter, note that in $f^{1}$, for profiles in $\mathsf{P}$,
agent $1$ receives $a$ with probability one, while agent $n$ receives
$x$ in at least one ordering where agent $1$ comes first. Specifically,
if $x=h_{i}$, then agent $n$ receives $x$ in the ordering where
the agents in $\left[m\right]\setminus\left\{ i\right\} $ appear
first in their natural order, and agent $n$ comes immediately after
them. Hence, in $f^{1}$ agent $n$ obtains $x$ with probability
at least $\frac{1}{\left(n-1\right)!}$. Thus, the adjustment by $v$
preserves feasibility while slightly shifting the allocation between
agents $1$ and $n$.

With these notations, define 
\[
f\coloneqq\frac{1}{n!}\sum_{\pi\in\Pi}\pi\left(f^{1,n}\right).
\]
We claim that $f$ is the desired mechanism, that is, it satisfies
the axioms ExPE, ETE, and SP, yet it differs from $f_{{\rm RSD}}$.
First, by construction, since it is defined by symmetrizing over all
agent renamings, it clearly satisfies ETE (and even anonymity). For
ExPE and SP, it suffices to verify that $f^{1,n}$ satisfies them,
because these properties are invariant under renamings and preserved
under convex combinations.

For ExPE, it is clear that $f^{1,n}$ satisfies the axiom whenever
$\mathbf{P}\notin\mathsf{P}$, since in that case $f^{1,n}\left(\mathbf{P}\right)=f^{1}\left(\mathbf{P}\right)$
and $f^{1}$ satisfies ExPE by construction. We therefore focus on
the case $\mathbf{P}\in\mathsf{P}$. Let $x\coloneqq h_{i}$ with
$i\neq1$. Consider the two assignments $s,s^{\prime}:N\rightarrow O$
defined by 
\[
\begin{array}{cc}
s\left(j\right)\coloneqq\begin{cases}
h_{j} & \text{if \ensuremath{j\in\left[m\right]\setminus\left\{ i\right\} }},\\
x & \text{if \ensuremath{j=n}},\\
\varnothing & \text{otherwise},
\end{cases} & s^{\prime}\left(j\right)\coloneqq\begin{cases}
x & \text{if \ensuremath{j=1}},\\
a & \text{if \ensuremath{j=n}},\\
s\left(j\right) & \text{otherwise}.
\end{cases}\end{array}
\]
Here $s^{\prime}$ differs from $s$ only by swapping the houses $a$
and $x$ between agents $1$ and $n$. Transferring $\varepsilon$
weight from $s$ to $s^{\prime}$ describes exactly the adjustment
needed to pass from $f^{1}\left(\mathbf{P}\right)$ to $f^{1,n}\left(\mathbf{P}\right)$.

Since $f^{1}$ satisfies ExPE, to show that $f^{1,n}$ satisfies ExPE
it remains to verify that this transfer is between efficient assignments,
and that the transferred weight $\varepsilon$ is less than the weight
placed on $s$ by some ex-post efficient extensive-form mechanism
whose normal form is $f^{1}$. Because $\varepsilon$ is chosen sufficiently
small and $M^{1}$ is such an extensive-form mechanism, it suffices
to show that $s$ can be obtained from an ordering of the agents where
agent $1$ comes first, and that $s^{\prime}$ can be obtained from
some ordering.

First, the assignment $s$ can be obtained by an ordering in which
the agents in $\left[m\right]\setminus\left\{ i\right\} $ appear
first in their natural order, and agent $n$ comes immediately after
them. This is indeed an ordering where agent $1$ comes first.

Second, let $I\subseteq\left[m\right]$ be the set of indices of the
houses that agent $n$ strictly prefers to $a$. Note that $1\notin I$
since $h_{1}=a$, and $i\notin I$ since $\mathbf{P}\in\mathsf{P}$
implies that agent $n$ prefers $a$ over $x=h_{i}$. Then $s^{\prime}$
can be obtained by an ordering where the agents in $I$ come first
(in an arbitrary order), followed by agent $n$, then agent $1$,
and then the remaining agents in $\left[m\right]\setminus\left\{ i\right\} $,
all before any of the other agents. Therefore, $f^{1,n}$ satisfies
ExPE.

The next step is to verify that $f^{1,n}$ also satisfies SP. Since
$f^{1}$ satisfies SP, agents $2,\dots,n-1$ cannot gain from any
unilateral misreport. We therefore need only check agents $1$ and
$n$. By the local characterization of SP, it suffices to consider
misreports in the form of a swap between two adjacent houses. Let
$\mathbf{Q}$ be the profile before the swap and $\mathbf{R}$ the
profile after it.

Since $f^{1}$ already satisfies SP, the only potentially problematic
cases are those where 
\[
f^{1,n}\left(\mathbf{Q}\right)-f^{1}\left(\mathbf{Q}\right)\neq f^{1,n}\left(\mathbf{R}\right)-f^{1}\left(\mathbf{R}\right),
\]
so that the modification in $f^{1,n}$ could change incentives, and
it suffices to verify that, taking $\mathbf{Q}$ as the profile of
true preferences, the difference under $\mathbf{Q}$ is at least as
favorable for the swapping agent as the difference under $\mathbf{R}$.
We will refer to this difference as the adjustment, meaning the $\varepsilon$
weight transfer that specifies how $f^{1,n}$ deviates from $f^{1}$
at a given profile.

Now, because $f^{1}\left(\mathbf{P}\right)=f^{1,n}\left(\mathbf{P}\right)$
for every $\mathbf{P}\notin\mathsf{P}$, we must have at least one
of $\mathbf{Q},\mathbf{R}$ in $\mathsf{P}$. Moreover, the roles
of $\mathbf{Q}$ and $\mathbf{R}$ can be interchanged: under the
local characterization of SP, the condition is expressed as a collection
of equalities together with a single inequality, and that inequality
simply reverses when the two houses are swapped. Hence, without loss
of generality, we may assume $\mathbf{Q}\in\mathsf{P}$. With these
reductions, it remains to check only those adjacent swaps by agents
$1$ and $n$ that actually affect the adjustment.

For agent $1$, since $\mathbf{Q}\in\mathsf{P}$, his first and second
choices are $a$ and $x$, respectively. Any swap not involving one
of these houses leaves the adjustment unchanged. Let $y$ denote his
third-preferred house in $\mathbf{Q}$. It remains to consider two
cases:

Swap between $a$ and $x$: here we directly compare $f^{1,n}\left(\mathbf{Q}\right)$
and $f^{1,n}\left(\mathbf{R}\right)$. In $\mathbf{Q}$, agent $1$
receives $a$ with probability $1-\varepsilon$ and $x$ with probability
$\varepsilon$. In $\mathbf{R}$, he receives $x$ with probability
$1$, which is strictly worse for him. Note also that the weight transfer
occurs exactly between the two houses involved in the swap.

Swap between $x$ and $y$: here we compare the adjustments mentioned
earlier. In $\mathbf{Q}$, the adjustment from $f^{1}$ to $f^{1,n}$
transfers $\varepsilon$ weight from $a$ to $x$. In $\mathbf{R}$,
the adjustment transfers $\varepsilon$ weight from $a$ to $y$.
Since agent $1$ prefers $x$ over $y$, this is less favorable for
him. Moreover, the difference between the adjustments involves only
the houses in the swap - it consists of an $\varepsilon$ weight transfer
from $x$ to $y$.

Thus, agent $1$ cannot gain from unilateral manipulation. For agent
$n$, the only swap that can matter is between $a$ and $x$. In this
case we move from a profile $\mathbf{Q}\in\mathsf{P}$ where the adjustment
transfers $\varepsilon$ weight from $x$ to $a$ (which he prefers),
to a profile $\mathbf{R}\notin\mathsf{P}$, where no transfer occurs.
This is again less favorable, and the difference concerns only $a$
and $x$, the houses in the swap. Therefore, agent $n$ cannot gain
from a unilateral manipulation either. Hence, $f$ satisfies ExPE,
ETE, and SP.

Finally, it remains to observe that $f\neq f_{{\rm RSD}}$. Fix a
profile $\mathbf{Q}\in\mathsf{P}$ such that $x_{\mathbf{Q}}=h_{2}$
and no agent other than agent $1$ ranks $a$ as his top choice. Note
that such a profile exists, by our assumption that $m\ge3$. By the
definition of $f^{1,n}$ on profiles in $\mathsf{P}$, we have 
\[
f^{1,n}\left(\mathbf{Q}\right)_{h_{2},1}>f^{1}\left(\mathbf{Q}\right)_{h_{2},1}.
\]
We now claim that for every $\pi\in\Pi$, 
\[
\pi\left(f^{1,n}\right)\left(\mathbf{Q}\right)_{h_{2},1}\geq\pi\left(f^{1}\right)\left(\mathbf{Q}\right)_{h_{2},1}.
\]
Suppose otherwise, in which case for some $\pi$ the adjustment from
$\pi\left(f^{1}\right)$ to $\pi\left(f^{1,n}\right)$ would reduce
the probability that agent $1$ receives $h_{2}$. Since such a reduction
can only occur when $\pi\left(\mathbf{Q}\right)\in\mathsf{P}$, this
enforces agent $\pi^{-1}\left(1\right)$ to rank $a$ as his first
choice in $\mathbf{Q}$, so by the construction of $\mathbf{Q}$ we
must have $\pi^{-1}\left(1\right)=1$. However, the adjustment never
reduces the probability that agent $1$ receives his second-best house,
a contradiction.

Hence, $\pi\left(f^{1,n}\right)\left(\mathbf{Q}\right)_{h_{2},1}\geq\pi\left(f^{1}\right)\left(\mathbf{Q}\right)_{h_{2},1}$
for all $\pi\in\Pi$, and strict inequality holds for $\pi={\rm id}$.
Since $f$ and $f_{{\rm RSD}}$ are the averages of $\left\{ \pi\left(f^{1,n}\right)\right\} _{\pi\in\Pi}$
and $\left\{ \pi\left(f^{1}\right)\right\} _{\pi\in\Pi}$, respectively,
it follows that 
\[
f\left(\mathbf{Q}\right)_{h_{2},1}>f_{{\rm RSD}}\left(\mathbf{Q}\right)_{h_{2},1}.
\]
Therefore, $f\neq f_{{\rm RSD}}$, so the axioms do not suffice to
characterize $f_{{\rm RSD}}$ in such settings.
\end{proof}

\subsection{\texorpdfstring{$n=4,\,m\ge 5$}{n=4, m>=5}} \label{subsec:n=4, m>=5}
In the setting $n=4,\,m\geq5$, we show that $f_{{\rm RSD}}$ is not
unique by presenting another mechanism satisfying the axioms. The
idea is borrowed from \cite{basteck2025constrained}: we take the
mechanism introduced there for five agents and restrict it to four
agents by removing agent $5$ and its allocation. We now formalize
this construction in the following proposition.
\begin{prop}
\label{lem:nEq4_mGe5}For $n=4,\,m\geq5$, the axioms ExPE, ETE, and
SP do not suffice to characterize the mechanism $f_{{\rm RSD}}$.
\end{prop}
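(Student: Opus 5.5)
The plan follows the hint given just before the statement. We take a mechanism $g$ for five agents and $m\ge 5$ houses that satisfies ExPE, ETE and SP (even anonymity) and differs from $f_{\rm RSD}$; such a $g$ exists by Basteck and Ehlers \cite{basteck2025constrained}. From it we build a four-agent mechanism by adding a fifth agent with a fixed report and discarding his column.

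\textbf{Step 1: the dummy agent.} Fix a preference $R_0\in\mathcal{R}$ that ranks all houses in some fixed order. A natural choice is to put the houses on which $g$ deviates from RSD last, so that the dummy typically takes a house irrelevant to the deviation. For a four-agent profile $\mathbf{P}$, define
\[
f(\mathbf{P})_{h,i}\coloneqq g(\mathbf{P},R_0)_{h,i},\qquad i\in[4].
\]

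\textbf{Step 2: the axioms transfer.}
\begin{itemize}
\item \emph{ETE.} If $P_i=P_j$ for $i,j\le 4$, then agents $i$ and $j$ have equal preferences in $(\mathbf{P},R_0)$. ETE of $g$ gives equal columns, so ETE holds for $f$.
\item \emph{SP.} A misreport by agent $i\le 4$ in $f$ is a unilateral misreport in $g$ with agent 5 held fixed. SP of $g$ therefore transfers directly.
\item \emph{ExPE.} Here care is needed, because removing agent 5 need not preserve efficiency. The cleanest route is to work with the extensive form of $g$, a lottery over efficient assignments of the five-agent problem. Deleting agent 5 from an efficient five-agent assignment leaves a four-agent assignment $s$ that may fail to be efficient for $\mathbf{P}$: agent 5's house $h_5$ is freed, and some agent $i$ may prefer $h_5$ to $s(i)$.
\end{itemize}

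\textbf{Step 3: fixing ExPE.} I see two ways to proceed.
\begin{enumerate}
\item \emph{Post-processing.} Give the freed house to the agents via a fixed rule. This risks breaking SP.
\item \emph{Choosing $R_0$ or $g$ appropriately} (preferred). With $m\ge 5$ houses, choose the construction of \cite{basteck2025constrained} so that in every assignment in its support agent 5 is dictator-like for houses no one else wants. More concretely, use their explicit construction, where deviations from RSD are weight shifts between SD assignments. If $g$ is a convex combination of ${\rm SD}_\sigma$ outcomes with weights depending on the profile, look at the orders $\sigma$ in which agent 5 comes last. With at least five houses, agent 5 always receives a house, and removing the last dictator yields exactly ${\rm SD}_{\sigma|_{[4]}}(\mathbf{P})$, which is efficient.
\end{enumerate}
So the plan is to exhibit the Basteck--Ehlers construction in a form where agent 5 appears last in every order in the support. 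Equivalently, we redo their construction directly for four agents, in the style of the previous subsection. Start from $f^{1}$ (agent 1 first). Define a family of profiles, $\mathsf{P}$, analogous to the previous subsection but using $m\ge5$ houses. Perform an $\varepsilon$-transfer between two SD outcomes, swapping houses $a$ and $x$ between agents $1$ and $4$. Then symmetrize over $\Pi$. ExPE, ETE and SP would then be verified exactly as in the proof for $n>m\ge3$.

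\textbf{Step 4: $f\neq f_{\rm RSD}$.} Exhibit a profile $\mathbf{Q}\in\mathsf{P}$ whose symmetry pattern ensures that the adjustments cannot cancel across renamings, by the same monotonicity argument as before.

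\textbf{Main obstacle.} For $n=4<m$ the family $\mathsf{P}$ cannot be defined by ``agent $i$ tops $h_i$ for all $i\in[m]$''. The key difficulty is finding a family and transfer where both SD outcomes remain efficient and where the local SP checks, for the two adjusted agents, only involve the swapped pair. This is why I would rely on the Basteck--Ehlers design, restricted by removing agent 5, and check efficiency of the restricted assignments case by case.
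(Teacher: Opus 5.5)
Your proposal has a genuine gap: it never actually constructs the mechanism.

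\textbf{Steps 1--3 cannot work.} The dummy-agent reduction fails for every five-agent $g$ and every dummy report $R_0$. Suppose all four agents report $R_0$. Then the five agents in $(\mathbf{P},R_0)$ are identical, so ExPE and ETE force $g$ to give each of them each of the top five houses of $R_0$ with probability $\frac{1}{5}$. Your $f$ therefore gives agents $1,\dots,4$ the fifth house of $R_0$ with positive probability, although no ${\rm SD}_\sigma(\mathbf{P})$ assigns it; this violates even support efficiency.

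Your repair, with agent $5$ last in every ordering of the support, contradicts ETE of $g$ at this same profile, where agent $5$ must receive the top house with probability $\frac{1}{5}$. If you drop ETE for agent $5$, the five-agent detour carries no information: you are simply building a four-agent mechanism from scratch. That is exactly your fallback (``redo their construction directly for four agents''). But there the family $\mathsf{P}$, the transfer, the efficient assignments realizing it, the SP checks, and the profile witnessing $f\neq f_{{\rm RSD}}$ are all left unspecified, as you yourself note under ``Main obstacle''. These constitute the whole proof, so Steps 3--4 cannot be verified as written.

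\textbf{The missing idea.} You need to exploit the slack $m>n$. Houses need not be fully allocated, so the adjustment may alter a \emph{single} agent's column instead of swapping houses between two agents. The paper proceeds as follows.
\begin{enumerate}
\item It starts from $f_{{\rm RSD}}$ itself. It takes $\mathsf{P}$ to be the profiles in which agents $2,3,4$ rank $e$ first and $c$ third, with second choices $a$, $b$, $d$ respectively, while agent $1$'s top two houses in $H\setminus\{e\}$ are $a$ and $b$.
\item On $\mathsf{P}$ it moves $\varepsilon<\frac{1}{24}$ of agent $1$'s probability from $x$ (his best house outside $\{a,b,e\}$) to $y$ (the worse of $a,b$). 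Agents $2,3,4$ are left untouched.
\item \emph{ExPE.} When $aP_1b$, listing the houses of agents $1,2,3,4$, the shift is realized by moving weight $\varepsilon$ from the SD assignments $(x,a,b,e)$ and $(a,c,e,d)$ (orders $4231$ and $3124$) to $(b,a,e,d)$ and $(a,c,b,e)$ (orders $3214$ and $4123$). This is admissible because RSD puts weight at least $\frac{1}{24}$ on every efficient assignment.
\item \emph{SP.} Only agent $1$'s swaps of $(a,b)$, $(b,x)$ and $(x,d')$ affect the adjustment, where $d'$ is his second-best house outside $\{a,b,e\}$.
\begin{itemize}
\item The swap $(b,x)$ destroys a favorable adjustment, so it cannot help him.
\item For the other two swaps, agent $1$'s RSD probability of getting a house weakly above the upper house of the pair drops by at least $\frac{1}{24}>\varepsilon$. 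This is witnessed by the order $4231$ for $(x,d')$, and by any order with agent $1$ second for $(a,b)$.
\end{itemize}
\item \emph{Distinctness.} No nontrivial renaming of agents maps a profile of $\mathsf{P}$ into $\mathsf{P}$. Hence the symmetrization over $\Pi$ is anonymous, yet differs from $f_{{\rm RSD}}$ on $\mathsf{P}$.
\end{enumerate}
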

\begin{proof}
Let $a,b,c,d,e$ denote five distinct houses. We extend the notation
from the previous proof: for a preference profile $\mathbf{P}$, an
agent $i\in N$, and $j\in\left[m\right]$, let ${\rm top}_{j}\left(P_{i}\right)$
denote the $j$-th most-preferred house of agent $i$ under $\mathbf{P}$.
Using this notation, define 
\[
\mathsf{P}\coloneqq\left\{ \mathbf{P}\in\mathcal{R}^{N}\mid\begin{array}{c}
\forall h\in H\setminus\left\{ a,b,e\right\} :aP_{1}h\text{ and }bP_{1}h,\\
\forall i\in N\setminus\left\{ 1\right\} :{\rm top}_{1}\left(P_{i}\right)=e\text{ and }{\rm top}_{3}\left(P_{i}\right)=c,\\
{\rm top}_{2}\left(P_{2}\right)=a,\,{\rm top}_{2}\left(P_{3}\right)=b,\,{\rm top}_{2}\left(P_{4}\right)=d
\end{array}\right\} .
\]
That is, $\mathsf{P}$ is the set of profiles in which agent $1$'s
top two houses in $H\setminus\left\{ e\right\} $ are $a$ and $b$
(in some order), while agents $2,3,4$ all rank $e$ first and $c$
third, with their second-ranked houses fixed as $a,b,d$, respectively.
Denote by $x=x_{\mathbf{P}}$ the most-preferred house of agent $1$
in $H\setminus\left\{ a,b,e\right\} $ under profile $\mathbf{P}$,
and by $y=y_{\mathbf{P}}$ his less-preferred house among $\left\{ a,b\right\} $.

Let $\varepsilon>0$ be sufficiently small (for instance, $\varepsilon<\frac{1}{4!}$
suffices), and define the vector $v=\left(v_{h}\right)_{h\in H}$
by 
\[
v_{h}\coloneqq\begin{cases}
\varepsilon & \text{if \ensuremath{h=y}},\\
-\varepsilon & \text{if \ensuremath{h=x}},\\
0 & \text{otherwise}.
\end{cases}
\]
Now define a mechanism $g$ by 
\[
g\left(\mathbf{P}\right)_{i}\coloneqq\begin{cases}
f_{{\rm RSD}}\left(\mathbf{P}\right)_{i}+v & \text{if \ensuremath{i=1} and \ensuremath{\mathbf{P}\in\mathsf{P}}},\\
f_{{\rm RSD}}\left(\mathbf{P}\right)_{i} & \text{otherwise}.
\end{cases}
\]
In words, $g$ coincides with $f_{{\rm RSD}}$ except on profiles
in $\mathsf{P}$, where agent $1$'s assignment is adjusted by transferring
$\varepsilon$ probability from house $x$ to house $y$. The mechanism
is well defined, since under the ordering $4231$ (agents listed from
first to last), agent $1$ receives house $x$. We now obtain the
desired mechanism $f$ by symmetrizing $g$ over all renamings of
the agents: 
\[
f\coloneqq\frac{1}{4!}\sum_{\pi\in\Pi}\pi\left(g\right),
\]
and we will show that $f$ satisfies the axioms and differs from $f_{{\rm RSD}}$.
Since $f$ is obtained by symmetrizing $g$ over all renamings, it
satisfies ETE (and in fact anonymity). For the remaining axioms, as
in the previous proof, it suffices to verify them for $g$.

For ExPE, note first that $f_{{\rm RSD}}$ satisfies the axiom, so
it suffices to consider profiles $\mathbf{P}\in\mathsf{P}$. Since
every efficient assignment with respect to $\mathbf{P}$ has probability
at least $\varepsilon$ in ${\rm RSD}\left(\mathbf{P}\right)$, it
is enough to construct an adjustment that shifts probability between
efficient assignments so that agent $1$'s allocation transfers $\varepsilon$
probability from $x$ to $y$, while the allocations of the other
agents remain unchanged. We distinguish two symmetric cases, according
to whether $aP_{1}b$ or $bP_{1}a$. We present the case $aP_{1}b$;
the other follows by the same reasoning with the roles of agents $2$
and $3$ interchanged. In the case $aP_{1}b$, we have $y=b$. Consider
the following four assignments, where each agent is matched with the
house listed beneath him: 
\[
\begin{pmatrix}1 & 2 & 3 & 4\\
x & a & b & e
\end{pmatrix}\begin{pmatrix}1 & 2 & 3 & 4\\
a & c & e & d
\end{pmatrix}\rightarrow\begin{pmatrix}1 & 2 & 3 & 4\\
b & a & e & d
\end{pmatrix}\begin{pmatrix}1 & 2 & 3 & 4\\
a & c & b & e
\end{pmatrix}.
\]
Reducing $\varepsilon$ probability from each assignment on the left
and adding $\varepsilon$ probability to each assignment on the right
yields exactly the desired adjustment: agent $1$'s allocation shifts
$\varepsilon$ probability from $x$ to $b$, while the allocations
of all the other agents remain unchanged. It remains to check that
these four assignments are efficient. Indeed, the two assignments
on the left are generated by the orderings $4231$ and $3124$, and
the two on the right by $3214$ and $4123$. Hence, all are efficient
assignments, and thus $g$ satisfies ExPE.

For SP, as in the previous proof, it suffices to consider the adjacent
swaps of agent $1$ that can affect the adjustment from $f_{{\rm RSD}}$
to $g$. Let $\mathbf{Q}$ denote the profile of true preferences
and $\mathbf{R}$ the profile after the swap. Without loss of generality
(by the same argument as in the previous proof), we may assume that
$\mathbf{Q}\in\mathsf{P}$. Moreover, by symmetry, we may assume that
$aQ_{1}b$, and denote by $c^{\prime},d^{\prime}$ the two most-preferred
houses of agent $1$ in $H\setminus\left\{ a,b,e\right\} $, ordered
so that $c^{\prime}Q_{1}d^{\prime}$. The only adjacent swaps of agent
$1$ that can influence the adjustment are those between the pairs
$\left(a,b\right),\left(b,c^{\prime}\right),\left(c^{\prime},d^{\prime}\right)$.
By the definition of $g$, in each case the difference between the
adjustments under $\mathbf{Q}$ and $\mathbf{R}$ consists solely
of a shift in probability between the two houses in the swapped pair.
When the swapped pair is $\left(b,c^{\prime}\right)$, there is no
adjustment under $\mathbf{R}$ , which is worse for agent $1$ than
the adjustment under $\mathbf{Q}$; hence he cannot gain from this
deviation. For the swaps $\left(a,b\right)$ and $\left(c^{\prime},d^{\prime}\right)$,
the adjustment under $\mathbf{R}$ is actually more favorable to agent
$1$, which presents the potential difficulty.

When the swapped pair is $\left(c^{\prime},d^{\prime}\right)$, the
change in the adjustment from $\mathbf{Q}$ to $\mathbf{R}$ shifts
$\varepsilon$ probability from $d^{\prime}$ to $c^{\prime}$. Hence,\[\resizebox{\linewidth}{!}{$\begin{aligned}\left(\sum_{h\in C_{Q_{1}}\left(h^{\prime}\right)}g\left(\mathbf{R}\right)_{1,h}-\sum_{h\in C_{Q_{1}}\left(h^{\prime}\right)}f_{{\rm RSD}}\left(\mathbf{R}\right)_{1,h}\right)-\left(\sum_{h\in C_{Q_{1}}\left(h^{\prime}\right)}g\left(\mathbf{Q}\right)_{1,h}-\sum_{h\in C_{Q_{1}}\left(h^{\prime}\right)}f_{{\rm RSD}}\left(\mathbf{Q}\right)_{1,h}\right)=\begin{cases}
\varepsilon & \text{if \ensuremath{h^{\prime}=c^{\prime}}},\\
0 & \text{otherwise}.
\end{cases}\end{aligned}$}\] Since $f_{{\rm RSD}}$ satisfies SP, for every $h^{\prime}\neq c^{\prime}$
we already have 
\[
\sum_{h\in C_{Q_{1}}\left(h^{\prime}\right)}g\left(\mathbf{R}\right)_{1,h}\leq\sum_{h\in C_{Q_{1}}\left(h^{\prime}\right)}g\left(\mathbf{Q}\right)_{1,h}.
\]
To establish the same inequality for $h^{\prime}=c^{\prime}$, it
suffices to show that 
\[
\sum_{h\in C_{Q_{1}}\left(c^{\prime}\right)}f_{{\rm RSD}}\left(\mathbf{R}\right)_{1,h}\leq\left(\sum_{h\in C_{Q_{1}}\left(c^{\prime}\right)}f_{{\rm RSD}}\left(\mathbf{Q}\right)_{1,h}\right)-\varepsilon.
\]
By the definition of RSD and the choice of $\varepsilon$, this reduces
to showing that 
\[
\left|\left\{ \sigma\in S_{n}\mid{\rm SD}_{\sigma}\left(\mathbf{R}\right)\left(1\right)\in C_{Q_{1}}\left(c^{\prime}\right)\right\} \right|<\left|\left\{ \sigma\in S_{n}\mid{\rm SD}_{\sigma}\left(\mathbf{Q}\right)\left(1\right)\in C_{Q_{1}}\left(c^{\prime}\right)\right\} \right|.
\]
That is, the number of orderings in which agent $1$ receives a house
in $C_{Q_{1}}\left(c^{\prime}\right)$ strictly decreases when moving
from $\mathbf{Q}$ to $\mathbf{R}$. Indeed, the set corresponding
to $\mathbf{R}$ is contained in that for $\mathbf{Q}$, and they
are not equal: for example, the ordering $4231$ belongs to the set
for $\mathbf{Q}$ but not to the one for $\mathbf{R}$. Hence, $g\left(\mathbf{R}\right)_{1}\preceq_{Q_{1}}g\left(\mathbf{Q}\right)_{1}$.
For the swapped pair $\left(a,b\right)$, the same argument applies:
$a$ takes the role of $c^{\prime}$, $b$ the role of $d^{\prime}$,
and the relevant ordering is any ordering with agent $1$ placed second.
Thus, $g$ satisfies SP.

It remains to show that $f\neq f_{{\rm RSD}}$. Since RSD satisfies
anonymity, we have $\pi\left(f_{{\rm RSD}}\right)=f_{{\rm RSD}}$
for every $\pi\in\Pi$. Fix a profile $\mathbf{P}\in\mathsf{P}$.
By the definition of $g$, $g\left(\mathbf{P}\right)\neq f_{{\rm RSD}}\left(\mathbf{P}\right)$.
Moreover, for every ${\rm id}\neq\pi\in\Pi$, we have $\pi\left(\mathbf{P}\right)\notin\mathsf{P}$:
Indeed, agent $1$ is the only agent whose top two houses in $H\setminus\left\{ e\right\} $
are $\left\{ a,b\right\} $, and agents $2,3,4$ each have a distinct
second choice. Consequently, $\pi\left(g\right)\left(\mathbf{P}\right)=\pi\left(f_{{\rm RSD}}\right)\left(\mathbf{P}\right)$.
Since $f$ is the average of $\left\{ \pi\left(g\right)\right\} _{\pi\in\Pi}$
and $f_{{\rm RSD}}$ is the average of $\left\{ \pi\left(f_{{\rm RSD}}\right)\right\} _{\pi\in\Pi}$,
it follows that $f\left(\mathbf{P}\right)\neq f_{{\rm RSD}}\left(\mathbf{P}\right)$
and therefore $f\neq f_{{\rm RSD}}$. Thus, the axioms do not suffice
to characterize $f_{{\rm RSD}}$ in these settings.
\end{proof}
\begin{rem}
[Dominance over $f_{{\rm RSD}}$]\label{rem:nEq4_mGe5_dominance}

The mechanism constructed in the proof of Proposition \textcolor{blue}{\ref{lem:nEq4_mGe5}}
does not merely differ from $f_{{\rm RSD}}$; it dominates it. Concretely,
for every profile, it weakly improves each agent's allocation relative
to $f_{{\rm RSD}}$; moreover, at some profile, it strictly improves
at least one agent's allocation.

This follows directly from the construction in the proof: relative
to $f_{{\rm RSD}}$, the only change is that at certain profiles,
it shifts an $\varepsilon$-amount of probability mass in one agent's
allocation from house $x$ to house $y$, which this agent ranks strictly
above $x$, while leaving all other agents' allocations unchanged.
The subsequent symmetrization step (averaging over renamings) preserves
these profile-by-profile weak improvements for every agent and therefore
yields strict improvement at any profile where at least one of the
averaged terms performs the $\varepsilon$-shift.
\end{rem}
\begin{rem}
\label{rem:nEq4_mGe5_generalization}The dominance over $f_{{\rm RSD}}$
established in Remark \textcolor{blue}{\ref{rem:nEq4_mGe5_dominance}}
is not confined to the specific market sizes treated in Proposition
\textcolor{blue}{\ref{lem:nEq4_mGe5}}. Using the same extension-and-symmetrization
idea described in Remark \ref{rem:mEq2_domination_generalization},
one can lift the construction to any $\left(n,m\right)$ market with
$4\leq n\leq m-1$. Briefly, one enlarges the market by adding an
equal number of auxiliary agents and new houses, and then lets the
auxiliary agents select first. If, after the auxiliary agents select,
the set of remaining houses coincides with the original set of houses
(that is, none of the original houses is selected by an auxiliary
agent), then one applies the dominating mechanism constructed in Proposition
\textcolor{blue}{\ref{lem:nEq4_mGe5}}; otherwise, one applies $f_{{\rm RSD}}$.
Finally, one uniformly averages over all renamings of the agents.
The resulting mechanism satisfies the axioms and dominates $f_{{\rm RSD}}$
in the $\left(n,m\right)$ market. We omit the details.
\end{rem}
\begin{rem}
    The non-uniqueness results naturally raise a welfare question: in domains where the axioms do not uniquely characterize $f_{\rm RSD}$, can $f_{\rm RSD}$ be dominated by another mechanism satisfying the same axioms? Remarks \ref{rem:mEq2_domination_generalization} and \ref{rem:nEq4_mGe5_generalization} provide an affirmative answer to this question for $n\ge m+2$ (where $m\ge 2$) and for $4\le n\le m-1$, respectively. In the remaining regions of non-uniqueness, namely $n = m+1 \ge 4$ and $n = m \ge 5$, it is an open question whether dominance is possible, or whether $f_{\rm RSD}$ is undominated despite non-uniqueness.
\end{rem}

\subsection{Non-uniqueness under strengthened and additional axioms} \label{subsec:adding_BI}
Since we have established that $f_{{\rm RSD}}$ is not always the
only mechanism satisfying ExPE, ETE, and SP, a natural question arises:
can adding or strengthening axioms guarantee uniqueness for every
number of agents $n$ and houses $m$? Basteck and Ehlers \cite{basteck2025constrained}
have raised this type of question with respect to adding the axiom
\emph{Bounded Invariance (BI)}:
\begin{defn}
A mechanism $f$ satisfies \emph{Bounded Invariance (BI)} if for every
preference profile $\mathbf{P}\in\mathcal{R}^{N}$, agent $i\in N$,
preference order $P_{i}^{\prime}\in\mathcal{R}$, and house $h\in H$,
the following holds: 
\[
\left(\forall h^{\prime}\in C_{P_{i}}\left(h\right):C_{P_{i}^{\prime}}\left(h^{\prime}\right)=C_{P_{i}}\left(h^{\prime}\right)\right)\implies f\left(\mathbf{P}\right)_{h}=f\left(\mathbf{P}_{-i},P_{i}^{\prime}\right)_{h}.
\]
In other words, if an agent changes its ranking without changing the
part above and including $h$, then the assignment probability of
$h$ remains unchanged for every agent.
\end{defn}

However, the answer to the uniqueness question under this extra axiom
is negative, as the following result shows, even when strengthening
ETE to anonymity and adding neutrality.
\begin{prop}
\label{prop:BI_non-uniqueness} When $n\geq m-1\geq5$, $f_{{\rm RSD}}$
is not the unique mechanism satisfying anonymity, neutrality, ExPE,
SP, and BI.
\end{prop}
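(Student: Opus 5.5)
The plan is to follow the same perturb-and-symmetrize pattern as the previous subsections, with two changes. The perturbation will be chosen so that BI is preserved, and the final averaging will run over both agent renamings $\Pi$ and house renamings $\Gamma$, so that the result is anonymous and neutral. The condition $n\geq m-1\geq5$ suggests we should not modify the Basteck--Ehlers construction directly. Instead, we embed a perturbation for a $(n',m')$ submarket with $m'=5$ (or $6$) into the larger market through the extension device of Remarks \ref{rem:mEq2_domination_generalization} and \ref{rem:nEq4_mGe5_generalization}. Concretely, we start from the extensive-form mechanism $M^{1}$ (or ${\rm RSD}$) and define a mechanism $g$ that agrees with it outside a small, explicitly described family $\mathsf{P}$ of profiles.

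On profiles in $\mathsf{P}$, $g$ performs a balanced $\varepsilon$-exchange between efficient assignments, in the spirit of the four-assignment cycle in the proof of Proposition \ref{lem:nEq4_mGe5}. The exchange will be designed so that, for every house $h$, the row $g(\mathbf{P})_{h}$ changes only at houses located in the bottom part of the relevant agents' rankings.

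\textbf{Step 1: designing $\mathsf{P}$ and the exchange to be compatible with BI.} BI says that if agent $i$ changes his ranking only strictly below $h$, then the entire row of $h$ must stay the same. So membership in $\mathsf{P}$ must depend on the rankings in a way that is "upward closed". Whenever a profile enters or leaves $\mathsf{P}$ through a change of agent $i$'s ranking below some house $h$, the adjustment must not touch row $h$. The natural way to achieve this is to let membership in $\mathsf{P}$ depend on agent $i$'s ranking only through his top $k$ houses, and to make the adjustment involve only houses in rows that lie below position $k$ for the agent whose ranking is constrained. I would try a structure like the following. Agents $2,\dots,m'-1$ have fixed top segments, and agent $1$'s top segment is fixed up to a specified set. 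The exchange moves $\varepsilon$ between two houses $x,y$ for agent $1$ and compensates through other agents' assignments of $x$ and $y$ only. The houses $x,y$ are determined by the fixed top segments rather than by "agent $1$'s next house", because an $x_{\mathbf{P}}$ that depends on the lower part of the ranking is exactly what breaks BI in the earlier constructions. Using $m\geq6$ gives enough houses to carry out the compensating cycle among efficient assignments while keeping it inside the fixed top segments. Using $n\geq m-1$ ensures that the relevant serial dictatorship orderings exist and that each of the cycle's assignments has weight at least $\varepsilon$ in the base mechanism, which gives ExPE as before.

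\textbf{Step 2: verifying SP, BI, ExPE for $g$; then symmetrizing.} For SP, I use Lemma \ref{lem:sp characterization} and examine only the adjacent swaps that change the adjustment, exactly as in the previous proofs. Each such swap either moves the profile out of $\mathsf{P}$ in a way that is unfavorable to the swapping agent, or requires a strict-counting argument like the one for $(c',d')$ above. BI for $g$ is checked by the case analysis of Step 1. ExPE comes from the explicit SD orderings that generate the cycle's assignments. Anonymity, neutrality, SP, ExPE and BI are all preserved under renamings and under convex combinations; BI is linear in $f$, so this holds for it too. Therefore $f\coloneqq\frac{1}{n!\,m!}\sum_{\pi,\tau}(\pi,\tau)(g)$ satisfies all the axioms. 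To show $f\neq f_{\rm RSD}$, I fix a profile in $\mathsf{P}$ whose image under every nontrivial pair $(\pi,\tau)$ lies outside $\mathsf{P}$, or whose adjustments at least point in the same direction. This gives a strict inequality in one entry, as in the final paragraphs of the previous proofs.

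The main obstacle is Step 1: reconciling BI with SP. SP tends to require the adjustment to follow the deviating agent's local ranking, while BI forbids any dependence on the lower part of the ranking that affects upper rows. I would first try a cycle in which agent $1$ trades between two houses of the fixed top segment. The other agents absorb the compensation at houses that, in their own rankings, lie below the houses whose rows change. I would then check case by case that every boundary swap of $\mathsf{P}$ only changes rows of houses below the swap.
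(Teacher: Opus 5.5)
Your outline has the right outer shell. You perturb $f_{\rm RSD}$ on a family $\mathsf{P}$, symmetrize over $\Pi\times\Gamma$, note that ExPE, SP and BI survive renamings and convex combinations, and finish with a same-direction argument for $f\neq f_{\rm RSD}$. But you never produce $\mathsf{P}$, $x$, $y$, the exchange or the orderings, and that construction is the whole proof.

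Worse, the one concrete design you commit to cannot work. You let the perturbed agent (your agent $1$) enter the membership test through a fixed top segment, say his top $k$ houses. You also want $x,y$ to be fixed by the segments rather than by his ranking below them. The adjacent swap of his $k$-th and $(k+1)$-th houses takes the profile out of $\mathsf{P}$, and so deletes his entire adjustment in rows $x$ and $y$. Since $f_{\rm RSD}$ already satisfies the equalities in Lemma \ref{lem:sp characterization}, $g$ satisfies them only if this deletion lives on the swapped pair. That means $\{x,y\}$ must be exactly his $k$-th and $(k+1)$-th houses.
\begin{itemize}
\item If $x,y$ both lie in the segment, one of them is strictly above the swap, so upper invariance (hence SP) and BI both fail.
\item Otherwise one of them is his $(k+1)$-th house, which is precisely \emph{agent $1$'s next house}.
\end{itemize}
So whenever the perturbed agent's ranking enters $\mathsf{P}$, the adjusted houses are forced to depend on his ranking just below the constrained part.

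Your diagnosis of the earlier constructions is also off. In Proposition \ref{lem:nEq4_mGe5}, BI does not break because $x_{\mathbf P}$ depends on agent $1$'s lower ranking. It breaks because a constrained \emph{other} agent can leave $\mathsf{P}$ while an adjusted row lies above his swap. For example, when agent $1$ prefers $a$ to $b$ (so $y=b$), agent $3$ with ranking $e\,b\,c\cdots$ can swap $c$ with his fourth house, and row $b$ changes.

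The missing idea, which is exactly how the paper resolves the SP/BI tension you leave open, is to separate the two roles.
\begin{itemize}
\item $\mathsf{P}$ constrains only agents $1,\dots,m-2$ (top three $a,b,h_i$). Any swap of theirs that changes membership has only $a$ and $b$ above it, and the adjustment never touches rows $a,b$. Your Step 1 principle is the right one for these agents.
\item The perturbed agent $j=m-1$ is unconstrained, and $x,y$ are his second and third choices in $H\setminus\{a,b\}$. They \emph{do} depend on his lower ranking, but each adjacent swap of $j$ changes the adjustment only in the two swapped rows, for $j$ and for the compensating agent $m$. That single fact gives upper/lower invariance and BI for his swaps at once.
\item Agent $m$ cannot influence $\mathsf{P}$ or $x,y$, so SP and BI are automatic for him. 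When $n=m-1$ nobody compensates: $x$ is simply unassigned under RSD with probability at least $\varepsilon$.
\item What is left for SP are the two favorable shifts, from the swaps $(h_1,h_2)$ and $(h_3,h_4)$ under the normalization $h_1P_jh_2P_j\cdots$. These are outweighed by RSD's own loss, witnessed by orderings beginning $21j$ and $4321j$.
\item ExPE comes from an explicit four-assignment exchange generated by orderings beginning $4321j$, $21j34$, $41j23$ and $2314j$.
\end{itemize}
No submarket embedding is needed, and you would have had to verify BI for it separately.
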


\begin{proof}
We construct a suitable mechanism $f$. Denote $H\coloneqq\left\{ a,b,h_{1},\dots,h_{m-2}\right\} $
and $j\coloneqq m-1\in N$; from this point on, $m-1$ refers to the
integer, and ``agent $j$'' refers to the agent with index $m-1$.

Define 
\[
\mathsf{P}\coloneqq\left\{ \mathbf{P}\in\mathcal{R}^{N}\mid\forall i\in\left[m-2\right],\,\forall h\in H\setminus\left\{ a,b,h_{i}\right\} :aP_{i}bP_{i}h_{i}P_{i}h\right\} .
\]
Thus, for each $i\in\left[m-2\right]$, agent $i$'s top three choices
are $a$, $b$, and $h_{i}$, in that order. For every $\mathbf{P}\in\mathcal{R}^{N}$,
let $x\coloneqq x_{\mathbf{P}}$ and $y\coloneqq y_{\mathbf{P}}$
denote agent $j$'s second and third choices among $H\setminus\left\{ a,b\right\} $,
respectively. Fix $\varepsilon>0$ sufficiently small (e.g., $\varepsilon\leq\frac{1}{n!}$),
and define $v\coloneqq\left(v_{h}\right)_{h\in H}$ by 
\[
v_{h}\coloneqq\begin{cases}
\varepsilon & \text{if }h=x,\\
-\varepsilon & \text{if }h=y,\\
0 & \text{otherwise}.
\end{cases}
\]
With $v$ as defined, we specify the mechanism $g$ by altering $f_{{\rm RSD}}$
only on $\mathsf{P}$: shift $\varepsilon$ probability from $y$
to $x$ in agent $j$'s assignment and, if $n\geq m$, make the opposite
shift for agent $m$; namely, 
\[
g\left(\mathbf{P}\right)_{i}=\begin{cases}
f_{{\rm RSD}}\left(\mathbf{P}\right)_{i}+v & \text{if }\mathbf{P}\in\mathsf{P}\text{ and }i=j,\\
f_{{\rm RSD}}\left(\mathbf{P}\right)_{i}-v & \text{if }\mathbf{P}\in\mathsf{P}\text{ and }i=m\text{ (when \ensuremath{n\geq m}),}\\
f_{{\rm RSD}}\left(\mathbf{P}\right)_{i} & \text{otherwise}.
\end{cases}
\]
From now on, we assume without loss of generality that $h_{i}P_{j}h_{i+1}$
for every $i\in\left[m-3\right]$; the remaining cases follow by permuting
the roles of agents $1,\dots,m-2$ consistently with the permutation
of agent $j$'s preferences among $h_{1},\dots,h_{m-2}$. Thus, $x_{\mathbf{P}}=h_{2}$
and $y_{\mathbf{P}}=h_{3}$.

To show that $g$ is well defined, it suffices to verify that, under
$f_{{\rm RSD}}$, each of the following occurs with probability at
least $\varepsilon$: agent $j$ receives $h_{3}$; when $n=m-1$,
the house $h_{2}$ remains unassigned; and when $n\ge m$, agent $m$
receives $h_{2}$. Since each ordering in RSD has probability $\frac{1}{n!}$
and $\varepsilon\le\frac{1}{n!}$, it is enough to exhibit one ordering
for each event. For agent $j$: in an ordering starting with $4321j$,
the first four agents take $a,b,h_{2},h_{1}$, so when agent $j$
arrives he receives $h_{3}$. For the remaining events, consider an
ordering that begins with $21j$, followed by the agents in $\left\{ 3,\dots,m-2\right\} $
in some order; here agents $2$ and $1$ take $a$ and $b$, agent
$j$ then takes $h_{1}$, and the agents in $\left\{ 3,\dots,m-2\right\} $
take their respective $h_{i}$'s, leaving $h_{2}$ as the only unassigned
house. If $n=m-1$, this already lists all agents, so the ordering
is complete and $h_{2}$ remains unassigned; if $n\ge m$, let agent
$m$ come next and he receives $h_{2}$. Therefore, $g$ is well defined.

Now define the mechanism $f$ as the symmetrization of $g$ over all
renamings of agents and houses: 
\[
f=\frac{1}{n!m!}\sum_{\left(\pi,\tau\right)\in\Pi\times\Gamma}\left(\pi,\tau\right)\left(g\right).
\]
Note also that, since $f_{{\rm RSD}}$ is anonymous and neutral, it
equals its own symmetrization: 
\[
f_{{\rm RSD}}=\frac{1}{n!m!}\sum_{\left(\pi,\tau\right)\in\Pi\times\Gamma}\left(\pi,\tau\right)\left(f_{{\rm RSD}}\right).
\]
We claim that $f$ is the desired mechanism. By construction, $f$
is anonymous and neutral. Moreover, ExPE, SP, and BI are invariant
under renamings and are preserved under convex combinations; hence
it suffices to verify these properties for $g$. We will therefore
show that $g$ satisfies ExPE, SP, and BI, and we will also establish
that $f\neq f_{{\rm RSD}}$.

We now verify that $g$ satisfies ExPE. We note that $f_{{\rm RSD}}$
satisfies ExPE; hence, it suffices to consider $\mathbf{P}\in\mathsf{P}$
and to show that the adjustment from $f_{{\rm RSD}}$ to $g$ can
be realized as a probability shift between efficient assignments,
reducing at most $\varepsilon$ probability from any single assignment.
Consider the four assignments $s_{1},s_{2},s_{1}^{\prime},s_{2}^{\prime}$
given by 
\begin{align*}
s_{1}\left(i\right)=\begin{cases}
a & \text{if }i=4,\\
b & \text{if }i=3,\\
h_{3} & \text{if }i=j,\\
h_{i} & \text{if }i\in\left[m-2\right]\setminus\left\{ 3,4\right\} ,\\
h_{4} & \text{if }i=m,\\
\varnothing & \text{otherwise},
\end{cases} & s_{2}\left(i\right)=\begin{cases}
a & \text{if }i=2,\\
b & \text{if }i=1,\\
h_{1} & \text{if }i=j,\\
h_{i} & \text{if }i\in\left[m-2\right]\setminus\left\{ 1,2\right\} ,\\
h_{2} & \text{if }i=m,\\
\varnothing & \text{otherwise},
\end{cases}\\
s_{1}^{\prime}\left(i\right)=\begin{cases}
a & \text{if }i=4,\\
b & \text{if }i=1,\\
h_{1} & \text{if }i=j,\\
h_{i} & \text{if }i\in\left[m-2\right]\setminus\left\{ 1,4\right\} ,\\
h_{4} & \text{if }i=m,\\
\varnothing & \text{otherwise},
\end{cases} & s_{2}^{\prime}\left(i\right)=\begin{cases}
a & \text{if }i=2,\\
b & \text{if }i=3,\\
h_{2} & \text{if }i=j,\\
h_{i} & \text{if }i\in\left[m-2\right]\setminus\left\{ 2,3\right\} ,\\
h_{3} & \text{if }i=m,\\
\varnothing & \text{otherwise}.
\end{cases}
\end{align*}
Reducing $\varepsilon$ probability from $s_{1}$ and $s_{2}$ and
adding $\varepsilon$ probability to $s_{1}^{\prime}$ and $s_{2}^{\prime}$
implements exactly the required adjustment: agent $j$ shifts $\varepsilon$
probability from $y=h_{3}$ to $x=h_{2}$; when $n\geq m$, agent
$m$ shifts $\varepsilon$ probability from $x=h_{2}$ to $y=h_{3}$;
and all other agents' marginals are unchanged. Each of $s_{1},s_{2},s_{1}^{\prime},s_{2}^{\prime}$
is the outcome of some ordering of the agents; hence each is an efficient
assignment. For example, they can be obtained from orderings whose
initial segments are $4321j$, $21j34$, $41j23$, and $2314j$, respectively;
in each case the ordering then continues with the remaining agents
in $\left[m-2\right]$ in some order, then (if $n\geq m$) agent $m$,
and finally the rest. Hence the adjustment is a probability shift
between efficient assignments, and $g$ satisfies ExPE.

For SP, since $f_{{\rm RSD}}$ satisfies it, every agent in $N\setminus\left\{ j,m\right\} $
cannot gain by manipulation, because their assignments coincide under
$g$ and $f_{{\rm RSD}}$ at every profile. Moreover, agent $m$ (if
he exists) cannot gain either, since his report cannot affect the
adjustment. It remains to show that agent $j$ cannot gain.

Let $\mathbf{P}\in\mathcal{R}^{N}$. As before, it suffices to consider
manipulations by agent $j$ that swap adjacent houses and thereby
change the adjustment. If $\mathbf{P}\notin\mathsf{P}$, then $\left(\mathbf{P}_{-j},R_{j}\right)\notin\mathsf{P}$
for all $R_{j}\in\mathcal{R}$, so the adjustment never changes. For
$\mathbf{P}\in\mathsf{P}$, the only adjacent swaps by agent $j$
that can affect the adjustment are between the pairs $\left(h_{k},h_{k+1}\right)$
for $k=1,2,3$. By the definition of $g$, the difference between
the adjustments to agent $j$'s assignment before and after any such
swap consists solely of a probability shift between the two houses
in the swapped pair.
\begin{itemize}
\item If the swapped pair is $\left(h_{2},h_{3}\right)$, the adjustment
flips from transferring $\varepsilon$ probability from $h_{3}$ to
$h_{2}$ to the opposite transfer. This is worse for agent $j$, so
he cannot gain from this swap.
\item If the swapped pair is $\left(h_{3},h_{4}\right)$, the adjustment
changes from transferring $\varepsilon$ probability from $h_{3}$
to $h_{2}$ to transferring $\varepsilon$ from $h_{4}$ to $h_{2}$.
Equivalently, the net change shifts $\varepsilon$ probability from
$h_{4}$ to $h_{3}$. By the same reasoning as in Proposition \ref{lem:nEq4_mGe5},
it suffices to exhibit some ordering in which agent $j$ receives
a house in $C_{P_{j}}\left(h_{3}\right)$ before the swap but not
after it. Any ordering whose initial segment is $4321j$ has this
property: under $\mathbf{P}$, agent $j$ receives $h_{3}$, whereas
after the swap he receives $h_{4}$.
\item If the swapped pair is $\left(h_{1},h_{2}\right)$, the same argument
applies, now taking some ordering whose initial segment is $21j$.
\end{itemize}
Therefore, $g$ satisfies SP.

For BI, let $\mathbf{P}\in\mathcal{R}^{N}$, $i\in N$, $P_{i}^{\prime}\in\mathcal{R}$,
and $h\in H$. Any change from $P_{i}$ to $P_{i}^{\prime}$ can be
written as a sequence of adjacent swaps. Since BI requires equality
at each step for every house that appears above the swap at that step,
it suffices to consider the case where $P_{i}^{\prime}$ differs from
$P_{i}$ by a single adjacent swap. Because $f_{{\rm RSD}}$ satisfies
BI, if the adjustments for $\mathbf{P}$ and $\mathbf{P}^{\prime}\coloneqq\left(\mathbf{P}_{-i},P_{i}^{\prime}\right)$
are identical with respect to $h$, then 
\[
g\left(\mathbf{P}^{\prime}\right)_{h}-g\left(\mathbf{P}\right)_{h}=f_{{\rm RSD}}\left(\mathbf{P}^{\prime}\right)_{h}-f_{{\rm RSD}}\left(\mathbf{P}\right)_{h}=0.
\]
Thus it remains to check whether there exists a swap below a house
$h$ that changes the adjustment at $h$.

If $i\in N\setminus\left[m-1\right]$, then no swap by agent $i$
changes the adjustment. If $i\in\left[m-2\right]$, the only houses
that can lie above a swap of agent $i$ that changes the adjustment
are $a$ and $b$; however, the adjustment never alters any agent's
probability of receiving $a$ or $b$. Otherwise, $i=j$. In that
case, as noted earlier, the only adjacent swaps of agent $j$ that
can affect the adjustment are between the pairs $\left(h_{k},h_{k+1}\right)$
for $k=1,2,3$. Each such swap shifts probability only between the
two swapped houses for agent $j$ (and, if $n\geq m$, for agent $m$
as well). Hence no probability associated with any house above the
swap is affected. Therefore, $g$ satisfies BI. Hence, $f$ satisfies
all the axioms stated in the proposition.

It remains to show that $f\neq f_{{\rm RSD}}$. Fix a profile $\mathbf{Q}\in\mathsf{P}$
such that: agent $j$'s top choices are $h_{1}Q_{j}h_{2}$; for every
$i\in N\setminus\left\{ j\right\} $, agent $i$'s top two choices
are $aQ_{i}b$; and for every $i\in N\setminus\left\{ 2,j\right\} $,
the house $h_{2}$ is not among $i$'s top four choices (these conditions
are feasible since $m\geq6$). By the construction of $g$, we have
$g\left(\mathbf{Q}\right)_{h_{2},j}>f_{{\rm RSD}}\left(\mathbf{Q}\right)_{h_{2},j}$.

We claim that for every $\left(\pi,\tau\right)\in\Pi\times\Gamma$,
\[
\left(\pi,\tau\right)\left(g\right)\left(\mathbf{Q}\right)_{h_{2},j}\geq\left(\pi,\tau\right)\left(f_{{\rm RSD}}\right)\left(\mathbf{Q}\right)_{h_{2},j}.
\]
Assume, toward a contradiction, that this fails for some $\left(\pi,\tau\right)$.
Let $\mathbf{Q}^{\prime}\coloneqq\left(\pi,\tau\right)\left(\mathbf{Q}\right)$.
Then 
\[
g\left(\mathbf{Q}^{\prime}\right)_{\tau\left(h_{2}\right),\pi\left(j\right)}<f_{{\rm RSD}}\left(\mathbf{Q}^{\prime}\right)_{\tau\left(h_{2}\right),\pi\left(j\right)}.
\]
Note that $\tau\left(h_{2}\right)$ is the second most-preferred house
of $\pi\left(j\right)$ in $\mathbf{Q}^{\prime}$. The only agent
whose probability of obtaining his second most-preferred house can
decrease under the adjustment from $f_{{\rm RSD}}$ to $g$ is agent
$m$, hence $\pi\left(j\right)=m$.

Such a decrease can occur only if agent $j$ ranks $\tau\left(h_{2}\right)$
second among $H\setminus\left\{ a,b\right\} $ in $\mathbf{Q}^{\prime}$.
Consequently, agent $j$ must rank $\tau\left(h_{2}\right)$ among
his top four houses in $\mathbf{Q}^{\prime}$, so agent $\pi^{-1}\left(j\right)$
must rank $h_{2}$ among his top four houses in $\mathbf{Q}$. By
the construction of $\mathbf{Q}$, this forces $\pi^{-1}\left(j\right)\in\left\{ 2,j\right\} $;
since $\pi\left(j\right)\neq j$, we have $\pi^{-1}\left(j\right)=2$.

In $\mathbf{Q}$, the top three choices of agent $2$ are $aQ_{2}bQ_{2}h_{2}$;
therefore, in $\mathbf{Q}^{\prime}$ the top three choices of $\pi\left(2\right)=j$
are $\tau\left(a\right)Q_{j}^{\prime}\tau\left(b\right)Q_{j}^{\prime}\tau\left(h_{2}\right)$.
Because $j$ ranks $\tau\left(h_{2}\right)$ second among $H\setminus\left\{ a,b\right\} $
in $\mathbf{Q}^{\prime}$, we must have $\left\{ \tau\left(a\right),\tau\left(b\right)\right\} \neq\left\{ a,b\right\} $.

Since $\mathbf{Q}^{\prime}\in\mathsf{P}$ (otherwise no adjustment
would occur), agents $1,\dots,m-2$ must rank $a$ and $b$ as their
top two houses in $\mathbf{Q}^{\prime}$. Hence agents $\pi^{-1}\left(1\right),\dots,\pi^{-1}\left(m-2\right)$
rank $\tau^{-1}\left(a\right)$ and $\tau^{-1}\left(b\right)$ as
their top two houses in $\mathbf{Q}$. We obtain a contradiction:
in $\mathbf{Q}$ every agent $i\neq j$ has $\left\{ a,b\right\} $
as his top two houses, and since $\left\{ \tau^{-1}\left(a\right),\tau^{-1}\left(b\right)\right\} \neq\left\{ a,b\right\} $,
only $j$ could have $\left\{ \tau^{-1}\left(a\right),\tau^{-1}\left(b\right)\right\} $
as his top two. Yet $\pi^{-1}\left(1\right),\dots,\pi^{-1}\left(m-2\right)$
are $m-2\geq4$ distinct agents, so they cannot all be $j$. Therefore
the claimed inequality holds for every $\left(\pi,\tau\right)\in\Pi\times\Gamma$.

Finally, since $f$ is the average of $\left\{ \left(\pi,\tau\right)\left(g\right)\right\} _{\left(\pi,\tau\right)\in\Pi\times\Gamma}$
and $f_{{\rm RSD}}$ is the average of $\left\{ \left(\pi,\tau\right)\left(f_{{\rm RSD}}\right)\right\} _{\left(\pi,\tau\right)\in\Pi\times\Gamma}$,
we obtain 
\[
f\left(\mathbf{Q}\right)_{h_{2},j}>f_{{\rm RSD}}\left(\mathbf{Q}\right)_{h_{2},j},
\]
and hence $f\neq f_{{\rm RSD}}$, as desired.
\end{proof}

\appendix
\section{Adding more axioms: persistence of non-uniqueness} \label{sec:appendix-adding_axioms}
Section \lyxdeleted{bzmao}{Mon Jan  5 23:46:05 2026}{5.3}\lyxadded{bzmao}{Mon Jan  5 23:46:08 2026}{}\lyxadded{bzmao}{Mon Jan  5 23:46:16 2026}{\ref{subsec:adding_BI}}
established that anonymity, neutrality, ExPE, SP and BI do not uniquely
characterize $f_{{\rm RSD}}$. To pursue uniqueness, we introduce
two additional axioms, both satisfied by $f_{{\rm RSD}}$, and examine
whether the extended set of axioms suffices for uniqueness. However,
this attempt fails: the mechanism from \lyxdeleted{bzmao}{Mon Jan  5 19:54:20 2026}{Lemma
5.3}\lyxadded{bzmao}{Mon Jan  5 19:54:20 2026}{Proposition }\lyxadded{bzmao}{Mon Jan  5 19:54:20 2026}{\ref{prop:BI_non-uniqueness}}
also satisfies one of these new axioms, and under a slightly stricter
size condition, $n\geq m-1\geq7$ (compared to $n\geq m-1\geq5$ in
\lyxdeleted{bzmao}{Mon Jan  5 19:54:26 2026}{Lemma 5.3}\lyxadded{bzmao}{Mon Jan  5 19:54:26 2026}{Proposition
}\lyxadded{bzmao}{Mon Jan  5 19:54:26 2026}{\ref{prop:BI_non-uniqueness}}),
we construct another mechanism, similar in spirit to that of \lyxdeleted{bzmao}{Mon Jan  5 19:54:31 2026}{Lemma
5.3}\lyxadded{bzmao}{Mon Jan  5 19:54:31 2026}{Proposition }\lyxadded{bzmao}{Mon Jan  5 19:54:31 2026}{\ref{prop:BI_non-uniqueness}},
which satisfies all the axioms (those from the \lyxdeleted{bzmao}{Mon Jan  5 19:54:43 2026}{lemma}\lyxadded{bzmao}{Mon Jan  5 19:54:45 2026}{proposition}
together with the two new ones) and still differs from $f_{{\rm RSD}}$.
We now define one of these additional axioms.
\begin{defn}
[NB] A mechanism $f$ satisfies \emph{Non-Bossiness (NB)} if, for
every preference profile $\mathbf{P}\in\mathcal{R}^{N}$, agent $i\in N$,
and preference order $P_{i}^{\prime}\in\mathcal{R}$, 
\[
f\left(\mathbf{P}\right)_{i}=f\left(\mathbf{P}_{-i},P_{i}^{\prime}\right)_{i}\implies f\left(\mathbf{P}\right)=f\left(\mathbf{P}_{-i},P_{i}^{\prime}\right).
\]
\end{defn}

In words, an agent cannot alter any other agent's assignment without
simultaneously changing his own. The mechanism $f_{{\rm RSD}}$ satisfies
NB (see \cite{bade2016fairness}). Before showing that the mechanism
from \lyxdeleted{bzmao}{Mon Jan  5 19:54:57 2026}{Lemma 5.3}\lyxadded{bzmao}{Mon Jan  5 19:54:57 2026}{Proposition
}\lyxadded{bzmao}{Mon Jan  5 19:54:57 2026}{\ref{prop:BI_non-uniqueness}}
satisfies this axiom, we first establish an auxiliary claim that will
be useful in the proof. The claim shows that for mechanisms satisfying
SP, it is sufficient to verify NB only for preference orders that
differ by an adjacent swap.
\begin{claim}
Let $f$ be a mechanism that satisfies SP. Suppose that for every
preference profile $\mathbf{P}\in\mathcal{R}^{N}$, agent $i\in N$,
and preference order $P_{i}^{\prime\prime}\in\mathcal{R}$ obtained
from $P_{i}$ by an adjacent swap, 
\[
f\left(\mathbf{P}\right)_{i}=f\left(\mathbf{P}_{-i},P_{i}^{\prime\prime}\right)_{i}\implies f\left(\mathbf{P}\right)=f\left(\mathbf{P}_{-i},P_{i}^{\prime\prime}\right).
\]
Then $f$ satisfies NB.
\end{claim}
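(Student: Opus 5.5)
The plan is to reduce an arbitrary misreport $P_i'$ to a chain of adjacent swaps, and to show that along a suitably chosen chain agent $i$'s assignment never changes. Then the hypothesis of the claim can be applied at each link and the resulting equalities of full assignment matrices chained together.

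Fix $\mathbf{P}$, $i$ and $P_i'$ with $f(\mathbf{P})_i=f(\mathbf{P}_{-i},P_i')_i$. Choose a \emph{minimal} sequence of adjacent swaps $P_i=R^0,R^1,\dots,R^k=P_i'$. In such a sequence each step inverts a pair of houses on which $P_i$ and $P_i'$ disagree, in the direction from the $P_i$-order to the $P_i'$-order, and each disagreeing pair is inverted exactly once (as in bubble sort). Write $\mathbf{P}^t\coloneqq(\mathbf{P}_{-i},R^t)$.

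By Lemma \ref{lem:sp characterization}, step $t$ swaps some adjacent pair with $h''$ above $h'$ in $R^{t-1}$. Since $R^{t-1}$ and $P_i$ agree on this pair, $h''$ is also above $h'$ in $P_i$. The lemma then says the step leaves agent $i$'s probabilities of all other houses unchanged. It also leaves his probability of $\varnothing$ unchanged, because the row sums over $h',h''$ are preserved. Finally, it transfers some amount $\delta_t\ge0$ of probability from $h''$ to $h'$.

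To conclude that all $\delta_t=0$, I would use a potential function. Let $r(o)$ be the position of $o$ in $P_i$, with larger meaning worse, and put $\Phi(p)\coloneqq\sum_{o\in O}r(o)\,p(o)$. At step $t$ the change in $\Phi$ is $\delta_t\bigl(r(h')-r(h'')\bigr)$. Here $r(h')>r(h'')$ because $h''P_i^{+}h'$, so each step changes $\Phi$ by a nonnegative amount, which is zero only if $\delta_t=0$. Summing over the sequence, the total change is $\Phi(f(\mathbf{P}^k)_i)-\Phi(f(\mathbf{P}^0)_i)=0$ by assumption. Hence every $\delta_t=0$, and so $f(\mathbf{P}^{t-1})_i=f(\mathbf{P}^t)_i$ for every $t$.

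Applying the hypothesis of the claim to each consecutive pair then gives $f(\mathbf{P}^{t-1})=f(\mathbf{P}^t)$, and chaining yields $f(\mathbf{P})=f(\mathbf{P}_{-i},P_i')$, which is NB.

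The main point needing care is the monotonicity of the transfers. The sign of each $\delta_t$ is measured relative to $R^{t-1}$, and it has to be consistent with the fixed reference order $P_i$. This is exactly why the swap sequence must be minimal: only then does every swap move a house up past one that $P_i$ ranks above it. With a non-minimal path, some swaps could go in the opposite direction, and transfers could cancel without being individually zero. The rest of the argument is routine bookkeeping.
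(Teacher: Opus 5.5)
Your argument is correct, but it is organized differently from the paper's.

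The paper inducts on the length of one specific swap sequence, of insertion-sort type: at each stage it raises the highest house (in $P_i'$) that is not yet in its final position. Its key observation is that the house $h_1$ raised in the first step is never moved downward later. By SP, $f(\cdot)_{h_1,i}$ is therefore weakly increasing along the whole path. Equality at the endpoints then forces the first swap to leave $f(\cdot)_{h_1,i}$, and hence agent $i$'s entire column, unchanged. One application of the hypothesis reduces the problem to a shorter sequence.

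You instead allow an arbitrary minimal sequence and use its geodesic property: every swap reverses a pair still ordered as in $P_i$. This makes agent $i$'s lotteries a weakly decreasing chain in $\succeq_{P_i}$. Your expected-rank potential $\Phi$ encodes this, and it rules out every transfer at once, with no induction.

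Your route is slightly more general, since no particular sequence has to be constructed. It also makes the underlying reason transparent: the lotteries along a geodesic are $\succeq_{P_i}$-ordered and coincide at the endpoints. The paper's route tracks only a single coordinate and never needs the fact that a minimal sequence inverts each disagreeing pair exactly once. In exchange, it needs its special sequence and the check that $h_1$ never descends.
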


\begin{proof}
Fix $i\in N$. For every $P_{i},P_{i}^{\prime}\in\mathcal{R}$, denote
by $\left(R_{i}^{\ell}\right)_{\ell=1}^{k}$ the sequence of preference
orders satisfying $R_{i}^{1}=P_{i}$, $R_{i}^{k}=P_{i}^{\prime}$,
and for every $1\leq\ell<k$, $R_{i}^{\ell+1}$ is obtained from $R_{i}^{\ell}$
by an adjacent swap that elevates the house $h_{\ell}$, defined as
the house satisfying $C_{R_{i}^{\ell}}\left(h_{\ell}\right)\neq C_{P_{i}^{\prime}}\left(h_{\ell}\right)$
and, for every $h\in C_{P_{i}^{\prime}}\left(h_{\ell}\right)\setminus\left\{ h_{\ell}\right\} $,
$C_{R_{i}^{\ell}}\left(h\right)=C_{P_{i}^{\prime}}\left(h\right)$.

We show that for every $\mathbf{P}\in\mathcal{R}^{N}$ and $P_{i}^{\prime}\in\mathcal{R}$
such that $f\left(\mathbf{P}\right)_{i}=f\left(\mathbf{P}_{-i},P_{i}^{\prime}\right)_{i}$,
it also holds that $f\left(\mathbf{P}\right)=f\left(\mathbf{P}_{-i},P_{i}^{\prime}\right)$,
by induction on $k$ (the length of the sequence from $P_{i}$ to
$P_{i}^{\prime}$). The base case $k=1$ is immediate, since $P_{i}^{\prime}=P_{i}$.
For the inductive step, assume the claim holds for all sequences of
length smaller than $k$. Denote $P_{i}^{\prime\prime}\coloneqq R_{i}^{2}$.
Note that $h_{1}$ is a house that does not move downward in any of
the adjacent swaps corresponding to the sequence from $P_{i}$ to
$P_{i}^{\prime}$. Since $f$ satisfies SP, we have 
\[
f\left(\mathbf{P}\right)_{h_{1},i}\leq f\left(\mathbf{P}_{-i},P_{i}^{\prime\prime}\right)_{h_{1},i}\leq f\left(\mathbf{P}_{-i},P_{i}^{\prime}\right)_{h_{1},i}.
\]
Because $f\left(\mathbf{P}\right)_{i}=f\left(\mathbf{P}_{-i},P_{i}^{\prime}\right)_{i}$,
these inequalities must in fact be equalities. Moreover, since $f$
satisfies SP, this implies $f\left(\mathbf{P}\right)_{i}=f\left(\mathbf{P}_{-i},P_{i}^{\prime\prime}\right)_{i}$.
By the assumption of the claim, we then have $f\left(\mathbf{P}\right)=f\left(\mathbf{P}_{-i},P_{i}^{\prime\prime}\right)$,
and since the sequence from $P_{i}^{\prime\prime}$ to $P_{i}^{\prime}$
has length less than $k$, the induction hypothesis gives $f\left(\mathbf{P}_{-i},P_{i}^{\prime\prime}\right)=f\left(\mathbf{P}_{-i},P_{i}^{\prime}\right)$.
Combining these equalities yields the desired result.
\end{proof}
We now show that the mechanism from \lyxdeleted{bzmao}{Mon Jan  5 19:55:05 2026}{Lemma
5.3}\lyxadded{bzmao}{Mon Jan  5 19:55:05 2026}{Proposition }\lyxadded{bzmao}{Mon Jan  5 19:55:05 2026}{\ref{prop:BI_non-uniqueness}}
also satisfies this axiom.
\begin{claim}
The mechanism $f$ from \lyxdeleted{bzmao}{Mon Jan  5 19:55:07 2026}{Lemma
5.3}\lyxadded{bzmao}{Mon Jan  5 19:55:07 2026}{Proposition }\lyxadded{bzmao}{Mon Jan  5 19:55:07 2026}{\ref{prop:BI_non-uniqueness}}
satisfies NB (for small enough $\varepsilon$).
\end{claim}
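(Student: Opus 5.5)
The plan is to work with $f$ directly, because NB is not preserved under convex combinations. By the preceding claim (and since $f$ satisfies SP), it is enough to treat a single adjacent swap. So fix $\mathbf{P}$, an agent $i$, and $\mathbf{P}^{\prime}\coloneqq\left(\mathbf{P}_{-i},P_{i}^{\prime\prime}\right)$, where $P_{i}^{\prime\prime}$ swaps adjacent houses $h^{\prime\prime}P_{i}^{+}h^{\prime}$, and assume $f\left(\mathbf{P}\right)_{i}=f\left(\mathbf{P}^{\prime}\right)_{i}$. Write $f=f_{{\rm RSD}}+D$, where $D\left(\mathbf{P}\right)$ is the average over $\left(\pi,\tau\right)$ of the renamed adjustments $\left(\pi,\tau\right)\left(g\right)\left(\mathbf{P}\right)-f_{{\rm RSD}}\left(\mathbf{P}\right)$. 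Every entry of $D\left(\mathbf{P}\right)$ has the form $k\varepsilon/\left(n!m!\right)$ with $k$ an integer and $\left|k\right|\leq 2\,n!m!$. Every entry of $f_{{\rm RSD}}$ is an integer multiple of $1/n!$.

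\textbf{Step 1 (separating the two parts).} By SP, only the entries of $h^{\prime}$ and $h^{\prime\prime}$ in column $i$ can change, and their changes are opposite. The $h^{\prime\prime}$-entry therefore changes by $c/n!+k\varepsilon/\left(n!m!\right)$ for integers $c$ and $k$ with $\left|k\right|$ bounded as above. I choose $\varepsilon$ small enough that $\left|k\right|\varepsilon/\left(n!m!\right)<1/n!$ unless $k=0$; this is the ``small enough $\varepsilon$'' in the statement. Then equality of the columns forces $c=0$, and hence $k=0$. So $f_{{\rm RSD}}\left(\mathbf{P}\right)_{i}=f_{{\rm RSD}}\left(\mathbf{P}^{\prime}\right)_{i}$, and since $f_{{\rm RSD}}$ satisfies NB, $f_{{\rm RSD}}\left(\mathbf{P}\right)=f_{{\rm RSD}}\left(\mathbf{P}^{\prime}\right)$.

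\textbf{Step 2 (reducing to one combinatorial condition).} It remains to show $D\left(\mathbf{P}\right)=D\left(\mathbf{P}^{\prime}\right)$. Under RSD, agent $i$'s column is unchanged by the swap iff there is no ordering $\sigma$ in which, at $i$'s turn, both $h^{\prime}$ and $h^{\prime\prime}$ are available and $h^{\prime\prime}$ is his best available house. Indeed, the swap changes ${\rm SD}_{\sigma}\left(\mathbf{P}\right)\left(i\right)$ exactly in such orderings, and only $i$'s choice changes in the problem seen by earlier agents. It therefore suffices to prove the following. \emph{If a swap by $i$ changes any renamed adjustment term at $\mathbf{P}$, then there is an ordering in which $i$ faces $h^{\prime\prime}$ as best available with $h^{\prime}$ still available.} Equivalently, every swap that changes $D$ also changes $f_{{\rm RSD}}\left(\cdot\right)_{i}$, which Step 1 has ruled out.

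\textbf{Step 3 (case analysis, the main obstacle).} By anonymity and neutrality of this condition, it is enough to consider a single renaming, i.e.\ the untransformed $g$, with $\mathbf{P}$ or $\mathbf{P}^{\prime}$ in $\mathsf{P}$. The swaps of $i$ that change the adjustment of $g$ are the following.
\begin{enumerate}
\item Agent $i\in\left[m-2\right]$ swaps within his prefix $a,b,h_{i}$ or between $h_{i}$ and his fourth house, which changes membership in $\mathsf{P}$.
\item Agent $j$ swaps $\left(h_{k},h_{k+1}\right)$ for $k=1,2,3$, which changes $x$ or $y$.
\item Agent $j$ swaps $a$ or $b$ with a neighbour. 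This changes which houses count as ``among $H\setminus\left\{ a,b\right\} $'' only when an $h$-house crosses $a$ or $b$.
\end{enumerate}
Agents outside $\left[m-1\right]$ never affect $g$. For each case I exhibit an ordering in the style of the well-definedness and SP arguments of Proposition \ref{prop:BI_non-uniqueness}:
\begin{itemize}
\item $i$ first, when swapping $a,b$;
\item one agent of $\left[m-2\right]$ taking $a$ and then $i$, when swapping $b,h_{i}$;
\item two agents taking $a,b$ and then $i$, when swapping $h_{i}$ with the next house;
\item for $j$: $21j$, $4321j$, and the analogous ordering for $k=2$ (e.g.\ $1423j$, where agents $1,4,2$ take $a,b,h_{2}$ and agent $3$ takes $h_{1}$, leaving $h_{2},h_{3}$ for $j$), and $j$ first or second for swaps involving $a$ or $b$.
\end{itemize}
In each case $i$ then obtains the upper house while the lower one is still available. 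Membership in $\mathsf{P}$ requires the relevant prefixes, so these orderings are valid whichever of $\mathbf{P},\mathbf{P}^{\prime}$ lies in $\mathsf{P}$.

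The delicate point is completeness: I must make sure no other swap (for instance, one deep in $j$'s ranking interacting with $\tau$-renamed versions of $\mathsf{P}$) alters some renamed term. I will handle this by noting that each renamed term depends only on the top four positions of the renamed counterpart of agent $j$ and the top four positions of the renamed counterparts of agents in $\left[m-2\right]$. A swap outside these prefixes therefore cannot change $D$.
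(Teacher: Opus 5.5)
Your overall structure is the paper's. Step 1 is its argument that with $\varepsilon<\frac{1}{2n!}$ a nonzero change in $f_{{\rm RSD}}(\cdot)_i$, which is a multiple of $\frac{1}{n!}$, cannot be cancelled by a change in $D$; NB of $f_{{\rm RSD}}$ then applies. Steps 2--3 are its reduction, via a renaming, to exhibiting for each adjacent swap that changes $g$'s adjustment an ordering in which the swapping agent's serial-dictatorship outcome changes.

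\textbf{A witness in Step 3 fails.} Take $1423j$ for the swap $(h_2,h_3)$ of agent $j$. Agent $3$ ranks $a,b,h_3$ on top, so once $a$ and $b$ are gone he takes $h_3$, not $h_1$. Then $h_1$ is still available at $j$'s turn, and $j$ takes $h_1$ both before and after the swap; $h_3$ is gone in any case. The house $h_1$ must be taken by agent $1$. The paper uses $321j$: agents $3,2$ take $a,b$, agent $1$ takes $h_1$, and $j$ takes $h_2$ before the swap and $h_3$ after it.

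\textbf{Your case 3 is empty, not delicate.} The houses $x_{\mathbf{P}}$ and $y_{\mathbf{P}}$ depend only on the relative order of $H\setminus\{a,b\}$ in $P_j$. A swap involving $a$ or $b$ does not change that order, and $\mathsf{P}$ imposes no condition on $P_j$. So no witness is needed there. This is just as well, since ``$j$ first or second'' is not a witness when $a$ or $b$ sits deep in $P_j$.

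\textbf{Your completeness remark is misstated.} The adjustment depends on the top four houses of $H\setminus\{a,b\}$ in $P_j$, not on the top four positions of $P_j$. If $j$ ranks $a,b$ first, the relevant $(h_3,h_4)$ swap occurs at positions five and six. Your case~2 list already handles this correctly.

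With these corrections the argument is complete and coincides with the paper's.
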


\begin{proof}
Let $\mathbf{P}\in\mathcal{R}^{N}$ and $i\in N$. Since $f$ satisfies
SP, the previous claim implies that it is sufficient to show that
\[
f\left(\mathbf{P}\right)_{i}=f\left(\mathbf{P}_{-i},P_{i}^{\prime}\right)_{i}\implies f\left(\mathbf{P}\right)=f\left(\mathbf{P}_{-i},P_{i}^{\prime}\right)
\]
for every preference order $P_{i}^{\prime}$ that can be obtained
from $P_{i}$ by an adjacent swap. Let $P_{i}^{\prime}\in\mathcal{R}$
be such an order and denote $\mathbf{P}^{\prime}\coloneqq\left(\mathbf{P}_{-i},P_{i}^{\prime}\right)$.
We will show that if $f\left(\mathbf{P}\right)_{i}=f\left(\mathbf{P}^{\prime}\right)_{i}$,
then $f\left(\mathbf{P}\right)=f\left(\mathbf{P}^{\prime}\right)$.
Recall that $f_{{\rm RSD}}$ satisfies NB \cite{bade2016fairness}.
Define $d\coloneqq g-f_{{\rm RSD}}$ (that is, $d\left(\mathbf{P}\right)=g\left(\mathbf{P}\right)-f_{{\rm RSD}}\left(\mathbf{P}\right)$
for every $\mathbf{P}\in\mathcal{R}^{N}$) and $D\coloneqq f-f_{{\rm RSD}}$.
Since $f$ is the symmetrization of $g$ (and $f_{{\rm RSD}}$ is
the symmetrization of itself), $D$ is the symmetrization of $d$,
i.e. 
\[
D=\frac{1}{n!m!}\sum_{\left(\pi,\tau\right)\in\Pi\times\Gamma}\left(\pi,\tau\right)\left(d\right).
\]
Here, the notation $\left(\pi,\tau\right)\left(d\right)$ is used
in the same sense as for mechanisms, even though $d$ is not itself
a mechanism; specifically, $\left(\pi,\tau\right)\left(d\right)$
denotes the difference $\left(\pi,\tau\right)\left(g\right)-\left(\pi,\tau\right)\left(f_{{\rm RSD}}\right)$.
Moreover, by the definition of $g$, for every $\mathbf{P}\in\mathcal{R}^{N}$,
$i\in N$, and $h\in H$, $\left|d\left(\mathbf{P}\right)_{h,i}\right|\leq\varepsilon$,
and we assume here that $\varepsilon<\frac{1}{2n!}$.

We first show that if $f\left(\mathbf{P}\right)_{i}=f\left(\mathbf{P}^{\prime}\right)_{i}$,
then $f_{{\rm RSD}}\left(\mathbf{P}\right)_{i}=f_{{\rm RSD}}\left(\mathbf{P}^{\prime}\right)_{i}$.
Assume, for contradiction, that this is not the case. Then, there
exists $h\in H$ such that $f_{{\rm RSD}}\left(\mathbf{P}\right)_{h,i}\neq f_{{\rm RSD}}\left(\mathbf{P}^{\prime}\right)_{h,i}$.
Since $f_{{\rm RSD}}$ assigns each deterministic SD mechanism a probability
of $\frac{1}{n!}$, it follows that 
\[
\left|f_{{\rm RSD}}\left(\mathbf{P}\right)_{h,i}-f_{{\rm RSD}}\left(\mathbf{P}^{\prime}\right)_{h,i}\right|\geq\frac{1}{n!}.
\]
By the definition of $D$, this implies 
\[
\left|D\left(\mathbf{P}\right)_{h,i}-D\left(\mathbf{P}^{\prime}\right)_{h,i}\right|\geq\frac{1}{n!}.
\]
On the other hand, 
\begin{align*}
\left|D\left(\mathbf{P}\right)_{h,i}-D\left(\mathbf{P}^{\prime}\right)_{h,i}\right| & \leq\frac{1}{n!m!}\sum_{\left(\pi,\tau\right)\in\Pi\times\Gamma}\left|d\left(\left(\pi,\tau\right)\left(\mathbf{P}\right)\right)_{\tau\left(h\right),\pi\left(i\right)}\right|+\left|d\left(\left(\pi,\tau\right)\left(\mathbf{P}^{\prime}\right)\right)_{\tau\left(h\right),\pi\left(i\right)}\right|\\
 & \leq\frac{1}{n!m!}\sum_{\left(\pi,\tau\right)\in\Pi\times\Gamma}2\varepsilon\\
 & <\frac{1}{n!}
\end{align*}
which contradicts the previous inequality. Therefore, $f_{{\rm RSD}}\left(\mathbf{P}\right)_{i}=f_{{\rm RSD}}\left(\mathbf{P}^{\prime}\right)_{i}$.

Since $f_{{\rm RSD}}$ satisfies NB, we also have $f_{{\rm RSD}}\left(\mathbf{P}\right)=f_{{\rm RSD}}\left(\mathbf{P}^{\prime}\right)$.
Thus, to establish $f\left(\mathbf{P}\right)=f\left(\mathbf{P}^{\prime}\right)$,
it suffices to show that $D\left(\mathbf{P}\right)=D\left(\mathbf{P}^{\prime}\right)$.
Assume, for contradiction, that this is not the case. Because $f\left(\mathbf{P}\right)_{i}=f\left(\mathbf{P}^{\prime}\right)_{i}$
and $f_{{\rm RSD}}\left(\mathbf{P}\right)_{i}=f_{{\rm RSD}}\left(\mathbf{P}^{\prime}\right)_{i}$,
we obtain $D\left(\mathbf{P}\right)_{i}=D\left(\mathbf{P}^{\prime}\right)_{i}$.
Hence, there exists an agent $k\in N\setminus\left\{ i\right\} $
and a house $h\in H$ such that $D\left(\mathbf{P}\right)_{h,k}\neq D\left(\mathbf{P}^{\prime}\right)_{h,k}$.
Since $D$ is the symmetrization of $d$, there must exist some $\left(\pi,\tau\right)\in\Pi\times\Gamma$
such that $\left(\pi,\tau\right)\left(d\right)\left(\mathbf{P}\right)_{h,k}\neq\left(\pi,\tau\right)\left(d\right)\left(\mathbf{P}^{\prime}\right)_{h,k}$.

Without loss of generality, we may assume $\left(\pi,\tau\right)=\left({\rm id},{\rm id}\right)$;
otherwise, we can replace $\mathbf{P},\mathbf{P}^{\prime},i,k,h$
with $\left(\pi,\tau\right)\left(\mathbf{P}\right),\left(\pi,\tau\right)\left(\mathbf{P}^{\prime}\right),\pi\left(i\right),\pi\left(k\right),\tau\left(h\right)$
and apply the same argument. Thus, $d\left(\mathbf{P}\right)_{h,k}\neq d\left(\mathbf{P}^{\prime}\right)_{h,k}$.
We will now show that this contradicts the earlier conclusion that
$f_{{\rm RSD}}\left(\mathbf{P}\right)_{i}=f_{{\rm RSD}}\left(\mathbf{P}^{\prime}\right)_{i}$.

Since agents in $N\setminus\left[m-1\right]$ cannot influence the
adjustment by changing their preferences, we may assume $i\in\left[m-1\right]$.
Moreover, when agent $i$ performs an adjacent swap that elevates
some $h^{\prime}\in H$ in his ranking, the set of orderings where
agent $i$ receives $h^{\prime}$ in the post-swap profile contains
the corresponding set before the swap. Hence, to establish $f_{{\rm RSD}}\left(\mathbf{P}\right)_{h^{\prime},i}\neq f_{{\rm RSD}}\left(\mathbf{P}^{\prime}\right)_{h^{\prime},i}$,
it suffices to exhibit an ordering in which agent $i$ receives $h^{\prime}$
only after the swap.

If $i\in\left[m-2\right]$, the swap in agent $i$'s preference order
can affect the adjustment only when exactly one of $\mathbf{P}$ or
$\mathbf{P}^{\prime}$ belongs to $\mathsf{P}$. Without loss of generality,
suppose $\mathbf{P}\in\mathsf{P}$. Then, the top three houses in
$P_{i}$ are $aP_{i}bP_{i}h_{i}$, and the swap involves one of the
following adjacent pairs: $\left(a,b\right)$, $\left(b,h_{i}\right)$,
or $\left(h_{i},h\right)$ for some $h\in H\setminus\left\{ a,b,h_{i}\right\} $.
Let $i^{\prime},i^{\prime\prime}\in\left[m-2\right]\setminus\left\{ i\right\} $
be two distinct agents. For each of the adjacent pairs mentioned,
there exists an ordering where agent $i$ receives the elevated house
only after the swap: if the pair is $\left(a,b\right)$, let $i$
go first; if $\left(b,h_{i}\right)$, start with $i^{\prime}i$; and
if $\left(h_{i},h\right)$ for some $h\in H\setminus\left\{ a,b,h_{i}\right\} $,
start with $i^{\prime\prime}i^{\prime}i$.

Otherwise, $i=j$. In this case, the swap in agent $j$'s preference
order can affect the adjustment only when both $\mathbf{P}$ and $\mathbf{P}^{\prime}$
belong to $\mathsf{P}$. Without loss of generality, assume $h_{\ell}P_{j}h_{\ell+1}$
for every $\ell\in\left[m-3\right]$; then the swap must involve one
of the adjacent pairs $\left(h_{1},h_{2}\right)$, $\left(h_{2},h_{3}\right)$,
or $\left(h_{3},h_{4}\right)$. For each of these pairs, there exists
an ordering where agent $j$ receives the elevated house only after
the swap: for $\left(h_{1},h_{2}\right)$, consider an ordering starting
with $21j$; for $\left(h_{2},h_{3}\right)$, an ordering starting
with $321j$; and for $\left(h_{3},h_{4}\right)$, an ordering starting
with $4321j$.

In all cases, we therefore obtain $f_{{\rm RSD}}\left(\mathbf{P}\right)_{i}\neq f_{{\rm RSD}}\left(\mathbf{P}^{\prime}\right)_{i}$.
This contradiction implies that our assumption was false, and hence
$D\left(\mathbf{P}\right)=D\left(\mathbf{P}^{\prime}\right)$. Therefore,
$f\left(\mathbf{P}\right)=f\left(\mathbf{P}^{\prime}\right)$, and
$f$ satisfies NB, as required.
\end{proof}
We now introduce the second new axiom. 
\begin{defn}
[CM] A mechanism $f$ satisfies \emph{Cross Monotonicity (CM)} if,
for every preference profile $\mathbf{P}\in\mathcal{R}^{N}$, agent
$i\in N$, pair of adjacent houses $h^{\prime}P_{i}^{+}h$, and agent
$j\in N\setminus\left\{ i\right\} $, 
\[
f\left(\mathbf{P}_{-i},P_{i}^{h}\right)_{h,j}\leq f\left(\mathbf{P}\right)_{h,j},
\]
where $P_{i}^{h}$ is the preference order obtained from $P_{i}$
by swapping $h$ with $h^{\prime}$.
\end{defn}

In words, if agent $i$ swaps house $h$ with the house immediately
above it in his preference order, then every other agent's probability
of receiving $h$ weakly decreases. To the best of our knowledge,
the CM axiom has not appeared in the literature on assignment mechanisms,
though a similar notion has been studied in the context of cost sharing
\cite{immorlica2008limitations}. We next observe that $f_{{\rm RSD}}$
satisfies the CM axiom, as stated in the following claim.
\begin{claim}
$f_{{\rm RSD}}$ satisfies CM.
\end{claim}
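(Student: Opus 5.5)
We will prove the claim ordering by ordering: we construct an injection from the set of orderings in which $j$ receives $h$ after the swap into the set of orderings in which $j$ receives $h$ before it. Concretely, we show that for \emph{every} fixed $\sigma\in S_{n}$,
\[
{\rm SD}_{\sigma}\left(\mathbf{P}_{-i},P_{i}^{h}\right)\left(j\right)=h\implies{\rm SD}_{\sigma}\left(\mathbf{P}\right)\left(j\right)=h .
\]
Since $f_{{\rm RSD}}$ gives each ordering weight $\frac{1}{n!}$, summing over $\sigma$ yields $f_{{\rm RSD}}\left(\mathbf{P}_{-i},P_{i}^{h}\right)_{h,j}\leq f_{{\rm RSD}}\left(\mathbf{P}\right)_{h,j}$. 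Write $\mathbf{P}^{\prime}\coloneqq\left(\mathbf{P}_{-i},P_{i}^{h}\right)$, so that $P_{i}^{\prime}$ ranks $h$ immediately above $h^{\prime}$, and otherwise agrees with $P_{i}$.

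Fix $\sigma$ and run ${\rm SD}_{\sigma}$ on $\mathbf{P}$ and on $\mathbf{P}^{\prime}$ in parallel. Every agent preceding $i$ has the same preference and faces the same available set, so he picks the same house in both runs. At $i$'s turn, let $A$ be the available set. If $A$ does not contain both $h$ and $h^{\prime}$, or if some house $i$ ranks above the pair is available, then $i$ picks the same house in both runs, because the two orders differ only on the adjacent pair. In that case the whole run is identical and the implication is trivial. The same holds if $j$ precedes $i$.

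The only remaining case is that $j$ follows $i$, and at $i$'s turn both $h,h^{\prime}\in A$ while no better house is available. Then $i$ takes $h^{\prime}$ under $\mathbf{P}$ and $h$ under $\mathbf{P}^{\prime}$. Under $\mathbf{P}^{\prime}$ the house $h$ is therefore gone from $i$'s turn onward, so $j$ cannot receive $h$. The premise of the implication fails and it holds vacuously.

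Combining the cases gives the inclusion of the sets of orderings, and hence CM for $f_{{\rm RSD}}$.

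The only delicate point is the case split at $i$'s turn: we must make sure that whenever the two runs diverge, the divergence is exactly that $i$ takes $h$ instead of $h^{\prime}$. Adjacency of $h^{\prime}$ and $h$ guarantees this. We never need to track how the two runs differ after $i$'s turn, because once $i$ takes $h$, $j$ cannot receive $h$. No earlier lemma is needed beyond the definition of RSD as the uniform average over ${\rm SD}_{\sigma}$.
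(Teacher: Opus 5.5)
Your proof is correct and follows essentially the same route as the paper: both prove, ordering by ordering, that the set of orderings in which $j$ receives $h$ when $i$ ranks $h$ higher is contained in the corresponding set when $i$ ranks $h$ lower. Both use that agents before $i$ are unaffected, and that $i$'s choice changes only when he himself takes $h$. The paper argues directly from the premise (if $j$ gets $h$ after $i$, then $i$ chose a house strictly above $h$, which he still chooses after the swap), while you phrase the same observation as a case split on whether the two runs diverge; this is only a cosmetic difference.
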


\begin{proof}
Since the distribution of ${\rm RSD}$ over the SD mechanisms does
not depend on the preference profile, it suffices to show that for
every two distinct agents $i,i^{\prime}\in N$, profile $\mathbf{P}\in\mathcal{R}^{N}$,
and pair of adjacent houses $hP_{i}^{+}h^{\prime}$, the set of orderings
in which $i^{\prime}$ receives $h$ under $\mathbf{P}$ is contained
in the corresponding set under the profile obtained from $\mathbf{P}$
after agent $i$ swaps $h$ with $h^{\prime}$.

Indeed, if $i^{\prime}$ appears before $i$, a change in $i$'s preferences
does not affect $i^{\prime}$'s outcome. If $i^{\prime}$ receives
$h$ when arriving after $i$ under the profile $\mathbf{P}$, then
$h$ must have been available to $i$, but $i$ chose some other house
$h^{\prime\prime}$ such that $h^{\prime\prime}P_{i}^{+}h$. Agent
$i$ would make the same choice after swapping $h$ with $h^{\prime}$,
and therefore $i^{\prime}$ continues to receive $h$ in all such
orderings, as required.
\end{proof}
Having defined the new axioms, we now show that there exists a mechanism
other than $f_{{\rm RSD}}$ satisfying all of them.
\begin{prop}
When $n\geq m-1\geq7$, $f_{{\rm RSD}}$ is not the unique mechanism
satisfying anonymity, neutrality, ExPE, SP, BI, NB, and CM.
\end{prop}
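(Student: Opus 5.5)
We will follow the template of Proposition \ref{prop:BI_non-uniqueness}: alter $f_{{\rm RSD}}$ by an $\varepsilon$-shift on a restricted family $\mathsf{P}'$ of profiles, obtaining a mechanism $g'$, and let $f'$ be the symmetrization of $g'$ over $\Pi\times\Gamma$. The mechanism of Proposition \ref{prop:BI_non-uniqueness} already satisfies NB by the previous claim. However, it violates CM. When agent $j$ swaps $(h_2,h_3)$, the shift reverses for agent $m$, so agent $m$'s probability of receiving the elevated house can increase. Similarly, when some agent $i\in[m-2]$ leaves $\mathsf{P}$, agent $m$ or agent $j$ may gain. The new construction must therefore make every change of the adjustment caused by agent $i$ elevating a house $h$ \emph{decrease} (weakly) the other agents' probabilities of $h$. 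Since $f_{{\rm RSD}}$ satisfies CM and changes by multiples of $1/n!$, it suffices to check this at the level of the adjustment whenever $f_{{\rm RSD}}$ does not strictly decrease. Here a strict decrease of at least $1/n!$ absorbs any $\varepsilon$-perturbation.

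Concretely, we enlarge the "buffer" using the extra room $m-1\geq 7$. Agents $1,\dots,m-2$ rank $a,b,h_i$ on top, as before. The shift for agent $j$ moves $\varepsilon$ from his third to his second choice among $H\setminus\{a,b\}$, and it is compensated by agent $m$ (or by leftover mass when $n=m-1$). We restrict the adjustment to profiles where agent $j$'s relevant houses lie deep enough, for instance among $h_4,\dots,h_7$ in his ranking of $H\setminus\{a,b\}$. Then every adjacent swap that toggles membership in $\mathsf{P}'$ or reverses the shift is matched by an explicit ordering, such as one beginning $4321j$ or a longer prefix using additional agents of $[m-2]$. In that ordering the swapping agent takes the elevated house only after the swap and thereby deprives the affected agent of it. This gives a strict $f_{{\rm RSD}}$ decrease of at least $1/n!$ for that other agent and that house. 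The longer prefixes are exactly where $m-1\ge7$ enters. With $\varepsilon<\frac{1}{2n!}$, CM then holds for $g'$, and it is preserved by symmetrization and convex combination.

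ExPE, SP and BI for $g'$ are verified verbatim as in Proposition \ref{prop:BI_non-uniqueness}. For ExPE we exhibit four efficient assignments (via explicit orderings) realizing the shift. For SP we check the three swaps of agent $j$ near $x,y$, using orderings that exhibit a strict change in his cumulative probabilities. For BI we note that every swap changing the adjustment moves mass only between the swapped houses. NB follows from the same argument as the preceding claim, since that argument only used $|d|\le\varepsilon<\frac{1}{2n!}$ and the existence of orderings witnessing a change in the swapping agent's RSD marginal. Finally, $f'\neq f_{{\rm RSD}}$ is shown by fixing a profile $\mathbf{Q}\in\mathsf{P}'$ in which the relevant house appears high only for agent $j$ and one agent of $[m-2]$. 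We then argue, as before, that no renaming $(\pi,\tau)$ can decrease agent $j$'s probability of that house.

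The main obstacle is CM. The compensating agent's marginal and the marginals of agents in $[m-2]$ change when the swapping agent's report moves the profile in or out of the family, and we must exhibit, case by case, a witnessing ordering giving a strict RSD decrease for precisely that agent and house. We would first try to choose $\mathsf{P}'$ so that only agent $j$'s swaps among his top few houses in $H\setminus\{a,b\}$ affect the adjustment, with agent $m$'s compensating shift aligned in the same direction as $j$'s elevation. If that fails, we would instead compensate through unassigned mass or through agents in $[m-2]$ whose deviations can be witnessed using long prefixes, which is possible only because $m-2\ge6$.
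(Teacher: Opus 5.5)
There is a genuine gap. You never pin down the construction, and the diagnosis that motivates it is wrong.

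The swap you name as the CM failure of the mechanism in Proposition~\ref{prop:BI_non-uniqueness}, agent $j$ exchanging $h_2$ and $h_3$, is harmless. Agent $m$'s adjustment on $(h_2,h_3)$ goes from $(-\varepsilon,+\varepsilon)$ to $(+\varepsilon,-\varepsilon)$. So his probability of the raised house falls and that of the lowered house rises, which is exactly what CM asks for.

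The real obstruction comes from agent $1$, whose house $h_1$ is $j$'s top choice in $H\setminus\{a,b\}$, raising $x=h_2$ above $h_1$. This takes the profile out of $\mathsf{P}$ and raises $m$'s adjustment on $x$ from $-\varepsilon$ to $0$. You therefore need an ordering in which $m$ gets $x$ before the swap but not after. Agent $1$'s swap matters in an ordering only when he reaches $h_1$ with both $h_1$ and $x$ available. He then takes $h_1$, and $j$, whose next house is $x$, takes $x$ whenever he precedes $m$.

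Now let $j$ rank $a,b$ last, $m$ rank $x$ last, and agents $m+1,\dots,n$ rank $x$ first. Then $m$ gets $x$ only as the last remaining house, so his predecessors are exactly $[m-1]$. No witnessing ordering exists, $f_{{\rm RSD}}$'s value for $(x,m)$ is unchanged, and $g$ violates CM.

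Your $g'$ keeps the shift on $j$'s second and third choices. The argument above does not depend on which labels $x,y$ carry, so restricting to houses that are ``deep enough'' by name does not remove it, and such a restriction adds new boundaries of the family that must also be checked. Your fallbacks do not rescue this. Feasibility forces the compensating agent's shift to be $-v$. Compensating through unassigned mass is impossible when $n\ge m$, since ExPE then allocates every house with probability one.

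The missing idea, which is the paper's, is to move the shift one position deeper. Keep $\mathsf{P}$ and the compensation by agent $m$, but take $x,y$ to be $j$'s \emph{third} and fourth choices in $H\setminus\{a,b\}$, so $x=h_3$ and $y=h_4$ after normalization. Then $h_2$ acts as a buffer. When agent $1$ raises $h_3$ above $h_1$, an ordering beginning $231j$ lets $j$ take $h_2$ and leaves $h_3$ to $m$ only before the swap.

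The bound $m-1\ge 7$ enters in one specific case you did not locate. When an agent $i\in[m-2]\setminus\{4\}$ raises $y=h_4$ above $h_i$, the witnessing ordering needs two agents of $[m-2]\setminus\{1,2,3,i\}$ to take $a$ and $b$ first. Then $1,2,3,i$ take $h_1,h_2,h_3,h_i$ and $j$ takes $h_4$ only before the swap. This requires $m\ge 8$.

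With this family, CM is verified case by case, with an explicit ordering for every wrong-direction change. The cases are $j$'s swaps at $(h_k,h_{k+1})$ for $k=2,3,4$, and swaps by agents of $[m-2]$ that involve $h_3$ or $h_4$. The witnesses for ExPE, SP, BI and NB do not carry over verbatim either; they change to orderings such as $54321j$, $321j$, $31j$, $51j$ and $3412j$.
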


\begin{proof}
We construct a suitable mechanism $f$. As in \lyxdeleted{bzmao}{Mon Jan  5 19:52:44 2026}{Lemma
5.3}\lyxadded{bzmao}{Mon Jan  5 19:52:47 2026}{Proposition }\lyxadded{bzmao}{Mon Jan  5 19:53:07 2026}{\ref{prop:BI_non-uniqueness}},
let $H\coloneqq\left\{ a,b,h_{1},\dots,h_{m-2}\right\} $, and let
$j\in N$ denote the agent with index $m-1$. Define $\mathsf{P}$
exactly as before, and let $x=x_{\mathbf{P}}$ and $y=y_{\mathbf{P}}$
denote agent $j$'s third and fourth choices among $H\setminus\left\{ a,b\right\} $,
respectively.

Fix $\varepsilon>0$ sufficiently small (as in the previous claim,
we assume $\varepsilon<\frac{1}{2n!}$), and define the vector $v=\left(v_{h}\right)_{h\in H}$
and the mechanism $g$ as in \lyxdeleted{bzmao}{Mon Jan  5 19:53:39 2026}{Lemma
5.3}\lyxadded{bzmao}{Mon Jan  5 19:53:39 2026}{Proposition }\lyxadded{bzmao}{Mon Jan  5 19:53:39 2026}{\ref{prop:BI_non-uniqueness}}.
We can again assume, without loss of generality, that $h_{i}P_{j}h_{i+1}$
for every $i\in\left[m-3\right]$, so that $x_{\mathbf{P}}=h_{3}$
and $y_{\mathbf{P}}=h_{4}$.

To verify that $g$ is well defined, we apply the same reasoning as
in \lyxdeleted{bzmao}{Mon Jan  5 19:53:48 2026}{Lemma 5.3}\lyxadded{bzmao}{Mon Jan  5 19:53:48 2026}{Proposition
}\lyxadded{bzmao}{Mon Jan  5 19:53:48 2026}{\ref{prop:BI_non-uniqueness}}.
In this case, agent $j$ receives $h_{4}$ in an ordering starting
with $54321j$. In an ordering beginning with $321j$, followed by
the agents in $\left\{ 4,\dots,m-2\right\} $ in some order, $h_{3}$
remains the only unassigned house afterward. Hence, $h_{3}$ remains
unassigned when $n=m-1$; and when $n\geq m$, we can place agent
$m$ immediately after these agents, assigning him $h_{3}$. Therefore,
$g$ is well defined.

Define $f$ as the symmetrization of $g$ over all renamings of agents
and houses. We show that $f$ is the desired mechanism. By construction,
$f$ satisfies anonymity and neutrality. For ExPE, SP, BI, and CM,
it suffices to verify that $g$ satisfies them. The arguments for
the first three are similar to those in \lyxdeleted{bzmao}{Mon Jan  5 19:53:58 2026}{Lemma
5.3}\lyxadded{bzmao}{Mon Jan  5 19:53:58 2026}{Proposition }\lyxadded{bzmao}{Mon Jan  5 19:53:58 2026}{\ref{prop:BI_non-uniqueness}},
and we highlight below the necessary modifications that allow the
reasoning to be adapted to the present setting. For BI, we use the
same reasoning as before, noting that the three adjacent pairs of
houses in $j$'s preference order that influence the adjustment are
$\left(h_{k},h_{k+1}\right)$ for $k=2,3,4$. For SP, the pairs of
adjacent houses that modify the adjustment in the wrong direction
(i.e., in a way favorable to agent $j$) are $\left(h_{2},h_{3}\right)$
and $\left(h_{4},h_{5}\right)$. As before, it suffices to exhibit
an ordering where agent $j$ receives a house in $C_{P_{j}}\left(h_{2}\right)$
before the swap of $h_{2}$ and $h_{3}$ but not afterward, and an
ordering where he receives a house in $C_{P_{j}}\left(h_{4}\right)$
before the swap of $h_{4}$ and $h_{5}$ but not afterward. Two such
orderings are those whose initial segments are $321j$ and $54321j$.

For ExPE, the adjustment can be described by a transfer of probabilities
that decreases $\varepsilon$ from each of $s_{1}$ and $s_{2}$ and
adds $\varepsilon$ to each of $s_{1}^{\prime}$ and $s_{2}^{\prime}$,
where the explicit form of these assignments is given below. 
\begin{align*}
s_{1}\left(i\right)=\begin{cases}
a & i=5,\\
b & i=4,\\
h_{4} & i=j,\\
h_{i} & i\in\left[m-2\right]\setminus\left\{ 4,5\right\} ,\\
h_{5} & i=m,\\
\varnothing & \text{otherwise},
\end{cases} & s_{2}\left(i\right)=\begin{cases}
a & i=3,\\
b & i=1,\\
h_{1} & i=j,\\
h_{i} & i\in\left[m-2\right]\setminus\left\{ 1,3\right\} ,\\
h_{3} & i=m,\\
\varnothing & \text{otherwise},
\end{cases}\\
s_{1}^{\prime}\left(i\right)=\begin{cases}
a & i=5,\\
b & i=1,\\
h_{1} & i=j,\\
h_{i} & i\in\left[m-2\right]\setminus\left\{ 1,5\right\} ,\\
h_{5} & i=m,\\
\varnothing & \text{otherwise},
\end{cases} & s_{2}^{\prime}\left(i\right)=\begin{cases}
a & i=3,\\
b & i=4,\\
h_{3} & i=j,\\
h_{i} & i\in\left[m-2\right]\setminus\left\{ 3,4\right\} ,\\
h_{4} & i=m,\\
\varnothing & \text{otherwise}.
\end{cases}
\end{align*}
These assignments are efficient, as they correspond to orderings beginning
with $54321j$, $31j$, $51j$, $3412j$, followed by the remaining
agents in $\left[m-2\right]$ in some order, and finally (if $n\geq m$)
agent $m$, who receives the remaining house.

Since $f$ satisfies SP and has a structure similar to that of the
mechanism in \lyxdeleted{bzmao}{Mon Jan  5 19:54:04 2026}{Lemma 5.3}\lyxadded{bzmao}{Mon Jan  5 19:54:04 2026}{Proposition
}\lyxadded{bzmao}{Mon Jan  5 19:54:04 2026}{\ref{prop:BI_non-uniqueness}},
the same reasoning as in the previous claim shows that $f$ also satisfies
NB.

We next show that $g$ satisfies CM. Because $f_{{\rm RSD}}$ satisfies
CM, using arguments analogous to those employed to establish SP and
BI, it suffices to focus on adjacent swaps that influence the adjustment.
For each such swap, let $i$ denote the agent performing the swap,
$h$ the house that moves up in $i$'s preference order, and $i^{\prime}$
another agent whose adjustment changes in the wrong direction, that
is, in a way that increases the probability that $i^{\prime}$ receives
$h$ when $i$ raises $h$ in his preference order. For each such
agent $i^{\prime}$, it is enough to exhibit an ordering that overrides
this change, namely one in which $i^{\prime}$ receives $h$ before
the swap but not after $i$ raises it in his preference order. This
will suffice, because the set of orderings in which $i^{\prime}$
receives $h$ when $i$ ranks it higher is contained in the corresponding
set of orderings when $i$ ranks it lower, by the same reasoning as
in the previous claim.

Observe that only agents in $\left[m-1\right]$ can make an adjacent
swap that affects the adjustment, and this occurs only when either
the profile before the swap or the one after it belongs to $\mathsf{P}$.
By considering the opposite inequality for the house that moves down
in the swap, we may assume without loss of generality that the profile
before the swap belongs to $\mathsf{P}$; we denote this profile by
$\mathbf{P}$. Moreover, the only agents affected by the adjustment
are $j$ and $m$ (if the latter exists), and since the adjustment
for these two agents always changes in opposite directions, it cannot
be in the undesired direction for both simultaneously. Hence, for
every swap we consider, there is at most one agent (either $j$ or
$m$) for whom the change is in the undesired direction, and for that
agent we will need to exhibit an ordering that overrides this change.
We distinguish between cases according to the agent performing the
swap.
\begin{itemize}
\item Agent $j$: In this case, we examine only the adjustment of agent
$m$. As noted earlier, the only swaps that agent $j$ can make that
affect the adjustment are between the adjacent pairs $\left(h_{k},h_{k+1}\right)$
for $k=2,3,4$.
\begin{itemize}
\item When $k=3$ (that is, $j$ changes his preferences to $P_{j}^{h_{4}}$),
the change in the adjustment for agent $m$ is in the correct direction
for both houses: a negative change for $h_{4}$ and a positive one
for $h_{3}$.
\item When $k\in\left\{ 2,4\right\} $, the change in the adjustment for
agent $m$ is in the undesired direction for both houses. Consider
an ordering in which agents $k$ and $k+1$ appear first, followed
by the remaining agents in $\left[m-2\right]$, then agent $j$, and
finally agent $m$. In this ordering, agent $m$ receives the house
that agent $j$ prefers less among $h_{k}$ and $h_{k+1}$. This construction
simultaneously demonstrates that there exists an ordering where agent
$m$ receives $h_{k+1}$ before the swap but not afterward, and an
ordering where he does not receive $h_{k}$ before the swap but does
receive it afterward, as required.
\end{itemize}
\item Agents in $\left[m-2\right]$: a swap performed by these agents affects
the adjustment only if it involves one of their top three houses.
The only houses for which the adjustment changes are $h_{3}$ and
$h_{4}$; thus, at least one of them must participate in the swap.
We analyze the cases for each house separately.
\begin{itemize}
\item When $h_{3}$ is involved in the swap, we distinguish cases according
to the identity of the swapping agent.
\begin{itemize}
\item When the swapping agent is $i=3$: house $h_{3}$ can be swapped either
upward (with $b$) or downward (with some $h_{k}$ for $k\in\left[m-2\right]\setminus\left\{ 3\right\} $).
\begin{itemize}
\item If $h_{3}$ is swapped with $b$, the change in the adjustment is
in the wrong direction for agent $m$. We therefore need an ordering
where agent $m$ receives $h_{3}$ before the swap but not afterward.
An ordering beginning with $13j$, followed by the other agents in
$\left[m-2\right]$ in some order, and then agent $m$, satisfies
this condition.
\item If $h_{3}$ is swapped downward with some $h_{k}$ for $k\in\left[m-2\right]\setminus\left\{ 3\right\} $,
the change in the adjustment is not in the correct direction for agent
$j$. Hence, we require an ordering where agent $j$ does not receive
$h_{3}$ before the swap but does receive it afterward. If $k\in\left\{ 1,2\right\} $,
consider an ordering beginning with $k43k^{\prime}j$, where $k^{\prime}=3-k$;
if $k\notin\left\{ 1,2\right\} $, consider an ordering starting with
$45123j$.
\end{itemize}
\item When the swapping agent is $i\in\left[m-2\right]\setminus\left\{ 3\right\} $:
here, we consider only the swap where agent $i$ exchanges $h_{3}$
with the house immediately above it, and only when that house is $h_{i}$,
since in any other swap of agent $i$ involving $h_{3}$, the swap
does not change the adjustment. In this case, the change is not in
the correct direction for agent $m$, so we must find an ordering
where agent $m$ receives $h_{3}$ before the swap but not afterward.
When $i\neq1$, an ordering beginning with $13ij$, followed by the
other agents in $\left[m-2\right]$ in some order, and finally agent
$m$, satisfies this requirement. When $i=1$, a similar ordering
with initial segment $231j$ serves the same purpose.
\end{itemize}
\item When $h_{4}$ is involved in the swap, we proceed in a manner analogous
to the case of $h_{3}$.
\begin{itemize}
\item When the swapping agent is $i=4$:
\begin{itemize}
\item If $h_{4}$ is swapped upward with $b$, the adjustment for agent
$j$ is not in the correct direction. In this case, an ordering beginning
with $54321j$ is one in which agent $j$ receives $h_{4}$ before
the swap but not afterward.
\item If $h_{4}$ is swapped downward with some $h_{k}$ for $k\in\left[m-2\right]\setminus\left\{ 4\right\} $,
the adjustment for agent $m$ is not in the correct direction. When
$k\neq1$, an ordering beginning with $1kj$, followed by the other
agents in $\left[m-2\right]$, and then agent $m$, is one in which
agent $m$ receives $h_{4}$ only after the swap. When $k=1$, we
consider a similar ordering beginning with $214j$.
\end{itemize}
\item When the swapping agent is $i\in\left[m-2\right]\setminus\left\{ 4\right\} $:
we consider only the case in which $h_{4}$ is swapped upward with
$h_{i}$. In this case, the change is not in the correct direction
for agent $j$. Let $i^{\prime},i^{\prime\prime}\in\left[m-2\right]\setminus\left\{ 1,2,3,i\right\} $
be two distinct agents (such agents exist since $m\geq8$). An ordering
beginning with $i^{\prime}i^{\prime\prime}$, followed by the agents
in $\left\{ 1,2,3,i\right\} $ in some order (note that $i$ may belong
to $\left\{ 1,2,3\right\} $), and then agent $j$, is one in which
agent $j$ receives $h_{4}$ only before the swap.
\end{itemize}
\end{itemize}
\end{itemize}
Thus, $g$ satisfies CM, and consequently $f$ satisfies all the axioms
stated in the \lyxdeleted{bzmao}{Mon Jan  5 19:04:30 2026}{lemma}\lyxadded{bzmao}{Mon Jan  5 19:04:32 2026}{proposition}.

Finally, to show that $f\neq f_{{\rm RSD}}$, we apply an argument
similar to that used in \lyxdeleted{bzmao}{Mon Jan  5 19:05:03 2026}{Lemma
5.3}\lyxadded{bzmao}{Mon Jan  5 19:05:07 2026}{Proposition }\lyxadded{bzmao}{Mon Jan  5 19:06:20 2026}{\ref{prop:BI_non-uniqueness}}.
Consider a profile $\mathbf{Q}\in\mathsf{P}$ such that agent $j$'s
top choices are $h_{1}Q_{j}h_{2}Q_{j}h_{3}$; for every $i\in N\setminus\left\{ j\right\} $,
the top choices are $aQ_{i}b$; and for every $i\in N\setminus\left\{ 3,j\right\} $,
the house $h_{3}$ is not among agent $i$'s top five choices. Applying
similar arguments to this profile establishes the desired distinction
between $f$ and $f_{{\rm RSD}}$.
\end{proof}

\section{Exhaustive case analysis for \texorpdfstring{$n=m=4$}{n=m=4}} \label{sec:nEqmEq4_cases}
Here, we exhaustively verify that for every profile, the assignment
matrix induced by any mechanism satisfying ExPE, ETE, and SP (hereafter,
\emph{the axioms}) is uniquely determined. Let $H\coloneqq\left\{ a,b,c,d\right\} $.
We use the shorthand $xyzw$ to denote $xP_{i}^{+}yP_{i}^{+}zP_{i}^{+}w$
when the ranking $P_{i}$ is clear from the context. A preference
profile is represented by a $4\times4$ table, where the $i$-th column
specifies agent $i$'s ranking. We use $x,y,z$ to indicate multiple
possible rankings or profiles, where $x,y,z$ may be replaced by some
houses such that the resulting sequence constitutes a valid ranking.
For example, when we write $abxy$, this may refer to either $abcd$
or $abdc$. Throughout the verification of each profile, the symbols
$x,y,z$ are fixed.

For profiles that are not supported, we have already seen in Section \ref{subsec:n=m=4} how their assignment matrices are determined by the induction
hypothesis. For supported profiles, we may also apply the induction
hypothesis. However, note that the lexicographic order in which we
examine the profiles does not coincide with the order of induction.
Therefore, whenever we invoke the induction hypothesis, we ensure
that it applies only to profiles with a smaller disagreement parameter.
\begin{rem}
\label{unsupported profile equal param}When determining the assignment
matrix of a supported profile, we may also use unsupported agent rankings
from profiles with the same disagreement parameter, since their corresponding
column entries were determined solely based on profiles with smaller
disagreement parameters.
\end{rem}

To aid readability, we provide brief explanations after each profile
table describing how its entries are determined. Since many of these
arguments are similar or repetitive, we will give full explanations
only for the first few instances of each argument type. Whenever an
argument reappears later, we will omit the immediate explanation;
if it is not immediate but still repetitive, we will summarize it
in a remark and refer to that remark beside the relevant table.

For clarity, we will color each entry of the assignment matrix according
to the reason for which it is determined:
\begin{itemize}
\item \textcolor{red}{Red} - determined by efficiency (the agent's probability
of receiving that house is zero).
\item \textcolor{blue}{Blue} - determined by ETA.
\item \textcolor{green}{Green} - determined by SP using another already-determined
profile (by induction or otherwise).
\item \textcolor{violet}{Purple} - determined by complementing that agent's
probabilities to one.
\item \textcolor{orange}{Orange} - determined by complementing that house's
probabilities to one.
\item \textcolor{brown}{Brown} - determined by reasoning stated in a remark
and referred to beside the relevant profile table.
\end{itemize}
Entries that will be determined in subsequent steps are \uline{underlined},
and the corresponding entries to be determined in later profiles are
\textbf{bolded}.

We denote by $\left(i,h\right)$ the probability that a mechanism
satisfying the axioms assigns house $h$ to agent $i$ in a given
profile. Finally, when we say that \emph{a profile is determined},
we mean that the assignment matrix induced by any mechanism satisfying
the axioms is uniquely determined for that profile.

\subsection{Profiles with two agents sharing the same ranking}

We begin with the preference profiles in which two agents share the
same ranking. Since the axioms are invariant under renamings of both
agents and houses, we may, without loss of generality, assume that
agents $1$ and $2$ share the same ranking, $abcd$, and that the
ranking of agent $4$ does not precede that of agent $3$ in the lexicographic
order.
\begin{rem}
\label{3 ranks a first, 4 not} Whenever agent $3$ also ranks $a$
first while agent $4$ does not, agent $4$ is not supported, because
of the adjacent pair consisting of $a$ and the house immediately
above it. Hence, in such profiles the probabilities of agent $4$
are determined.
\end{rem}

\begin{rem}
\label{abdc - a,b determined}\label{acbd - a,d determined}\label{bacd - c,d determined}Whenever
agent $3$'s ranking is obtained from the ranking $abcd$ by an adjacent
swap, the probabilities associated with the two houses that are not
involved in that swap are determined. For example, when agent $3$
ranks $abdc$, the probabilities associated with houses $a$ and $b$
are determined. This is because the only pair of houses for which
the agents might not exhibit near-unanimous agreement is $\left\{ c,d\right\} $.
Thus, agents $1$, $2$, and $3$ can swap $\left\{ c,d\right\} $
in their rankings to reach a profile determined by Lemma \textcolor{blue}{\ref{the lemma}}
(hereafter, \emph{the lemma}). By SP, such a swap does not affect
their probabilities of receiving $a$ or $b$. The probabilities of
these houses for agent $4$ are then determined by house complementarity.
The same argument applies when agent $3$ ranks $acbd$ or $bacd$:
in the former case, the probabilities associated with $a$ and $d$
are determined, and in the latter, the probabilities associated with
$c$ and $d$ are determined.
\end{rem}

\begin{casenv}
\item When agent $3$ also ranks $abcd$, the profile is determined by the
lemma, since there are $n-1=3$ agents with the same ranking. In particular,
the agents exhibit near-unanimous agreement on the relative ranking
of every pair of houses, and the profile is therefore determined.
\item When agent $3$ ranks $abdc$, note that the lemma determines every
profile $\mathbf{P}$ satisfying $cP_{4}d$, because in such profiles
there is no pair of houses on which the agents fail to exhibit near-unanimous
agreement, and those profiles are therefore determined. We now go
through the remaining possible rankings for $P_{4}$.\vspace{1mm}

\begin{casenv}[leftmargin=0pt]\item {\centering{\itshape%
\begin{tabular}{|c|c|c|c|}
\hline 
\textcolor{brown}{a} & \textcolor{brown}{a} & \textcolor{brown}{a} & \textcolor{brown}{a}\tabularnewline
\hline 
\textcolor{brown}{b} & \textcolor{brown}{b} & \textcolor{brown}{b} & \textcolor{brown}{b}\tabularnewline
\hline 
c & c & d & d\tabularnewline
\hline 
\uline{d} & d & c & c\tabularnewline
\hline 
\end{tabular}}\rlap{(\ref{abdc - a,b determined})}\par}\vspace{1mm}

Since there are two pairs of agents with identical rankings, by ETE
and complementing to one, it suffices to determine $\left(1,d\right)$.
By SP, it suffices to determine $\left(1,d\right)$ in the following
profile where agent $1$ changes his ranking from $abcd$ to $cbad$.\vspace{1mm}\footnote{The ranking of the agent can be modified through a sequence of adjacent
swaps that does not involve $d$. By SP, the agent's probability of
receiving $d$ remains unchanged after each such swap.}

\begin{minipage}[t]{0.18\linewidth}{\itshape%
\begin{tabular}{|c|c|c|c|}
\hline 
c & \textcolor{blue}{a} & \textcolor{blue}{a} & \textcolor{blue}{a}\tabularnewline
\hline 
\textcolor{red}{b} & \textcolor{blue}{b} & \textcolor{blue}{b} & \textcolor{blue}{b}\tabularnewline
\hline 
\textcolor{red}{a} & c & \textcolor{violet}{d} & \textcolor{violet}{d}\tabularnewline
\hline 
\textbf{d} & \uline{d} & \textcolor{red}{c} & \textcolor{red}{c}\tabularnewline
\hline 
\end{tabular}}\end{minipage}\hfill\justifiedminipage{Here, $\left(1,b\right)$,
$\left(1,a\right)$, $\left(3,c\right)$, and $\left(4,c\right)$
are determined by efficiency,\footnotemark{} and consequently the
remaining probabilities for houses $a$ and $b$ are determined by
ETA. Next, $\left(3,d\right)$ and $\left(4,d\right)$ are determined
by the agent's complement. By house complementarity, to determine
$\left(1,d\right)$ it}\vspace{1mm} suffices to determine $\left(2,d\right)$;
by SP, this in turn reduces to determining $\left(2,d\right)$ in
the following profile.\vspace{1mm}\footnotetext{$\left(1,b\right)$
and $\left(1,a\right)$ are determined because agent $1$ is the only
agent preferring $c$ over $b$ and $a$. An assignment in which he
receives $b$ or $a$ while another agent receives $c$ would be inefficient,
since swapping houses would strictly benefit both. $\left(3,c\right)$
and $\left(4,c\right)$ are determined because agents $3$ and $4$
rank $c$ last, implying that they can receive it only if they are
last in the ordering. Since agent $1$ ranks $c$ first, no agent
following him in the ordering can receive it (if the house is available
at his turn, he will take it).}

\begin{minipage}[t]{0.17\linewidth}{\itshape%
\begin{tabular}{|c|c|c|c|}
\hline 
c & c & \textcolor{blue}{a} & \textcolor{blue}{a}\tabularnewline
\hline 
b & b & \uline{b} & b\tabularnewline
\hline 
\textcolor{red}{a} & \textcolor{red}{a} & d & d\tabularnewline
\hline 
d & \makebox[0pt][c]{\textbf{d}} & \textcolor{red}{c} & \textcolor{red}{c}\tabularnewline
\hline 
\end{tabular}}\end{minipage}\hfill\minipar{$\left(1,a\right)$, $\left(2,a\right)$,
$\left(3,c\right)$, and $\left(4,c\right)$ are determined by efficiency,\footnotemark{}
and consequently the remaining probabilities for $a$ are determined
by ETA. Since there are two pairs of agents with identical rankings,
it suffices to determine $\left(3,b\right)$, and it suffices to do
so in the following profile.}\vspace{1mm}\footnotetext{Each agent
$i\in\left\{ 1,2\right\} $ ranks $a$ second to last, implying that
he receives it only when he is last or second to last, and in particular
either agent $3$ or agent $4$ comes before agent $i$, while both
agents $3$ and $4$ rank $a$ first, which implies that any agent
that comes after one of them in the ordering would not receive $a$.}

\begin{minipage}[t]{0.17\linewidth}{\itshape%
\begin{tabular}{|c|c|c|c|}
\hline 
c & c & \textcolor{blue}{a} & \textcolor{blue}{a}\tabularnewline
\hline 
b & b & \makebox[0pt][c]{\textbf{\textcolor{violet}{b}}} & b\tabularnewline
\hline 
\textcolor{red}{a} & \textcolor{red}{a} & \textcolor{red}{c} & d\tabularnewline
\hline 
d & d & \textcolor{green}{d} & c\tabularnewline
\hline 
\end{tabular}}\end{minipage}\hfill\justifiedminipage{$\left(1,a\right)$, $\left(2,a\right)$,
and $\left(3,c\right)$ are determined by efficiency, and consequently
the remaining probabilities for $a$ are determined by ETA. Note that
$\left(3,d\right)$ is determined by SP and by considering the profile
where agent $3$ changes his ranking to $cbad$ (i.e., matches the
rankings of agents $1$ and}\vspace{1mm} $2$). That profile is
determined by the lemma because there are three agents with the same
ranking. Then, $\left(3,b\right)$ is determined by agent's complement,
as required.\vspace{1mm}

\item {\centering{\itshape%
\begin{tabular}{|c|c|c|c|}
\hline 
\textcolor{brown}{a} & \textcolor{brown}{a} & \textcolor{brown}{a} & \textcolor{green}{a}\tabularnewline
\hline 
\textcolor{brown}{b} & \textcolor{brown}{b} & \textcolor{brown}{b} & \textcolor{green}{d}\tabularnewline
\hline 
c & c & d & \textcolor{green}{x}\tabularnewline
\hline 
\uline{d} & d & c & \textcolor{green}{y}\tabularnewline
\hline 
\end{tabular}}\rlap{(\ref{abdc - a,b determined})}\par}\vspace{1mm}First, note
that agent $4$ is not supported because we can see that all other
agents have the opposite preference with respect to the adjacent pair
containing $b$ and the house immediately above it in agent $4$'s
ranking (either $c$ or $d$). Since agents $1$ and $2$ share the
same ranking, it suffices to determine $\left(1,d\right)$ in the
following profile.\vspace{1mm}

\begin{minipage}[t]{0.17\linewidth}{\itshape%
\begin{tabular}{|c|c|c|c|}
\hline 
c & \textcolor{blue}{a} & \textcolor{blue}{a} & \textcolor{blue}{a}\tabularnewline
\hline 
\textcolor{red}{b} & \textcolor{blue}{b} & \textcolor{blue}{b} & \textcolor{violet}{d}\tabularnewline
\hline 
\textcolor{red}{a} & c & \textcolor{violet}{d} & \textcolor{red}{x}\tabularnewline
\hline 
\makebox[0pt][c]{\textbf{d}} & \uline{d} & \textcolor{red}{c} & \textcolor{red}{y}\tabularnewline
\hline 
\end{tabular}}\end{minipage}\hfill\minipar{$\left(1,b\right)$, $\left(1,a\right)$,
$\left(3,c\right)$, $\left(4,b\right)$, and $\left(4,c\right)$
are determined by efficiency,\footnotemark{} and consequently ETA
determines $a$ and $b$ (the remaining probabilities for those houses),
and $\left(3,d\right)$ and $\left(4,d\right)$ are determined by
agent's complement. It then suffices to determine $\left(2,d\right)$
in the following profile.}{\vspace{1mm}}\footnotetext{Note that
agent $4$ cannot receive $b$ or $c$ because for that to happen,
some other agent should receive $d$ beforehand, and the only agent
other than $4$ which can receive $d$ when he is not the last in
the ordering is agent $3$ (because he does not rank $d$ last). For
that to happen, agents $1$ and $2$ should come before agent $3$
and take the houses that he prefers over $d$, that is, $a$ and $b$,
but because of their preferences, they will receive $a$ and $c$
instead. Furthermore, agent $1$ cannot receive $b$ because for that
to happen, another agent must receive $c$ beforehand, but the only
other agent who might receive $c$ is agent $2$, and he prefers $b$,
so he would not take $c$ if $b$ were available when it is his turn.}

\begin{minipage}[t]{0.17\linewidth}{\itshape%
\begin{tabular}{|c|c|c|c|}
\hline 
c & c & \textcolor{blue}{a} & \textcolor{blue}{a}\tabularnewline
\hline 
b & b & \uline{b} & \textcolor{violet}{d}\tabularnewline
\hline 
\textcolor{red}{a} & \textcolor{red}{a} & d & \textcolor{red}{x}\tabularnewline
\hline 
d & \makebox[0pt][c]{\textbf{d}} & \textcolor{red}{c} & \textcolor{red}{y}\tabularnewline
\hline 
\end{tabular}}\end{minipage}\hfill\minipar{$\left(1,a\right)$, $\left(2,a\right)$,
$\left(3,c\right)$, $\left(4,b\right)$, and $\left(4,c\right)$
are determined by efficiency, and consequently $a$ is determined
by ETA, and $\left(4,d\right)$ is determined by agent's complement.
Since agents $1$ and $2$ share the same ranking, it suffices to
determine $\left(3,b\right)$ in the following profile.}\vspace{1mm}

\begin{minipage}[t]{0.17\linewidth}{\itshape%
\begin{tabular}{|c|c|c|c|}
\hline 
c & c & \textcolor{blue}{a} & \textcolor{blue}{a}\tabularnewline
\hline 
b & b & \makebox[0pt][c]{\textbf{\textcolor{violet}{b}}\textcolor{violet}{}} & d\tabularnewline
\hline 
\textcolor{red}{a} & \textcolor{red}{a} & \textcolor{red}{c} & x\tabularnewline
\hline 
d & d & \textcolor{green}{d} & y\tabularnewline
\hline 
\end{tabular}}\end{minipage}\hfill\minipar{$\left(1,a\right)$, $\left(2,a\right)$,
and $\left(3,c\right)$ are determined by efficiency, and consequently
$a$ is determined by ETA. $\left(3,d\right)$ is determined by the
lemma (we can switch to the profile where agent $3$ matches the ranking
of agents $1$ and $2$), and then $\left(3,b\right)$ is determined
by agent's complement, as required.}\vspace{1mm}

\item {\centering{\itshape%
\begin{tabular}{|c|c|c|c|}
\hline 
\textcolor{brown}{a} & \textcolor{brown}{a} & \textcolor{brown}{a} & \textcolor{brown}{b}\tabularnewline
\hline 
\textcolor{brown}{b} & \textcolor{brown}{b} & \textcolor{brown}{b} & \textcolor{brown}{x}\tabularnewline
\hline 
c & c & d & \textcolor{brown}{y}\tabularnewline
\hline 
d & d & \uline{c} & \textcolor{brown}{z}\tabularnewline
\hline 
\end{tabular}}\rlap{(\ref{3 ranks a first, 4 not}, \ref{abdc - a,b determined})}\par}\vspace{1mm}

It suffices to determine $\left(3,c\right)$ in the following profile.\vspace{1mm}

\begin{minipage}[t]{0.17\linewidth}{\itshape%
\begin{tabular}{|c|c|c|c|}
\hline 
\textcolor{blue}{a} & \textcolor{blue}{a} & d & \textcolor{green}{b}\tabularnewline
\hline 
\textcolor{orange}{b} & \textcolor{orange}{b} & \textcolor{red}{a} & x\tabularnewline
\hline 
\textcolor{violet}{c} & \textcolor{violet}{c} & \textcolor{red}{b} & y\tabularnewline
\hline 
\textcolor{red}{d} & \textcolor{red}{d} & \textbf{c} & z\tabularnewline
\hline 
\end{tabular}}\end{minipage}\hfill\justifiedminipage{$\left(1,d\right)$, $\left(2,d\right)$,
$\left(3,a\right)$, $\left(3,b\right)$, and $\left(4,a\right)$
are determined by efficiency, and consequently $a$ is determined
by ETA. $\left(4,b\right)$ is determined by considering the profile
where agent $4$ ranks $bacd$ and applying the lemma. Then, $\left(1,b\right)$
and $\left(2,b\right)$ are determined by house}\vspace{1mm} complementarity, and
then $\left(1,c\right)$ and $\left(2,c\right)$ are
determined by agent complementarity. It then suffices to determine
$\left(4,c\right)$ in the following profile.\footnote{If $z=c$, the conclusion follows directly from SP. Otherwise, since
we assume $dP_{4}c$, we must have $x=d$. In that case, $\left(4,a\right)$
is determined in both profiles with $x=d$, and therefore, by Remark
\texttt{\textcolor{blue}{\ref{efficiency swaps}}}, it suffices to
determine $\left(4,c\right)$ at the profile where $aP_{4}c$.}\vspace{1mm}

\begin{minipage}[t]{0.17\linewidth}{\itshape%
\begin{tabular}{|c|c|c|c|}
\hline 
\textcolor{blue}{a} & \textcolor{blue}{a} & d & \textcolor{blue}{a}\tabularnewline
\hline 
\textcolor{blue}{b} & \textcolor{blue}{b} & \textcolor{red}{a} & \textcolor{blue}{b}\tabularnewline
\hline 
\textcolor{violet}{c} & \textcolor{violet}{c} & \textcolor{red}{b} & d\tabularnewline
\hline 
\textcolor{red}{d} & \textcolor{red}{d} & \textcolor{green}{c} & \textbf{\textcolor{orange}{c}}\tabularnewline
\hline 
\end{tabular}}\end{minipage}\hfill\justifiedminipage{$\left(1,d\right)$, $\left(2,d\right)$,
$\left(3,a\right)$, and $\left(3,b\right)$ are determined by efficiency,
and consequently $a$ and $b$ are determined by ETA. Then, $\left(1,c\right)$
and $\left(2,c\right)$ are determined by agent's complement. To determine
$\left(3,c\right)$, we consider the profile obtained by changing
agent $3$'s ranking to $abdc$.}\vspace{1mm} Note that this new
profile has a disagreement parameter smaller than that of the profiles
we wish to solve, and is therefore determined under the induction
hypothesis. $\left(4,c\right)$ is then determined by house complementarity,
as required.\vspace{1mm}

\item {\centering{\itshape%
\begin{tabular}{|c|c|c|c|}
\hline 
a & a & a & \textcolor{brown}{d}\tabularnewline
\hline 
b & b & b & \textcolor{brown}{x}\tabularnewline
\hline 
c & c & d & \textcolor{brown}{y}\tabularnewline
\hline 
d & d & c & \textcolor{brown}{z}\tabularnewline
\hline 
\end{tabular}}\rlap{(\textcolor{blue}{\ref{3 ranks a first, 4 not}})}\par}\vspace{1mm}

In such profiles, agents $1$ and $2$ are also not supported because
of the pair $\left\{ c,d\right\} $, and consequently those profiles
are not supported.\end{casenv}
\item When agent $3$ ranks $acbd$, note that all profiles where $bP_{4}c$
are determined by the lemma, and therefore we only need to consider
the remaining possible rankings for $P_{4}$.\vspace{1mm}

\begin{casenv}[leftmargin=0pt]\item {\centering{\itshape%
\begin{tabular}{|c|c|c|c|}
\hline 
\textcolor{brown}{a} & \textcolor{brown}{a} & \textcolor{brown}{a} & \textcolor{brown}{a}\tabularnewline
\hline 
\uline{b} & b & c & c\tabularnewline
\hline 
c & c & b & b\tabularnewline
\hline 
\textcolor{brown}{d} & \textcolor{brown}{d} & \textcolor{brown}{d} & \textcolor{brown}{d}\tabularnewline
\hline 
\end{tabular}}\rlap{(\textcolor{blue}{\ref{acbd - a,d determined}})}\par}\vspace{1mm}

It suffices to determine $\left(1,b\right)$ in the following profile.\vspace{1mm}

\begin{minipage}[t]{0.17\linewidth}{\itshape%
\begin{tabular}{|c|c|c|c|}
\hline 
\textcolor{blue}{a} & \textcolor{blue}{a} & \textcolor{blue}{a} & \textcolor{blue}{a}\tabularnewline
\hline 
\textbf{\textcolor{violet}{b}} & b & c & c\tabularnewline
\hline 
\textcolor{orange}{d} & c & b & b\tabularnewline
\hline 
\textcolor{red}{c} & \textcolor{green}{d} & \textcolor{green}{d} & \textcolor{green}{d}\tabularnewline
\hline 
\end{tabular}}\end{minipage}\hfill\justifiedminipage{$\left(1,c\right)$ is determined
by efficiency, and $a$ is determined by ETA. Note that since $\left\{ b,c\right\} $
is the only pair without near-unanimous agreement among the agents,
$\left(i,d\right)$ is determined for $i=2,3,4$, because each of
those agents can swap $\left\{ b,c\right\} $ and obtain a profile
that is determined by}\vspace{1mm} the lemma. Then, $\left(1,b\right)$
is determined by complementarity, as required.\vspace{1mm}

\item {\centering{\itshape%
\begin{tabular}{|c|c|c|c|}
\hline 
\textcolor{brown}{a} & \textcolor{brown}{a} & \textcolor{brown}{a} & \textcolor{green}{a}\tabularnewline
\hline 
\uline{b} & b & c & \textcolor{green}{c}\tabularnewline
\hline 
c & c & b & \textcolor{green}{d}\tabularnewline
\hline 
\textcolor{brown}{d} & \textcolor{brown}{d} & \textcolor{brown}{d} & \textcolor{green}{b}\tabularnewline
\hline 
\end{tabular}}\rlap{(\textcolor{blue}{\ref{acbd - a,d determined}})}\par}\vspace{1mm}

Agent $4$ is not supported because of the pair $\left\{ d,b\right\} $.
It suffices to determine $\left(1,b\right)$ in the following profile.\vspace{1mm}

\begin{minipage}[t]{0.17\linewidth}{\itshape%
\begin{tabular}{|c|c|c|c|}
\hline 
\textcolor{blue}{a} & \textcolor{blue}{a} & \textcolor{blue}{a} & \textcolor{blue}{a}\tabularnewline
\hline 
\makebox[0pt][c]{\textbf{b}} & b & c & c\tabularnewline
\hline 
\uline{d} & c & b & d\tabularnewline
\hline 
\textcolor{red}{c} & d & d & b\tabularnewline
\hline 
\end{tabular}}\end{minipage}\hfill\minipar{$\left(1,c\right)$ is determined by
efficiency, and $a$ is determined by ETA. It suffices to determine
$\left(1,d\right)$ in the following profile.}\vspace{1mm}

\begin{minipage}[t]{0.17\linewidth}{\itshape%
\begin{tabular}{|c|c|c|c|}
\hline 
b & a & a & a\tabularnewline
\hline 
a & b & c & c\tabularnewline
\hline 
\makebox[0pt][c]{\textbf{d}} & c & b & \uline{d}\tabularnewline
\hline 
c & \textcolor{green}{d} & \textcolor{green}{d} & b\tabularnewline
\hline 
\end{tabular}}\end{minipage}\hfill\minipar{$\left(i,d\right)$ for $i=2,3$ is
determined because each of these agents can swap $\left\{ b,c\right\} $
and obtain a profile that is determined by the lemma. It then suffices
to determine $\left(4,d\right)$.}\vspace{1mm}

\begin{minipage}[t]{0.17\linewidth}{\itshape%
\begin{tabular}{|c|c|c|c|}
\hline 
b & \textcolor{blue}{a} & \textcolor{blue}{a} & c\tabularnewline
\hline 
\textcolor{red}{a} & \uline{b} & \uline{c} & \textcolor{red}{a}\tabularnewline
\hline 
d & c & b & \makebox[0pt][c]{\textbf{d}}\tabularnewline
\hline 
\textcolor{red}{c} & \textcolor{green}{d} & d & \textcolor{red}{b}\tabularnewline
\hline 
\end{tabular}}\end{minipage}\hfill\minipar{$\left(1,a\right)$, $\left(1,c\right)$,
$\left(4,a\right)$, and $\left(4,b\right)$ are determined by efficiency,
and consequently $a$ is determined by ETA. $\left(2,d\right)$ is
determined since agent $2$ can swap $\left\{ b,c\right\} $ to obtain
a profile determined by the lemma. It then suffices to determine $\left(2,b\right)$
and $\left(3,c\right)$.}\vspace{1mm}
\begin{itemize}
\item \begin{minipage}[t]{0.18\linewidth}{\itshape%
\begin{tabular}{|c|c|c|c|}
\hline 
b & \textcolor{blue}{a} & \textcolor{blue}{a} & c\tabularnewline
\hline 
\textcolor{red}{a} & \textbf{b} & c & \textcolor{red}{a}\tabularnewline
\hline 
d & \uline{d} & b & d\tabularnewline
\hline 
c & \textcolor{red}{c} & d & b\tabularnewline
\hline 
\end{tabular}}\end{minipage}\hfill\modifiedminipar{0.8}{We can use this profile
to determine $\left(2,b\right)$. $\left(1,a\right)$, $\left(2,c\right)$,
and $\left(4,a\right)$ are determined by efficiency, and consequently
$a$ is determined by ETA. It suffices to determine $\left(2,d\right)$
in the following profile.}\vspace{1mm}
\begin{itemize}
\item \begin{minipage}[t]{0.19\linewidth}{\itshape%
\begin{tabular}{|c|c|c|c|}
\hline 
\textcolor{blue}{b} & \textcolor{blue}{b} & \uline{a} & c\tabularnewline
\hline 
a & a & c & \textcolor{red}{a}\tabularnewline
\hline 
d & \makebox[0pt][c]{\textbf{d}} & \textcolor{red}{b} & d\tabularnewline
\hline 
c & \textcolor{red}{c} & d & \textcolor{red}{b}\tabularnewline
\hline 
\end{tabular}}\end{minipage}\hfill\modifiedminipar{0.79}{$\left(2,c\right)$,
$\left(3,b\right)$, $\left(4,a\right)$, and $\left(4,b\right)$
are determined by efficiency, and consequently $b$ is determined
by ETA. It then suffices to determine $\left(3,a\right)$ in the following
profile (note that agents $1$ and $2$ share the same ranking).}\vspace{1mm}
\item \begin{minipage}[t]{0.19\linewidth}{\itshape%
\begin{tabular}{|c|c|c|c|}
\hline 
\textcolor{blue}{b} & \textcolor{blue}{b} & \makebox[0pt][c]{\textbf{\textcolor{violet}{a}}\textcolor{violet}{}} & \textcolor{blue}{c}\tabularnewline
\hline 
\textcolor{violet}{a} & \textcolor{violet}{a} & \textcolor{red}{b} & \textcolor{red}{a}\tabularnewline
\hline 
\textcolor{blue}{d} & \textcolor{blue}{d} & \textcolor{blue}{d} & \textcolor{red}{d}\tabularnewline
\hline 
\textcolor{red}{c} & \textcolor{red}{c} & \textcolor{red}{c} & \textcolor{red}{b}\tabularnewline
\hline 
\end{tabular}}\end{minipage}\hfill\modifiedminipar{0.79}{Efficiency and ETA determine
$b$, $c$, and $d$, and consequently the profile is determined.}\vspace{1mm}
\end{itemize}
\item \begin{minipage}[t]{0.18\linewidth}{\itshape%
\begin{tabular}{|c|c|c|c|}
\hline 
b & \textcolor{blue}{a} & \textcolor{blue}{a} & c\tabularnewline
\hline 
\textcolor{red}{a} & b & \textbf{c} & \textcolor{red}{a}\tabularnewline
\hline 
d & c & \uline{d} & d\tabularnewline
\hline 
c & d & \textcolor{red}{b} & b\tabularnewline
\hline 
\end{tabular}}\end{minipage}\hfill\modifiedminipar{0.8}{We can use this profile
to determine $\left(3,c\right)$. $\left(1,a\right)$, $\left(3,b\right)$,
and $\left(4,a\right)$ are determined by efficiency, and consequently
$a$ is determined by ETA. It then suffices to determine $\left(3,d\right)$
in the following profile.}\vspace{1mm}
\begin{itemize}
\item \begin{minipage}[t]{0.19\linewidth}{\itshape%
\begin{tabular}{|c|c|c|c|}
\hline 
b & \uline{a} & \textcolor{blue}{c} & \textcolor{blue}{c}\tabularnewline
\hline 
\textcolor{red}{a} & b & a & a\tabularnewline
\hline 
d & \textcolor{red}{c} & \makebox[0pt][c]{\textbf{d}} & d\tabularnewline
\hline 
\textcolor{red}{c} & d & \textcolor{red}{b} & b\tabularnewline
\hline 
\end{tabular}}\end{minipage}\hfill\modifiedminipar{0.79}{$\left(1,a\right)$,
$\left(1,c\right)$, $\left(2,c\right)$, and $\left(3,b\right)$
are determined by efficiency, and consequently $c$ is determined
by ETA. It then suffices to determine $\left(2,a\right)$ in the following
profile (note that agents $3$ and $4$ share the same ranking).}\vspace{1mm}
\item \begin{minipage}[t]{0.19\linewidth}{\itshape%
\begin{tabular}{|c|c|c|c|}
\hline 
\textcolor{blue}{b} & \makebox[0pt][c]{\textbf{\textcolor{violet}{a}}\textcolor{violet}{}} & \textcolor{blue}{c} & \textcolor{blue}{c}\tabularnewline
\hline 
\textcolor{red}{a} & \textcolor{red}{c} & \textcolor{violet}{a} & \textcolor{violet}{a}\tabularnewline
\hline 
\textcolor{red}{d} & \textcolor{blue}{d} & \textcolor{blue}{d} & \textcolor{blue}{d}\tabularnewline
\hline 
\textcolor{red}{c} & \textcolor{red}{b} & \textcolor{red}{b} & \textcolor{red}{b}\tabularnewline
\hline 
\end{tabular}}\end{minipage}\hfill\modifiedminipar{0.79}{Efficiency and ETA determine
$b$, $c$, and $d$, and consequently the profile is determined.}\vspace{1mm}
\end{itemize}
\end{itemize}
\item {\centering{\itshape%
\begin{tabular}{|c|c|c|c|}
\hline 
a & a & a & a\tabularnewline
\hline 
b & b & c & d\tabularnewline
\hline 
c & c & b & c\tabularnewline
\hline 
d & d & d & b\tabularnewline
\hline 
\end{tabular}}\par}\vspace{1mm}

This profile is not supported because the pair $\left\{ b,c\right\} $
is not under near-unanimous agreement among the agents, and both agents
$3$ and $4$ cannot receive $b$.\vspace{1mm}

\item {\centering{\itshape\hypertarget{case 3 subcase 4}{} %
\begin{tabular}{|c|c|c|c|}
\hline 
\textcolor{brown}{a} & \textcolor{brown}{a} & \textcolor{brown}{a} & \textcolor{brown}{c}\tabularnewline
\hline 
\uline{b} & b & c & \textcolor{brown}{a}\tabularnewline
\hline 
c & c & b & \textcolor{brown}{b}\tabularnewline
\hline 
\textcolor{brown}{d} & \textcolor{brown}{d} & \textcolor{brown}{d} & \textcolor{brown}{d}\tabularnewline
\hline 
\end{tabular}}\rlap{(\textcolor{blue}{\ref{3 ranks a first, 4 not}},\textcolor{blue}{{}
\ref{acbd - a,d determined}})}\par}\vspace{1mm}

It suffices to determine $\left(1,b\right)$ in the following profile.\vspace{1mm}

\begin{minipage}[t]{0.17\linewidth}{\itshape%
\begin{tabular}{|c|c|c|c|}
\hline 
\textcolor{blue}{a} & \textcolor{blue}{a} & \textcolor{blue}{a} & c\tabularnewline
\hline 
\textbf{\textcolor{violet}{b}} & b & c & \textcolor{red}{a}\tabularnewline
\hline 
\textcolor{orange}{d} & c & b & b\tabularnewline
\hline 
\textcolor{red}{c} & \textcolor{green}{d} & \textcolor{green}{d} & \textcolor{green}{d}\tabularnewline
\hline 
\end{tabular}}\end{minipage}\hfill\justifiedminipage{$\left(1,c\right)$ and $\left(4,a\right)$
are determined by efficiency, and consequently $a$ is determined
by ETA. $\left(i,d\right)$ for $i=2,3,4$ is determined by the lemma:
agents $2$ and $3$ can swap $\left\{ b,c\right\} $, and agent $4$
can change his ranking to $abcd$; in both cases, the resulting profile
is determined by the lemma.}\vspace{1mm} $\left(1,b\right)$ is
then determined by complementarity, as required.\vspace{1mm}

\item {\centering{\itshape%
\begin{tabular}{|c|c|c|c|}
\hline 
\textcolor{brown}{a} & \textcolor{brown}{a} & \textcolor{brown}{a} & \textcolor{brown}{c}\tabularnewline
\hline 
\uline{b} & b & c & \textcolor{brown}{a}\tabularnewline
\hline 
c & c & b & \textcolor{brown}{d}\tabularnewline
\hline 
\textcolor{brown}{d} & \textcolor{brown}{d} & \textcolor{brown}{d} & \textcolor{brown}{b}\tabularnewline
\hline 
\end{tabular}}\rlap{(\textcolor{blue}{\ref{3 ranks a first, 4 not}},\textcolor{blue}{{}
\ref{acbd - a,d determined}})}\par}\vspace{1mm}

It suffices to determine $\left(1,b\right)$ in the following profile.\vspace{1mm}

\begin{minipage}[t]{0.17\linewidth}{\itshape%
\begin{tabular}{|c|c|c|c|}
\hline 
\textcolor{blue}{a} & \textcolor{blue}{a} & \textcolor{blue}{a} & \textcolor{green}{c}\tabularnewline
\hline 
\textbf{\textcolor{violet}{b}} & b & c & \textcolor{red}{a}\tabularnewline
\hline 
\textcolor{orange}{d} & c & b & \textcolor{violet}{d}\tabularnewline
\hline 
\textcolor{red}{c} & \textcolor{green}{d} & \textcolor{green}{d} & \textcolor{red}{b}\tabularnewline
\hline 
\end{tabular}}\end{minipage}\hfill\justifiedminipage{$\left(1,c\right)$, $\left(4,a\right)$,
and $\left(4,b\right)$ are determined by efficiency, and consequently
$a$ is determined by ETA. $\left(i,d\right)$ for $i=2,3$ is determined
by the lemma: each of these agents can swap $\left\{ b,c\right\} $,
yielding a profile covered by the lemma. $\left(4,c\right)$ is determined
by considering the profile where agent $4$}\vspace{1mm} swaps $\left\{ d,b\right\} $;
the disagreement parameter of that profile equals that of the original
one,\footnotemark{} and in that profile agent $4$ is unsupported
because of the pair $\left\{ c,a\right\} $, so the desired result
follows by Remark \textcolor{blue}{\ref{unsupported profile equal param}}.
$\left(1,b\right)$ is then determined by complementarity, as required.\vspace{1mm}\footnotetext{This
holds because the addition from swapping the pair $\left\{ c,d\right\} $
in agent $1$'s ranking in the first transition is cancelled by the
swap of the pair $\left\{ d,b\right\} $ in agent $4$'s ranking in
the second transition.}

\item {\centering{\itshape%
\begin{tabular}{|c|c|c|c|}
\hline 
a & a & a & c\tabularnewline
\hline 
b & b & c & b\tabularnewline
\hline 
c & c & b & x\tabularnewline
\hline 
d & d & d & y\tabularnewline
\hline 
\end{tabular}}\par}\vspace{1mm}

We use similar arguments to those applied in \hyperlink{case 3 subcase 4}{{\itshape Case} iv}.
When agent $4$ does not rank $d$ last, we make a slight adjustment:
in the second step, to determine $\left(4,d\right)$, we may use the
lemma on the profile obtained when agent $4$ swaps $\left\{ b,c\right\} $
(that is, the profile where he ranks $bcda$).\vspace{1mm}

\item {\centering{\itshape%
\begin{tabular}{|c|c|c|c|}
\hline 
\textcolor{brown}{a} & \textcolor{brown}{a} & \textcolor{brown}{a} & \textcolor{brown}{c}\tabularnewline
\hline 
\uline{b} & b & c & \textcolor{brown}{d}\tabularnewline
\hline 
c & c & b & \textcolor{brown}{x}\tabularnewline
\hline 
\textcolor{brown}{d} & \textcolor{brown}{d} & \textcolor{brown}{d} & \textcolor{brown}{y}\tabularnewline
\hline 
\end{tabular}}\rlap{(\textcolor{blue}{\ref{3 ranks a first, 4 not}},\textcolor{blue}{{}
\ref{acbd - a,d determined}})}\par}\vspace{1mm}

It suffices to determine $\left(1,b\right)$ in the following profile.\vspace{1mm}

\begin{minipage}[t]{0.17\linewidth}{\itshape%
\begin{tabular}{|c|c|c|c|}
\hline 
\textcolor{blue}{a} & \textcolor{blue}{a} & \textcolor{blue}{a} & \textcolor{green}{c}\tabularnewline
\hline 
\textbf{\textcolor{violet}{b}} & b & c & \textcolor{violet}{d}\tabularnewline
\hline 
\textcolor{orange}{d} & c & b & \textcolor{red}{x}\tabularnewline
\hline 
\textcolor{red}{c} & \textcolor{green}{d} & \textcolor{green}{d} & \textcolor{red}{y}\tabularnewline
\hline 
\end{tabular}}\end{minipage}\hfill\justifiedminipage{$\left(1,c\right)$, $\left(4,a\right)$,
and $\left(4,b\right)$ are determined by efficiency, and consequently
$a$ is determined by ETA. $\left(i,d\right)$ for $i=2,3$ is determined
by the lemma: for each of these agents, swapping $\left\{ b,c\right\} $
yields a profile covered by the lemma. $\left(4,c\right)$ is determined
by considering the profile where agent $4$}\vspace{1mm} swaps $\left\{ d,x\right\} $;
the disagreement parameter of that profile equals that of the original
one,\footnotemark{} and in that profile agent $4$ is unsupported
because of the adjacent pair containing $a$ and the house ranked
immediately above it. $\left(1,b\right)$ is then determined by complementarity,
as required.\vspace{1mm}\footnotetext{This holds because the addition
from swapping the pair $\left\{ c,d\right\} $ in agent $1$'s ranking
in the first transition is cancelled by the swap of the pair $\left\{ d,x\right\} $
in agent $4$'s ranking in the second transition.}

\item {\centering{\itshape%
\begin{tabular}{|c|c|c|c|}
\hline 
\textcolor{brown}{a} & \textcolor{brown}{a} & \textcolor{brown}{a} & \textcolor{brown}{d}\tabularnewline
\hline 
b & b & c & \textcolor{brown}{x}\tabularnewline
\hline 
c & c & \textcolor{red}{b} & \textcolor{brown}{y}\tabularnewline
\hline 
\textcolor{brown}{d} & \textcolor{brown}{d} & \textcolor{brown}{d} & \textcolor{brown}{z}\tabularnewline
\hline 
\end{tabular}}\rlap{(\textcolor{blue}{\ref{3 ranks a first, 4 not}},\textcolor{blue}{{}
\ref{acbd - a,d determined}})}\par}\vspace{1mm}

$\left(3,b\right)$ is determined by efficiency, and consequently
the profile is determined.\end{casenv}
\item When agent $3$ ranks $acdb$, all such profiles are not supported
since both agents $3$ and $4$ are not supported. Agent $3$ is not
supported because of the pair $\left\{ d,b\right\} $. For agent $4$,
Remark \textcolor{blue}{\ref{3 ranks a first, 4 not}} shows that
he is not supported whenever he does not rank $a$ first, and when
he does, he is still not supported because of the pair containing
$b$ and the house immediately above it (note that we only consider
rankings for agent $4$ that do not precede $acdb$ in lexicographic
order).
\item When agent $3$ ranks $adbc$, all such profiles are not supported
as well. If agent $4$ ranks $adxy$, then agents $1$ and $2$ are
not supported because of the pair $\left\{ c,d\right\} $, which suffices
for the case where he ranks $adbc$, since in that case agents $3$
and $4$ share the same ranking.\footnote{Note also that this profile can be obtained from the one where agents
$3$ and $4$ rank $acdb$ after suitable renamings of the agents
and houses.} When agent $4$ ranks $adcb$, he is not supported because of the
pair $\left\{ c,b\right\} $. In the remaining cases, agent $4$ is
not supported by Remark \textcolor{blue}{\ref{3 ranks a first, 4 not}}.
When he ranks $d$ first, agents $1$ and $2$ are not supported because
of the pair $\left\{ c,d\right\} $; when he ranks either $b$ or
$c$ first, agent $3$ is not supported because of the pair $\left\{ d,b\right\} $.\footnote{Note that in this case, agent $3$ cannot receive $b$, since another
agent must receive $d$ beforehand, and because agents $1$ and $2$
rank $d$ last, that agent must be agent $4$. However, he would not
receive $d$ unless agents $1$ and $2$ received his first preference
earlier. If that preference is $b$, agent $3$ would not receive
it afterward; if it is $c$, then agents $1$ and $2$ cannot receive
it when they are among the first two agents in the ordering, as it
is only their third choice.}
\item When agent $3$ ranks $adcb$, all such profiles are not supported
as well. Agent $3$ is not supported because of the pair $\left\{ c,b\right\} $.
Agent $4$ is not supported either: if he shares the same ranking
as agent $3$, then it is because of the pair $\left\{ c,b\right\} $;
otherwise, he does not rank $a$ first, and hence is not supported
by Remark \textcolor{blue}{\ref{3 ranks a first, 4 not}}.
\end{casenv}
We now move to the cases where agent $3$ does not rank $a$ first,
most of which are resolved by the following remark.
\begin{rem}
It suffices to check the cases where agent $3$ ranks $a$ second
and where agents $3$ and $4$ have the same top choice. Indeed, whenever
the top choices of agents $3$ and $4$ differ, neither of them would
receive $a$, and both would be unsupported because of the pair containing
$a$ and the house ranked immediately above it in their orderings.
Moreover, even when their top choice is the same, they could receive
$a$ only when they rank it second. Since we assume that the ranking
of agent $4$ does not precede that of agent $3$ in the lexicographic
order, it is impossible that agent $4$ ranks $a$ second while agent
$3$ does not.
\end{rem}

We now verify the cases where agent $3$ ranks $a$ second and agent
$4$'s top choice is the same as agent $3$'s.

\begin{casenv}[resume]\item When agent $3$ ranks $bacd$.

\begin{casenv}[leftmargin=0pt]\item {\centering{\itshape%
\begin{tabular}{|c|c|c|c|}
\hline 
\uline{a} & a & b & b\tabularnewline
\hline 
b & b & a & a\tabularnewline
\hline 
\textcolor{brown}{c} & \textcolor{brown}{c} & \textcolor{brown}{c} & \textcolor{brown}{c}\tabularnewline
\hline 
\textcolor{brown}{d} & \textcolor{brown}{d} & \textcolor{brown}{d} & \textcolor{brown}{d}\tabularnewline
\hline 
\end{tabular}}\rlap{(\textcolor{blue}{\ref{bacd - c,d determined}})}\par}\vspace{1mm}

It suffices to determine $\left(1,a\right)$ in the following profile.\vspace{1mm}

\begin{minipage}[t]{0.17\linewidth}{\itshape%
\begin{tabular}{|c|c|c|c|}
\hline 
\textbf{\textcolor{violet}{a}} & a & b & b\tabularnewline
\hline 
\textcolor{orange}{c} & b & a & a\tabularnewline
\hline 
\textcolor{red}{b} & \textcolor{green}{c} & \textcolor{green}{c} & \textcolor{green}{c}\tabularnewline
\hline 
\textcolor{blue}{d} & \textcolor{blue}{d} & \textcolor{blue}{d} & \textcolor{blue}{d}\tabularnewline
\hline 
\end{tabular}}\end{minipage}\hfill\minipar{$\left(1,b\right)$ is determined by
efficiency, $d$ is determined by ETA, and $\left(i,c\right)$ for
$i=2,3,4$ is determined by considering the profile where agent $i$
swaps $\left\{ a,b\right\} $, which is determined by the lemma. $\left(1,a\right)$
is then determined by complementarity, as required.}\vspace{1mm}

\item {\centering{\itshape\hypertarget{bacd, badc}{} %
\begin{tabular}{|c|c|c|c|}
\hline 
\uline{a} & a & b & \textcolor{green}{b}\tabularnewline
\hline 
b & b & a & \textcolor{green}{a}\tabularnewline
\hline 
\textcolor{brown}{c} & \textcolor{brown}{c} & \textcolor{brown}{c} & \textcolor{brown}{d}\tabularnewline
\hline 
\textcolor{brown}{d} & \textcolor{brown}{d} & \textcolor{brown}{d} & \textcolor{brown}{c}\tabularnewline
\hline 
\end{tabular}}\rlap{(\textcolor{blue}{\ref{bacd - c,d determined}})}\par}\vspace{1mm}

Agent $4$ is not supported because of the pair $\left\{ d,c\right\} $.
It then suffices to determine $\left(1,a\right)$ in the following
profile.\vspace{1mm}

\begin{minipage}[t]{0.17\linewidth}{\itshape%
\begin{tabular}{|c|c|c|c|}
\hline 
\textbf{\textcolor{violet}{a}} & a & b & b\tabularnewline
\hline 
\textcolor{orange}{c} & b & a & a\tabularnewline
\hline 
\textcolor{red}{b} & \textcolor{green}{c} & \textcolor{green}{c} & d\tabularnewline
\hline 
\textcolor{green}{d} & d & d & \textcolor{red}{c}\tabularnewline
\hline 
\end{tabular}}\end{minipage}\hfill\justifiedminipage{$\left(1,b\right)$ and $\left(4,c\right)$
are determined by efficiency. $\left(1,d\right)$ is determined by
considering the profile where agent $1$ ranks $bacd$ and by the
lemma, and similarly $\left(i,c\right)$ for $i=2,3$ are determined
by considering the profiles where each agent $i$ swaps $\left\{ a,b\right\} $.
$\left(1,a\right)$ is then determined by}\vspace{1mm} complementarity,
as required.

\item {\centering{\itshape%
\begin{tabular}{|c|c|c|c|}
\hline 
\uline{a} & a & b & \textcolor{green}{b}\tabularnewline
\hline 
b & b & a & \textcolor{green}{c}\tabularnewline
\hline 
\textcolor{brown}{c} & \textcolor{brown}{c} & \textcolor{brown}{c} & \textcolor{green}{x}\tabularnewline
\hline 
\textcolor{brown}{d} & \textcolor{brown}{d} & \textcolor{brown}{d} & \textcolor{green}{y}\tabularnewline
\hline 
\end{tabular}}\rlap{(\textcolor{blue}{\ref{bacd - c,d determined}})}\par}\vspace{1mm}

Agent $4$ is not supported because of the pair containing $a$ and
the house immediately above it. It then suffices to determine $\left(1,a\right)$
in the following profile.\vspace{1mm}

\begin{minipage}[t]{0.17\linewidth}{\itshape%
\begin{tabular}{|c|c|c|c|}
\hline 
\textbf{\textcolor{violet}{a}} & a & b & b\tabularnewline
\hline 
\textcolor{orange}{d} & b & a & c\tabularnewline
\hline 
\textcolor{red}{b} & c & c & x\tabularnewline
\hline 
\textcolor{red}{c} & \textcolor{green}{d} & \textcolor{green}{d} & y\tabularnewline
\hline 
\end{tabular}}\end{minipage}\hfill\justifiedminipage{$\left(1,b\right)$ and $\left(1,c\right)$
are determined by efficiency. Then, $\left(i,d\right)$ for $i=2,3$
is determined by considering the profile where agent $i$ swaps $\left\{ a,b\right\} $
and by the lemma. Since $\left(4,a\right)$ is determined by efficiency
in both profiles, Remark \textcolor{blue}{\ref{efficiency swaps}}
implies that it suffices to determine $\left(4,d\right)$ at the profile
where}\vspace{1mm} $aP_{4}d$. In that profile, $\left(4,d\right)$
is determined by considering the profile where agent $4$ changes
his ranking to $abcd$ and by the lemma. $\left(1,a\right)$ is then
determined (in both profiles) by complementarity, as required.\vspace{1mm}

\item {\centering{\itshape%
\begin{tabular}{|c|c|c|c|}
\hline 
a & a & b & b\tabularnewline
\hline 
b & b & a & d\tabularnewline
\hline 
c & c & c & x\tabularnewline
\hline 
d & d & d & y\tabularnewline
\hline 
\end{tabular}}\par}\vspace{1mm}

We use similar arguments to those applied in \hyperlink{bacd, badc}{{\itshape Case} ii}.\footnote{This time, agent $4$ is not supported because of the pair containing
$a$ and the house ranked immediately above it.}\end{casenv}

\item When agent $3$ ranks $badc$, all such profiles are not supported.
\begin{itemize}
\item When agent $4$ ranks either $badc$ or $bdxy$, agents $1$, $2$,
and $3$ are not supported because of the pairs $\left\{ a,b\right\} $
and $\left\{ c,d\right\} $.
\item Otherwise, agent $4$ ranks $bcxy$. In that case, agent $3$ is not
supported because of the pair $\left\{ d,c\right\} $, and agent $4$
is not supported because of the pair containing $a$ and the house
ranked immediately above it.
\end{itemize}
\item Otherwise, let $h\in\left\{ c,d\right\} $ denote the top choice
of agents $3$ and $4$. Note that agents $1$ and $2$ are not supported
because of the pair containing $h$ and the house ranked immediately
above it.
\begin{itemize}
\item When agent $4$ does not rank $a$ second, he is not supported because
of the pair containing $a$ and the house ranked immediately above
it.
\item When agent $4$ does rank $a$ second, if he has the same ranking
as agent $3$, then it suffices that only agents $1$ and $2$ are
not supported for the profile to be determined. Otherwise, let $h^{\prime}$
and $h^{\prime\prime}$ denote the two houses in $H\setminus\left\{ a,h\right\} $.
Note that either agent $3$ or agent $4$ is not supported because
of the pair $\left\{ h^{\prime},h^{\prime\prime}\right\} $.\footnote{Exactly one of them orders $\left\{ h^{\prime},h^{\prime\prime}\right\} $
in the same way as agents $1$ and $2$, so the other must be unsupported.}
\end{itemize}
\end{casenv}

\subsection{Profiles with all agents holding distinct rankings}

In this part we examine the profiles in which all four agents hold
different rankings. Our task is to determine whether such a profile
is supported and to show that, whenever a profile is supported, the
axioms uniquely determine its corresponding assignment matrix.

Recall that a profile is supported if it contains at least two supported
agents with different rankings. Moreover, the axioms are invariant
under renamings of both agents and houses. Since in the present setting
all four agents have distinct rankings, the agents play symmetric
roles. Taken together, these facts imply that we may assume, without
loss of generality, that two supported agents are indexed as agents
$1$ and $2$, that the houses are renamed so that agent $1$'s ranking
is $abcd$, and that the ranking of agent $3$ precedes that of agent
$4$ in the lexicographic order. Any supported profile with four distinct
rankings can be brought into this normalized form by appropriate renamings
of agents and houses.

The assumption that agents $1$ and $2$ are supported imposes a system
of constraints on the rankings of agents $3$ and $4$. For each agent
$i\in\left\{ 1,2\right\} $ and each adjacent pair $xP_{i}^{+}y$
in his ranking, at least two other agents must agree with his comparison,
except that at most one such requirement may be relaxed by one, provided
that both houses in that pair are efficient for that agent. Once the
rankings of agents $1$ and $2$ are fixed, these conditions determine
certain comparisons that must be satisfied by agents $3$ and $4$.
For example, if agent $2$ ranks $adbc$, then incorporating his preferences
into the constraints for agent $1$ yields the following remaining
requirements for agents $3$ and $4$: one of them must prefer $a$
to $b$, one must prefer $b$ to $c$, and both must prefer $c$ to
$d$. As before, at most one of these requirements may be relaxed
by one when both houses in the corresponding pair are efficient for
agent $1$.

We now formalize the notation used to describe these constraints.
Fix two rankings $P_{1},P_{2}\in\mathcal{R}$, a supported agent $i\in\left\{ 1,2\right\} $,
and an adjacent pair $xP_{i}^{+}y$ in his ranking. Define $\widetilde{R}_{xy}^{i}\in\left\{ 1,2\right\} $
to be the number of agents in $\left\{ 3,4\right\} $ who are required
to prefer $x$ to $y$ so that agent $i$ has at least two other agents
agreeing with him on that comparison. Formally, 
\[
\widetilde{R}_{xy}^{i}\coloneqq\begin{cases}
1 & \text{if \ensuremath{xP_{3-i}^{+}y}},\\
2 & \text{otherwise}.
\end{cases}
\]
For illustration, if agent $2$ ranks $adbc$, then the constraints
from agent $1$ are $\widetilde{R}_{ab}^{1}=1$, $\widetilde{R}_{bc}^{1}=1$,
and $\widetilde{R}_{cd}^{1}=2$, and the constraints from agent $2$
are $\widetilde{R}_{ad}^{2}=1$, $\widetilde{R}_{db}^{2}=2$, and
$\widetilde{R}_{bc}^{2}=1$.

For each supported agent $i\in\left\{ 1,2\right\} $, at most one
of the constraints $\widetilde{R}_{xy}^{i}$ may be relaxed by one,
and this is allowed only when both houses in that pair are efficient
for that agent. We denote the chosen relaxation pair by $L_{i}\in\left(H\times H\right)\cup\left\{ \varnothing\right\} $,
where $L_{i}=\varnothing$ means that no relaxation is applied for
agent $i$. The relaxed constraints are 
\[
R_{xy}^{i}\coloneqq\begin{cases}
\widetilde{R}_{xy}^{i}-1 & \text{if \ensuremath{L_{i}=\left(x,y\right)}},\\
\widetilde{R}_{xy}^{i} & \text{otherwise}.
\end{cases}
\]
Finally, if $\left(x,y\right)$ is an adjacent pair with $x$ immediately
above $y$ in either $P_{1}$ or $P_{2}$, then the requirement imposed
on agents $3$ and $4$ after applying $L_{1}$ and $L_{2}$ is 
\[
R_{xy}\coloneqq\max_{i\in\left\{ 1,2\right\} }\left\{ R_{xy}^{i}\right\} .
\]
Here, we adopt the convention that $R_{xy}^{i}\coloneqq0$ whenever
$\left\{ x,y\right\} $ is not an adjacent pair in $P_{i}$ with $xP_{i}^{+}y$.
We record some remarks that will be used repeatedly throughout the
case analysis.
\begin{rem}
\label{rem:same param share ranking}When determining the assignment
matrix of a given profile, we may assume that any profile with the
same disagreement parameter in which at least two agents share the
same ranking is already determined, since all such profiles were exhaustively
analyzed in the previous section.
\end{rem}

\begin{rem}
\label{both adjacent, reversed order}Suppose agents $1$ and $2$
share an adjacent pair $\left\{ x,y\right\} $, with agent $1$ preferring
$x$ and agent $2$ preferring $y$. Then $\widetilde{R}_{xy}^{1}=2$
and $\widetilde{R}_{yx}^{2}=2$. Since agents $3$ and $4$ can satisfy
both requirements $R_{xy}$ and $R_{yx}$ only when $R_{xy}+R_{yx}\leq2$,
supportedness of both agents $1$ and $2$ forces $L_{1}=\left(x,y\right)$
and $L_{2}=\left(y,x\right)$. Moreover, if more than one such conflicting
pair exists, the assumption that both agents $1$ and $2$ are supported
cannot hold.
\end{rem}

\begin{rem}
\label{both adjacent, same order}Suppose agents $1$ and $2$ share
an adjacent pair $\left\{ x,y\right\} $ and both prefer $x$. Then
$\widetilde{R}_{xy}^{1}=\widetilde{R}_{xy}^{2}=1$. In this situation,
relaxing the requirement at $\left(x,y\right)$ for only one of the
two agents is formally possible but always redundant. For instance,
if $L_{1}=\left(x,y\right)$ and $L_{2}\neq\left(x,y\right)$, then
$R_{xy}=1$, so at least one of agents $3$ and $4$ prefers $x$
to $y$; in such a profile, agent $1$ already has two agents agreeing
with him, so taking $\left(x,y\right)$ as his relaxation pair removes
no constraint and moreover adds the requirement that both $x$ and
$y$ must be efficient for him. It follows that relaxing the requirement
at $\left(x,y\right)$ for only one agent never produces any new profiles
to check; it only excludes profiles by adding unnecessary efficiency
conditions. We may therefore assume without loss of generality that
the relaxation at $\left(x,y\right)$ is either applied for both agents
or for neither of them.
\end{rem}

\begin{rem}
\label{x,y,z}For any three distinct houses $x,y,z\in H$, agents
$3$ and $4$ can satisfy the three requirements $R_{xy}$, $R_{yz}$,
and $R_{zx}$ only when $R_{xy}+R_{yz}+R_{zx}\leq4$. This is because
each of agents $3$ and $4$ can satisfy at most two of the corresponding
binary comparisons in his ranking. In particular, for any $i\in\left\{ 1,2\right\} $,
whenever $\widetilde{R}_{xy}^{i}=2$, $\widetilde{R}_{yz}^{3-i}=2$,
and $\widetilde{R}_{zx}^{1}=\widetilde{R}_{zx}^{2}=1$, one of these
requirements must be relaxed through the choice of the relaxation
pairs. Thus, we can assume that either $L_{i}=\left(x,y\right)$,
or $L_{3-i}=\left(y,z\right)$, or $L_{1}=L_{2}=\left(z,x\right)$.
\end{rem}

\begin{rem}
\label{abdc 1,2 a,b}\label{acbd 1,2 a,d}\label{bacd 1,2 c,d}Whenever
agent $2$'s ranking is obtained from $abcd$ by an adjacent swap,
the probabilities associated with the two houses that are not involved
in that swap for agents $1$ and $2$ are determined. For example,
when agent $2$ ranks $abdc$, for agents $1$ and $2$ to be supported,
exactly one of the agents $3$ and $4$ must rank $c$ above $d$.
Consequently, whenever $c$ and $d$ are adjacent in the ranking of
an agent, swapping them yields a profile with a strictly smaller disagreement
parameter. Thus, the probabilities $\left(i,h\right)$ for $i\in\left\{ 1,2\right\} $
and $h\in\left\{ a,b\right\} $ are determined by the induction hypothesis
and SP. The same argument applies when agent $2$ ranks $acbd$ and
$h\in\left\{ a,d\right\} $ and when agent $2$ ranks $bacd$ and
$h\in\left\{ c,d\right\} $.
\end{rem}

\begin{rem}
\label{rem:R_xd>0 and 3 prefers d}Note that whenever $R_{xd}>0$
for some $x\in H\setminus\left\{ d\right\} $, we may assume that
agent $3$ does not rank $d$ first. This is because we assume that
his ranking precedes that of agent $4$ in the lexicographic order,
so if he ranks $d$ first, agent $4$ ranks $d$ first as well, and
the constraint is not satisfied.
\end{rem}

\begin{defn}
A profile is called \emph{degenerate} if each agent ranks a different
house first.
\end{defn}

\begin{rem}
\label{rem:degenerate}Every degenerate profile is determined, because
efficiency forces each agent to receive his top-ranked house.
\end{rem}

\begin{rem}
\label{rem:R_xc>0, 2-b and 3-c}Note that whenever agent $2$ ranks
$b$ first and $R_{xc}>0$ for some $x\in H\setminus\left\{ c\right\} $,
we may assume that agent $3$ does not rank $c$ first. This is because
we assume that his ranking precedes that of agent $4$ in the lexicographic
order, so if he ranks $c$ first, in order to satisfy the constraint
agent $4$ must rank $d$ first, and we obtain a degenerate profile;
hence it is determined by Remark \textcolor{blue}{\ref{rem:degenerate}.}
\end{rem}

\begin{rem}
\label{rem:2 and 3 prefer b, 4 not, and b in L_1}Note that whenever
agent $2$ ranks $b$ first and $b$ is involved in the relaxation
pair of agent $1$ (that is, $L_{1}=\left(a,b\right)$ or $L_{1}=\left(b,c\right)$),
we may assume that agent $3$ does not rank $b$ first. This is because
we assume that his ranking precedes that of agent $4$ in the lexicographic
order, so if he ranks $b$ first, agent $4$ cannot rank $a$ first,
and then agent $1$ cannot receive $b$.
\end{rem}

We now proceed to examine all possible profiles, grouped according
to the ranking of agent $2$.
\begin{casenv}
\item When agent $2$ ranks $abdc$, $\left\{ c,d\right\} $ is an adjacent
pair in both $P_{1}$ and $P_{2}$, with agent $1$ preferring $c$
and agent $2$ preferring $d$. By Remark \textcolor{blue}{\ref{both adjacent, reversed order}},
we must therefore have $L_{1}=\left(c,d\right)$ and $L_{2}=\left(d,c\right)$.
The constraints imposed on the rankings of agents $3$ and $4$ are
\[
R_{ab}=R_{bc}=R_{cd}=R_{bd}=R_{dc}=1.
\]
Since $L_{1}=\left(c,d\right)$ and $L_{2}=\left(d,c\right)$, the
houses $c$ and $d$ must both be efficient for agents $1$ and $2$,
and consequently agents $3$ and $4$ cannot rank either $c$ or $d$
in the top position. Moreover, agent $3$ cannot rank $b$ first either:
since its ranking precedes that of agent $4$, this would force agent
$4$ to rank $b$ first, but then the requirement $R_{ab}=1$ would
not be satisfied. We now consider subcases according to the ranking
of agent $3$ (recall that we do not examine cases where two agents
share the same ranking).
\begin{casenv}
\item When agent $3$ ranks $acbd$, this ranking satisfies $aP_{3}b$,
$cP_{3}d$, and $bP_{3}d$. The constraints therefore require that
agent $4$'s ranking satisfy $bP_{4}c$ and $dP_{4}c$. After taking
these restrictions into account as well, the remaining possibilities
for $P_{4}$ are $adbc$, $badc$, and rankings of the form $bdxy$.\vspace{1mm}

\begin{casenv}[leftmargin=0pt]\item {\centering{\itshape%
\begin{tabular}{|c|c|c|c|}
\hline 
\textcolor{brown}{a} & \textcolor{brown}{a} & \textcolor{green}{a} & \textcolor{green}{a}\tabularnewline
\hline 
\textcolor{brown}{b} & \textcolor{brown}{b} & \textcolor{green}{c} & \textcolor{green}{d}\tabularnewline
\hline 
c & d & \textcolor{red}{b} & \textcolor{red}{b}\tabularnewline
\hline 
\uline{d} & c & \textcolor{green}{d} & \textcolor{green}{c}\tabularnewline
\hline 
\end{tabular}}\rlap{(\textcolor{blue}{\ref{abdc 1,2 a,b}})}\par}\vspace{1mm}In
this profile, agents $3$ and $4$ are not supported because of the
pairs $\left\{ c,b\right\} $ and $\left\{ d,b\right\} $, respectively.
It then suffices to determine $\left(1,d\right)$ in the following
profile.\vspace{1mm}

\begin{minipage}[t]{0.19\linewidth}{\itshape%
\begin{tabular}{|c|c|c|c|}
\hline 
\textcolor{blue}{a} & \textcolor{blue}{a} & \textcolor{blue}{a} & \textcolor{blue}{a}\tabularnewline
\hline 
\textcolor{blue}{c} & \textcolor{green}{b} & \textcolor{blue}{c} & d\tabularnewline
\hline 
\textcolor{orange}{b} & d & \textcolor{orange}{b} & \textcolor{red}{b}\tabularnewline
\hline 
\textbf{\textcolor{violet}{d}} & \textcolor{red}{c} & d & \textcolor{red}{c}\tabularnewline
\hline 
\end{tabular}

}\end{minipage}\hfill\modifiedjustifiedminipage{0.79}{$\left(2,c\right),\left(4,b\right),$
and $\left(4,c\right)$ are determined by efficiency, and consequently
$a$ and $c$ are determined by ETA. $\left(2,b\right)$ is determined
by considering the profile where agent $2$ swaps $\left\{ c,d\right\} $.
That profile's disagreement parameter equals that of}\vspace{1mm} the original
profile, and it is not supported.\footnote{In that profile, both agents $2$ and $4$ are not supported due to
the pair $\left\{ b,c\right\} $, and the remaining agents share the
same ranking.} $\left(1,d\right)$ is then determined by complementarity, since
agents $1$ and $3$ share the same ranking.\vspace{1mm}

\item {\centering{\itshape%
\begin{tabular}{|c|c|c|c|}
\hline 
\textcolor{brown}{a} & \textcolor{brown}{a} & \textcolor{green}{a} & \textcolor{green}{b}\tabularnewline
\hline 
\textcolor{brown}{b} & \textcolor{brown}{b} & \textcolor{green}{c} & \textcolor{red}{a}\tabularnewline
\hline 
c & d & \textcolor{red}{b} & \textcolor{green}{d}\tabularnewline
\hline 
\uline{d} & c & \textcolor{green}{d} & \textcolor{green}{c}\tabularnewline
\hline 
\end{tabular}}\rlap{(\textcolor{blue}{\ref{abdc 1,2 a,b}})}\par}\vspace{1mm}In
this profile, agents $3$ and $4$ are not supported because of the
pairs $\left\{ c,b\right\} $ and $\left\{ b,a\right\} $, respectively.
It then suffices to determine $\left(1,d\right)$ in the following
profile.\vspace{1mm}

\begin{minipage}[t]{0.19\linewidth}{\itshape%
\begin{tabular}{|c|c|c|c|}
\hline 
c & a & a & b\tabularnewline
\hline 
a & b & c & a\tabularnewline
\hline 
b & \uline{d} & b & \uline{d}\tabularnewline
\hline 
\textbf{d} & c & \uline{d} & c\tabularnewline
\hline 
\end{tabular}

}\end{minipage}\hfill\modifiedminipar{0.79}{By house complementarity,
it suffices to determine $\left(i,d\right)$ for $i=2,3,4$.}\vspace{1mm}
\begin{itemize}
\item \begin{minipage}[t]{0.20\linewidth}{\itshape%
\begin{tabular}{|c|c|c|c|}
\hline 
c & \textcolor{blue}{b} & \uline{a} & \textcolor{blue}{b}\tabularnewline
\hline 
\textcolor{red}{a} & a & c & a\tabularnewline
\hline 
\textcolor{red}{b} & \textbf{d} & \textcolor{red}{b} & d\tabularnewline
\hline 
d & \textcolor{red}{c} & d & c\tabularnewline
\hline 
\end{tabular}}\end{minipage}\hfill\modifiedjustifiedminipage{0.78}{We can use
this profile to determine $\left(2,d\right)$. $\left(1,a\right)$,
$\left(1,b\right)$, $\left(2,c\right)$, and $\left(3,b\right)$
are determined by efficiency, and consequently $b$ is determined
by ETA. It suffices to determine $\left(3,a\right)$ in the following
profile (since agents $2$}\vspace{1mm} and $4$ share the same
ranking).\vspace{1mm}
\begin{itemize}
\item \begin{minipage}[t]{0.205\linewidth}{\itshape%
\begin{tabular}{|c|c|c|c|}
\hline 
\textcolor{blue}{c} & \textcolor{blue}{b} & \textbf{\textcolor{violet}{a}} & \textcolor{blue}{b}\tabularnewline
\hline 
\textcolor{red}{a} & \textcolor{violet}{a} & \textcolor{red}{b} & \textcolor{violet}{a}\tabularnewline
\hline 
\textcolor{red}{b} & \textcolor{blue}{d} & \textcolor{blue}{d} & \textcolor{blue}{d}\tabularnewline
\hline 
\textcolor{red}{d} & \textcolor{red}{c} & \textcolor{red}{c} & \textcolor{red}{c}\tabularnewline
\hline 
\end{tabular}}\end{minipage}\hfill\modifiedminipar{0.775}{Efficiency and ETA determine
$b$, $c$, and $d$, and consequently the profile is determined.}\vspace{1mm}
\end{itemize}
\item \begin{minipage}[t]{0.20\linewidth}{\itshape%
\begin{tabular}{|c|c|c|c|}
\hline 
\textcolor{violet}{c} & \textcolor{blue}{a} & \textcolor{blue}{a} & b\tabularnewline
\hline 
\textcolor{red}{a} & b & \textcolor{green}{b} & \textcolor{red}{a}\tabularnewline
\hline 
\textcolor{red}{b} & d & \textcolor{orange}{c} & d\tabularnewline
\hline 
\textcolor{green}{d} & \textcolor{red}{c} & \textbf{\textcolor{violet}{d}} & \textcolor{red}{c}\tabularnewline
\hline 
\end{tabular}}\end{minipage}\hfill\modifiedjustifiedminipage{0.78}{We can use
this profile to determine $\left(3,d\right)$. $\left(1,a\right)$,
$\left(1,b\right)$, $\left(2,c\right)$, $\left(4,a\right)$, and
$\left(4,c\right)$ are determined by efficiency, and consequently
$a$ is determined by ETA. $\left(3,b\right)$ is determined by considering
the profile where agent $3$}\vspace{1mm} swaps $\left\{ c,d\right\} $ and
applying the lemma. $\left(1,d\right)$ is determined by considering
the profile where agent $1$ swaps $\left\{ c,a\right\} $. That profile's
disagreement parameter equals that of the original profile, and agent
$1$ is not supported there.\footnote{Agent $1$ is not supported in that profile because of the pair $\left\{ c,b\right\} $.}
By Remark \textcolor{blue}{\ref{unsupported profile equal param}},
the entries of that agent are determined in the swapped profile, and
therefore $\left(1,d\right)$ is determined in the current profile.
$\left(3,d\right)$ is then determined by complementarity.\vspace{1mm}
\item \begin{minipage}[t]{0.20\linewidth}{\itshape%
\begin{tabular}{|c|c|c|c|}
\hline 
c & a & a & a\tabularnewline
\hline 
a & b & c & b\tabularnewline
\hline 
b & \textcolor{orange}{d} & b & \textbf{\textcolor{orange}{d}}\tabularnewline
\hline 
\textcolor{green}{d} & c & \textcolor{green}{d} & c\tabularnewline
\hline 
\end{tabular}}\end{minipage}\hfill\modifiedjustifiedminipage{0.78}{We can use
this profile to determine $\left(4,d\right)$. $\left(1,d\right)$
is determined by considering the profile where agent $1$ swaps $\left\{ c,a\right\} $.
That profile has a smaller disagreement parameter than the original
one. $\left(3,d\right)$ is determined by}\vspace{1mm} considering
the profile where agent $3$ swaps $\left\{ c,b\right\} $. That profile's
disagreement parameter equals that of the original profile, and it
is not supported.\footnote{That profile is not supported because, for agent $1$ we may consider
$\left\{ c,a\right\} $, and for agents $2$ and $4$ we may consider
$\left\{ d,c\right\} $.} Since agents $2$ and $4$ share the same ranking, $\left(4,d\right)$
is then determined.\vspace{1mm}
\end{itemize}
\item {\centering{\itshape%
\begin{tabular}{|c|c|c|c|}
\hline 
\textcolor{brown}{a} & \textcolor{brown}{a} & \textcolor{green}{a} & \textcolor{green}{b}\tabularnewline
\hline 
\textcolor{brown}{b} & \textcolor{brown}{b} & \textcolor{green}{c} & \textcolor{green}{d}\tabularnewline
\hline 
c & d & \textcolor{red}{b} & \textcolor{green}{x}\tabularnewline
\hline 
d & \uline{c} & \textcolor{green}{d} & \textcolor{green}{y}\tabularnewline
\hline 
\end{tabular}}\rlap{(\textcolor{blue}{\ref{abdc 1,2 a,b}})}\par}\vspace{1mm}In
this profile, agents $3$ and $4$ are not supported: agent $3$ because
of the pair $\left\{ c,b\right\} $, and agent $4$ because of the
pair containing $a$ and the house immediately above it. It then suffices
to determine $\left(2,c\right)$ in the following profile.\vspace{1mm}

\begin{minipage}[t]{0.19\linewidth}{\itshape

\begin{tabular}{|c|c|c|c|}
\hline 
a & \textcolor{orange}{d} & a & \textcolor{green}{b}\tabularnewline
\hline 
b & \textcolor{red}{a} & c & \textcolor{violet}{d}\tabularnewline
\hline 
c & \textcolor{red}{b} & \textcolor{red}{b} & x\tabularnewline
\hline 
\textcolor{red}{d} & \textbf{\textcolor{violet}{c}} & \textcolor{red}{d} & y\tabularnewline
\hline 
\end{tabular}}\end{minipage}\hfill\modifiedjustifiedminipage{0.79}{$\left(1,d\right)$,
$\left(2,a\right)$, $\left(2,b\right)$, $\left(3,b\right)$, $\left(3,d\right)$,
and $\left(4,a\right)$ are determined by efficiency. $\left(4,b\right)$
is determined by considering the profile where agent $4$ ranks $bacd$
and applying the lemma. Since $\left(4,a\right)$ is determined by
efficiency in both}\vspace{1mm} profiles, Remark \textcolor{blue}{\ref{efficiency swaps}}
implies that it suffices to determine $\left(4,c\right)$
at the profile where $aP_{4}c$. In that profile, $\left(4,c\right)$
is determined by considering the profile where agent $4$ ranks $abdc$.
That profile's disagreement parameter is at most that of the original
profile, and it is not supported.\footnote{In that profile, agents $1$, $2$, and $3$ are not supported because
of the pairs $\left\{ c,d\right\} $, $\left\{ d,a\right\} $, and
$\left\{ c,b\right\} $, respectively. } $\left(2,c\right)$ is then determined by complementarity.

\end{casenv}
\item When agent $3$ ranks $acdb$, the constraints require that $bP_{4}dP_{4}c$.
Thus, the remaining possibilities for $P_{4}$ are of the form $bxyz$
with $dP_{4}c$.
\begin{rem}
\label{abdc,a*** 3,4 not supported}In this case and the next two
cases, agent $3$ is not supported because of the pair containing
$b$, and agent $4$ is not supported because of the pair containing
$a$, where in both instances we refer to the pair formed with the
house immediately above them in their respective rankings.\vspace{1mm}
\end{rem}

\begin{casenv}[leftmargin=0pt]

\item[] {\centering{\itshape%
\begin{tabular}{|c|c|c|c|}
\hline 
\textcolor{brown}{a} & \textcolor{brown}{a} & \textcolor{brown}{a} & \textcolor{brown}{b}\tabularnewline
\hline 
\textcolor{brown}{b} & \textcolor{brown}{b} & \textcolor{brown}{c} & \textcolor{brown}{x}\tabularnewline
\hline 
c & d & \textcolor{brown}{d} & \textcolor{brown}{y}\tabularnewline
\hline 
\uline{d} & c & \textcolor{brown}{b} & \textcolor{brown}{z}\tabularnewline
\hline 
\end{tabular}}\rlap{(\textcolor{blue}{\ref{abdc 1,2 a,b}},\textcolor{blue}{\ref{abdc,a*** 3,4 not supported}})}\par}\vspace{1mm}It
suffices to determine $\left(1,d\right)$ in the following profile.\vspace{1mm}

\begin{minipage}[t]{0.19\linewidth}{\itshape%
\begin{tabular}{|c|c|c|c|}
\hline 
\uline{c} & \textcolor{blue}{a} & \textcolor{blue}{a} & \uline{b}\tabularnewline
\hline 
\textcolor{red}{a} & b & c & x\tabularnewline
\hline 
b & \uline{d} & d & y\tabularnewline
\hline 
\textbf{d} & \textcolor{red}{c} & \textcolor{red}{b} & z\tabularnewline
\hline 
\end{tabular}}\end{minipage}\hfill\modifiedminipar{0.79}{$\left(1,a\right)$,
$\left(2,c\right)$, $\left(3,b\right)$, and $\left(4,a\right)$
are determined by efficiency, and consequently $a$ is determined
by ETA. It then suffices to determine $\left(1,c\right)$, $\left(2,d\right)$,
and $\left(4,b\right)$.}\vspace{1mm}
\begin{itemize}
\item \begin{minipage}[t]{0.20\linewidth}{\itshape%
\begin{tabular}{|c|c|c|c|}
\hline 
\textbf{c} & a & a & b\tabularnewline
\hline 
\textcolor{red}{a} & b & c & x\tabularnewline
\hline 
\uline{d} & d & d & y\tabularnewline
\hline 
\textcolor{red}{b} & c & b & z\tabularnewline
\hline 
\end{tabular}}\end{minipage}\hfill\modifiedminipar{0.78}{We can use this profile
to determine $\left(1,c\right)$. $\left(1,a\right)$ and $\left(1,b\right)$
are determined by efficiency. It then suffices to determine $\left(1,d\right)$
in the following profile.}\vspace{1mm}
\begin{itemize}
\item \begin{minipage}[t]{0.205\linewidth}{\itshape%
\begin{tabular}{|c|c|c|c|}
\hline 
\textcolor{blue}{a} & \textcolor{blue}{a} & \textcolor{blue}{a} & b\tabularnewline
\hline 
\textcolor{blue}{c} & b & \textcolor{blue}{c} & x\tabularnewline
\hline 
\textbf{\textcolor{violet}{d}} & d & d & y\tabularnewline
\hline 
\textcolor{red}{b} & \textcolor{red}{c} & b & z\tabularnewline
\hline 
\end{tabular}}\end{minipage}\hfill\modifiedminipar{0.775}{ $\left(1,b\right)$,
$\left(2,c\right)$, and $\left(4,a\right)$ are determined by efficiency.
Since $dP_{4}c$, $\left(4,c\right)$ is also determined by efficiency,
and consequently $a$ and $c$ are determined by ETA. $\left(1,d\right)$
is then determined by complementarity.}\vspace{1mm}
\end{itemize}
\item \begin{minipage}[t]{0.20\linewidth}{\itshape%
\begin{tabular}{|c|c|c|c|}
\hline 
c & \textcolor{blue}{b} & \textcolor{green}{a} & \textcolor{blue}{b}\tabularnewline
\hline 
\textcolor{red}{a} & \textcolor{orange}{a} & c & x\tabularnewline
\hline 
\textcolor{red}{b} & \textbf{\textcolor{violet}{d}} & d & y\tabularnewline
\hline 
d & \textcolor{red}{c} & \textcolor{red}{b} & z\tabularnewline
\hline 
\end{tabular}}\end{minipage}\hfill\modifiedjustifiedminipage{0.78}{We can use
this profile to determine $\left(2,d\right)$. $\left(1,a\right)$,
$\left(1,b\right)$, $\left(2,c\right)$, and $\left(3,b\right)$
are determined by efficiency, and consequently $b$ is determined
by ETA. $\left(3,a\right)$ is determined by considering the profile
where agent $3$}\vspace{1mm} ranks $abdc$. That profile's disagreement
parameter is at most that of the original profile, and agent $3$
is not supported there.\footnote{Agent $3$ is not supported in that profile because of the pair $\left\{ a,b\right\} $.
} Note that if $x=a$, then agents $2$ and $4$ share the same ranking,
and otherwise, since $dP_{4}c$, $\left(4,a\right)$ is determined
by efficiency. In either case, $\left(2,d\right)$ is then determined
by complementarity.\vspace{1mm}
\item \begin{minipage}[t]{0.20\linewidth}{\itshape%
\begin{tabular}{|c|c|c|c|}
\hline 
c & a & a & \textbf{\textcolor{orange}{b}}\tabularnewline
\hline 
a & \textcolor{green}{b} & c & a\tabularnewline
\hline 
\textcolor{green}{b} & d & d & c\tabularnewline
\hline 
d & c & \textcolor{red}{b} & d\tabularnewline
\hline 
\end{tabular}}\end{minipage}\hfill\modifiedjustifiedminipage{0.78}{We can use
this profile to determine $\left(4,b\right)$. $\left(3,b\right)$
is determined by efficiency. $\left(1,b\right)$ is determined by
considering the profile where agent $1$ swaps $\left\{ c,a\right\} $.
That profile's disagreement parameter is at most that of the}\vspace{1mm} original
profile, and it is not supported.\footnote{In that profile, agents $2$, $3$, and $4$ are not supported because
of the pairs $\left\{ d,c\right\} $, $\left\{ d,b\right\} $, and
$\left\{ b,a\right\} $, respectively. } Similarly, $\left(2,b\right)$ is determined by considering the profile
where agent $2$ swaps $\left\{ d,c\right\} $.\footnote{In that profile, agents $1$, $3$, and $4$ are not supported because
of the pairs $\left\{ c,a\right\} $, $\left\{ d,b\right\} $, and
$\left\{ b,a\right\} $, respectively. } $\left(4,b\right)$ is then determined by complementarity.
\end{itemize}
\end{casenv}
\item When agent $3$ ranks $adbc$, the constraints require that $cP_{4}d$
and $bP_{4}d$. Thus, the remaining possibilities for $P_{4}$ are
$bacd$ and rankings of the form $bcxy$.\vspace{1mm}

\begin{casenv}[leftmargin=0pt]\item {\centering{\itshape%
\begin{tabular}{|c|c|c|c|}
\hline 
\textcolor{brown}{a} & \textcolor{brown}{a} & \textcolor{brown}{a} & \textcolor{brown}{b}\tabularnewline
\hline 
\textcolor{brown}{b} & \textcolor{brown}{b} & \textcolor{brown}{d} & \textcolor{brown}{a}\tabularnewline
\hline 
\uline{c} & d & \textcolor{brown}{b} & \textcolor{brown}{c}\tabularnewline
\hline 
d & c & \textcolor{brown}{c} & \textcolor{brown}{d}\tabularnewline
\hline 
\end{tabular}}\rlap{(\textcolor{blue}{\ref{abdc 1,2 a,b}},\textcolor{blue}{\ref{abdc,a*** 3,4 not supported}})}\par}\vspace{1mm}It
suffices to determine $\left(1,c\right)$ in the following profile.\vspace{1mm}

\begin{minipage}[t]{0.19\linewidth}{\itshape%
\begin{tabular}{|c|c|c|c|}
\hline 
b & a & a & b\tabularnewline
\hline 
a & b & d & a\tabularnewline
\hline 
\textbf{c} & d & b & c\tabularnewline
\hline 
d & \uline{c} & \textcolor{green}{c} & d\tabularnewline
\hline 
\end{tabular}}\end{minipage}\hfill\modifiedjustifiedminipage{0.79}{$\left(3,c\right)$
is determined by considering the profile where agent $3$ swaps $\left\{ d,b\right\} $.
That profile's disagreement parameter is less than that of the original
profile. It then suffices to determine $\left(2,c\right)$ in the
following profile (since agents $1$ and $4$ share the}\vspace{1mm}
same ranking).\vspace{1mm}

\begin{minipage}[t]{0.19\linewidth}{\itshape%
\begin{tabular}{|c|c|c|c|}
\hline 
b & \textcolor{orange}{d} & \textcolor{green}{a} & b\tabularnewline
\hline 
a & \textcolor{red}{b} & \textcolor{violet}{d} & a\tabularnewline
\hline 
c & \textcolor{red}{a} & \textcolor{red}{b} & c\tabularnewline
\hline 
\textcolor{red}{d} & \textbf{\textcolor{violet}{c}} & \textcolor{green}{c} & \textcolor{red}{d}\tabularnewline
\hline 
\end{tabular}}\end{minipage}\hfill\modifiedjustifiedminipage{0.79}{$\left(1,d\right)$,
$\left(2,b\right)$, $\left(2,a\right)$, $\left(3,b\right)$, and
$\left(4,d\right)$ are determined by efficiency. $\left(3,a\right)$
is determined by considering the profile where agent $3$ ranks $abcd$.
That profile is determined by the lemma. $\left(3,c\right)$ is determined
by considering the profile where}\vspace{1mm} agent $3$ ranks $badc$.
That profile's disagreement parameter equals that of the original
profile, and it is not supported.\footnote{In that profile, agents $1$ and $4$ are not supported because of
the pair $\left\{ c,d\right\} $, and agent $2$ is not supported
because of the pair $\left\{ d,b\right\} $.} $\left(2,c\right)$ is then determined by complementarity.\vspace{1mm}

\item {\centering{\itshape%
\begin{tabular}{|c|c|c|c|}
\hline 
\textcolor{brown}{a} & \textcolor{brown}{a} & \textcolor{brown}{a} & \textcolor{brown}{b}\tabularnewline
\hline 
\textcolor{brown}{b} & \textcolor{brown}{b} & \textcolor{brown}{d} & \textcolor{brown}{c}\tabularnewline
\hline 
c & d & \textcolor{brown}{b} & \textcolor{brown}{x}\tabularnewline
\hline 
\uline{d} & c & \textcolor{brown}{c} & \textcolor{brown}{y}\tabularnewline
\hline 
\end{tabular}}\rlap{(\textcolor{blue}{\ref{abdc 1,2 a,b}},\textcolor{blue}{\ref{abdc,a*** 3,4 not supported}})}\par}\vspace{1mm}It
suffices to determine $\left(1,d\right)$ in the following profile.\vspace{1mm}

\begin{minipage}[t]{0.19\linewidth}{\itshape%
\begin{tabular}{|c|c|c|c|}
\hline 
c & \textcolor{blue}{a} & \textcolor{blue}{a} & \textcolor{green}{b}\tabularnewline
\hline 
\textcolor{red}{a} & \textcolor{orange}{b} & \textcolor{violet}{d} & c\tabularnewline
\hline 
\textcolor{red}{b} & \textcolor{violet}{d} & \textcolor{red}{b} & x\tabularnewline
\hline 
\textbf{\textcolor{orange}{d}} & \textcolor{red}{c} & \textcolor{red}{c} & y\tabularnewline
\hline 
\end{tabular}}\end{minipage}\hfill\modifiedjustifiedminipage{0.79}{$\left(1,a\right)$,
$\left(1,b\right)$, $\left(2,c\right)$, $\left(3,b\right)$, $\left(3,c\right)$,
and $\left(4,a\right)$ are determined by efficiency, and consequently
$a$ is determined by ETA. $\left(4,b\right)$ is determined by considering
the profile where agent $4$ ranks $badc$. That profile is determined
by the lemma. $\left(4,d\right)$}\vspace{1mm} is determined by
considering the profile where agent $4$ ranks $abcd$.\footnote{Note that $\left(4,a\right)$ is determined by efficiency in both
profiles of that form. Hence, by Remark \textcolor{blue}{\ref{efficiency swaps}},
we may use this transition to determine $\left(4,d\right)$.} That profile's disagreement parameter is at most that of the original
profile, and it is not supported.\footnote{In that profile, agents $1$, $2$, and $3$ are not supported because
of the pairs $\left\{ c,a\right\} $, $\left\{ d,c\right\} $, and
$\left\{ d,b\right\} $, respectively.} $\left(1,d\right)$ is then determined by complementarity.

\end{casenv}
\item When agent $3$ ranks $adcb$, the constraints require that $bP_{4}cP_{4}d$.
Thus, the remaining possibilities for $P_{4}$ are of the form $bxyz$
with $cP_{4}d$.\vspace{1mm}

\begin{casenv}[leftmargin=0pt]

\item[] {\centering{\itshape%
\begin{tabular}{|c|c|c|c|}
\hline 
\textcolor{brown}{a} & \textcolor{brown}{a} & \textcolor{brown}{a} & \textcolor{brown}{b}\tabularnewline
\hline 
\textcolor{brown}{b} & \textcolor{brown}{b} & \textcolor{brown}{d} & \textcolor{brown}{x}\tabularnewline
\hline 
c & d & \textcolor{brown}{c} & \textcolor{brown}{y}\tabularnewline
\hline 
\uline{d} & c & \textcolor{brown}{b} & \textcolor{brown}{z}\tabularnewline
\hline 
\end{tabular}}\rlap{(\textcolor{blue}{\ref{abdc 1,2 a,b}},\textcolor{blue}{\ref{abdc,a*** 3,4 not supported}})}\par}\vspace{1mm}It
suffices to determine $\left(1,d\right)$ in the following profile.\vspace{1mm}

\begin{minipage}[t]{0.19\linewidth}{\itshape%
\begin{tabular}{|c|c|c|c|}
\hline 
\textcolor{blue}{a} & \textcolor{blue}{a} & \textcolor{blue}{a} & b\tabularnewline
\hline 
c & \textcolor{green}{b} & \textcolor{green}{d} & x\tabularnewline
\hline 
\textcolor{red}{b} & d & c & y\tabularnewline
\hline 
\textbf{d} & \uline{c} & \textcolor{red}{b} & z\tabularnewline
\hline 
\end{tabular}}\end{minipage}\hfill\modifiedjustifiedminipage{0.79}{$\left(1,b\right)$,
$\left(3,b\right)$, and $\left(4,a\right)$ are determined by efficiency,
and consequently $a$ is determined by ETA. $\left(2,b\right)$, $\left(3,d\right)$,
and $\left(4,d\right)$ are determined by considering, for each of
these agents, a modified profile whose disagreement parameter is at
most}\vspace{1mm} that of the original profile and which is either
not supported or in which that agent is not supported.\footnote{For $\left(2,b\right)$, we consider the profile where agent $2$
swaps $\left\{ d,c\right\} $. In that profile, agents $1$ and $3$
are not supported because of the pair $\left\{ c,b\right\} $, and
agent $4$ is not supported because of the pair containing $a$ and
the house immediately above it. For $\left(3,d\right)$, we consider
the profile where agent $3$ swaps $\left\{ c,b\right\} $. In that
profile, agent $3$ is not supported because of the pair $\left\{ d,b\right\} $.
For $\left(4,d\right)$, since $cP_{4}d$, by Remark \textcolor{blue}{\ref{efficiency swaps}}
when $z=a$, we may consider the profile where agent $4$ ranks $cabd$.
In that profile, agent $4$ is not supported because of the pair $\left\{ c,a\right\} $.} It then suffices to determine $\left(2,c\right)$ in the following
profile.\vspace{1mm}

\begin{minipage}[t]{0.19\linewidth}{\itshape%
\begin{tabular}{|c|c|c|c|}
\hline 
\textcolor{blue}{a} & \textcolor{blue}{a} & \textcolor{blue}{a} & b\tabularnewline
\hline 
\textcolor{violet}{c} & \textcolor{blue}{d} & \textcolor{blue}{d} & x\tabularnewline
\hline 
\textcolor{red}{b} & b & \textcolor{violet}{c} & y\tabularnewline
\hline 
\textcolor{red}{d} & \textbf{\textcolor{orange}{c}} & \textcolor{red}{b} & z\tabularnewline
\hline 
\end{tabular}}\end{minipage}\hfill\modifiedjustifiedminipage{0.79}{$\left(1,b\right)$,
$\left(1,d\right)$, $\left(3,b\right)$, $\left(4,a\right)$, and
$\left(4,d\right)$ are determined by efficiency, and consequently
$a$ and $d$ are determined by ETA. $\left(4,c\right)$ is determined
by considering the profile where agent $4$ ranks $abcd$.\footnote{Since $\left(4,a\right)$ is determined by efficiency at all the relevant
profiles, by Remark \textcolor{blue}{\ref{efficiency swaps}} we may
assume $aP_{4}c$.} That profile's disagreement parameter is less}\vspace{1mm} than
that of the original profiles. $\left(2,c\right)$ is then determined
by complementarity.

\end{casenv}
\end{casenv}
\item When agent $2$ ranks $acbd$, by Remark \textcolor{blue}{\ref{both adjacent, reversed order}},
we must have $L_{1}=\left(b,c\right)$ and $L_{2}=\left(c,b\right)$.
The constraints imposed on the rankings of agents $3$ and $4$ are
\[
R_{ab}=R_{bc}=R_{cd}=R_{ac}=R_{cb}=R_{bd}=1.
\]

\begin{casenv}
\item When agent $3$ ranks $abdc$, the constraints require that $cP_{4}d$
and $cP_{4}b$. Thus, the remaining possibilities for $P_{4}$ are
$acdb$ and rankings of the form $cxyz$.\vspace{1mm}

\begin{casenv}[leftmargin=0pt]

\item {\centering{\itshape%
\begin{tabular}{|c|c|c|c|}
\hline 
\textcolor{brown}{a} & \textcolor{brown}{a} & \textcolor{green}{a} & \textcolor{green}{a}\tabularnewline
\hline 
\uline{b} & c & \textcolor{green}{b} & \textcolor{green}{c}\tabularnewline
\hline 
c & b & \textcolor{green}{d} & \textcolor{green}{d}\tabularnewline
\hline 
\textcolor{brown}{d} & \textcolor{brown}{d} & \textcolor{red}{c} & \textcolor{red}{b}\tabularnewline
\hline 
\end{tabular}}\rlap{(\textcolor{blue}{\ref{acbd 1,2 a,d}})}\par}\vspace{1mm}In
this profile, agents $3$ and $4$ are not supported because of the
pairs $\left\{ d,c\right\} $ and $\left\{ d,b\right\} $, respectively.
It then suffices to determine $\left(1,b\right)$ in the following
profile.\vspace{1mm}

\begin{minipage}[t]{0.19\linewidth}{\itshape%
\begin{tabular}{|c|c|c|c|}
\hline 
\textcolor{blue}{a} & \textcolor{blue}{a} & \textcolor{blue}{a} & \textcolor{blue}{a}\tabularnewline
\hline 
\textbf{\textcolor{orange}{b}} & \textcolor{blue}{c} & \textcolor{orange}{b} & \textcolor{blue}{c}\tabularnewline
\hline 
d & \textcolor{violet}{b} & d & d\tabularnewline
\hline 
\textcolor{red}{c} & \textcolor{green}{d} & \textcolor{red}{c} & \textcolor{red}{b}\tabularnewline
\hline 
\end{tabular}}\end{minipage}\hfill\modifiedjustifiedminipage{0.79}{$\left(1,c\right)$,
$\left(3,c\right)$, and $\left(4,b\right)$ are determined by efficiency,
and consequently $a$ and $c$ are determined by ETA. $\left(2,d\right)$
is determined by considering the profile where agent $2$ ranks $abcd$.
That profile's disagreement parameter equals that of}\vspace{1mm} the original
profile, and agents $1$ and $3$ share the same ranking in it, so
that profile is determined by Remark \textcolor{blue}{\ref{rem:same param share ranking}.}\vspace{1mm}

\item {\centering{\itshape%
\begin{tabular}{|c|c|c|c|}
\hline 
\textcolor{brown}{a} & \textcolor{brown}{a} & \textcolor{green}{a} & \textcolor{green}{c}\tabularnewline
\hline 
\uline{b} & c & \textcolor{green}{b} & \textcolor{green}{x}\tabularnewline
\hline 
c & b & \textcolor{green}{d} & \textcolor{green}{y}\tabularnewline
\hline 
\textcolor{brown}{d} & \textcolor{brown}{d} & \textcolor{red}{c} & \textcolor{green}{z}\tabularnewline
\hline 
\end{tabular}}\rlap{(\textcolor{blue}{\ref{acbd 1,2 a,d}})}\par}\vspace{1mm}In
profiles of this form, agents $3$ and $4$ are not supported: agent
$3$ because of the pair $\left\{ d,c\right\} $, and agent $4$ because
of the pair containing $a$ and the house immediately above it. It
then suffices to determine $\left(1,b\right)$ in the following profile.\vspace{1mm}

\begin{minipage}[t]{0.19\linewidth}{\itshape%
\begin{tabular}{|c|c|c|c|}
\hline 
\textcolor{blue}{a} & \textcolor{blue}{a} & \textcolor{blue}{a} & c\tabularnewline
\hline 
\textbf{\textcolor{violet}{b}} & c & b & x\tabularnewline
\hline 
\textcolor{orange}{d} & b & \textcolor{orange}{d} & y\tabularnewline
\hline 
\textcolor{red}{c} & \textcolor{green}{d} & c & z\tabularnewline
\hline 
\end{tabular}}\end{minipage}\hfill\modifiedjustifiedminipage{0.79}{$\left(1,c\right)$
and $\left(4,a\right)$ are determined by efficiency, and consequently
$a$ is determined by ETA. $\left(2,d\right)$ is determined by the
profile where agent $2$ swaps $\left\{ c,b\right\} $. That profile's
disagreement parameter equals that of the original profile,\footnote{This is true in all options, since agent $4$ ranks $c$ first, and
therefore his binary preference over the pairs $\left\{ c,d\right\} $
and $\left\{ c,b\right\} $ does not change across them.}}\vspace{1mm} and agents $1$ and $3$ share the same ranking in
it, so that profile is determined by Remark \textcolor{blue}{\ref{rem:same param share ranking}.}
$\left(4,d\right)$ is determined by considering a profile where agent
$4$ ranks either $acbd$ or $acdb$, while preserving his binary
preference between $b$ and $d$ from the given profile.\footnote{\label{cxyz assume aP_4d}Since agent $4$ cannot receive $a$ under
any ranking of the form $cxyz$, we may assume $aP_{4}d$. This is
justified as follows: if $a$ and $d$ are not adjacent in $P_{4}$,
we may first swap $\left\{ a,b\right\} $; by SP, this does not affect
$\left(4,d\right)$. Once $a$ and $d$ become adjacent in $P_{4}$,
swapping them leaves $\left(4,d\right)$ unchanged by Remark \textcolor{blue}{\ref{efficiency swaps}}.} That profile's disagreement parameter is less than that of the original
profile.\footnote{This is because all the swaps considered in agent $4$'s ranking elevate
$a$, and each such swap reduces the disagreement parameter by $3$.} Since agents $1$ and $3$ share the same ranking, $\left(1,b\right)$
is then determined by complementarity.

\end{casenv}
\item When agent $3$ ranks $acdb$, the constraints require that $bP_{4}c$
and $bP_{4}d$. Thus, the remaining possibilities for $P_{4}$ are
of the form $bxyz$.\vspace{1mm}

\begin{casenv}[leftmargin=0pt]

\item[] {\centering{\itshape%
\begin{tabular}{|c|c|c|c|}
\hline 
\textcolor{brown}{a} & \textcolor{brown}{a} & \textcolor{green}{a} & \textcolor{green}{b}\tabularnewline
\hline 
\uline{b} & \uline{c} & \textcolor{green}{c} & \textcolor{green}{x}\tabularnewline
\hline 
c & b & \textcolor{green}{d} & \textcolor{green}{y}\tabularnewline
\hline 
\textcolor{brown}{d} & \textcolor{brown}{d} & \textcolor{red}{b} & \textcolor{green}{z}\tabularnewline
\hline 
\end{tabular}}\rlap{(\textcolor{blue}{\ref{acbd 1,2 a,d}})}\par}\vspace{1mm}In
profiles of this form, agents $3$ and $4$ are not supported: agent
$3$ because of the pair $\left\{ d,b\right\} $, and agent $4$ because
of the pair containing $a$ and the house immediately above it. It
then suffices to determine either $\left(1,b\right)$ or $\left(2,c\right)$.\vspace{1mm}
\begin{itemize}
\item \begin{minipage}[t]{0.20\linewidth}{\itshape%
\begin{tabular}{|c|c|c|c|}
\hline 
\textcolor{blue}{a} & \textcolor{blue}{a} & \textcolor{blue}{a} & b\tabularnewline
\hline 
\textbf{\textcolor{violet}{b}} & c & \textcolor{green}{c} & x\tabularnewline
\hline 
\textcolor{orange}{d} & b & \textcolor{violet}{d} & y\tabularnewline
\hline 
\textcolor{red}{c} & \textcolor{green}{d} & \textcolor{red}{b} & z\tabularnewline
\hline 
\end{tabular}}\end{minipage}\hfill\modifiedjustifiedminipage{0.78}{When $cP_{4}d$,
we use this profile to determine $\left(1,b\right)$. $\left(1,c\right)$,
$\left(3,b\right)$, and $\left(4,a\right)$ are determined by efficiency,
and consequently $a$ is determined by ETA. $\left(2,d\right)$ is
determined by considering the profile where agent $2$}\vspace{1mm} swaps $\left\{ c,b\right\} $,
and $\left(4,d\right)$ is determined by considering
the profile where agent $4$ ranks $acbd$.\footnote{Since agent $4$ cannot receive $a$ under any ranking of the form
$bxyz$, we may assume $aP_{4}d$ by Remark \textcolor{blue}{\ref{efficiency swaps}}.} Both profiles are determined by the lemma.\footnote{In the profile where agent $2$ swaps $\left\{ c,b\right\} $, note
that three agents prefer $a$ to every other house (agents $1$, $2$,
and $3$), three agents prefer $b$ to both $c$ and $d$ (agents
$1$, $2$, and $4$), and three agents prefer $c$ to $d$ (agents
$2$, $3$, and, by assumption, agent $4$).} $\left(3,c\right)$ is determined by considering the profile where
agent $3$ swaps $\left\{ d,b\right\} $. That profile's disagreement
parameter equals that of the original profile,\footnote{This is true in all relevant options, since agent $4$'s binary preference
over the pairs $\left\{ c,d\right\} $ and $\left\{ d,b\right\} $
does not change across them.} and agents $2$ and $3$ share the same ranking in it, so that profile
is determined by Remark \textcolor{blue}{\ref{rem:same param share ranking}.}
$\left(1,b\right)$ is then determined by complementarity.\vspace{1mm}
\item \begin{minipage}[t]{0.20\linewidth}{\itshape%
\begin{tabular}{|c|c|c|c|}
\hline 
\textcolor{blue}{a} & \textcolor{blue}{a} & \textcolor{blue}{a} & \textcolor{green}{b}\tabularnewline
\hline 
b & \textbf{\textcolor{violet}{c}} & c & x\tabularnewline
\hline 
c & \textcolor{orange}{d} & \textcolor{orange}{d} & y\tabularnewline
\hline 
\textcolor{green}{d} & \textcolor{red}{b} & \textcolor{red}{b} & z\tabularnewline
\hline 
\end{tabular}}\end{minipage}\hfill\modifiedjustifiedminipage{0.78}{When $dP_{4}c$,
we can use this profile to determine $\left(2,c\right)$. $\left(2,b\right)$,
$\left(3,b\right)$, $\left(4,a\right)$, and $\left(4,c\right)$
are determined by efficiency, and consequently $a$ is determined
by ETA. $\left(1,d\right)$ is determined by considering the profile
where agent}\vspace{1mm} $1$ swaps $\left\{ b,c\right\} $. That
profile's disagreement parameter equals that of the original profile,
and it is not supported.\footnote{In that profile, agents $2$ and $3$ are not supported because of
the pair $\left\{ d,b\right\} $, and agent $4$ is not supported
because of the pair containing $a$ and the house immediately above
it. Moreover, the disagreement parameter equals that of the original
profile in all options, since agent $4$'s binary preference over
the pairs $\left\{ b,d\right\} $ and $\left\{ b,c\right\} $ does
not change between them, as he ranks $b$ first.} $\left(4,b\right)$ is determined by considering the profile where
agent $4$ ranks $bacd$. The disagreement parameter of that profile
is less than that of the original one.\footnote{In order to reach that ranking, agent $4$ must swap the pair $\left\{ c,d\right\} $,
and that swap lowers the disagreement parameter by $3$. Agent $4$
may also perform additional swaps that elevate $a$, each of which
decreases it even more. Thus this transition yields a net reduction
of at least $3$, whereas the earlier $\left\{ b,d\right\} $ swap
in agent $2$'s ranking raised it by only $1$.} Since agents $2$ and $3$ share the same ranking, $\left(2,c\right)$
is then determined by complementarity.
\end{itemize}
\end{casenv}
\item When agent $3$ ranks $adxy$, the constraints require that $yP_{4}xP_{4}d$.
But then the agent $i\in\left\{ 1,2\right\} $ who ranks $axyd$ cannot
receive $y$,\footnote{Agent $i$ cannot receive $y$ because agent $3$ is the only other
agent who prefers $x$ to $y$, so agent $3$ must receive $x$ first.
For this to occur, either agent $4$ or agent $3-i$ must receive
$d$ beforehand, yet both of them prefer $y$ over $d$.} which contradicts the requirement imposed by that agent's relaxation
pair.
\item Otherwise, agent $3$ does not rank $a$ first. We first show that
every profile satisfying the constraints in this case must be one
in which agent $3$ ranks $b$ first and prefers $a$ to $c$, and
agent $4$ ranks $c$ first and prefers $a$ to $b$.

Since $R_{cd}=1$, by Remark \textcolor{blue}{\ref{rem:R_xd>0 and 3 prefers d}},
we may assume that agent $3$ does not rank $d$ first. If agent $3$
were to rank $c$ first, then the requirement $R_{bc}=1$ would imply
that agent $4$ cannot rank $c$ first, so he must rank $d$ first;
but then agent $1$ would be unable to receive $c$.\footnote{Agent $1$ cannot receive $c$ because agent $4$ is the only other
agent who prefers $b$ to $c$, so agent $4$ must receive $b$ first.
For this to occur, either agent $2$ or $3$ must receive $d$ beforehand,
yet both of them prefer $c$ over $d$.} This contradicts the requirement imposed by the relaxation pair $L_{1}=\left(b,c\right)$.
Thus, agent $3$ must rank $b$ first. This, in turn, implies that
the constraints require $aP_{4}b$ and $cP_{4}b$. In particular,
agent $4$ cannot rank $b$ first. He also cannot rank $d$ first,
since in that case agent $2$ would be unable to receive $b$.\footnote{To see this, apply the argument used in the previous footnote, and
exchange the roles of agents $1$ and $2$ and houses $b$ and $c$.} Thus, he must rank $c$ first. This in turn implies that the constraints
require $aP_{3}c$, so agent $3$ must rank $b$ first and prefer
$a$ to $c$, while agent $4$ must rank $c$ first and prefer $a$
to $b$. We now consider all remaining possibilities collectively.\vspace{1mm}

\begin{casenv}[leftmargin=0pt]

\item[] {\centering{\itshape%
\begin{tabular}{|c|c|c|c|}
\hline 
\textcolor{brown}{a} & \textcolor{brown}{a} & \textcolor{green}{b} & \textcolor{green}{c}\tabularnewline
\hline 
\uline{b} & c & \textcolor{green}{x} & \textcolor{green}{$\widetilde{x}$}\tabularnewline
\hline 
c & b & \textcolor{green}{y} & \textcolor{green}{$\widetilde{y}$}\tabularnewline
\hline 
\textcolor{brown}{d} & \textcolor{brown}{d} & \textcolor{green}{z} & \textcolor{green}{$\widetilde{z}$}\tabularnewline
\hline 
\end{tabular}}\rlap{(\textcolor{blue}{\ref{acbd 1,2 a,d}})}\par}\vspace{1mm}In
profiles of this form, both agents $3$ and $4$ are not supported
because of the pair containing $a$ and the house immediately above
it. It then suffices to determine $\left(1,b\right)$ in the following
profile.\vspace{1mm}

\begin{minipage}[t]{0.19\linewidth}{\itshape%
\begin{tabular}{|c|c|c|c|}
\hline 
\textcolor{blue}{a} & \textcolor{blue}{a} & b & c\tabularnewline
\hline 
\textbf{\textcolor{violet}{b}} & c & x & $\widetilde{x}$\tabularnewline
\hline 
\textcolor{orange}{d} & b & y & $\widetilde{y}$\tabularnewline
\hline 
\textcolor{red}{c} & \textcolor{green}{d} & z & $\widetilde{z}$\tabularnewline
\hline 
\end{tabular}}\end{minipage}\hfill\modifiedjustifiedminipage{0.79}{$\left(1,c\right)$,
$\left(3,a\right)$, and $\left(4,a\right)$ are determined by efficiency,
and consequently $a$ is determined by ETA. It then suffices to determine
$\left(i,d\right)$ for $i=2,3,4$. $\left(2,d\right)$ is determined
by considering the profile where agent $2$ swaps $\left\{ c,b\right\} $.
If $cP_{3}d$}\vspace{1mm} (i.e., the ranking of agent $3$ is $bacd$),
then that profile is determined by the lemma.\footnote{This is because agents $1$, $2$, and $4$ prefer $a$ to $b$; agents
$1$, $2$, and $3$ prefer $a$ and $b$ to both $c$ and $d$; and
agents $2$, $3$, and $4$ prefer $c$ to $d$.} Otherwise, that profile's disagreement parameter equals that of the
original profile, and it is not supported.\footnote{In this case, the swap of the pair $\left\{ c,d\right\} $ in agent
$1$'s ranking increases the disagreement parameter by only $1$,
so the subsequent swap of $\left\{ c,b\right\} $ in agent $2$'s
ranking offsets that increase entirely. In the resulting profile,
agent $1$ is not supported because of the pair $\left\{ d,c\right\} $,
and agents $3$ and $4$ are not supported because of the pair containing
$a$ and the house immediately above it.} $\left(3,d\right)$ is determined as follows: if $cP_{3}d$, then
it is determined by considering the profile where agent $3$ ranks
$acbd$. That profile's disagreement parameter is less than that of
the original profile.\footnote{The transition from $bacd$ to $acbd$ in agent $3$'s ranking reduces
the disagreement parameter by $4$, whereas the swap of the pair $\left\{ c,d\right\} $
in agent $1$'s ranking increased it by only $3$.} Otherwise, $\left(3,c\right)$ is determined by efficiency, and $\left(3,b\right)$
is determined by considering the profile where agent $3$ ranks $bacd$.
That profile's disagreement parameter is at most that of the original
profile, and agent $3$ is not supported there.\footnote{In this case, agent $3$'s transition reduces the parameter by at
least the amount by which swapping the pair $\left\{ c,d\right\} $
in agent $1$'s ranking increased it, since agent $3$'s transition
swaps $\left\{ c,d\right\} $ in the opposite direction, possibly
after also swapping the pair $\left\{ d,a\right\} $, which decreases
the parameter even further. In the resulting profile, agent $3$ is
not supported because of the pair $\left\{ b,a\right\} .$} $\left(3,d\right)$ is then determined by complementarity. Finally,
$\left(4,d\right)$ is determined by considering the profile where
agent $4$ ranks either $acbd$ or $acdb$, while preserving his binary
preference between $b$ and $d$ from the given profile.\footnote{A similar argument to the one used in footnote \textcolor{blue}{\ref{cxyz assume aP_4d}}
applies here.} That profile's disagreement parameter is at most that of the original
profile, and it is not supported.\footnote{The swap of the pair $\left\{ c,d\right\} $ in agent $1$'s ranking
increases the disagreement parameter by at most $3$, while the swap
of the pair $\left\{ c,a\right\} $ in agent $4$'s ranking decreases
it by exactly $3$ (since $aP_{3}c$). Any additional swaps that may
occur in agent $4$'s transition elevate $a$, and these also decrease
the parameter, as agents $1$ and $2$ rank $a$ first. In the resulting
profile, agent $1$ is not supported because of the pair $\left\{ d,c\right\} $,
and agent $3$ is not supported because of the pair containing $a$
and the house immediately above it. Moreover, if $dP_{4}b$, then
agent $4$ is not supported because of that pair, and otherwise agents
$2$ and $4$ share the same ranking.}

\end{casenv}
\end{casenv}
\item When agent $2$ ranks $acdb$, the constraints from agent $1$ are
$\widetilde{R}_{ab}^{1}=1$, $\widetilde{R}_{bc}^{1}=2$, and $\widetilde{R}_{cd}^{1}=1$,
and the constraints from agent $2$ are $\widetilde{R}_{ac}^{2}=1$,
$\widetilde{R}_{cd}^{2}=1$, and $\widetilde{R}_{db}^{2}=2$. By Remark
\textcolor{blue}{\ref{x,y,z}} applied with $i=2$ and $\left(x,y,z\right)=\left(d,b,c\right)$,
we must have either $L_{2}=\left(d,b\right)$, or $L_{1}=\left(b,c\right)$,
or $L_{1}=L_{2}=\left(c,d\right)$.
\begin{casenv}
\item When $L_{2}=\left(d,b\right)$, note that agents $3$ and $4$ cannot
rank $b$ first, because in that case agent $2$ would be unable to
receive it. We split according to the value of $L_{1}$.

If $L_{1}=\varnothing$ or $L_{1}=\left(a,b\right)$, then $R_{bc}=2$
and $R_{cd}=1$. These constraints imply that one of the agents $j\in\left\{ 3,4\right\} $
must satisfy $bP_{j}cP_{j}d$. Since agent $1$ ranks $abcd$ and
all four agents have distinct rankings, this forces agent $j$ to
rank $b$ first, but then, as noted earlier, agent $2$ would not
be supported.

Otherwise, by Remark \textcolor{blue}{\ref{both adjacent, same order}},
$L_{1}=\left(b,c\right)$. Then, the constraints imposed on the rankings
of agents $3$ and $4$ are 
\[
R_{ab}=R_{bc}=R_{cd}=R_{ac}=R_{db}=1.
\]
Note that by Remark \ref{rem:R_xd>0 and 3 prefers d}, we may assume that agent $3$ does not rank
$d$ first.
\begin{itemize}
\item When agent $3$ ranks $abdc$, the constraints require that $cP_{4}dP_{4}b$,
but in that case agent $2$ cannot receive $b$.\footnote{In that case, agent $2$ cannot receive $b$ because for that to occur,
agent $4$ would have to receive $d$ beforehand (as he is the only
other agent who ranks $d$ over $b$). For agent $4$ to receive $d$,
either agent $1$ or agent $3$ would need to receive $c$ first,
yet both of them prefer $b$ over $c$.}
\item When agent $3$ ranks $acbd$, the constraints require that $dP_{4}bP_{4}c$,
but in that case agent $1$ cannot receive $c$.\footnote{\label{fn:43}This is because agent $4$ would have to receive $b$ beforehand,
and for that to occur either agent $2$ or agent $3$ would need to
receive $d$ first, yet both prefer $c$ over $d$.}
\item When agent $3$ ranks $adxy$, the constraints require that $cP_{4}d$.
In this case, agent $4$ cannot rank $a$ first because agent $3$'s
ranking precedes his, and as noted earlier he cannot rank $b$ first
either. Thus, since $cP_{4}d$, he must rank $c$ first, but then
agent $1$ cannot receive $c$.\footnote{This is because agent $3$ would need to prefer $b$ over $c$ and
receive $b$, and for that to occur either agent $2$ or agent $4$
would need to receive $d$ first, yet both prefer $c$ over $d$.}
\item As noted earlier, when agent $3$ ranks $b$ first, agent $2$ cannot
receive it.
\item Otherwise, agent $3$ ranks $c$ first. Then the constraints require
that $bP_{4}c$, so agent $4$ cannot rank $c$ first. Since agent
$3$'s ranking precedes agent $4$'s, the latter must rank $d$ first.
In that case, agent $1$ cannot receive $c$.\footref{fn:43}
\end{itemize}
\item When $L_{1}=\left(b,c\right)$, it remains to consider the cases where
$L_{2}=\varnothing$ and $L_{2}=\left(a,c\right)$. In these cases,
we have $R_{db}=2$, $R_{cd}=1$, and $R_{bc}=1$. These constraints
imply that exactly one agent $i\in\left\{ 3,4\right\} $ must rank
$b$ over $c$, and that agent must rank $d$ over $b$. Consequently,
agent $1$ cannot receive $c$.\footnote{For that to occur, agent $i$ would need to receive $b$ beforehand,
and for this to happen, one of the two other agents would need to
receive $d$ first, yet both prefer $c$ over $d$.}
\item When $L_{1}=L_{2}=\left(c,d\right)$, we have $R_{db}=R_{bc}=2$.
Since agents $3$ and $4$ have different rankings, at least one of
them must rank $d$ first. In that situation, agent $1$ cannot receive
$d$.
\end{casenv}
\item When agent $2$ ranks $adbc$, we may rename the houses so that his
ranking becomes $abcd$. Under this renaming, agent $1$'s ranking
becomes $acdb$. Swapping the names of agents $1$ and $2$ then reduces
this case to the previous one.\footnote{We may also swap the names of agents $3$ and $4$ so that the ranking
of agent $3$ precedes that of agent $4$.}
\item When agent $2$ ranks $adcb$, agents $1$ and $2$ disagree on both
adjacent pairs $\left\{ b,c\right\} $ and $\left\{ c,d\right\} $,
and by Remark \textcolor{blue}{\ref{both adjacent, reversed order}},
agents $3$ and $4$ cannot satisfy the constraints.
\item When agent $2$ ranks $bacd$, Remark \textcolor{blue}{\ref{both adjacent, reversed order}}
implies that $L_{1}=\left(a,b\right)$ and $L_{2}=\left(b,a\right)$.
Under these conditions, the constraints become
\[
R_{ab}=R_{bc}=R_{cd}=R_{ba}=R_{ac}=1.
\]
Note that by Remarks \textcolor{blue}{\ref{rem:R_xd>0 and 3 prefers d},
}\ref{rem:R_xc>0, 2-b and 3-c}, and \textcolor{blue}{\ref{rem:2 and 3 prefer b, 4 not, and b in L_1}},
we may assume that agent $3$ ranks $a$ first.

\begin{rem}
\label{bacd agent 3 ranks a first so 4 ranks b first}In all subsequent
subcases where agent $3$ ranks $a$ first, agent $4$ must rank $b$
first. Otherwise, agent $2$ would be unable to receive $a$.
\end{rem}

\begin{casenv}
\item When agent $3$ ranks $abdc$, the constraints require that $cP_{4}d$.
Thus, by Remark \textcolor{blue}{\ref{bacd agent 3 ranks a first so 4 ranks b first}},
the remaining possibilities for agent $4$'s ranking are of the form
$bcxy$ (because agent $2$ ranks $bacd$).\vspace{1mm}

\begin{casenv}[leftmargin=0pt]

\item[] {\centering{\itshape%
\begin{tabular}{|c|c|c|c|}
\hline 
\uline{a} & b & \textcolor{green}{a} & \textcolor{green}{b}\tabularnewline
\hline 
b & a & \textcolor{green}{b} & \textcolor{green}{c}\tabularnewline
\hline 
\textcolor{brown}{c} & \textcolor{brown}{c} & \textcolor{green}{d} & \textcolor{green}{x}\tabularnewline
\hline 
\textcolor{brown}{d} & \textcolor{brown}{d} & \textcolor{red}{c} & \textcolor{green}{y}\tabularnewline
\hline 
\end{tabular}}\rlap{(\textcolor{blue}{\ref{bacd 1,2 c,d}})}\par}\vspace{1mm}In
profiles of this form, agents $3$ and $4$ are not supported: agent
$3$ because of the pair $\left\{ d,c\right\} $, and agent $4$ because
of the pair containing $a$ and the house immediately above it. It
then suffices to determine $\left(1,a\right)$ in the following profile.\vspace{1mm}

\begin{minipage}[t]{0.19\linewidth}{\itshape%
\begin{tabular}{|c|c|c|c|}
\hline 
\textbf{\textcolor{violet}{a}} & b & a & \textcolor{green}{b}\tabularnewline
\hline 
\textcolor{orange}{c} & a & b & \textcolor{violet}{c}\tabularnewline
\hline 
\textcolor{red}{b} & \textcolor{green}{c} & \textcolor{green}{d} & x\tabularnewline
\hline 
\textcolor{green}{d} & \textcolor{green}{d} & \textcolor{red}{c} & y\tabularnewline
\hline 
\end{tabular}}\end{minipage}\hfill\modifiedjustifiedminipage{0.79}{$\left(1,b\right)$,
$\left(3,c\right)$, and $\left(4,a\right)$ are determined by efficiency.
Since $\left\{ a,b\right\} $ is the only pair on which the agents
do not have near unanimous agreement, $\left(2,c\right)$, $\left(2,d\right)$,
and $\left(3,d\right)$ are determined by considering the profiles
where agent $i\in\left\{ 2,3\right\} $}\vspace{1mm} swaps $\left\{ a,b\right\} $
and applying the lemma. Likewise, $\left(1,d\right)$ is determined
by considering the profile where agent $1$ ranks $bacd$.\footnote{Note that agents $2$, $3$, and $4$ all prefer $b$ to $c$. Therefore,
agent $1$ may swap $\left\{ b,c\right\} $ without causing that pair
to lose its near unanimous agreement among the agents.} $\left(4,b\right)$ is determined by considering the profile where
agent $4$ ranks $bacd$. That profile's disagreement parameter is
at most that of the original profile,\footnote{Swapping the pair $\left\{ b,c\right\} $ in agent $1$'s ranking
increased the disagreement parameter by $3$. In agent $4$'s transition,
swapping the pair $\left\{ c,a\right\} $ decreases it by $3$, and
an additional swap of $\left\{ a,d\right\} $, if it occurs, decreases
it even further.} and agents $2$ and $4$ share the same ranking in it, so that profile
is determined by Remark \textcolor{blue}{\ref{rem:same param share ranking}.}
Finally, $\left(1,a\right)$ is determined by complementarity.

\end{casenv}
\item When agent $3$ ranks $acxy$, Remark \textcolor{blue}{\ref{bacd agent 3 ranks a first so 4 ranks b first}}
implies that agent $4$ must rank $b$ first. His ranking cannot be
$bacd$, since this is agent $2$'s ranking, and in this situation
the constraints do not impose any additional restrictions on the remaining
options.\vspace{1mm}

\begin{casenv}[leftmargin=0pt]

\item[] {\centering{\itshape%
\begin{tabular}{|c|c|c|c|}
\hline 
\uline{a} & \uline{b} & \textcolor{green}{a} & \textcolor{green}{b}\tabularnewline
\hline 
b & a & \textcolor{green}{c} & \textcolor{green}{$\widetilde{x}$}\tabularnewline
\hline 
\textcolor{brown}{c} & \textcolor{brown}{c} & \textcolor{green}{x} & \textcolor{green}{$\widetilde{y}$}\tabularnewline
\hline 
\textcolor{brown}{d} & \textcolor{brown}{d} & \textcolor{green}{y} & \textcolor{green}{$\widetilde{z}$}\tabularnewline
\hline 
\end{tabular}}\rlap{(\textcolor{blue}{\ref{bacd 1,2 c,d}})}\par}\vspace{1mm}In
profiles of this form, agents $3$ and $4$ are not supported: agent
$3$ because of the pair containing $b$ and the house immediately
above it, and agent $4$ because of the pair containing $a$ and the
house immediately above it, except in the ranking $badc$, where he
is not supported because of the pair $\left\{ d,c\right\} $. It then
suffices to determine either $\left(1,a\right)$ or $\left(2,b\right)$.\vspace{1mm}
\begin{itemize}
\item \begin{minipage}[t]{0.20\linewidth}{\itshape%
\begin{tabular}{|c|c|c|c|}
\hline 
\textbf{\textcolor{violet}{a}} & b & a & b\tabularnewline
\hline 
\textcolor{orange}{d} & a & c & c\tabularnewline
\hline 
\textcolor{red}{b} & c & b & $\widetilde{y}$\tabularnewline
\hline 
\textcolor{red}{c} & \textcolor{green}{d} & \textcolor{green}{d} & $\widetilde{z}$\tabularnewline
\hline 
\end{tabular}}\end{minipage}\hfill\modifiedjustifiedminipage{0.78}{When $cP_{4}d$
and $bP_{3}d$, we use this profile to determine $\left(1,a\right)$.\footnote{\label{fn:38}Since agent $4$'s ranking is not $bacd$, he must rank
$c$ in the second position.} $\left(1,b\right)$, $\left(1,c\right)$, and $\left(4,a\right)$
are determined by efficiency. $\left(i,d\right)$ for $i=2,4$ is
determined by considering the profile where agent $i$ ranks $abcd$,\footnote{If $dP_{4}a$, then by Remark \textcolor{blue}{\ref{efficiency swaps}}
the transition to $abcd$ is justified.} and $\left(3,d\right)$ is}\vspace{1mm} determined by considering
the profile where agent $3$ ranks $bacd$. All these profiles are
determined by the lemma.\footnote{The swap of $\left\{ b,c\right\} $ in agent $3$'s ranking and of
$\left\{ a,c\right\} $ (and possibly $\left\{ a,d\right\} $) in
agent $4$'s ranking do not cause any of these pairs to lose their
near-unanimous agreement.} $\left(1,a\right)$ is then determined by complementarity.\vspace{1mm}
\item \begin{minipage}[t]{0.20\linewidth}{\itshape%
\begin{tabular}{|c|c|c|c|}
\hline 
\textbf{a} & \textcolor{blue}{b} & a & \textcolor{blue}{b}\tabularnewline
\hline 
c & a & c & c\tabularnewline
\hline 
d & \uline{c} & d & $\widetilde{y}$\tabularnewline
\hline 
\textcolor{red}{b} & \uline{d} & \textcolor{red}{b} & $\widetilde{z}$\tabularnewline
\hline 
\end{tabular}}\end{minipage}\hfill\modifiedjustifiedminipage{0.78}{When $cP_{4}d$
and $dP_{3}b$, we use this profile to determine $\left(1,a\right)$.\footref{fn:38}
$\left(1,b\right)$, $\left(3,b\right)$, and $\left(4,a\right)$
are determined by efficiency, and consequently $b$ is determined
by ETA. It then suffices to determine $\left(2,c\right)$ and $\left(2,d\right)$
in the}\vspace{1mm} following profile.\vspace{1mm}
\begin{itemize}
\item \begin{minipage}[t]{0.205\linewidth}{\itshape%
\begin{tabular}{|c|c|c|c|}
\hline 
\textcolor{blue}{a} & \textcolor{blue}{a} & \textcolor{blue}{a} & \textcolor{green}{b}\tabularnewline
\hline 
c & \textcolor{orange}{b} & c & c\tabularnewline
\hline 
d & \textbf{\textcolor{violet}{c}} & d & $\widetilde{y}$\tabularnewline
\hline 
\textcolor{red}{b} & \textbf{\textcolor{green}{d}} & \textcolor{red}{b} & $\widetilde{z}$\tabularnewline
\hline 
\end{tabular}}\end{minipage}\hfill\modifiedjustifiedminipage{0.775}{$\left(1,b\right)$,
$\left(3,b\right)$, and $\left(4,a\right)$ are determined by efficiency,
and consequently $a$ is determined by ETA. $\left(2,d\right)$ is
determined by considering the profile where agent $2$ swaps $\left\{ b,c\right\} $.
That profile's disagreement parameter}\vspace{1mm} equals that of
the original profile,\footnote{Swapping $\left\{ b,c\right\} $ and $\left\{ b,d\right\} $ in agent
$1$'s ranking each increases the disagreement parameter by $1$,
whereas swapping $\left\{ b,a\right\} $ and $\left\{ b,c\right\} $
in agent $2$'s ranking each decreases it by $1$.} and agents $1$ and $3$ share the same ranking in it, so that profile
is determined by Remark \textcolor{blue}{\ref{rem:same param share ranking}.}
$\left(4,b\right)$ is determined by considering the profile where
agent $4$ ranks $bacd$. That profile's disagreement parameter is
less than that of the original profile.\footnote{The swap of $\left\{ c,a\right\} $ in agent $4$'s ranking decreases
the disagreement parameter by $3$, and if $dP_{4}a$, then the swap
of $\left\{ d,a\right\} $ decreases it even further.} Finally, $\left(2,c\right)$ is determined by complementarity.\vspace{1mm}
\end{itemize}
\item \begin{minipage}[t]{0.20\linewidth}{\itshape%
\begin{tabular}{|c|c|c|c|}
\hline 
a & \textbf{\textcolor{violet}{b}} & \textcolor{green}{a} & b\tabularnewline
\hline 
b & \textcolor{orange}{c} & \textcolor{violet}{c} & $\widetilde{x}$\tabularnewline
\hline 
\textcolor{green}{c} & \textcolor{red}{a} & x & $\widetilde{y}$\tabularnewline
\hline 
d & \textcolor{green}{d} & y & $\widetilde{z}$\tabularnewline
\hline 
\end{tabular}}\end{minipage}\hfill\modifiedjustifiedminipage{0.78}{When $dP_{4}c$,
we use this profile to determine $\left(2,b\right)$. $\left(2,a\right)$,
$\left(3,b\right)$, and $\left(4,c\right)$ are determined by efficiency.
$\left(1,c\right)$ is determined by considering the profile where
agent $1$ swaps $\left\{ a,b\right\} $. If $aP_{4}c$, that profile
is determined by the}\vspace{1mm} lemma; otherwise, its disagreement
parameter equals that of the original profile, and it is not supported.\footnote{When $cP_{4}a$, the swap of $\left\{ a,c\right\} $ in agent $2$'s
ranking increases the parameter by only $1$, while the swap $\left\{ a,b\right\} $
in agent $1$'s ranking decreases it by $1$. In the resulting profile,
agents $2$ and $4$ are not supported because of $\left\{ c,a\right\} $
and $\left\{ d,c\right\} $, respectively, and agent $3$ is not supported
because of the pair containing $b$ and the house immediately above
it.} $\left(2,d\right)$ is determined by considering the profile where
agent $2$ ranks $abcd$ and applying the lemma. $\left(3,a\right)$
and $\left(3,d\right)$ are determined by considering the profile
where agent $3$ ranks $abcd$.\footnote{We use the fact that agent $3$ cannot receive $b$ even when $bP_{3}d$,
together with Remark \textcolor{blue}{\ref{efficiency swaps}} to
justify that the same transition also determines $\left(3,d\right)$
when $dP_{3}b$.} That profile's disagreement parameter is at most that of the original
profile,\footnote{Swapping $\left\{ c,b\right\} $ in agent $3$'s ranking decreases
the disagreement parameter by $3$, which is at least as large as
the increase produced by swapping $\left\{ a,c\right\} $ in agent
$2$'s ranking. An additional swap of $\left\{ d,b\right\} $ in agent
$3$'s ranking, if it occurs, may reduce the parameter even further.} and agents $1$ and $3$ share the same ranking in it, so that profile
is determined by Remark \textcolor{blue}{\ref{rem:same param share ranking}.}
Consequently, $\left(2,b\right)$ is determined by complementarity.
\end{itemize}
\end{casenv}
\item When agent $3$ ranks $adxy$, the constraints require that $cP_{4}d$.
Thus, by Remark \textcolor{blue}{\ref{bacd agent 3 ranks a first so 4 ranks b first}},
the remaining possibilities for agent $4$'s ranking are of the form
$bc\widetilde{x}\widetilde{y}$, since agent $2$ ranks $bacd$.\vspace{1mm}

\begin{casenv}[leftmargin=0pt]

\item[] {\centering{\itshape%
\begin{tabular}{|c|c|c|c|}
\hline 
\uline{a} & b & \textcolor{green}{a} & \textcolor{green}{b}\tabularnewline
\hline 
b & a & \textcolor{green}{d} & \textcolor{green}{c}\tabularnewline
\hline 
\textcolor{brown}{c} & \textcolor{brown}{c} & \textcolor{green}{x} & \textcolor{green}{$\widetilde{x}$}\tabularnewline
\hline 
\textcolor{brown}{d} & \textcolor{brown}{d} & \textcolor{green}{y} & \textcolor{green}{$\widetilde{y}$}\tabularnewline
\hline 
\end{tabular}}\rlap{(\textcolor{blue}{\ref{bacd 1,2 c,d}})}\par}\vspace{1mm}In
profiles of this form, agent $3$ is not supported because of the
pair containing $b$ and the house immediately above it, and similarly
agent $4$ when the same reasoning is applied with $a$ in place of
$b$. It then suffices to determine $\left(1,a\right)$ in the following
profile.\vspace{1mm}

\begin{minipage}[t]{0.19\linewidth}{\itshape%
\begin{tabular}{|c|c|c|c|}
\hline 
\textbf{\textcolor{orange}{a}} & \textcolor{blue}{b} & a & \textcolor{blue}{b}\tabularnewline
\hline 
c & \textcolor{violet}{a} & d & c\tabularnewline
\hline 
\textcolor{red}{b} & \textcolor{green}{c} & x & $\widetilde{x}$\tabularnewline
\hline 
d & \textcolor{green}{d} & y & $\widetilde{y}$\tabularnewline
\hline 
\end{tabular}}\end{minipage}\hfill\modifiedjustifiedminipage{0.79}{$\left(1,b\right)$,
$\left(3,b\right)$, and $\left(4,a\right)$ are determined by efficiency,
and consequently $b$ is determined by ETA. $\left(2,c\right)$ and
$\left(2,d\right)$ are determined by considering the profile where
agent $2$ swaps $\left\{ b,a\right\} $. If $bP_{3}c$, that profile
is determined by the lemma;}\vspace{1mm} otherwise, its disagreement
parameter equals that of the original profile and it is not supported.\footnote{In this case, the swap of $\left\{ b,c\right\} $ in agent $1$'s
ranking increased the disagreement parameter by only $1$, which is
exactly offset by the decrease induced by the swap of $\left\{ b,a\right\} $
in agent $2$'s ranking. In the resulting profile, agents $1$ and
$3$ are not supported because of the pair $\left\{ c,b\right\} $,
and agent $4$ is not supported because of the pair containing $a$
and the house immediately above it.} $\left(3,a\right)$ is determined by considering the profile where
agent $3$ ranks $abcd$. That profile's disagreement parameter is
less than that of the original profile.\footnote{Here, the swaps of $\left\{ d,b\right\} $ and $\left\{ d,c\right\} $
in agent $3$'s ranking reduce the parameter by $3$ each, and the
increment caused by swapping $\left\{ c,b\right\} $ in agent $1$'s
ranking is less than the total reduction of these swaps.} $\left(1,a\right)$ is then determined by complementarity.

\end{casenv}
\end{casenv}
\item When agent $2$ ranks $badc$, agents $1$ and $2$ disagree on both
adjacent pairs $\left\{ a,b\right\} $ and $\left\{ c,d\right\} $,
and by Remark \textcolor{blue}{\ref{both adjacent, reversed order}},
agents $3$ and $4$ cannot satisfy the constraints.
\item When agent $2$ ranks $bcad$, by Remark \textcolor{blue}{\ref{x,y,z}},
we must have $L_{1}=\left(a,b\right)$, or $L_{2}=\left(c,a\right)$,
or $L_{1}=L_{2}=\left(b,c\right)$. Note that in the first two cases,
we have $R_{cd}=R_{bc}=1$, so by Remarks \textcolor{blue}{\ref{rem:R_xd>0 and 3 prefers d}}
and \ref{rem:R_xc>0, 2-b and 3-c} we may assume that agent $3$ ranks
either $a$ or $b$ first.
\begin{casenv}
\item When $L_{1}=\left(a,b\right)$, by Remark \textcolor{blue}{\ref{rem:2 and 3 prefer b, 4 not, and b in L_1}}
we may assume that agent $3$ does not rank $b$ first. Hence, agent
$3$ ranks $a$ first, then agent $2$ cannot receive $a$. Hence,
we must have $L_{2}\neq\left(c,a\right)$, but then $R_{ca}=2$, which
requires $cP_{3}a$, a contradiction.
\item When $L_{2}=\left(c,a\right)$, we may assume $L_{1}=\varnothing$
or $L_{1}=\left(c,d\right)$, since $L_{1}=\left(b,c\right)$ is redundant
by Remark \textcolor{blue}{\ref{both adjacent, same order}}, and
the case $L_{1}=\left(a,b\right)$ was already addressed. If agent
$3$ ranks $a$ first, then agent $2$ cannot receive $a$. Otherwise,
agent $3$ ranks $b$ first, but this violates the constraint $R_{ab}=2$.
\item When $L_{1}=L_{2}=\left(b,c\right)$, the constraints $R_{ca}=R_{ab}=2$
imply that both agents $3$ and $4$ rank $c$ above $a$ and $a$
above $b$. Hence, agent $3$ ranks $c$ first, and agent $4$ ranks
either $c$ or $d$ first, but in either case agent $1$ cannot receive
$c$.
\end{casenv}
\item When agent $2$ ranks $bcda$:
\begin{casenv}
\item When $L_{1}=L_{2}=\varnothing$, the constraints fix the rankings
of agents $3$ and $4$ to be $cdab$ and $dabc$, respectively, which
yields a degenerate profile.
\item When $L_{2}=\left(d,a\right)$, agent $2$ cannot receive $a$.
\item When $L_{1}=\left(a,b\right)$, we have already handled the case $L_{2}=\left(d,a\right)$,
and the cases $L_{2}=\left(b,c\right)$ and $L_{2}=\left(c,d\right)$
are redundant by Remark \textcolor{blue}{\ref{both adjacent, same order}},
so we may assume $L_{2}=\varnothing$. Moreover, since $R_{bc}=R_{cd}=1$,
by Remarks \textcolor{blue}{\ref{rem:R_xd>0 and 3 prefers d}}, \ref{rem:R_xc>0, 2-b and 3-c},
and \textcolor{blue}{\ref{rem:2 and 3 prefer b, 4 not, and b in L_1},}
we may assume that agent $3$ ranks $a$ first, but this violates
the constraint $R_{da}=2$.
\item When $L_{1}=L_{2}=\left(b,c\right)$, the constraints determine agent
$3$'s ranking to be $cdab$, and agent $4$ must rank $d$ first,
yielding a degenerate profile.
\item When $L_{1}=L_{2}=\left(c,d\right)$, the constraints require that
either agent $3$ or agent $4$ rank $d$ first, but then agent $1$
cannot receive $d$.
\end{casenv}
\item When agent $2$ ranks $bdac$, note that $\widetilde{R}_{da}^{2}=\widetilde{R}_{ab}^{1}=2$
and $\widetilde{R}_{bd}^{2}=1$. By Remark \textcolor{blue}{\ref{x,y,z}},
this implies that $L_{1}=\left(a,b\right)$, or $L_{2}=\left(d,a\right)$,
or $L_{2}=\left(b,d\right)$. Similarly, since $\widetilde{R}_{cd}^{1}=\widetilde{R}_{da}^{2}=2$
and $\widetilde{R}_{ac}^{2}=1$, Remark \textcolor{blue}{\ref{x,y,z}}
also implies that $L_{1}=\left(c,d\right)$, or $L_{2}=\left(d,a\right)$,
or $L_{2}=\left(a,c\right)$.
\begin{casenv}
\item When $L_{1}=\varnothing$ or $L_{1}=\left(b,c\right)$, the arguments
above imply that we must have $L_{2}=\left(d,a\right)$. However,
the constraints $R_{ab}=R_{cd}=2$ and $R_{ac}=1$ imply that either
agent $3$ or agent $4$ ranks $a$ first. In that case, agent $2$
cannot receive $a$.
\item When $L_{1}=\left(a,b\right)$, the arguments above imply that $L_{2}=\left(d,a\right)$
or $L_{2}=\left(a,c\right)$. Moreover, since $R_{bc}=1$ and $R_{cd}=2$,
by Remarks \textcolor{blue}{\ref{rem:R_xd>0 and 3 prefers d}, }\ref{rem:R_xc>0, 2-b and 3-c},
and \textcolor{blue}{\ref{rem:2 and 3 prefer b, 4 not, and b in L_1}},
we may assume that agent $3$ ranks $a$ first, but then agent $2$
cannot receive $a$.
\item When $L_{1}=\left(c,d\right)$ and $L_{2}=\left(b,d\right)$, the
constraints fix the rankings of agents $3$ and $4$ to be $cdab$
and $dabc$, respectively. This yields a degenerate profile.
\item Otherwise, $L_{1}=\left(c,d\right)$ and $L_{2}=\left(d,a\right)$.
Since $R_{bc}=R_{cd}=1$, by Remarks \textcolor{blue}{\ref{rem:R_xd>0 and 3 prefers d}}
and \ref{rem:R_xc>0, 2-b and 3-c}, we may assume that agent $3$
ranks either $a$ or $b$ first. If agent $3$ ranks $a$ first, agent
$2$ cannot receive $a$, and if he ranks $b$ first, this contradicts
$R_{ab}=2$.
\end{casenv}
\item When agent $2$ ranks $bdca$, Remark \textcolor{blue}{\ref{both adjacent, reversed order}}
implies that we must have $L_{1}=\left(c,d\right)$ and $L_{2}=\left(d,c\right)$.
However, even after these relaxations, the constraints still cannot
be satisfied, since $R_{ca}=R_{ab}=2$ and $R_{bc}=1$.
\item When agent $2$ ranks $caxy$, we may rename the houses so that his
ranking becomes $abcd$. Under this renaming, agent $1$'s ranking
becomes one that ranks $b$ first. Swapping the names of agents $1$
and $2$ then reduces this case to a previously analyzed one.
\item When agent $2$ ranks $cbad$, agents $1$ and $2$ disagree on both
adjacent pairs $\left\{ a,b\right\} $ and $\left\{ b,c\right\} $.
By Remark \textcolor{blue}{\ref{both adjacent, reversed order}},
agents $3$ and $4$ cannot satisfy the constraints.
\item When agent $2$ ranks $cbda$, Remark \textcolor{blue}{\ref{both adjacent, reversed order}}
implies that we must have $L_{1}=\left(b,c\right)$ and $L_{2}=\left(c,b\right)$.
However, even after these relaxations, the constraints still cannot
be satisfied, since $R_{da}=R_{ab}=2$ and $R_{bd}=1$.
\item When agent $2$ ranks $cdab$:
\begin{casenv}
\item When $L_{1}=L_{2}=\varnothing$, the constraints fix the rankings
of agents $3$ and $4$ to be $bcda$ and $dabc$, respectively, which
yields a degenerate profile.
\item When $L_{1}=\varnothing$ and $L_{2}=\left(d,a\right)$, since $R_{cd}=1$,
by Remark \textcolor{blue}{\ref{rem:R_xd>0 and 3 prefers d}}, we
may assume that agent $3$ does not rank $d$ first.\textcolor{blue}{{}
}If agent $3$ ranks $a$ first, then agent $2$ cannot receive $a$.
If agent $3$ ranks $b$ first, the constraints $R_{ab}=1$ and $R_{bc}=2$
imply that agent $4$ must rank $d$ first, and we obtain a degenerate
profile. Otherwise, agent $3$ ranks $c$ first, but then the constraint
$R_{bc}=2$ is not satisfied.
\item When $L_{1}=L_{2}=\left(a,b\right)$, the constraints $R_{bc}=2$,
$R_{cd}=1$, and $R_{da}=2$ force one of agents $3$ or $4$ to rank
$b$ first, but then agent $2$ cannot receive $b$.
\item When $L_{1}=L_{2}=\left(c,d\right)$, the constraints force one of
agents $3$ or $4$ to rank $d$ first, but then agent $1$ cannot
receive $d$.
\item When $L_{1}=\left(b,c\right)$, we may assume that $L_{2}=\varnothing$
or $L_{2}=\left(d,a\right)$. Since $R_{cd}=1$, by Remark \textcolor{blue}{\ref{rem:R_xd>0 and 3 prefers d}},
we may assume that agent $3$ does not rank $d$ first. If agent $3$
ranks $a$ first, agent $2$ cannot receive $a$. Thus we may assume
$L_{2}\neq\left(d,a\right)$. But then $R_{da}=2$ implies $dP_{3}a$,
a contradiction. If agent $3$ ranks $b$ first, then the constraint
$R_{ab}=1$ implies $aP_{4}b$, so agent $4$ must rank either $c$
or $d$ first. If he ranks $c$ first, agent $1$ cannot receive $c$;
if he ranks $d$ first, we obtain a degenerate profile. Otherwise,
agent $3$ ranks $c$ first, in which case agent $1$ cannot receive
$c$.
\end{casenv}
\item When agent $2$ ranks $cdba$, Remark \textcolor{blue}{\ref{both adjacent, reversed order}}
requires the relaxation $L_{2}=\left(b,a\right)$. However, this relaxation
is admissible only when agent $2$ can receive $a$, and in this case
agent $2$ cannot receive $a$.
\item When agent $2$ ranks $d$ first and does not rank $a$ last, we may
rename the houses so that his ranking becomes $abcd$. Under this
renaming, agent $1$'s ranking becomes one that ranks either $b$
or $c$ first. Swapping the names of agents $1$ and $2$ then reduces
this case to a previously analyzed one.
\item When agent $2$ ranks $dbca$:
\begin{casenv}
\item When $L_{i}=\varnothing$ for some $i\in\left\{ 1,2\right\} $, the
constraints force one of agents $3$ or $4$ to share the same ranking
with agent $i$.
\item When $L_{1}=\left(a,b\right)$ and $L_{2}=\left(d,b\right)$, since
$R_{cd}=R_{ca}=2$, we may assume that agent $3$ does not rank either
$a$ or $d$ first. If agent $3$ ranks $b$ first, $R_{ab}=1$ and
$R_{cd}=2$ force agent $4$ to rank $c$ first, and we obtain a degenerate
profile. Otherwise, agent $3$ ranks $c$ first, and then agent $4$
must rank either $c$ or $d$ first, but this violates $R_{bc}=1$
or $R_{cd}=2$, respectively.
\item When $L_{1}=L_{2}=\left(b,c\right)$, the constraints $R_{cd}=R_{ca}=R_{ab}=2$
imply that agents $3$ and $4$ must rank $c$ first. In that case,
agents $1$ and $2$ cannot receive $c$.
\item Otherwise, either $L_{1}=\left(c,d\right)$ or $L_{2}=\left(c,a\right)$.
However, agent $1$ cannot receive $d$ and agent $2$ cannot receive
$a$, so these relaxations cannot be applied.
\end{casenv}
\item Otherwise, agent $2$ ranks $dcba$. In this case, agents $1$ and
$2$ disagree on the adjacent pairs $\left\{ a,b\right\} $, $\left\{ b,c\right\} $,
and $\left\{ c,d\right\} $. By Remark \textcolor{blue}{\ref{both adjacent, reversed order}},
agents $3$ and $4$ cannot satisfy the constraints.
\end{casenv}

\bibliographystyle{plain}  
\bibliography{mybibliography}  

\end{document}